\tikzset{
	vert/.style={circle,inner sep=1.5,fill=white,draw=black,minimum size=.3cm},
	vert2/.style={inner sep=1.5,fill=white,draw=black,minimum size=.3cm,fill=lightgray},
	vertb/.style={circle,inner sep=1.5,fill=white,draw=black,minimum size=.3cm, line width=1.5pt},
	vert2b/.style={inner sep=1.5,fill=white,draw=black,minimum size=.3cm,fill=lightgray, line width=1.5pt},
	vertg/.style={circle,inner sep=1.5,fill=white,draw=gray,minimum size=.3cm},
	vert2g/.style={inner sep=1.5,fill=white,draw=gray,minimum size=.3cm,fill=lightgray},
	triac/.style={isosceles triangle,anchor=apex,rotate=90,fill=cyan!20!white,draw=black,minimum size=.9cm},
	trial/.style={isosceles triangle,anchor=apex,rotate=120,fill=cyan!20!white,draw=black,minimum size=.9cm},
	triar/.style={isosceles triangle,anchor=apex,rotate=60,fill=cyan!20!white,draw=black,minimum size=.9cm},
	edge/.style={color=black, line width=1pt},
	edgeb/.style={color=black, line width=1.5pt},
	edgeg/.style={color=gray, line width=1pt},
	edgec/.style={color=black, line width=1pt, decorate, decoration=snake},
	diredge/.style={->,>={Stealth[width=8pt,length=8pt]},color=black, line width=1pt},
    trace/.style={double=none, double distance=18pt, line join=round, line cap=round, draw=green!60!black, line width=2pt},
    trace2/.style={double=none, double distance=18pt, line join=round, line cap=round, draw=red!80!black, line width=2pt}
}
\newcounter{modification}
\newenvironment{modification}[1][]{\refstepcounter{modification}%

\begin{center}
\begin{mdframed}[nobreak=true]
{\normalsize\sffamily{\bfseries Modification~\themodification} (#1){\bfseries .}}

\vspace{.5ex}
\begin{normalsize}
}
{
\end{normalsize}
\end{mdframed}
\end{center}
}
\newcounter{algorithm}
\newenvironment{algorithm}[3][]{\refstepcounter{algorithm}%

\begin{center}
\begin{mdframed}[nobreak=true]
{\normalsize\sffamily{\bfseries Algorithm~\thealgorithm} (#1){\bfseries .}}

\vspace{-2ex}
\begin{normalsize}
\begin{description}
 \item[\textbf{Input:}]  #2
 \item[\textbf{Output:}]  #3
\end{description}

\vspace{-1ex}

\hrule

\vspace{1ex}

}
{
\end{normalsize}
\end{mdframed}
\end{center}
}
\newcommand{\OO}{\mathcal{O}}
\newcommand{\tw}{\mathsf{tw}}
\newcommand{\fvn}{\mathsf{fvn}}
\newcommand{\vcn}{\mathsf{vcn}}
\newcommand{\mw}{\mathsf{mw}}
\newcommand{\introduce}{\mathsf{introduce}}
\newcommand{\forget}{\mathsf{forget}}
\newcommand{\join}{\mathsf{join}}
\newcommand{\bigjoin}{\mathsf{bigjoin}}
\newcommand{\smallintroduce}{\mathsf{smallintroduce}}
\newcommand{\smalljoin}{\mathsf{smalljoin}}
\newcommand{\extendedforget}{\mathsf{extendedforget}}
\newcommand{\void}{\mathsf{void}}
\newcommand{\ptw}{\mathsf{PTW}}
\newcommand{\ltw}{\mathsf{LTW}}
\newcommand{\legal}{\mathsf{legal}}
\newcommand{\bigbag}{\mathsf{bigbag}}
\newcommand{\subbag}{\mathsf{subbag}}
\newcommand{\nbag}{\mathsf{nbag}}
\newcommand{\smallbag}{\mathsf{smallbag}}
\newcommand{\nbagf}{\mathsf{nbag}_{\mathsf{forget}}}
\newcommand{\smallbagf}{\mathsf{forgetbag}}
\newcommand{\smallbagc}{\mathsf{smallbagcandidate}}
\newcommand{\dpath}{\textup{path}}
\newcommand{\true}{\textup{true}}
\newcommand{\false}{\textup{false}}
\newcommand{\slim}{\text{slim}\xspace}
\newcommand{\topheavy}{\text{top-heavy}\xspace}
\newcommand{\In}{\mathsf{In}} 
\newcommand{\Out}{\mathsf{Out}} 
\newcommand{\Rest}{\mathsf{Rest}}
\newcommand{\RemoveBringNeighbor}{\mathsf{RemoveBringNeighbor}}
\newcommand{\Normalize}{\mathsf{Normalize}}
\newcommand{\MoveIntoSubtree}{\mathsf{MoveIntoSubtree}}
\newcommand{\RemoveFromSubtree}{\mathsf{RemoveFromSubtree}}
\newcommand{\MergeFullNodes}{\mathsf{MergeFullNodes}}
\newcommand{\BringNeighborUp}{\mathsf{BringNeighborUp}}
\newcommand{\BringNeighborDown}{\mathsf{BringNeighborDown}}
\newcommand{\MakeSlim}{\mathsf{MakeSlimOne}}
\newcommand{\MakeSlimTwo}{\mathsf{MakeSlimTwo}}
\newcommand{\MakeTopHeavy}{\mathsf{MakeTopHeavy}}
\crefname{modification}{Modification}{Modifications}
\crefname{operation}{Modification}{Modifications}
\crefname{observation}{Observation}{Observations}
\crefname{algorithm}{Algorithm}{Algorithm}
\title{Treewidth Parameterized by Feedback Vertex Number} 
\author{Hendrik Molter}{Department of Computer Science, Ben-Gurion~University~of~the~Negev, 
	Beer-Sheva, 
	Israel}{molterh@post.bgu.ac.il}{https://orcid.org/0000-0002-4590-798X}{}
\author{Meirav Zehavi}{Department of Computer Science, Ben-Gurion~University~of~the~Negev, 
	Beer-Sheva, 
	Israel}{meiravze@bgu.ac.il}{https://orcid.org/0000-0002-3636-5322}{}
\author{Amit Zivan}{Department of Computer Science, Ben-Gurion~University~of~the~Negev, 
	Beer-Sheva, 
	Israel}{amitziv@post.bgu.ac.il}{}{}
\authorrunning{H. Molter, M. Zehavi, and A. Zivan} 
\keywords{Treewidth, Tree Decomposition, Exact Algorithms, Single Exponential Time, Feedback Vertex Number, Dynamic Programming} 
\begin{document}

\maketitle

\begin{abstract}
We provide the first algorithm for computing an optimal tree decomposition for a given graph $G$ that runs in single exponential time in the \emph{feedback vertex number} of $G$, that is, in time $2^{\OO(\fvn(G))}\cdot n^{\OO(1)}$, where $\fvn(G)$ is the feedback vertex number of $G$ and $n$ is the number of vertices of $G$. 
On a classification level, this improves the previously known results by Chapelle et al.~[Discrete Applied Mathematics~'17] and Fomin et al.~[Algorithmica~'18], who independently showed that an optimal tree decomposition can be computed in single exponential time in the \emph{vertex cover number} of $G$. 

One of the biggest open problems in the area of parameterized complexity is whether we can compute an optimal tree decomposition in single exponential time in the \emph{treewidth} of the input graph. The currently best known algorithm by Korhonen and Lokshtanov~[STOC~'23] runs in $2^{\OO(\tw(G)^2)}\cdot n^4$ time, where $\tw(G)$ is the treewidth of $G$. Our algorithm improves upon this result on graphs $G$ where $\fvn(G)\in o(\tw(G)^2)$. On a different note, since $\fvn(G)$ is an upper bound on $\tw(G)$, our algorithm can also be seen either as an important step towards a positive resolution of the above-mentioned open problem, or, if its answer is negative, then a mark of the tractability border of single exponential time algorithms for the computation of treewidth.

\end{abstract}


\section{Introduction}

\emph{Treewidth} is (arguably) both the most important and the most well-studied structural parameter in algorithmic graph theory~\cite{Bodlaender93,Bodlaender97,Bodlaender05,Bodlaender06,Bodlaender07,FluschnikMNRZ20,Kloks94}. 
Widely regarded as one of the main success stories, most textbooks in parameterized complexity theory dedicate at least one whole chapter to this concept; see e.g.\ Niedermeier~\cite[Chapter~10]{Nie06}, Flum and Grohe~\cite[Chapters~11~\&~12]{FG06}, Downey and Fellows~\cite[Chapters~10--14]{DF13}, Cygan et al.~\cite[Chapter~7]{Cyg+15}, and Fomin et al.~\cite[Chapter~14]{Fom+19}.
In particular, treewidth is a powerful tool that allows us to leverage the conspicuous observation that many NP-hard graph problems are polynomial-time solvable on trees. 
Here, many fundamental graph problems can be solved by simple greedy algorithms that operate from the leaves to the (arbitrarily chosen) root~\cite{cockayne1975linear,daykin1966algorithms,GJ79,mitchell1979linear}. 
Treewidth, intuitively speaking, measures how close a graph is to a tree. In slightly more formal terms, it quantifies the width of a so-called \emph{tree decomposition}, which generalizes the property that trees can be broken up into several parts by removing single (non-leaf) vertices.\footnote{We give a formal definition of \emph{treewidth} and \emph{tree decomposition} in \cref{sec:prelims}.} Tree decompositions naturally provide a tree-like structure on which dynamic programming algorithms can operate in a straightforward bottom-up fashion, from the leaves to the root. The concept of treewidth has been proven to be extremely useful and has led to a plethora of algorithmic results, where NP-hard graph problems are shown to be efficiently solvable if we know a tree decomposition of small width for the input graph~\cite{arnborg1985efficient,arnborg1991easy,arnborg1989linear,bern1987linear,Bodlaender88,bodlaender2008combinatorial}.
Maybe the most impactful such result is Courcelle's theorem~\cite{borie1992automatic,courcelle1990monadic,courcelle2012graph}, which states that all problems expressible in monadic second-order logic can be solved in linear time on graphs for which we know a tree decomposition of constant width.

In the early 1970s, Bertel\`{e} and Brioschi~\cite{bertele1972nonserial} first observed the above-described situation. They showed that a large class of graph problems can be efficiently solved by ``non-serial'' dynamic programming, as long as the input graph has a bounded \emph{dimension}. Roughly 15 years later, this parameter was proven to be equivalent to treewidth~\cite{arnborg1985efficient,Bodlaender98}. In the meantime, the concept of treewidth was rediscovered numerous times, for example by Halin~\cite{halin1976s} and, more prominently, by Robertson and Seymour~\cite{robertson1984graph,robertson1986graph,robertson1995graph} as an integral part of their graph minor theory, one of the most ground-breaking achievements in the field of discrete mathematics in recent history.

In order to perform dynamic programming on a certain structure, it is crucial that this structure is known to the algorithm. In the case of efficient algorithms for graphs with bounded treewidth, this means that the algorithm needs to have access to the tree decomposition. Since computing a tree decomposition of optimal width is NP-hard~\cite{arnborg1987complexity}, many early works on such algorithms assumed that a tree decomposition (or an equivalent structure) is given as part of the input~\cite{arnborg1985efficient,arnborg1989linear,bern1987linear}. The first algorithm for computing a tree decomposition of width $k$ for an $n$-vertex graph (or deciding that no such tree decomposition exists) had a running time in $\OO(n^{k+2})$~\cite{arnborg1987complexity}. 
Shortly afterwards, an algorithm with a running time in $f(k)\cdot n^2$ was given by Robertson and Seymour~\cite{robertson1995graph}. It consists of two steps. In a first step, a tree decomposition of with at most $4k+3$ is computed in $O(3^{2k}\cdot n^2)$ time (or concluded that the treewidth of the input graph is larger than $k$). In a second, non-constructive step (for which the function $f$ in the running time is not specified), this tree decomposition is improved to one with width at most $k$.
In 1991, Bodlaender and Kloks~\cite{bodlaender1996efficient}\footnote{An extended abstract of the paper by Bodlaender and Kloks~\cite{bodlaender1996efficient} appeared in the proceedings of the 18th International Colloquium on Automata, Languages, and Programming (ICALP) in 1991.}, and Lagergren and Arnborg~\cite{lagergren1991finding} independently discovered algorithms with a running time in $2^{\OO(k^3)}\cdot n$ for the second step. Furthermore, Bodlaender~\cite{bodlaender1996linear} gave an improved algorithm for the first step which allowed to compute a tree decomposition of width $k$ (or concluding that no such tree decomposition exists) in time $2^{\OO(k^3)}\cdot n$.
These results have been extremely influential and allowed (among other things) 
Courcelle's theorem~\cite{borie1992automatic,courcelle1990monadic,courcelle2012graph} to be applicable without the prerequisite that a tree decomposition for the input graph is known.
Roughly 30 years later, in 2023, Korhonen and Lokshtanov~\cite{korhonen2023improved} presented an algorithm with a running time in $2^{\OO(k^2)}\cdot n^4$, which is considered a substantial break-through. Whether a tree decomposition of width $k$ (if one exists) can be computed in \emph{single exponential time} in $k$, that is, a running time in $2^{\OO(k)}\cdot n^{\OO(1)}$, remains a major open question in parameterized complexity theory~\cite{korhonen2023improved}.\footnote{We remark that under the exponential time hypothesis (ETH)~\cite{IP01,IPZ01}, the existence of algorithms with a running time in $2^{o({k})}\cdot n^{\OO(1)}$ can be ruled out~\cite{bonnet25}.}

For many other graph problems, however, single exponential time algorithms in the treewidth of the input graph are known~\cite{arnborg1989linear,Bodlaender88,bodlaender2015deterministic,bodlaender2008combinatorial,cygan2022solving,DF13,FG06,Nie06}.
If we want to use those algorithms to solve a graph problem, computing a tree decomposition of sufficiently small width becomes the bottleneck (in terms of the running time). Since we do not know how to compute a tree decomposition of minimum width in single exponential time, we have to rely on near-optimal tree decompositions if we do not want to significantly increase the running time of the overall computation. Therefore, much effort has been put into finding single exponential time algorithms that compute a tree decomposition with approximately minimum width, more specifically, with a width that is at most a constant factor away from the optimum. The first such algorithm was given by Robertson and Seymour~\cite{robertson1995graph} in 1995, and since then many improvements both in terms of running time and approximation factor have been made~\cite{amir2010approximation,BelbasiF21,BelbasiF22,bodlaender2016c,Korhonen23}. The current best algorithm by Korhonen~\cite{Korhonen23} has an approximation factor of two and runs in $2^{\OO(k)}\cdot n$ time. For an overview on the specific improvements, we refer to the paper by Korhonen~\cite{Korhonen23}.

\subsection{Our Contribution} In this work, we focus on computing optimal tree decomposition. We overcome the difficulties of obtaining a single exponential time algorithm by moving to a larger parameter. The arguably most natural parameter that measures ``tree-likeness'' and is larger than treewidth is the \emph{feedback vertex number}. It is the smallest number of vertices that need to be removed from a graph to turn it into a forest. 
Bodlaender, Jansen, and Kratsch~\cite{bodlaender2013preprocessing} studied feedback vertex number in the context of kernelization algorithms for treewidth computation.
The main result of our work is the following.

\begin{theorem}\label{thm:main}
    Given an $n$-vertex graph $G$ and an integer $k$, we can compute a tree decomposition for $G$ with width $k$, or decide that no such tree decomposition exists, in $2^{\OO(\fvn(G))}\cdot n^{\OO(1)}$ time. Here, $\fvn(G)$ is the feedback vertex number of graph $G$.
\end{theorem}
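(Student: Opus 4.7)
The plan is to reduce the problem to a dynamic programming over the forest obtained by deleting a minimum feedback vertex set, and then exploit the smallness of this set to keep the number of DP states at $2^{\OO(\fvn(G))}$. I would first invoke a known single-exponential FPT algorithm for \textsc{Feedback Vertex Set} to obtain a set $F \subseteq V(G)$ with $|F| = \OO(\fvn(G))$ such that $\mathcal{F} := G - F$ is a forest. Since adding $F$ to every bag of a width-$1$ tree decomposition of $\mathcal{F}$ yields a tree decomposition of $G$ of width at most $|F|+1$, we may assume $k \leq |F|+1$; hence aiming for a running time of the form $2^{\OO(|F|)}\cdot n^{\OO(1)}$ is information-theoretically plausible, because the part of each bag that lies in $F$ has only $2^{|F|}$ possible values, and the part outside $F$ has at most $k+1 \leq |F|+2$ elements.

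The second and main step is a structural lemma asserting that $G$ admits an \emph{optimal} tree decomposition in a canonical form --- the paper's \emph{slim} and \emph{top-heavy} normal forms, reached via the width-preserving rearrangements $\Normalize$, $\MakeSlim$, $\MakeSlimTwo$, $\MakeTopHeavy$, $\BringNeighborUp$, $\BringNeighborDown$, and $\MergeFullNodes$ --- in which the decomposition tree can be aligned with the rooted forest $\mathcal{F}$. In this aligned form, bags split into a few roles: essentially large ``$F$-heavy'' bags (the $\bigbag$ role) that hold the global bookkeeping about $F$ and sit on a small backbone of the decomposition, and small bags (the $\smallbag$ role) attached locally to a single forest vertex, covering its incident forest edges together with a few $F$-neighbours. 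With such a normal form in hand, I would root each tree of $\mathcal{F}$ and run a DP where the state at a forest vertex $v$ records the subset $S \subseteq F$ appearing in the bag that interfaces $v$ with its parent, together with a compact signature describing how the $F$-heavy backbone behaves at that interface. The crucial size bound is $|S| \le k+1$ and $S \subseteq F$, giving only $2^{\OO(|F|)}$ states per forest vertex. The transition at $v$ combines the DP tables of its children by choosing where to insert the additional heavy bags needed to cover $F$--$F$ edges and to merge the decompositions of different children; each transition can be computed in $2^{\OO(|F|)}$ time by enumerating candidate subsets of $F$.

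The hard part will be the structural step. Unlike the vertex-cover case, where $G$ minus a vertex cover is edgeless and each non-cover vertex can be placed in essentially one bag, here $G - F$ is a nontrivial forest whose edges must all be witnessed by bags and whose forest-vertices may legitimately span many consecutive bags in an optimal decomposition. Designing the normalization operations so that they simultaneously (i) preserve the width, (ii) preserve connectedness of every vertex's occurrence subtree and coverage of every edge, and (iii) compress the decomposition into the aligned form required by the DP is the main technical hurdle; in particular, moves like $\BringNeighborUp$ and $\BringNeighborDown$ must be orchestrated so that forget-heavy parts of the tree do not blow up the width while $F$-vertices are shuffled across it. Once the normal form is in place, the DP is a comparatively standard tree-decomposition DP augmented with subset-of-$F$ bookkeeping, and composing it with the initial feedback-vertex-set computation yields the claimed $2^{\OO(\fvn(G))}\cdot n^{\OO(1)}$ running time.
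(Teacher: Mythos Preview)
Your proposal has a genuine conceptual gap. You run the dynamic program over the rooted forest $\mathcal{F}=G-F$, with a state at each forest vertex $v$ recording the subset of $F$ in ``the bag that interfaces $v$ with its parent''. This conflates two unrelated trees: the forest $\mathcal{F}$, which is a subgraph of $G$, and the tree $T$ of a tree decomposition $(T,\{X_t\})$ of $G$. There is no a priori correspondence between forest vertices and nodes of $T$; a single bag of an optimal decomposition may contain up to $k+1\le |F|+2$ forest vertices drawn from far-apart regions of $\mathcal{F}$, and a single forest vertex may sit in many bags that also contain other forest vertices. Your claimed structural lemma---that after normalisation ``the decomposition tree can be aligned with the rooted forest $\mathcal{F}$''---is not what the paper proves, and you give no argument for why such an alignment should exist while preserving optimal width.

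The paper's dynamic program is not over $\mathcal{F}$ at all. It is over \emph{$S$-traces}: ordered partitions $(L^S,X^S,R^S)$ of the feedback vertex set $S$, recording for a node $t$ of $T$ which $S$-vertices lie strictly below $t$, in $X_t$, and elsewhere. Nodes sharing an $S$-trace form a directed path in $T$, and the DP state is an $S$-trace together with two ``extended $S$-operations'' carrying $\OO(\fvn(G))$ bits of advice. The role of the \slim\ and \topheavy\ normal forms is \emph{not} to align $T$ with $\mathcal{F}$; it is to guarantee that the non-$S$ part of the bag at the top and bottom of each such path can be \emph{reconstructed} in polynomial time from the $S$-trace plus this advice. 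That is exactly what $\bigbag$ and $\smallbag$ are: subroutines that output an entire candidate bag (its forest vertices included), the former for nodes in full join trees and the latter otherwise; they are not a classification of bags into ``$F$-heavy'' versus ``small''. Your sketch never addresses how to determine which forest vertices belong in which bag from only $2^{\OO(|F|)}$ information, and that is where essentially all of the paper's technical work lies.
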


\Cref{thm:main} directly improves a previously known result, that an optimal tree decomposition for an $n$-vertex graph $G$ can be computed in $2^{\OO(\vcn(G))}\cdot n^{\OO(1)}$ time (if it exists)~\cite{chapelle2017treewidth,fomin2018algorithms}, where $\vcn(G)$ is the \emph{vertex cover number} of $G$, which is the smallest number of vertices that need to be removed from a graph to remove all of its edges, and hence always at least as large as the feedback vertex number. To the best of our knowledge, the only other known result for treewidth computation that uses a different parameter to obtain single exponential running time is by Fomin et al.~\cite{fomin2018algorithms}, who showed that an optimal tree decomposition can be computed in $2^{\OO(\mw(G))}\cdot n^{\OO(1)}$, where $\mw(G)$ is the \emph{modular width} of $G$. We remark that the modular width can be upper-bounded by a (non-linear) function of the vertex cover number, but is incomparable to both the treewidth and the feedback vertex number of a graph.
It follows that our result improves the best-known algorithms for computing optimal tree decompositions (in terms of the exponential part of the running time) for graphs $G$ where 
\[
\fvn(G)\in o\bigl(\max\bigl\{\tw(G)^2,\vcn(G),\mw(G)\bigr\}\bigr),
\]
where $\tw(G)$ denotes the treewidth of $G$. 
On a different note, since $\fvn(G)$ is an upper bound on $\tw(G)$, our algorithm can also be seen either as an important step towards a positive resolution of the open problem of finding an $2^{\OO(\tw(G))}\cdot n^{\OO(1)}$ time algorithm for computing an optimal tree decomposition, or, if its answer is negative, then a mark of the tractability border of single exponential time algorithms for the computation of treewidth.

Our algorithm constitutes a significant extension of a dynamic programming algorithm by Chapelle et al.~\cite{chapelle2017treewidth} for computing optimal tree decompositions which runs in  $2^{\OO(\vcn(G))}\cdot n^{\OO(1)}$ time. The building blocks of the algorithm are fairly simple and easy to implement, however the analysis is quite technical and involved. 
First of all, our algorithm needs to have access to a \emph{minimum feedback vertex set}, that is, a set of vertices of minimum cardinality, such that the removal of those vertices turns the graph into a forest. It is well-known that such a set can be computed in $2.7^{\fvn(G)}\cdot n^{\OO(1)}$ time (randomized)~\cite{li2022detecting} or in $3.6^{\fvn(G)}\cdot n^{\OO(1)}$ deterministic time \cite{kociumaka2014faster}.
As a second step, we apply the kernelization algorithm by Bodlaender, Jansen, and Kratsch~\cite{bodlaender2013preprocessing} to reduce the number of vertices in the input graph to $\OO(\fvn(G)^4)$. We remark that this kernelization algorithm needs a minimum feedback vertex set as part of the input and ensures that this set remains a feedback vertex set for the reduced instance~\cite{bodlaender2013preprocessing}. This second step is neither necessary for the correctness of our algorithm nor to achieve the claimed running time bound, but it ensures that the polynomial part of our running time is bounded by the maximum of the running time of the kernelization algorithm and the polynomial part of the running time needed to compute a minimum feedback vertex set. Neither Bodlaender, Jansen, and Kratsch~\cite{bodlaender2013preprocessing}, Li and Nederlof~\cite{li2022detecting}, nor Kociumaka and Pilipczuk~\cite{kociumaka2014faster} specify the polynomial part of the running time of their respective algorithms. However, an inspection of the analysis for the kernelization algorithm suggests that its running time is in~$\OO(n^5)$ and it is known that a minimum feedback vertex set can be computed in $2^{\OO(\fvn(G))}\cdot n^{2}$ time~\cite{Cao18}.


In the following, we give a brief overview of the main ingredients of our algorithm.
The main purpose of this overview is to convey an intuition and an idea of how the algorithm works; it is not meant to be completely precise. 
The basic terminology from graph theory that we use here is formally introduced and defined in \cref{sec:prelims}.
\begin{itemize}
\item Chapelle et al.~\cite{chapelle2017treewidth} introduced notions and terminology to formalize how a (rooted) tree decomposition interacts with a minimum vertex cover. We adapt and extend those notions for minimum feedback vertex sets. Inspired by the well-known concept of \emph{nice} tree decompositions~\cite{bodlaender1996efficient}, we define a class of rooted tree decompositions that interact with a given feedback vertex set in a ``nice'' way. We call those tree decompositions \emph{$S$-nice}, where $S$ denotes the feedback vertex set.
In $S$-nice tree decompositions, we classify each node $t$ by which vertices of the feedback vertex set are contained in the bag and which are only contained in bags of nodes in the subtree rooted at $t$. We show that nodes of the same class form (non-overlapping) paths in the tree decomposition.
\item We give algorithms to compute candidates of bags for the top and the bottom node of a path that corresponds to a given node class. Here, informally speaking, we differentiate between the cases where nodes (and their neighbors) have ``full'' bags or not, where we consider a bag to be full if adding another vertex to it increases the width of the tree decomposition. We need two novel algorithmic approaches for the two cases.
\item We give an algorithm to compute the maximum width of any bag ``inside'' a path that corresponds to a given node class. This algorithm needs as input the bag of the top node and the bottom node of the path, and the set of vertices that should be contained in some bags of the path.
\item The above-described algorithm allows us to perform dynamic programming on the node classes. Informally, we show that node classes form a partially ordered set. This means that for a given node class, we can look up a partial tree decomposition where the root belongs to a preceding (according to the poset) node class in a dynamic programming table. Then we use the above-described algorithm to extend the looked-up tree decomposition to one where the root has the given node class.
\end{itemize}
We show that in the above-described way, we can find a tree decomposition of width $k$ whenever one exists.
To this end, we introduce (additional) novel classes of tree decompositions that have properties that we can exploit algorithmically. We introduce so-called \emph{\slim} tree decompositions, which, informally speaking, have a minimum number of full nodes and some additional properties. Furthermore, we introduce \emph{\topheavy} tree decompositions, where, informally speaking, vertices are pushed into bags that are as close to the root as possible. We show that there exists tree decompositions with minimum width that are $S$-nice, \slim, and \topheavy.
Lastly, we give a number of ways to modify tree decompositions. We will show that we can modify any optimal tree decomposition that is $S$-nice, slim, and top-heavy into one that our algorithm is able to find.

\subsection{Further Related Work} We already gave an overview of the work related to computing tree decompositions. 
The feedback vertex number is a fundamental graph parameter that has found applications in many contexts. The problem of computing the feedback vertex number is included in Karp's original list of 21 NP-hard problems~\cite{Kar72}. As already mentioned, it can be computed in single exponential time and there is a long lineage of improvements in the running time~\cite{becker2000randomized,guo2006compression,kociumaka2014faster,li2022detecting}. For more specific information on the improvements in the running time, we refer to the paper by Li and Nederlof~\cite{li2022detecting}. Furthermore, many variations of this problem have been studied~\cite{agrawal2018simultaneous,casteigts2021finding,chaudhary2025parameterized,cygan2013subset,kanesh2021parameterized,LokshtanovRS18,misra2012parameterized,misra2012fpt}.
The feedback vertex number has also been considered as a parameter in various contexts. In particular, as mentioned before, Bodlaender, Jansen, and Kratsch~\cite{bodlaender2013preprocessing} studied feedback vertex number in the context of kernelization algorithms for treewidth computation. It has also been used as a parameter for algorithmic and hardness results in various other contexts~\cite{AdigaCS10,casteigts2021finding,EnrightMM23,jansen2013vertex,KratschS10,zehavi2023tournament}.

\section{Algorithm Overview}\label{sec:overview}
In this section, we take a closer look at the different building blocks of our algorithm before we explain each one in full detail. The goal of this section is to provide an intuition about how the algorithm works, that is fairly close to the technical details but not completely precise. We describe the main ideas of each part of the algorithm and how they interact with each other.

\subsection{How Feedback Vertex Sets Interact With Tree Decompositions}
Given a rooted tree decomposition $\mathcal{T}=(T,\{X_t\}_{t\in V(T)})$ of a graph $G$ and a minimum feedback vertex set $S$ for $G$, the feedback vertex set interacts with the bags of $\mathcal{T}$ is a very specific way. 
Given a bag $X_t$ of some node in $t$, we use $X_t^S=X_t\cap S$ to denote the vertices from $S$ that are inside the bag $X_t$. We use $L^S_t\subseteq S$ to denote the set of vertices from $S$ that are \emph{only} in bags $X_{t'}$ ``below'' $X_t$. More formally, bags $X_{t'}$ of nodes $t'$ that are descendants of node $t$. We use $R^S_t = S\setminus (L^S_t\cup X^S_t)$ to denote the set of vertices in $S$ that are \emph{only} in bags of nodes that are outside the subtree of $\mathcal{T}$ rooted at $t$. 
These three sets build a so-called \emph{$S$-trace}, a concept introduced by Chapelle et al.~\cite{chapelle2017treewidth}.
One can show that the nodes in a tree decomposition that share the same $S$-trace partition the tree decomposition into path segments~\cite{chapelle2017treewidth}.
These path segments (or the corresponding $S$-traces) form a partially ordered set which, informally speaking, allows us to design a bottom-up dynamic programming algorithm that constructs a rooted tree decomposition for $G$ from the leaves to the root. 

A similar approach was used by Chapelle et al.~\cite{chapelle2017treewidth} to obtain a dynamic programming algorithm that has single exponential running time in the vertex cover number of the input graph. They use a minimum vertex cover instead of a minimum feedback vertex set to define $S$-traces. We extend this idea to work for feedback vertex sets, which requires a significant extension of almost all parts of the algorithm.

The main idea of the overall algorithm is the following. We give a polynomial-time algorithm to compute a partial rooted tree decomposition containing nodes corresponding to a directed path of an $S$-trace and, informally speaking, some additional nodes that only cover vertices from the forest $G-S$.
We combine this partial rooted tree decomposition with one for the preceding directed paths according to the above-mentioned partial order. In other words, intuitively, we give a subroutine to compute partial tree decompositions for directed paths corresponding to $S$-traces, and try to assemble them to a rooted tree decomposition for the whole graph. We start with the directed paths that do not have any predecessor paths and then try to extend them until we cover all vertices. We save the partial progress we made in a dynamic programming table. The overall algorithm is composed out of fairly simple subroutines, which makes the algorithm easy to implement. However, a very sophisticated analysis is necessary to prove its correctness and the desired running time bound.

In order to bound the number of dynamic programming table look-ups that we need to compute a new entry, we need to be able to assume that the rooted tree decomposition we are aiming to compute behaves nicely (in a very similar sense as in the well-known concept of \emph{nice tree decompositions}~\cite{bodlaender1996efficient}) with respect to the vertices in $S$. To this end, we introduce the concept of \emph{$S$-nice tree decompositions} and we prove that every graph $G$ admits an $S$-nice tree decomposition with minimum width for every choice of $S$. 

\subsection{Computing a ``Local Tree Decomposition'' for a Directed Path}

The main technical difficulty is to compute the part of the rooted tree decomposition that contains the nodes corresponding to a directed path of an $S$-trace. As described before, these parts are the main building blocks with which we assemble our tree decomposition for the whole graph. 

Our approach here is to compute candidates for the bags of the top node and the bottom node of the directed path corresponding to an $S$-trace. Intuitively, since the $S$-trace specifies where in the tree decomposition the other vertices from $S$ are located and since we know that $G-S$ is a forest, we have some knowledge on how these bags need to look like. 
If the bag $X$ we are computing is for a node that, when removed, splits the tree decomposition into few parts, then we can ``guess'' which vertices of $S$ are located in which of the different parts in an optimal tree decomposition. 
To this end, we adapt the concept of ``introduce nodes'', ``forget nodes'', and ``join nodes'' for $S$-nice tree decompositions. A consequence is that if a node does not share its bag $X$ with any neighboring node, then, when the node is removed, the tree decomposition splits into at most three relevant parts (that is, parts that contain bags with vertices from $S$ that are not in $X$).
In other word, the bag $X$ acts as a separator for at most three parts of $S$.
This allows us to compute a candidate for $X$ in a greedy fashion, and argue that we can modify an optimal tree decomposition to agree with our choice for $X$. An important property that we need for this is, that neighboring bags are not ``full'', that is, their size is not $\tw(G)+1$, which gives us flexibility to shuffle around vertices.

However, if the optimal tree decomposition contains a large subtree of nodes that all have the same bag and that are all full, then removing those nodes can split the tree decomposition into an unbounded number of parts that each can contain a different subset of $S$. Here, informally speaking, we cannot afford to ``guess'' anymore how the vertices of $S$ are distributed among the parts and we cannot shuffle around vertices. Hence, we have to find a way to compute a candidate bag for this case where we do not have to change the tree decomposition to agree with our choice. 

To resolve this issue, we need to refine the concept of $S$-nice tree decompositions. We introduce the \emph{\slim $S$-nice tree decompositions} that, informally speaking, ensure subtrees of the tree decomposition where all nodes have the same full bag have certain desirable properties that allow us to compute candidate bags efficiently. We prove that there are \slim $S$-nice tree decomposition of minimal width for all graphs and all choices of $S$. Furthermore, we prove (implicitly) that we can bound the size of the search-space for a potential bag by a function that is single exponential in the feedback vertex number. 
Therefore, we will able to provide small ``advice strings'', that lead our algorithm for this case to a candidate bag and we can prove that there is an advice string that will lead the algorithm to the correct bag.

\subsection{Coordinating Between Different Directed Paths}

Since we need to modify the optimal tree decomposition in order to agree with the bags that we compute, we have to make sure that later modifications do not invalidate earlier choices for bags. To ensure this, we introduce so-called \emph{\topheavy} \slim $S$-nice tree decompositions. Informally speaking, we try to push all vertices that are not in $S$ into bags as close to the root as possible. Intuitively, this ensures that if we make modifications to those tree decompositions to make them agree with our bag choices, those modifications do not affect bags that are sufficiently far ``below'' the parts of the tree decomposition that we modify.

Now, having access to the bag of the top node and the bottom node of a directed path corresponding to some $S$-trace, we have to decide which other vertices (that are not in $S$) we wish to put into bags that are attached to that path. Here, we also have to make sure that our decisions are consistent. We will compute a three-partition of the complete vertex set: one part that represents the vertices that are contained in some bag of the directed path or a bag that is attached to it, one part of vertice that are only in bags ``below'' the directed path, and the last part with the remaining vertices, that is, the vertices that are only in bags outside of the subtree of the tree decomposition that is rooted at the top node of the directed path.
The first described part is needed to compute the partial tree decomposition of the directed path. We provide a polynomial-time algorithm to do this.
Furthermore, the three-partition will allow us to decide which choices for bags to the top and bottom nodes of preceding directed paths are compatible with each other. 

These are all the ingredients we need to design our algorithm.

\subsection{Organization of the Paper}

We organize our paper as follows. In \cref{sec:prelims} we introduce all standard notation and terminology that we use. In \cref{sec:mainconcepts} we present some concepts by Chapelle et al.~\cite{chapelle2017treewidth} that we adopt for our results and we introduce novel concepts, such as $S$-nice tree decompositions, \slim tree decompositions, and \topheavy tree decompositions. Furthermore, we describe our methods to modify optimal tree decompositions such that they agree with our bag choices. In \cref{sec:mainalgo} we present the main dynamic programming algorithm. \cref{sec:bags} is dedicated to the subroutines we use to compute the top bag and the bottom bag of directed paths corresponding to $S$-traces. In \cref{sec:localtw} we give a polynomial-time algorithm to compute the part of the tree decomposition for an directed path, provided that we know the bag of the top node, the bag of the bottom node, and the set of vertices that we wish to be contained in some bag attached to the directed path. In \cref{sec:legal} we describe how we decide whether our decisions for different directed paths are consistent. Finally, in \cref{sec:correct} we prove that our algorithm is correct and we give a running time analysis. In \cref{sec:conclusion}, we conclude and give some future research directions.

\section{Preliminaries}\label{sec:prelims}

We use standard notation and terminology from graph theory~\cite{Die16}. 
Let $G=(V,E)$ with $E\subseteq \binom{V}{2}$ be a graph. We denote by~$V(G)=V$ and~$E(G)=E$ the sets of its vertices and edges, respectively.
We use $n$ to denote the number of vertices of $G$.
We call two vertices $u,v\in V$ \emph{adjacent} if~$\{u,v\}\in E$.
The \emph{open neighborhood} of a vertex $v\in V$ is the set $N_G(v)=\{u \mid \{u,v\}\in E\}$. 
The \emph{degree} of a vertex $v\in V$ is $\deg(v)=|N_G(v)|$.
The \emph{closed neighborhood} of a vertex $v\in V$ is the set $N_G[v]=N_G(V)\cup\{v\}$. 
If a vertex $v$ has degree one, then we call $v$ a \emph{leaf}.
For some vertex set $V'\subseteq V$, we define its open neighborhood as $N_G(V')=\bigcup_{v\in V'}N_G(v)\setminus V'$ and its closed neighborhood as $N_G[V']=N_G(V')\cup V'$.
If $G$ is clear from the context, we drop the subscript $G$ from $N$.
For some vertex set $V'\subseteq V$, we denote by~$G[V']$ the \emph{induced} subgraph of $G$ on the vertex set $V'$, that is, $G[V']=(V',E')$ where $E' = \{\{v,w\}\mid \{v,w\}\in E\wedge v\in V'\wedge w\in V'\}$. We further denote $G-V'=G[V\setminus V']$.
We say that a sequence $P=(\{v_{i-1},v_i\})_{i=1}^\ell$ of edges in $E$ forms a \emph{path} in $G$ from $v_0$ to $v_\ell$ if $v_{i}\neq v_j$ for all $0\le i<j\le \ell$.
We denote with $V(P)=\{v_0,v_1,\ldots,v_\ell\}$ the vertices visited by path $P$.
We say that vertices $u$ and $v$ are \emph{connected} in $G$ if there is a path from $u$ to $v$ in $G$. We say that graph $G$ is connected if for all $u,v\in V$ we have that $u$ and $v$ are connected. We say that $G[V']$ for some $V'\subseteq V$ is a \emph{connected component} of $G$ if $G[V']$ is connected and for all $v\in V\setminus V'$ we have that $G[V'\cup\{v\}]$ is not connected.
We say that a vertex set $V'\subseteq  V$ \emph{separates} a vertex set $V_1\subseteq V\setminus V'$ and a vertex set $V_2\subseteq V\setminus V'$ if $V_1\cap V_2=\emptyset$ and there is no $u\in V_1$ and $v\in V_2$ such that there is a path from $u$ to $v$ in $G-V'$.
We call $G$ a \emph{tree} if for each pair of vertices $u,v\in V$ there is exactly one path from $u$ to $v$ in $G$. We call $G$ a \emph{forest} if for each pair of vertices $u,v\in V$ there is at most one path from $u$ to $v$ in $G$. We call $G$ \emph{trivial} if $|V|=1$.

Let $e=\{u,v\}\in E$. \emph{Contracting} edge $e$ in $G$ results in the graph obtained by deleting $v$ from $G$ and then adding all edges between $u$ and $u'\in N_G(v)$.
A graph $H$ is called a \emph{minor} of $G$ if $H$ can be obtained from $G$ by deleting edges, vertices, and by contracting edges.

Next, we formally define the \emph{treewidth} of a graph and related concepts such as \emph{tree decompositions} and \emph{nice tree decompositions}.

\begin{definition}[Tree Decomposition and Treewidth] \label{def:tree_decomposition}
A {\em tree decomposition} of a graph $G=(V,E)$ is a pair $\mathcal{T}=(T,\{X_t\}_{t\in V(T)})$ consisting of a tree $T$ and a family of \emph{bags} $\{X_t\}_{t\in V(T)}$ with $X_t\subseteq V$, such that:
\begin{enumerate}
    \item $\bigcup_{t\in V(T)}X_t=V$. \label{condition_1_tree_decomposition}
    \item For every $\{u, v\}\in E$, there is a node $t\in V(T)$ such that $\{u, v\}\subseteq X_t$. \label{condition_2_tree_decomposition}
    \item For every $v\in V$, the set $X^{-1}(v)=\{t\in V(T) \mid v\in X_t\}$ induces a subtree of $T$. \label{condition_3_tree_decomposition}
\end{enumerate}
The {\em width} of a tree decomposition $\mathcal{T}=(T,\{X_t\}_{t\in V(T)})$ is $\max_{t\in V(T)}|X_t|-1$. The {\em treewidth} $\tw(G)$ of a graph $G$ is the minimum width over all tree decomposition of $G$. 
\end{definition}
Let $\mathcal{T}=(T,\{X_t\}_{t\in V(T)})$ be a tree decomposition of a graph $G=(V,E)$ with width $k$. To avoid confusion, we refer to the elements of $V(T)$ as \emph{nodes} and call elements of $V$ \emph{vertices}.
Let $X_t$ be the bag of a node $t$ in $\mathcal{T}$, if $|X_t|=k+1$, then we say the bag $X_t$ is \emph{full}, that is, increasing the size of the bag would increase the width of the tree decomposition.
The following lemmas are well-known~\cite{arnborg1990forbidden,bodlaender2006safe,Die16} and give us useful properties of tree decompositions.

\begin{lemma}[\cite{arnborg1990forbidden,Die16}]\label{lem:twminor}
Let $H$ be a minor of a graph $G$. Then $\tw(H)\le\tw(G)$.
\end{lemma}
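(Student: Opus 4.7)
The plan is to prove the statement by induction on the number of minor operations, so it suffices to verify that treewidth does not increase under each of the three elementary operations used to build a minor: vertex deletion, edge deletion, and edge contraction. Fix an optimal tree decomposition $\mathcal{T}=(T,\{X_t\}_{t\in V(T)})$ of $G$ of width $\tw(G)$, and in each case I would construct a tree decomposition of the resulting graph of no larger width.

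For edge deletion, I would simply observe that $\mathcal{T}$ remains a tree decomposition of $G-e$: conditions \ref{condition_1_tree_decomposition} and \ref{condition_3_tree_decomposition} are unaffected, and condition \ref{condition_2_tree_decomposition} only becomes easier because the edge set shrinks. For vertex deletion of some $v\in V$, I would take $\mathcal{T}'=(T,\{X_t\setminus\{v\}\}_{t\in V(T)})$ and verify the three conditions of \cref{def:tree_decomposition} for $G-v$; each bag weakly shrinks, so the width does not grow. The only substantive case is contraction.

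For contraction of an edge $e=\{u,v\}$ to a new vertex $w$, I would define $\mathcal{T}''=(T,\{X_t''\}_{t\in V(T)})$ by setting
\[
X_t'' \;=\; \begin{cases} (X_t\setminus\{u,v\})\cup\{w\} & \text{if } X_t\cap\{u,v\}\neq\emptyset,\\ X_t & \text{otherwise.}\end{cases}
\]
The width does not increase, since each bag either stays the same or has two vertices replaced by one. I would then verify the three conditions for the contracted graph $G/e$. Condition \ref{condition_1_tree_decomposition} is immediate from the corresponding property of $\mathcal{T}$. For condition \ref{condition_2_tree_decomposition}, any edge of $G/e$ not incident to $w$ was already an edge of $G$ and is covered by the same bag; any edge incident to $w$ comes from an edge of $G$ incident to $u$ or $v$, and its original covering bag now contains $w$. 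For condition \ref{condition_3_tree_decomposition}, vertices other than $w$ keep the same (subtree-inducing) index sets, while $\{t : w\in X_t''\} = X^{-1}(u)\cup X^{-1}(v)$; this union of two subtrees is itself a subtree of $T$ because condition \ref{condition_2_tree_decomposition} applied to the edge $\{u,v\}$ of $G$ guarantees that the two subtrees share a common node and are therefore not disjoint.

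The main (and really only) subtle point is the third condition in the contraction step, namely showing that $X^{-1}(u)\cup X^{-1}(v)$ is connected in $T$. This is precisely where the hypothesis that $\{u,v\}\in E(G)$ is used, via condition \ref{condition_2_tree_decomposition} applied to $\mathcal{T}$. Once this is in place, iterating the three monotonicity statements along a sequence of operations producing $H$ from $G$ yields $\tw(H)\le\tw(G)$, completing the proof.
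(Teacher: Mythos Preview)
Your proof is correct and follows the standard textbook argument (essentially the one in Diestel~\cite{Die16}). Note that the paper does not actually supply a proof of this lemma; it is stated as a known result with citations, so there is nothing to compare against beyond observing that your argument is the canonical one.
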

   
\begin{lemma}[\cite{bodlaender2006safe,Die16}]
\label{lem:cliquebag}
    Let $\mathcal{T}$ be a tree decomposition of a graph $G=(V,E)$. Let $V'\subseteq V$ such that $G[V']$ is complete. Then, there exists a node $t$ in $\mathcal{T}$ such that $V'\subseteq X_t$. 
\end{lemma}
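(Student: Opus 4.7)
The plan is to reduce the claim to the \emph{Helly property} for subtrees of a tree, namely that any finite family of pairwise intersecting subtrees of a tree shares a common node. For each $v\in V'$, set $T_v := X^{-1}(v) = \{t\in V(T) : v\in X_t\}$. Condition 3 of Definition \ref{def:tree_decomposition} guarantees that $T_v$ induces a subtree of $T$, and condition 1 guarantees it is non-empty. If $\bigcap_{v\in V'} T_v$ contains any node $t$, then $V'\subseteq X_t$, so it suffices to show that this intersection is non-empty.

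First I would verify that $\{T_v\}_{v\in V'}$ is pairwise intersecting. For any distinct $u,v\in V'$, the assumption that $G[V']$ is complete gives $\{u,v\}\in E(G)$, so condition 2 of Definition \ref{def:tree_decomposition} yields some node $t$ with $\{u,v\}\subseteq X_t$, i.e.\ $t\in T_u\cap T_v$. (The case $|V'|\le 1$ is handled directly by condition 1.)

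Next I would invoke, or prove, the Helly property for subtrees of $T$ via an extremal argument. Root $T$ at an arbitrary node $r$; for each $v\in V'$, let $r_v$ denote the unique node of $T_v$ closest to $r$, which is well-defined because $T_v$ is connected in $T$. Choose $v^*\in V'$ so that the depth of $r_{v^*}$ (i.e.\ its distance to $r$ in $T$) is maximum. For any other $u\in V'$, pick $s\in T_{v^*}\cap T_u$, which exists by pairwise intersection. Since both $T_{v^*}$ and $T_u$ are connected and contain $s$, the nodes $r_{v^*}$ and $r_u$ both lie on the unique root-to-$s$ path in $T$. Because the depth of $r_u$ does not exceed the depth of $r_{v^*}$, the node $r_u$ is an ancestor of $r_{v^*}$ on this path. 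Connectivity of $T_u$, combined with $r_u,s\in T_u$, then forces the entire $r_u$-to-$s$ path, and in particular $r_{v^*}$, to lie in $T_u$. Hence $r_{v^*}\in\bigcap_{v\in V'}T_v$, completing the proof.

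The main obstacle is really only identifying the right extremal witness; once one roots $T$ arbitrarily and selects the subtree whose topmost node is deepest, the remainder of the argument follows from elementary observations about unique paths in trees together with the defining properties of tree decompositions. An alternative route would be induction on $|V'|$ using Definition \ref{def:tree_decomposition} directly, but the rooted-extremal-node approach gives a cleaner write-up without having to separately re-verify convexity of the $T_v$ at each step.
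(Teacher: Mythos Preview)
Your proof is correct. The paper does not supply its own proof of this lemma; it simply cites it as a well-known fact from \cite{bodlaender2006safe,Die16}. Your argument via the Helly property for subtrees of a tree is the standard one, and the extremal choice of the deepest root $r_{v^*}$ is executed cleanly.
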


\begin{observation}[\cite{Die16}]\label{lem:cliquebag2}
Let $\mathcal{T}$ be a tree decomposition of a graph $G=(V,E)$ with width at most $k$, and let $t$ be a node in $\mathcal{T}$. Let $G'=(V,E\cup\binom{X_t}{2})$. 
Then $\mathcal{T}$ is a tree decomposition of a graph $G'$ with width at most $k$.
\end{observation}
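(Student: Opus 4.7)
The plan is to directly verify that $\mathcal{T}$ satisfies the three conditions of \cref{def:tree_decomposition} with respect to $G'$, since the family of bags $\{X_t\}_{t \in V(T)}$ is identical to that of $\mathcal{T}$ as a decomposition of $G$.

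First, I would observe that $G$ and $G'$ share the same vertex set $V$, so Condition~\ref{condition_1_tree_decomposition} is inherited unchanged from the fact that $\mathcal{T}$ is a tree decomposition of $G$. Similarly, since the bags are not modified, Condition~\ref{condition_3_tree_decomposition} (the subtree property for each vertex) is automatically preserved. The only nontrivial condition is Condition~\ref{condition_2_tree_decomposition}: every edge of $G'$ must be covered by some bag. Here I would split into cases according to whether the edge belongs to $E$ or to $\binom{X_t}{2} \setminus E$. Edges from $E$ are covered because $\mathcal{T}$ was already a tree decomposition of $G$. For an edge $\{u,v\} \in \binom{X_t}{2}$, by definition both endpoints lie in $X_t$, so the node $t$ itself witnesses Condition~\ref{condition_2_tree_decomposition}.

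Finally, for the width bound, I would remark that the width of $\mathcal{T}$ depends only on the sizes of the bags, which are unchanged. Hence $\max_{t' \in V(T)} |X_{t'}| - 1 \leq k$ continues to hold when $\mathcal{T}$ is viewed as a tree decomposition of $G'$.

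There is no real obstacle here; the statement is essentially a sanity check that one may safely turn any bag of a tree decomposition into a clique in the underlying graph without invalidating the decomposition or increasing its width. The entire argument is three one-line case checks.
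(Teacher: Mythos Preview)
Your proof is correct. Note that the paper does not actually supply a proof for this observation; it is stated with a citation to Diestel~\cite{Die16} and treated as folklore. Your direct verification of the three conditions of \cref{def:tree_decomposition} is exactly the standard argument and is entirely adequate.
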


In many applications, it is very convenient to use rooted tree decompositions with a particular structure. 
A \emph{rooted tree} $T=(V,E,r)$ is a triple where $(V,E)$ is a tree and $r\in V$ is a designated root node. Let $v\in V$ and $u\in N(v)$. If the unique path from $r$ to $u$ is shorter than the unique path from $r$ to $v$, then we say that $u$ is the \emph{parent} of $v$. Otherwise, we say that $u$ is a \emph{child} of $v$. Let $P=\left(\{v_{i-1},v_i\}\right)_{i=1}^\ell$ be a path in $T$ from $v_0$ to $v_\ell$. If $v_{i-1}$ is the parent of $v_i$ for all $1\le i\le \ell$, then we say that $P$ is a \emph{directed path} from $v_0$ to $v_\ell$. If for two vertices $u,v\in V$ there is a directed path from $u$ to $v$, then we say that $u$ is an \emph{ancestor} of $v$ and $v$ is a \emph{descendant} of $u$.
A {\em rooted tree decomposition} is a tree decomposition $\mathcal{T}=(T,\{X_t\}_{t\in V(T)})$ where $T$ is a rooted tree with root node $r\in V(T)$. 

\begin{definition}[Nice Tree Decomposition]\label{def:nicetd}
    A {\em nice} tree decomposition is a rooted tree decomposition $\mathcal{T}=(T,\{X_t\}_{t\in V(T)})$ that satisfies the following properties. The bag of $X_r$ is empty, and the bag $X_\ell$ is empty for every leaf $\ell$ in $T$. Additionally, every internal node $t$ of $T$ is of one of the following types: 
\begin{description}
    \item[Introduce Node:] $t$ has exactly one child $t'$, such that $X_t=X_{t'}\cup \{v\}$ for some $v\notin X_{t'}$.
    \item[Forget Node:] $t$ has exactly one child $t'$, such that $X_t=X_{t'}\setminus \{v\}$ for some $v\in X_j$. 
    \item[Join Node:] $t$ has exactly two children $t_1, t_2$, such that $X_t=X_{t_1}=X_{t_2}$. 
\end{description}
\end{definition}

It is well-known that every tree decomposition can be straightforwardly transformed into a nice tree decomposition with the same width.

\begin{lemma}[\cite{bodlaender1996efficient}]\label{lem:nicetd}
    If a graph $G$ admits a tree decomposition $\mathcal{T}$ with width $k$, then it also admits a nice tree decomposition $\mathcal{T}'$ with width $k$.
\end{lemma}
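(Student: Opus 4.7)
The plan is to transform an arbitrary width-$k$ tree decomposition $\mathcal{T}=(T,\{X_t\}_{t\in V(T)})$ into a nice one by a sequence of local modifications, each of which only inserts new nodes whose bags are subsets of already-present bags (so the width never grows), and then verify that the three conditions of \cref{def:tree_decomposition} survive.

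First I would root $T$ at an arbitrary node $r$. The construction then proceeds in four stages, each targeting one defect of niceness. Stage~1 (fixing the root): if $X_r\neq\emptyset$, attach a fresh directed path of new parent nodes above $r$, successively removing one vertex per bag until the new root has an empty bag. Stage~2 (fixing the leaves): for every leaf $\ell$ in the current rooted tree, attach below $\ell$ a directed path of new nodes whose bags are obtained from $X_\ell$ by removing one vertex at a time, ending with an empty bag. Stage~3 (binarising join structure): as long as some node $t$ has $\ge 3$ children $t_1,\dots,t_q$, replace $t$ by a balanced binary tree of copies of the bag $X_t$ whose leaves are the $t_i$'s; the internal nodes of this gadget will eventually become join nodes. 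After this stage, every non-leaf node has one or two children, and if it has two children both of them share its bag. Stage~4 (smoothing edges): for every edge $tt'$ of the current tree with $t$ the parent of $t'$ such that $t$ has a single child (the two-child case is already handled by Stage~3), subdivide the edge $tt'$ into a path whose bags interpolate between $X_t$ and $X_{t'}$ by first introducing the vertices of $X_{t'}\setminus X_t$ one at a time (going downwards from $t$) and then forgetting the vertices of $X_t\setminus X_{t'}$ one at a time (just above $t'$), or conversely; any order works as long as each step changes the bag by exactly one element.

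The correctness check is then largely routine. Each newly inserted bag is a subset of some original bag, so the width remains at most $k$. Condition~\ref{condition_1_tree_decomposition} is preserved because we never remove vertices from existing bags. Condition~\ref{condition_2_tree_decomposition} is preserved because every edge $\{u,v\}\in E(G)$ was already covered by some original bag $X_t$, which is untouched. Condition~\ref{condition_3_tree_decomposition} is the only one needing care: the subtree $X^{-1}(v)$ can change, but in each stage the new nodes containing $v$ form a connected extension of the old subtree (Stages~1 and~2 extend it along a path at the root/leaves only when $v$ was there; Stage~3 replaces a star by a binary tree in which every new bag still contains $v$ if $v\in X_t$; Stage~4 inserts $v$ exactly along the contiguous segment of the subdivided edge between the first introduction and the last forget of $v$). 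After Stage~4 every internal node is of introduce, forget, or join type, and Stages~1 and~2 guarantee that the root and all leaves have empty bags, so the resulting decomposition $\mathcal{T}'$ is nice with width $k$.

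Since this is a textbook construction (already used as the starting point in~\cite{bodlaender1996efficient}), I do not expect a genuine obstacle; the only thing one must be careful about is the order of operations in Stage~4 and the bookkeeping that Condition~\ref{condition_3_tree_decomposition} is preserved, but this is resolved by always inserting a vertex before it is needed below and forgetting it only after its last use.
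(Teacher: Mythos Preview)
The paper does not give its own proof of this lemma; it is simply cited from~\cite{bodlaender1996efficient} as well known. Your construction is the standard textbook one and is essentially correct, but two small slips need fixing.

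First, Stage~3 as written only treats nodes with $\geq 3$ children. A node $t$ with exactly two children $t_1,t_2$ whose bags differ from $X_t$ is still not a join node after your Stage~3; you must also insert two copies of $X_t$ between $t$ and $t_1,t_2$ (or simply apply the gadget whenever a node has $\geq 2$ children). Without this, your claim ``if it has two children both of them share its bag'' is false.

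Second, in Stage~4 the assertion ``any order works'' is wrong for width preservation. If, going down from $t$ to $t'$, you first add the vertices of $X_{t'}\setminus X_t$ and only then delete those of $X_t\setminus X_{t'}$, an intermediate bag equals $X_t\cup X_{t'}$, which can exceed $k+1$. The safe interpolation passes through $X_t\cap X_{t'}$ (this is exactly what the paper's $\Normalize$ modification does). You later assert that every new bag is a subset of some original bag and flag the order as ``the only thing one must be careful about'', so you seem to know this; but the Stage~4 description contradicts it and should be corrected.

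With these two fixes the argument goes through.
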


For our algorithm, we will use the feedback vertex number as a parameter and our algorithm will need access to a feedback vertex set. It is formally defined as follows.

\begin{definition}[Feedback Vertex Set and Feedback Vertex Number]
    A \emph{feedback vertex set} of a graph $G=(V,E)$ is a set $C\subseteq V$ such that $G-C$ is a forest. The \emph{feedback vertex number} $\fvn(G)$ of a graph $G$ is the cardinality of a minimum feedback vertex set.
\end{definition}

It is well-known that a minimum feedback vertex set can be computed in  $2^{\OO(\fvn)}\cdot n^{\OO(1)}$ time. To the best of our knowledge, the following are the currently best-known algorithms.

\begin{theorem}[\cite{kociumaka2014faster,li2022detecting}]\label{thm:computefvs}
    A minimum feedback vertex set of a graph $G$ can be computed in $2.7^{\fvn(G)}\cdot n^{\OO(1)}$ randomized time or in $3.6^{\fvn(G)}\cdot n^{\OO(1)}$ deterministic time.
\end{theorem}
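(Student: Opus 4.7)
The plan is to prove this via the standard \emph{iterative compression} framework combined with carefully designed branching, and to indicate where the improved base of the exponent comes from in each version. First, I would use iterative compression to reduce the problem to the following \emph{disjoint} variant: given a graph $G$ and a feedback vertex set $W$ of size $k+1$, find a feedback vertex set of size at most $k$ disjoint from a specified subset $W'\subseteq W$ (if one exists). Trying all $2^{|W|}$ guesses for which vertices of $W$ remain in the new solution introduces only a $2^k$ overhead, and a standard iterative compression loop over the vertices of $G$ multiplies the running time by $n^{\OO(1)}$. It therefore suffices to solve the disjoint variant in time $c^k\cdot n^{\OO(1)}$ for the appropriate constant $c$.

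For the disjoint variant, I would maintain the invariant that $G-W'$ is a forest and apply polynomial-time reduction rules that are standard in the feedback vertex set literature: delete vertices of degree at most $1$, short-circuit vertices of degree $2$ outside $W'$, and deal with multi-edges and self-loops by forcing the corresponding endpoints into the solution. After exhaustive reduction, every remaining vertex outside $W'$ has degree at least $3$, and crucially at least two of its neighbors lie in the forest $G-W'$. I would then branch on a carefully chosen vertex $v$ outside $W'$: either put $v$ into the solution (decreasing the budget by one) or declare $v$ safe and move it into $W'$ (which must then be accompanied by a case analysis showing that the resulting graph either collapses further under the reduction rules or contains a forced cycle hitting vertex). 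A careful measure-and-conquer analysis, using a potential function that combines the remaining budget with a measure of how ``forest-like'' the current graph already is, yields the $3.6^{k}$ bound for the deterministic variant of \cite{kociumaka2014faster}. The hard part here is designing the branching rule so that the worst-case branching vector stays bounded: the subtle trick is to always branch on a vertex whose neighborhood in $G-W'$ has a specific structure (two tent-leaves, a double edge into a tree component, etc.), so that each branch decreases the measure by a guaranteed amount.

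For the randomized $2.7^{k}$ bound of \cite{li2022detecting}, I would instead replace part of the deterministic branching by a randomized sampling step, in the spirit of the cut-and-count / randomized contractions technique. Concretely, one can exploit the fact that the sought solution $S$ of size $k$ can be ``guessed'' against the forest $G-W'$ by a random 2-coloring of the relevant structure; with probability $2^{-\OO(k)}$ for the right constant, the coloring is consistent with $S$, and the remaining problem becomes polynomial-time solvable via matroid or cut-based tools. Repeating the experiment sufficiently many times boosts the success probability to a constant while keeping the expected running time at $2.7^{k}\cdot n^{\OO(1)}$.

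The main obstacle I anticipate is controlling the interaction between the reduction rules and the branching so that the measure never blows up. In particular, one must show that after reducing, every branching step either decreases the budget by $1$, decreases the potential by a constant bounded away from zero, or terminates the algorithm; and at the same time, the reduction rules must not undo earlier progress on $W'$. Proving the tight worst-case branching vector (and hence the precise base of the exponent) is what requires the delicate case analysis in \cite{kociumaka2014faster} and \cite{li2022detecting}, and it is the place where I would spend most of the technical effort. The polynomial factor $n^{\OO(1)}$ is easily absorbed, and the correctness of iterative compression is standard.
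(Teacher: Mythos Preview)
This theorem is not proved in the paper at all: it is imported verbatim from the cited references \cite{kociumaka2014faster,li2022detecting} and used as a black box. There is therefore no ``paper's own proof'' to compare against.

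Your sketch is a reasonable high-level outline of the iterative-compression-plus-branching paradigm that underlies the deterministic $3.6^{\fvn(G)}$ algorithm of Kociumaka and Pilipczuk, and that part is broadly accurate. Your description of the randomized $2.7^{\fvn(G)}$ algorithm of Li and Nederlof, however, is off: their algorithm is not based on cut-and-count or random $2$-colorings in the way you describe. Their key idea is a probabilistic reduction of the disjoint compression instance to a bounded-treewidth instance (via random sampling of which forest vertices are ``undeletable''), after which Cut\&Count is applied on that bounded-treewidth graph; the probability analysis yielding the $2.7$ base is quite different from a direct coloring argument. Since the present paper only needs the statement as a subroutine and does not reprove it, none of this affects the paper itself.
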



\section{Main Concepts for the Algorithm}\label{sec:mainconcepts}
One of the main ingredients to prove \cref{thm:main} is showing that there are tree decompositions (in particular, also minimum-width ones) that interact with a given vertex set in a very structured way.  To this end, we present several concepts introduced by Chapelle et al.~\cite{chapelle2017treewidth} that we also use for our algorithm in \cref{sec:concepts}.  In particular, we give the definitions of \emph{traces}, \emph{directed paths} thereof~\cite{chapelle2017treewidth}, and several additional concepts. Then, in \cref{sec:snice}, we introduce $S$-nice tree decompositions, which interact with a vertex set $S$ in a ``nice'' way, but are more flexible with vertices not in $S$.
In \cref{sec:mod}, we present several ways to modify tree decompositions. Those are an important tool to show that we can modify an optimal tree decomposition into one that our algorithm finds. Finally, we introduce \slim and \topheavy $S$-nice tree decompositions in \cref{sec:useful}.
Those types of tree decompositions have crucial additional properties that we need to show that our algorithm is correct.
We use the modifications from \cref{sec:mod} to show that we can transform $S$-nice tree decompositions into ones of those types.

\subsection{\boldmath$S$-Traces, Directed Paths, $S$-Children, and $S$-Parents}
\label{sec:concepts}
Let $t$ be a node of a rooted tree decomposition $\mathcal{T}=(T,\{X_t\}_{t\in V(T)})$ of a graph $G=(V,E)$. We denote with~$T_t$ the subtree of $T$ rooted at node $t$. We use the following notation throughout the document:
\begin{itemize}
    \item The set $V_t=\bigcup_{t'\in V(T_t)}X_{t'}$ is the union of all bags in the subtree $T_t$.
    \item The set $L_t=V_t\setminus X_t$ is the set of vertices ``below'' node $t$.
    \item The set $R_t=V\setminus V_t$ is the set of the ``remaining'' vertices, that is, the ones that are neither in the bag of node $t$ nor below $t$.
\end{itemize}
Clearly, for every node $t$, the sets $L_t,X_t,R_t$ are a partition of $V$.
\begin{definition}[\cite{chapelle2017treewidth}) ($S$-Trace] 
\label{def:trace}
    Let $G=(V,E)$ be a graph and tet $S\subseteq V$. Let $t$ be a node of a rooted tree decomposition $\mathcal{T}=(T,\{X_t\}_{t\in V(T)})$ of $G$. 
    The \emph{$S$-trace} of $t$ is the triple $(L_t^S, X_t^S, R_t^S)$ such that $L_t^S=L_t\cap S$, $X_t^S=X_t\cap S$, and $R_t^S=R_t\cap S$. 
    A triple $(L^S,X^S,R^S)$ is an \emph{$S$-trace} if it is the $S$-trace of some node in some tree decomposition of~$G$. 
\end{definition}
We remark that in the work of Chapelle et al.~\cite{chapelle2017treewidth}, the set $S$ in \cref{def:trace} is fixed to be a minimum vertex cover. In our work, we will (later) fix it to be a minimum feedback vertex set. It is easy to see that $S$-traces have the following useful properties.
\begin{observation}[\cite{chapelle2017treewidth}]\label{obs:strace}
Let $G=(V,E)$ be a graph and $S\subseteq V$.
    \begin{enumerate}
        \item Let $t,t'$ be nodes of a rooted tree decomposition $\mathcal{T}=(T,\{X_t\}_{t\in V(T)})$ of a graph $G$. Let $(L_t^S, X_t^S, R_t^S)$ and $(L_{t'}^S, X_{t'}^S, R_{t'}^S)$ be the $S$-traces of $t$ and $t'$, respectively. If $t$ is an ancestor of $t'$, then $L_{t'}^S\subseteq L_t^S$.
        \item Let $L^S,X^S,R^S$ be a partition of the set $S$. The triple $(L^S,X^S,R^S)$ is an $S$-trace if and only if there is no path from a vertex in $L^S$ to a vertex in $R^S$ in $G[S]-X^S$.
    \end{enumerate}
\end{observation}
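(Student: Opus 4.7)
Both parts follow from routine manipulations of the axioms of a tree decomposition (\cref{def:tree_decomposition}), so I will not expect a single hard step; rather, I will be careful about invoking the subtree condition (Property~3).

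For part~1, the plan is to argue that if $t$ is an ancestor of $t'$ and $v\in L_{t'}^S$, then $v\in L_t^S$. Since $v\in V_{t'}$, there is a node $t''$ in $T_{t'}$ with $v\in X_{t''}$. Because $t'$ lies on the unique path in $T$ from $t$ to $t''$, and $v\notin X_{t'}$ (as $v\in L_{t'}=V_{t'}\setminus X_{t'}$), the subtree condition forbids $v\in X_t$. Also $v\in V_{t'}\subseteq V_t$, so $v\in L_t$, and $v\in S$ gives $v\in L_t^S$.

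For the forward direction of part~2, suppose $(L^S,X^S,R^S)$ is the $S$-trace of some node $t$ in a tree decomposition $\mathcal{T}$ of $G$, and assume for contradiction that there is a path $u_0,u_1,\ldots,u_\ell$ in $G[S]-X^S$ with $u_0\in L^S$ and $u_\ell\in R^S$. Every vertex of this path lies in $S\setminus X^S$, hence, because $X_t\cap S=X^S$, outside $X_t$; in particular, each $u_i$ is either in $L_t$ or in $R_t$. Since $u_0\in L_t$ and $u_\ell\in R_t$, some consecutive pair $\{u_i,u_{i+1}\}\in E$ has $u_i\in L_t$ and $u_{i+1}\in R_t$ (or vice versa). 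By Property~2 of tree decompositions there is a node $t''$ with $\{u_i,u_{i+1}\}\subseteq X_{t''}$. If $t''\in V(T_t)$ then $u_{i+1}\in V_t$, contradicting $u_{i+1}\in R_t$; if $t''\notin V(T_t)$ then, since $u_i\in L_t\subseteq V_t$ witnesses some $t^\ast\in V(T_t)$ with $u_i\in X_{t^\ast}$, and the path from $t^\ast$ to $t''$ in $T$ crosses $t$, the subtree condition would force $u_i\in X_t$, contradicting $u_i\in L_t$.

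For the backward direction, given a partition $(L^S,X^S,R^S)$ of $S$ with no path from $L^S$ to $R^S$ in $G[S]-X^S$, I will exhibit a concrete rooted tree decomposition realising this $S$-trace. Take $T$ to be a path $t_L-t-t_R$ rooted at $t_R$, and define
\[
X_{t_L}=L^S\cup X^S\cup(V\setminus S),\quad X_t=X^S\cup(V\setminus S),\quad X_{t_R}=R^S\cup X^S\cup(V\setminus S).
\]
Property~1 is immediate since $L^S\cup X^S\cup R^S=S$; Property~3 holds since each vertex appears in a contiguous set of bags (vertices of $L^S$ only in $X_{t_L}$, of $R^S$ only in $X_{t_R}$, and vertices of $X^S\cup(V\setminus S)$ in all three). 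For Property~2, the only potentially missing edges are those between $L^S$ and $R^S$; but such an edge would be a length-$1$ path in $G[S]-X^S$ from $L^S$ to $R^S$, contradicting the hypothesis. Finally, a direct computation gives $V_t=X_t\cup X_{t_L}=L^S\cup X^S\cup(V\setminus S)$, whence $L_t^S=L^S$, $X_t^S=X^S$, and $R_t^S=R^S$, as required.
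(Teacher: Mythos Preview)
Your proof is correct. The paper itself does not supply a proof of this observation; it is attributed to Chapelle et al.\ and simply stated, so there is no ``paper's proof'' to compare against. Your argument for part~1 via the subtree condition is the standard one, your forward direction of part~2 is the usual separator argument, and your explicit three-node construction for the backward direction is a clean and valid witness.
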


Chapelle et al.~\cite{chapelle2017treewidth} show that the nodes of a rooted tree decomposition that have the same $S$-trace form a directed path. They prove this for $S$-traces where $S$ is a minimum vertex cover and the rooted tree decomposition is nice, however inspecting their proof reveals that this property also holds for arbitrary sets $S$ and arbitrary rooted tree decompositions. Formally, we have the following.

\begin{lemma}[\cite{chapelle2017treewidth}]\label{lem:dirpath}
    Let $\mathcal{T}=(T,\{X_t\}_{t\in V(T)})$ be a rooted tree decomposition of a graph $G=(V,E)$. Let $S\subseteq V$ and let $(L^S, X^S, R^S)$ be an $S$-trace such that $L^S\ne \emptyset$. Moreover, let $N$ be the set of nodes of $\mathcal{T}$ whose $S$-trace is $(L^S, X^S, R^S)$. Then the subgraph of $T$ induced by $N$ is a directed path.
\end{lemma}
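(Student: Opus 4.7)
The plan is to establish two structural properties of $N$: that $N$ is totally ordered by the ancestor-descendant relation of $T$, and that $N$ is ``convex'' in the sense that whenever two of its members are comparable, every node on the directed path between them also belongs to $N$. These two properties together imply that the induced subgraph is exactly the directed path from the topmost to the bottommost node of $N$.

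First, I would show that any two nodes $t_1, t_2 \in N$ are comparable. Suppose for contradiction that they are incomparable, and let $t$ be their lowest common ancestor in $T$. By the hypothesis $L^S \neq \emptyset$, one can pick some $v \in L^S$. Then $v \in L^S_{t_i} \subseteq L_{t_i} = V_{t_i} \setminus X_{t_i}$ for $i\in\{1,2\}$, so $v$ appears in some bag inside $T_{t_1}$ and in some bag inside $T_{t_2}$, while $v \notin X_{t_1}$. Because $t_1, t_2$ are incomparable, $t_1$ lies on the tree path between any node of $T_{t_1}$ and any node of $T_{t_2}$, so the subtree condition of \cref{def:tree_decomposition} forces $v \in X_{t_1}$, a contradiction. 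This is the step I expect to be the main obstacle: one has to notice that the nonemptiness of $L^S$ is exactly what rules out two incomparable subtrees from sharing the same $S$-trace.

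Next, I would prove convexity. Let $t_1, t_2 \in N$ with $t_1$ an ancestor of $t_2$, and let $t''$ lie on the directed path from $t_1$ to $t_2$. By \cref{obs:strace}(1) applied twice, $L^S_{t_2} \subseteq L^S_{t''} \subseteq L^S_{t_1}$, and since $L^S_{t_1} = L^S_{t_2} = L^S$, we get $L^S_{t''} = L^S$. The inclusions $V_{t_2} \subseteq V_{t''} \subseteq V_{t_1}$ give
\[
L^S \cup X^S \;\subseteq\; L^S \cup X^S_{t''} \;\subseteq\; L^S \cup X^S,
\]
and since $X^S_{t''}$ is disjoint from $L^S_{t''} = L^S$, we obtain $X^S_{t''} = X^S$. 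Then $R^S_{t''} = S \setminus (L^S \cup X^S) = R^S$, so $t'' \in N$.

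Combining the two properties, $N$ is linearly ordered under the ancestor-descendant relation, and every intermediate node on any directed path between two members of $N$ again belongs to $N$. Letting $t_\top$ and $t_\bot$ be the topmost and bottommost nodes of $N$ (which exist since $N$ is nonempty and linearly ordered), the induced subgraph of $T$ on $N$ is exactly the directed path from $t_\top$ to $t_\bot$, as required.
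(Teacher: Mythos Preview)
Your proof is correct. The paper does not give its own proof of this lemma; it attributes the result to Chapelle et al.\ and remarks that their argument (stated there for nice tree decompositions with $S$ a vertex cover) goes through unchanged in the present generality. Your two-step argument---first using $L^S\neq\emptyset$ together with the subtree property to rule out incomparable nodes, then squeezing the trace components via \cref{obs:strace}(1) and $V_{t_2}\subseteq V_{t''}\subseteq V_{t_1}$ to obtain convexity---is exactly the standard line of reasoning one would expect and matches what the cited proof does.

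One minor remark: the lemma as stated does not assert $N\neq\emptyset$ (by \cref{def:trace} an $S$-trace need only occur in \emph{some} tree decomposition of $G$, not necessarily in $\mathcal{T}$), so your last paragraph tacitly assumes this when invoking $t_\top$ and $t_\bot$. This is harmless, since the lemma is only ever applied in contexts where $N$ is nonempty, but you may wish to note the empty case explicitly.
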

\cref{lem:dirpath} allows us to define so-called \emph{directed paths of $S$-traces}, which will be central to our algorithm. Informally, given an $S$-trace our algorithm will compute candidates for the corresponding directed path, and then use a dynamic programming approach to connect the path to the rest of a partially computed rooted tree decomposition.

\begin{definition}[Directed Path of an $S$-Trace, Top Node, and Bottom Node]\label{def:path_of_a_trace}
    Let $\mathcal{T}=(T,\{X_t\}_{t\in V(T)})$ be a rooted tree decomposition of a graph $G=(V,E)$, $S\subseteq V$ and $(L^S, X^S, R^S)$ an $S$-trace such that $L^S\ne \emptyset$. We call the directed path induced by the set of nodes of $\mathcal{T}$ whose $S$-trace is $(L^S, X^S, R^S)$ \emph{the directed path of} $(L^S, X^S, R^S)$ in $\mathcal{T}$ from node $t_{\max}$ to node $t_{\min}$. We call $t_{\max}$ the \emph{top node} of the directed path and we call $t_{\min}$ the \emph{bottom node} of the directed path.
\end{definition}

We use the above concepts and some additional ones that we introduce below to define a specific type of rooted tree decomposition that interacts with the set $S$ in a very structured way. The concepts are illustrated in \cref{fig:traces}. First, we distinguish nodes that are at the end of a directed path of some $S$-trace.

\begin{definition}[$S$-Bottom Node]\label{def:bottom_node}
    Let $\mathcal{T}=(T,\{X_t\}_{t\in V(T)})$ be a rooted tree decomposition of a graph $G=(V,E)$ and $S\subseteq V$. Let $t$ be a node of $\mathcal{T}$.  We say that $t$ is an \emph{$S$-bottom} node if there is an $S$-trace $(L^S, X^S, R^S)$ such that $L^S\ne \emptyset$, and $t$ is the bottom node $t_{\min}$ of the directed path of $(L^S, X^S, R^S)$ in $\mathcal{T}$. 
\end{definition}
Next, we observe that for every directed path of some $S$-trace, the parent of $t_{\max}$ is also an $S$-bottom node. Note that this implies that a situation as depicted in \cref{fig:tracesb} cannot happen.

\begin{observation}\label{obs:sbottom}
    Let $\mathcal{T}=(T,\{X_t\}_{t\in V(T)})$ be a rooted tree decomposition of a graph $G=(V,E)$, $S\subseteq V$ and $(L^S, X^S, R^S)$ an $S$-trace such that $L^S\ne \emptyset$. Let the directed path of $(L^S, X^S, R^S)$ go from $t_{\max}$ to node $t_{\min}$. We have that if $t_{\max}$ has a parent, then the parent is an $S$-bottom node.
\end{observation}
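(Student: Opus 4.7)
The plan is to show that $p$, the parent of $t_{\max}$, is itself the bottom node of the directed path associated with its own $S$-trace $(L_p^S, X_p^S, R_p^S)$. This requires verifying (i) that $L_p^S \neq \emptyset$, so that this $S$-trace is of the kind allowed by \cref{def:bottom_node}, and (ii) that no child of $p$ shares $p$'s $S$-trace, so that the directed path of $(L_p^S, X_p^S, R_p^S)$ through $\mathcal{T}$ terminates at $p$.

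Step (i) would follow immediately from \cref{obs:strace}(1) applied to the ancestor relation between $p$ and $t_{\max}$: it yields $L^S = L_{t_{\max}}^S \subseteq L_p^S$, and $L^S$ is nonempty by hypothesis.

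Step (ii) is where the real work lies; I would argue by contradiction. Assume some child $c$ of $p$ has the same $S$-trace as $p$. Since $t_{\max}$ is the top node of its own directed path, its parent $p$ has a distinct $S$-trace, so $c \neq t_{\max}$. Then I would pick any $v \in L^S$ and track where $v$ appears in the bags of $\mathcal{T}$. By the trace equality $L_c^S = L_p^S$, the vertex $v$ lies in $V_c$, hence appears in some bag inside the subtree $T_c$. But $v$ also appears in some bag inside $T_{t_{\max}}$, because $v \in L^S = L_{t_{\max}}^S \subseteq V_{t_{\max}}$. Since $c$ and $t_{\max}$ are distinct children of $p$, every path in $T$ between a node of $T_c$ and a node of $T_{t_{\max}}$ must pass through $p$, and the subtree-containment property of tree decompositions (Condition~3 of \cref{def:tree_decomposition}) then forces $v \in X_p$. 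This contradicts step (i), which gave $v \in L_p^S$ and hence $v \notin X_p$.

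The main obstacle, if any, is a careful application of the subtree-containment property to braid the two subtrees $T_c$ and $T_{t_{\max}}$ together through $p$; the remaining steps are straightforward bookkeeping using \cref{obs:strace}(1). I would also note the trivial edge case: if $t_{\max}$ has no parent, the statement is vacuous, so it suffices to handle the case where $p$ exists.
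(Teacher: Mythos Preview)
Your proposal is correct and follows essentially the same approach as the paper's proof. Both arguments first use \cref{obs:strace}(1) to obtain $L^S \subseteq L_p^S$ (hence $L_p^S \neq \emptyset$), then derive a contradiction from a hypothetical sibling $c$ of $t_{\max}$ sharing $p$'s $S$-trace by exploiting the subtree property (Condition~\ref{condition_3_tree_decomposition} of \cref{def:tree_decomposition}); the only cosmetic difference is that the paper phrases the contradiction as $L^S \subseteq L_t^S = L_{t'}^S \subseteq R^S$, whereas you track a single vertex $v \in L^S$ directly into $X_p$.
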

\begin{proof}
    Assume that $t_{\max}$ has a parent and let $t$ denote the parent of $t_{\max}$. Let $(L_t^S, X_t^S, R_t^S)$ be the $S$-trace of $t$. We have that $(L_t^S, X_t^S, R_t^S)\neq (L^S, X^S, R^S)$ since $t$ is not part of the directed path of $(L^S, X^S, R^S)$. Furthermore, by \cref{obs:strace} we have that $L^S\subseteq L_t^S$ and hence $L_t^S\neq\emptyset$. If $t_{\max}$ is the only child of $t$, then we are done. Assume that $t$ has a child $t'\neq t_{\max}$. Let $(L_{t'}^S, X_{t'}^S, R_{t'}^S)$ be the $S$-trace of $t'$. Assume for contradiction that $(L_t^S, X_t^S, R_t^S)= (L_{t'}^S, X_{t'}^S, R_{t'}^S)$. Then we have that $L_{t'}^S\subseteq R^S$ and hence $L_{t}^S\subseteq R^S$. This together with $L^S\subseteq L_t^S$ implies that $L^S\subseteq R^S$, a contradiction.
\end{proof}

\begin{figure}[t]
\centering
\begin{subfigure}[t]{0.49\textwidth}
\centering
\begin{tikzpicture}[line width=1pt,scale=.8,xscale=1]
\phantom{
\draw[draw, line width=10pt, color=red!60!white] (-3,0) -- (3,-9);
\draw[draw, line width=10pt, color=red!60!white] (3,0) -- (-3,-9);
}

\draw[trace] (0,0) -- (0,-3);
\draw[trace] (-2,-4) -- (-2,-6);
\draw[trace] (-2,-7) -- (-2,-8);
\draw[trace] (2,-4) -- (2,-7);

\node[vert] (v1) at (0,0) {};
\node[vert] (v2) at (0,-1) {};
\node[vert] (v3) at (0,-2) {};
\node[vert2] (v4) at (0,-3) {};

\node[trial] (t1) at (1,-1.5) {};
\draw[edge] (v2) -- (t1.apex);

\node[triac] (t2) at (0,-4) {};
\draw[edge] (v4) -- (t2.apex);

\node[vert] (u1) at (-2,-4) {};
\node[vert] (u2) at (-2,-5) {};
\node[vert2] (u3) at (-2,-6) {};

\node[triar] (t3) at (-3,-6.5) {};
\draw[edge] (u3) -- (t3.apex);

\node[vert] (u4) at (-2,-7) {};
\node[vert2] (u5) at (-2,-8) {};

\node[vert] (w1) at (2,-4) {};
\node[vert] (w2) at (2,-5) {};
\node[vert] (w3) at (2,-6) {};
\node[vert2] (w4) at (2,-7) {};

\node[trial] (t4) at (3,-6.5) {};
\draw[edge] (w3) -- (t4.apex);

\draw[edge,dashed] (0,.75) -- (v1);

\draw[edge] (v1) -- (v2);
\draw[edge] (v2) -- (v3);
\draw[edge] (v3) -- (v4);

\draw[edge] (v4) -- (u1);

\draw[edge] (u1) -- (u2);
\draw[edge] (u2) -- (u3);
\draw[edge] (u3) -- (u4);
\draw[edge] (u4) -- (u5);

\draw[edge] (v4) -- (w1);

\draw[edge] (w1) -- (w2);
\draw[edge] (w2) -- (w3);
\draw[edge] (w3) -- (w4);

\draw[edge,dashed] (u5) -- (-3,-8.5);
\draw[edge,dashed] (u5) -- (-1,-8.5);
\draw[edge,dashed] (w4) -- (2,-7.75);
\end{tikzpicture}
\caption{Illustration of an $S$-nice tree decomposition.}\label{fig:tracesa}
\end{subfigure}
\begin{subfigure}[t]{0.49\textwidth}
\centering
\begin{tikzpicture}[line width=1pt,scale=.8,xscale=1]
\draw[draw, line width=10pt, color=red!60!white] (-3,0) -- (3,-9);
\draw[draw, line width=10pt, color=red!60!white] (3,0) -- (-3,-9);

\draw[trace] (0,0) -- (0,-3) -- (2,-4) -- (2,-7);
\draw[trace] (-2,-4) -- (-2,-6);
\draw[trace] (-2,-7) -- (-2,-8);

\node[vert] (v1) at (0,0) {};
\node[vert] (v2) at (0,-1) {};
\node[vert] (v3) at (0,-2) {};
\node[vert] (v4) at (0,-3) {};

\node[trial] (t1) at (1,-1.5) {};
\draw[edge] (v2) -- (t1.apex);

\node[triac] (t2) at (0,-4) {};
\draw[edge] (v4) -- (t2.apex);

\node[vert] (u1) at (-2,-4) {};
\node[vert] (u2) at (-2,-5) {};
\node[vert2] (u3) at (-2,-6) {};

\node[triar] (t3) at (-3,-6.5) {};
\draw[edge] (u3) -- (t3.apex);

\node[vert] (u4) at (-2,-7) {};
\node[vert2] (u5) at (-2,-8) {};

\node[vert] (w1) at (2,-4) {};
\node[vert] (w2) at (2,-5) {};
\node[vert] (w3) at (2,-6) {};
\node[vert2] (w4) at (2,-7) {};

\node[trial] (t4) at (3,-6.5) {};
\draw[edge] (w3) -- (t4.apex);

\draw[edge,dashed] (0,.75) -- (v1);

\draw[edge] (v1) -- (v2);
\draw[edge] (v2) -- (v3);
\draw[edge] (v3) -- (v4);

\draw[edge] (v4) -- (u1);

\draw[edge] (u1) -- (u2);
\draw[edge] (u2) -- (u3);
\draw[edge] (u3) -- (u4);
\draw[edge] (u4) -- (u5);

\draw[edge] (v4) -- (w1);

\draw[edge] (w1) -- (w2);
\draw[edge] (w2) -- (w3);
\draw[edge] (w3) -- (w4);

\draw[edge,dashed] (u5) -- (-3,-8.5);
\draw[edge,dashed] (u5) -- (-1,-8.5);
\draw[edge,dashed] (w4) -- (2,-7.75);
\end{tikzpicture}
\caption{\cref{obs:sbottom} implies that this cannot happen.}\label{fig:tracesb}
\end{subfigure}
    \caption{Illustrations for the concepts of $S$-traces (\cref{def:trace}) and their directed paths (\cref{def:path_of_a_trace}), $S$-bottom nodes (\cref{def:bottom_node}), and $S$-nice tree decompositions (\cref{def:snicetd}). Subfigure \ref{fig:tracesa} shows an illustration of a part of an $S$-nice tree decomposition. The directed paths of different $S$-traces are surrounded by green lines. The $S$-bottom nodes are depicted as gray squares. The blue triangles illustrate parts of the tree decompositions where all nodes have $S$-traces with $L^S=\emptyset$, and hence those nodes are not part of any directed paths of $S$-traces. Subfigure \ref{fig:tracesb} shows a configuration of directed paths of different $S$-traces that we can rule out by \cref{obs:sbottom}.}\label{fig:traces}
\end{figure}

We now define so-called \emph{$S$-parent} and \emph{$S$-children}. Informally, an $S$-child is a child of a node such that there is a vertex in $S$ that is only contained in nodes in the subtree of the tree decomposition rooted at the $S$-child. The $S$-parent of a node is defined analogously. Formally, we define them as follows.

\begin{definition}[$S$-Parent and $S$-Child]\label{def:sparentchild}
Let $\mathcal{T}=(T,\{X_t\}_{t\in V(T)})$ be a rooted tree decomposition of a graph $G=(V,E)$ and $S\subseteq V$.
Let $t$ and $t'$ be two nodes of $\mathcal{T}$ such that $t$ is the parent of $t'$ in $T$. Let $(L_t^S, X_t^S, R_t^S)$ and $(L_{t'}^S, X_{t'}^S, R_{t'}^S)$ be the $S$-traces of $t$ and $t'$, respectively.
    \begin{itemize}
        \item If $(R_t^S \cup X_t^S)\setminus (L_{t'}^S \cup X_{t'}^S)\neq \emptyset$, then we say $t$ is an \emph{$S$-parent} of $t'$. 
        \item If $(L_{t'}^S \cup X_{t'}^S)\setminus (R_t^S \cup X_t^S) \neq \emptyset$, then we say $t'$ is an \emph{$S$-child} of $t$. 
    \end{itemize}
\end{definition}

\subsection{\boldmath$S$-Nice Tree Decompositions}\label{sec:snice}
%
We now give the definition of a so-called \emph{$S$-nice} tree decomposition. Intuitively, this tree decomposition behaves nicely (in the sense of \cref{def:nicetd}) when interacting with vertices from the vertex set $S$, but is more flexible for the vertices in $V\setminus S$.
Similarly to nice tree decomposition, $S$-nice tree decompositions distinguish three main types of $S$-bottom nodes, analogous to \emph{introduce}, \emph{forget}, and \emph{join} nodes. We associate each $S$-bottom node with $S$-trace $(L^S,X^S,R^S)$ with a so-called \emph{S-operation}, which can be $\introduce(v)$ for some $v\in X^S$, $\forget(v)$ for some $v\in S\setminus X^S$,
or $\join(X^S,X_1^S,X_2^S,L_{1}^S, L_{2}^S)$ for some $X_1^S,X_2^S\subseteq X^S$ and $L_{1}^S,L_{2}^S\subseteq L^S$.
\begin{definition}[$S$-Nice Tree Decomposition and $S$-Operations] \label{def:snicetd}
    A rooted tree decomposition $\mathcal{T}=(T,\{X_t\}_{t\in V(T)})$ of a graph $G=(V,E)$ is \emph{$S$-nice} for $S\subseteq V$ if the following hold. Let $t$ be a node.
    Then the following holds.
    \begin{enumerate}
        \item If $t$ has a parent $t'$, then $X_t\subseteq X_{t'}$ or $X_t\supseteq X_{t'}$, and $-1\le |X_t|-|X_{t'}|\le 1$.
    \end{enumerate}
        If $t$ is an $S$-bottom node in $\mathcal{T}$ with $S$-trace $(L_t^S, X_t^S, R_t^S)$, then additionally one of the following holds.
        \begin{enumerate}\setcounter{enumi}{1}
            \item Node $t$ has exactly one $S$-child $t_1$. 
            We have that
            $X^S_t= X^S_{t_1}\cup \{v\}$ for some $v\in S\setminus X_{t_1}$. 
            
            Then we say that $t$ admits the \emph{$S$-operation} $\introduce(v)$. 
            \label{condition_S_nice_condition_1_a}
            
            \item Node $t$ has exactly one $S$-child $t_1$.
            We have that
            $X^S_t=X^S_{t_1}\setminus \{v\}$ for some $v\in S\cap X_{t_1}$. 
            
            Then we say that $t$ admits the \emph{$S$-operation} $\forget(v)$. 
            \label{condition_S_nice_condition_1_b}
            
            \item Node $t$ has exactly two $S$-children $t_1, t_2$.  
            We have that $X_{t_1}^S\cup X_{t_2}^S\subseteq X_t^S$.

            Let $(L_{t_1}^S, X_{t_1}^S, R_{t_1}^S)$ and $(L_{t_2}^S, X_{t_2}^S, R_{t_2}^S)$ be the two $S$-traces of $t_1$ and $t_2$, respectively.
            We say that $t$ admits the \emph{$S$-operation} $\join(X^S_t,X^S_{t_1}, X^S_{t_2},L_{t_1}^S, L_{t_2}^S)$. 
            \label{condition_S_nice_condition_1_c}
        \end{enumerate} 
\end{definition}

See \cref{fig:tracesa} for an illustration of an $S$-nice tree decomposition. We show that if there is a tree decomposition for a graph $G=(V,E)$, then for every $S\subseteq V$ there is also an $S$-nice tree decomposition with the same width for $G$.

\begin{lemma}\label{lem:snicetd}
    Let $\mathcal{T}$ be a nice tree decomposition for a graph $G=(V,E)$, then for every $S\subseteq V$, the tree decomposition $\mathcal{T}$ is $S$-nice.
\end{lemma}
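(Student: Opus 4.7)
The plan is to verify the two sets of conditions in \cref{def:snicetd} separately. For condition~1, let $t$ be any node with parent $t'$ in $\mathcal{T}$. The three node types from \cref{def:nicetd} immediately give the required relation: if $t'$ is an introduce node then $X_{t'}=X_t\cup\{v\}$; if $t'$ is a forget node then $X_{t'}=X_t\setminus\{v\}$; if $t'$ is a join node then $X_t=X_{t'}$. In each case $X_t$ and $X_{t'}$ are nested and differ in size by at most one, so condition~1 holds.

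For the remaining conditions I would first compile a few book-keeping facts about $V_t$ and $L_t$ under each node type, using only \cref{def:nicetd} and the connectivity axiom from \cref{def:tree_decomposition}. If $t$ is a leaf then $X_t=\emptyset$, so $L_t^S=\emptyset$ and $t$ is not an $S$-bottom node. If $t$ is an introduce node with child $t_1$ introducing $v$, then by connectivity $v\notin V_{t_1}$, so $V_t=V_{t_1}\cup\{v\}$ and $L_t=L_{t_1}$. If $t$ is a forget node with child $t_1$ forgetting $v$, then $V_t=V_{t_1}$ and $L_t=L_{t_1}\cup\{v\}$. If $t$ is a join node with children $t_1,t_2$, then by connectivity $L_{t_1}\cap L_{t_2}=\emptyset$, $L_t=L_{t_1}\cup L_{t_2}$, and $X_t=X_{t_1}=X_{t_2}$.

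With this in hand I would do the case analysis. For an introduce node with $v\notin S$, both $X_t^S=X_{t_1}^S$ and $L_t^S=L_{t_1}^S$, so $t$ and $t_1$ share an $S$-trace and $t$ cannot be the bottom of its directed path. If $v\in S$ and $L_t^S=\emptyset$, then again $t$ is not an $S$-bottom node. If $v\in S$ and $L_t^S\neq\emptyset$, then the $S$-traces of $t$ and $t_1$ differ in their $X^S$ coordinate, so $t$ is an $S$-bottom node; checking \cref{def:sparentchild} with $L_{t_1}^S=L_t^S\neq\emptyset$ shows $t_1$ is the unique $S$-child, and the identity $X_t^S=X_{t_1}^S\cup\{v\}$ with $v\in S\setminus X_{t_1}$ witnesses the operation $\introduce(v)$. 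Symmetrically, for a forget node forgetting $v$: if $v\notin S$ the $S$-traces coincide; if $v\in S$, then $L_t^S=L_{t_1}^S\cup\{v\}\neq\emptyset$, $X_t^S=X_{t_1}^S\setminus\{v\}$, and $v\in X_{t_1}^S\setminus X_t^S$ with $v\notin R_t^S$ (since $v\in L_t^S$) shows $t_1$ is the unique $S$-child, yielding $\forget(v)$.

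Finally, for a join node $t$ with children $t_1,t_2$, the key dichotomy is whether both $L_{t_1}^S$ and $L_{t_2}^S$ are non-empty. If one is empty, say $L_{t_2}^S=\emptyset$, then $L_{t_1}^S=L_t^S$ and $X_{t_1}^S=X_t^S$ give $t_1$ the same $S$-trace as $t$, so $t$ is not the bottom of its directed path. Otherwise, the two children's $L^S$ sets are proper non-empty subsets of $L_t^S$, so both $S$-traces strictly differ from $t$'s and $t$ is an $S$-bottom node; both children satisfy the $S$-child criterion since $L_{t_i}^S\neq\emptyset$, and $X_{t_1}^S\cup X_{t_2}^S=X_t^S$ directly exhibits the operation $\join(X_t^S,X_{t_1}^S,X_{t_2}^S,L_{t_1}^S,L_{t_2}^S)$. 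The main obstacle is just making the case analysis truly exhaustive — in particular correctly identifying, for each nice-tree-decomposition node type, exactly when the $S$-trace changes across the parent-child edge; once that is done, the remainder is book-keeping matching each structural change to the corresponding $S$-operation.
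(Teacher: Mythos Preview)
Your proposal is correct and follows essentially the same approach as the paper: a case analysis over the three node types of a nice tree decomposition, checking in each case whether the $S$-trace changes across the parent--child edge and matching the change to the appropriate $S$-operation. The only organizational difference is that the paper argues by contradiction (assuming $t$ is an $S$-bottom node and ruling out the bad sub-cases), whereas you proceed constructively, first characterizing exactly when each node type yields an $S$-bottom node; the underlying computations with $L_t^S$, $X_t^S$, $R_t^S$ are the same.
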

\begin{proof}
    Let $G=(V,E)$ be a graph and let $\mathcal{T}$ be a nice tree decomposition with width $k$. 
    Let $S\subseteq V$. We show that $\mathcal{T}$ is $S$-nice.
Let $t$ be a node and let $t'$ be the parent node of $t$.      
     Then from \cref{def:nicetd} we clearly have that $X_t\subseteq X_{t'}$ or $X_t\supseteq X_{t'}$, and that $-1\le |X_t|-|X_{t'}|\le 1$. 

    Let $t$ be an $S$-bottom node. 
    Assume that $t$ has exactly one child $t_1$ in $\mathcal{T}$.
    Since $\mathcal{T}$ is nice it follows from \cref{def:nicetd} that $t$ is either a introduce node or a forget node, and hence $X_t\triangle X_{t_1}=\{v\}$ for some $v\in V$.\footnote{We use $\triangle$ to denote the \emph{symmetric difference} of two sets, that is $A\triangle B=(A\setminus B)\cup (B\setminus A)$.}
    If $X_t^S\neq X_{t_1}^S$, then we are done since we must have either $X^S_t=X^S_{t_1}\cup \{v\}$ for some $v\in S\setminus X_{t_1}$ or $X^S_t=X^S_{t_1}\setminus \{v\}$ for some $v\in X^S_{t_1}=S\cap X_{t_1}$. 
    Assume that $X_t^S= X_{t_1}^S$. Since $t$ is an $S$-bottom node we have by \cref{def:bottom_node} that $(L_t^S, X_t^S, R_t^S)\neq (L_{t_1}^S, X_{t_1}^S, R_{t_1}^S)$. Furthermore, we have by \cref{obs:strace} that $L_{t_1}^S\subseteq L_t^S$. It follows that there is some $v\in L_t^S\cap R_{t_1}^S$. This is a contradiction to Condition~\ref{condition_3_tree_decomposition} of \cref{def:tree_decomposition}.
    
    Assume that $t$ has exactly two children $t_1,t_2$ in $\mathcal{T}$. 
    Since $\mathcal{T}$ is nice it follows from \cref{def:nicetd} that $t$ is a join node and $X_t=X_{t_1}=X_{t_2}$, which implies that $X_{t_1}\cup X_{t_2}\subseteq X_t$.
    Hence, if $t_1$ and $t_2$ are either both $S$-children of $t$ or neither of them is an $S$-child of $t$, then we are done. Assume w.l.o.g.\ that $t_1$ is an $S$-child of $t$ and $t_2$ is not. Then by \cref{def:sparentchild} we have that $(L_{t_1}^S \cup X_{t_1}^S)\setminus (R_t^S \cup X_t^S) \neq \emptyset$ and $(L_{t_2}^S \cup X_{t_2}^S)\setminus (R_t^S \cup X_t^S) = \emptyset$. It follows that $L_{t_1}^S \setminus R_t^S \neq \emptyset$ and $L_{t_2}^S \setminus R_t^S = \emptyset$, which in particular means that $L_{t_2}^S \subseteq R_t^S$. We can conclude that $L_{t_2}^S =\emptyset$, otherwise we get a contradiction to Condition~\ref{condition_3_tree_decomposition} of \cref{def:tree_decomposition}. However, then we have that $R_{t_1}^S=R_t^S\cup L_{t_2}^S=R_t^S$ and hence $(L_{t}^S, X_{t}^S, R_{t}^S)=(L_{t_1}^S, X_{t_1}^S, R_{t_1}^S)$. By \cref{def:bottom_node} this is a contradiction to $t$ being an $S$-bottom node.
\end{proof}

Note that, however, not every $S$-nice tree decomposition is a nice tree decomposition, since, for example, we allow $S$-bottom nodes to have an arbitrary amount of children that are not $S$-children. Furthermore, the $S$-children of an $S$-bottom nodes that admits $\join$ as its $S$-operation do not have to have the same bags as their parent.

Next, we show that if we know the $S$-trace of an $S$-bottom node in an $S$-nice tree decomposition, and we know which $S$-operation is admitted by that node, then we can determine the $S$-traces of the $S$-children of that node. We start with the $S$-operation $\introduce(v)$.

\begin{observation}\label{obs:introduce}
    Let $\mathcal{T}=(T,\{X_t\}_{t\in V(T)})$ be an $S$-nice tree decomposition of a graph $G=(V,E)$ and some $S\subseteq V$. Let $t$ be an $S$-bottom node in $\mathcal{T}$ that admits the $S$-operation $\introduce(v)$. Let $t'$ be the $S$-child of $t$.
    Then we have that $L_{t'}^S= L_{t}^S$, $X_{t'}^S= X_{t}^S\setminus \{v\}$, and $R_{t'}^S=R_{t}^S\cup \{v\}$.
\end{observation}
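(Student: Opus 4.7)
The plan is to establish the three identities in order: the $X$-identity is immediate from the definition of an $\introduce$ $S$-operation, the $L$-identity follows by showing that all of $L_t^S$ ``lives'' in the subtree rooted at $t'$, and the $R$-identity then falls out since $(L^S_\cdot, X^S_\cdot, R^S_\cdot)$ partitions $S$.

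\textbf{Step 1 (the $X$-identity).} By condition~\ref{condition_S_nice_condition_1_a} of \cref{def:snicetd}, $X_t^S = X_{t'}^S\cup\{v\}$ with $v\in S\setminus X_{t'}$, so $X_{t'}^S = X_t^S\setminus\{v\}$.

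\textbf{Step 2 (key lemma on non-$S$-children).} The main preparatory step is to show that every child $t_2$ of $t$ other than $t'$ (which, by assumption, is not an $S$-child) satisfies $L_{t_2}^S = \emptyset$ and $X_{t_2}^S\subseteq X_t^S$. Indeed, since $t_2$ is not an $S$-child, $(L_{t_2}^S\cup X_{t_2}^S)\setminus (R_t^S\cup X_t^S)=\emptyset$ by \cref{def:sparentchild}. On the other hand, $V_{t_2}\subseteq V_t$, so $L_{t_2}^S\cup X_{t_2}^S\subseteq V_t\cap S = L_t^S\cup X_t^S$, which is disjoint from $R_t^S$ (as the three sets partition $S$). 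Combining these inclusions yields $L_{t_2}^S\cup X_{t_2}^S\subseteq X_t^S$. Finally, any $u\in L_{t_2}^S$ lies in some bag of $T_{t_2}$ but not in $X_{t_2}$; by the subtree property (Condition~\ref{condition_3_tree_decomposition} of \cref{def:tree_decomposition}) applied to the path from $t_2$ to $t$, we get $u\notin X_t$, contradicting $L_{t_2}^S\subseteq X_t^S$. Hence $L_{t_2}^S=\emptyset$.

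\textbf{Step 3 (the $L$-identity).} The inclusion $L_{t'}^S\subseteq L_t^S$ is immediate from \cref{obs:strace}.1. For the reverse, let $u\in L_t^S$. Then $u\in V_t\setminus X_t$ and $u\in S$. Since $u\in V_t\setminus X_t$, the bags containing $u$ form a subtree disjoint from $X_t$ and thus sit inside the subtree rooted at a unique child of $t$. If that child were some $t_2\neq t'$, then $u\in V_{t_2}\cap S = L_{t_2}^S\cup X_{t_2}^S\subseteq X_t^S\subseteq X_t$ by Step~2, contradicting $u\notin X_t$. So $u\in V_{t'}$. If moreover $u\in X_{t'}$, then $u\in X_{t'}^S= X_t^S\setminus\{v\}\subseteq X_t$, again a contradiction. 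Hence $u\in V_{t'}\setminus X_{t'}$, i.e., $u\in L_{t'}^S$. This gives $L_t^S\subseteq L_{t'}^S$, completing the $L$-identity.

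\textbf{Step 4 (the $R$-identity).} Using $R_{t'}^S = S\setminus(L_{t'}^S\cup X_{t'}^S)$, Steps~1 and~3, and the fact that $v\in X_t^S$ is disjoint from both $L_t^S$ and $R_t^S$, we compute
\[
R_{t'}^S \;=\; S\setminus\bigl(L_t^S\cup (X_t^S\setminus\{v\})\bigr) \;=\; \bigl(S\setminus(L_t^S\cup X_t^S)\bigr)\cup\{v\} \;=\; R_t^S\cup\{v\}.
\]

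I expect the main obstacle to be a clean treatment of the non-$S$-children of $t$ in Step~2; the observation is stated as if $t'$ were the only child of $t$, but $S$-niceness allows arbitrarily many non-$S$-children, and one has to verify that none of them interferes with the equality $L_t^S=L_{t'}^S$. Once that is handled, the rest is a direct application of the subtree property and the partition structure of $S$-traces.
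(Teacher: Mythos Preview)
Your proof is correct and follows essentially the same approach as the paper's. Both arguments pivot on the fact that any $u\in L_t^S\setminus L_{t'}^S$ would have to lie below some other child $t''$ of $t$, which the definition of $S$-child then forces to be an $S$-child, contradicting that $t$ has exactly one $S$-child; the paper carries this out directly as a one-step contradiction, while you factor out the ``non-$S$-children have $L_{t_2}^S=\emptyset$'' statement as a separate Step~2, but the underlying logic is identical.
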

\begin{proof}
    By \cref{def:snicetd} we have that $X_{t'}^S= X_{t}^S\setminus \{v\}$. By \cref{obs:strace} we have that $L_{t'}^S\subseteq L_{t}^S$. Assume that $u\in L_{t}^S$ such that $u\notin L_{t'}^S$. Then we must have that $u\in R_{t'}^S$. It follows that $t$ has another child $t''$ with $S$-trace $(L_{t''}^S, X_{t''}^S, R_{t''}^S)$ such that $u\in L_{t''}^S \cup X_{t''}^S$. However, since $u\in L_{t}^S$ we have that $u\notin R_{t}^S\cup X_{t}^S$ and hence $u\in (L_{t''}^S \cup X_{t''}^S)\setminus (R_{t}^S\cup X_{t}^S)$. By \cref{def:sparentchild}, node $t''$ is an $S$-child of $t$. However, by \cref{def:snicetd} node $t$ has exactly one $S$-child, a contradiction. We can conclude that $L_{t'}^S=L_{t}^S$ and hence $R_{t'}^S=R_{t}^S\cup \{v\}$. 
\end{proof}
Next, we consider the $S$-operation $\forget(v)$.

\begin{observation}\label{obs:forget}
    Let $\mathcal{T}=(T,\{X_t\}_{t\in V(T)})$ be an $S$-nice tree decomposition of a graph $G=(V,E)$ and some $S\subseteq V$. Let $t$ be an $S$-bottom node in $\mathcal{T}$ that admits the $S$-operation $\forget(v)$. Let $t'$ be the $S$-child of $t$.
    Then we have that $L_{t'}^S= L_{t}^S\setminus \{v\}$, $X_{t'}^S= X_{t}^S\cup \{v\}$, and $R_{t'}^S=R_{t}^S$. 
\end{observation}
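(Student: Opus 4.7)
The plan is to mirror the argument used in \cref{obs:introduce}, exploiting that the $\forget(v)$ case is structurally symmetric to the $\introduce(v)$ case with the roles of ``above'' and ``below'' swapped for the vertex $v$. First, I would read off $X_{t'}^S = X_t^S \cup \{v\}$ directly from condition~\ref{condition_S_nice_condition_1_b} of \cref{def:snicetd}, since that condition states $X_t^S = X_{t'}^S \setminus \{v\}$ for some $v \in S \cap X_{t'}$.

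Second, I would establish $v \in L_t^S$. Since $v \in X_{t'} \subseteq V_{t'} \subseteq V_t$ and $v \notin X_t$ (otherwise $v \in X_t^S$ would contradict $X_t^S = X_{t'}^S \setminus \{v\}$), we have $v \in V_t \setminus X_t = L_t$, and since $v \in S$ also $v \in L_t^S$. This in particular gives $v \notin R_t^S$, a fact I will reuse when computing $R_{t'}^S$.

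Third, and this is the main work (analogous to the body of \cref{obs:introduce}), I would show $L_{t'}^S = L_t^S \setminus \{v\}$. The inclusion $L_{t'}^S \subseteq L_t^S \setminus \{v\}$ follows from $L_{t'}^S \subseteq L_t^S$ (provided by \cref{obs:strace}) together with $v \in X_{t'}$, hence $v \notin L_{t'}^S$. For the reverse inclusion, I would take $u \in L_t^S \setminus \{v\}$ and suppose for contradiction that $u \notin L_{t'}^S$. Then $u \in R_{t'}^S$: indeed $u \notin X_{t'}^S$, because $X_{t'}^S = X_t^S \cup \{v\}$, $u \neq v$, and $u \in L_t^S$ forces $u \notin X_t^S$. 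Since $u \in L_t^S$, some descendant of $t$ contains $u$ in its bag; by $u \in R_{t'}^S$, this descendant is not in the subtree rooted at $t'$, so it lies in the subtree of another child $t''$ of $t$. Mirroring the last step of the proof of \cref{obs:introduce}, this $t''$ qualifies as an $S$-child of $t$ by \cref{def:sparentchild} (because $u \in (L_{t''}^S \cup X_{t''}^S) \setminus (R_t^S \cup X_t^S)$), contradicting that a $\forget$ node has exactly one $S$-child by condition~\ref{condition_S_nice_condition_1_b} of \cref{def:snicetd}.

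Finally, the identity $R_{t'}^S = R_t^S$ drops out of a partition computation: $R_{t'}^S = S \setminus (L_{t'}^S \cup X_{t'}^S) = S \setminus ((L_t^S \setminus \{v\}) \cup (X_t^S \cup \{v\})) = S \setminus (L_t^S \cup X_t^S) = R_t^S$, where I use $v \in L_t^S$ (from step two) together with $v \notin X_t^S$ to ensure the union on the right simplifies cleanly. The only slightly delicate step is the $R_{t'}^S$-contradiction argument inside step three, but it is a direct translation of the corresponding argument in \cref{obs:introduce}, so no new ideas are required.
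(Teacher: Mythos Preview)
Your proposal is correct and follows essentially the same argument as the paper's proof: both read off $X_{t'}^S = X_t^S \cup \{v\}$ from \cref{def:snicetd}, use \cref{obs:strace} together with $v \in X_{t'}^S$ for the inclusion $L_{t'}^S \subseteq L_t^S \setminus \{v\}$, and then argue the reverse inclusion by contradiction via a second $S$-child $t''$, invoking \cref{def:sparentchild}. Your version is slightly more explicit (you spell out why $v \in L_t^S$ and the partition computation for $R_{t'}^S$), but the structure and all key steps are identical.
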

\begin{proof}
    By \cref{def:snicetd} we have that $X_{t'}^S= X_{t}^S\cup \{v\}$. It follows that $v\notin L_{t'}^S$. By \cref{obs:strace} we have that $L_{t'}^S\subseteq L_{t}^S\setminus\{v\}$.
    Assume that $u\in L_{t}^S\setminus\{v\}$ such that $u\notin L_{t'}^S$. Then we must have that $u\in R_{t'}^S$. It follows that $t$ has another child $t''$ with $S$-trace $(L_{t''}^S, X_{t''}^S, R_{t''}^S)$ such that $u\in L_{t''}^S \cup X_{t''}^S$. However, since $u\in L_{t}^S$ we have that $u\notin R_{t}^S\cup X_{t}^S$ and hence $u\in (L_{t''}^S \cup X_{t''}^S)\setminus (R_{t}^S\cup X_{t}^S)$. By \cref{def:sparentchild}, node $t''$ is an $S$-child of $t$. However, by \cref{def:snicetd} node $t$ has exactly one $S$-child, a contradiction. We can conclude that $L_{t'}^S=L_{t}^S\setminus \{v\}$ and hence $R_{t'}^S=R_{t}^S$. 
\end{proof}
Finally, we consider the case that the admitted $S$-operation is $\join(X^S_t,X_{t_1}^S, X_{t_2}^S,L_{t_1}^S, L_{t_2}^S)$. 

\begin{observation}\label{obs:join}
    Let $\mathcal{T}=(T,\{X_t\}_{t\in V(T)})$ be an $S$-nice tree decomposition of a graph $G=(V,E)$ and some $S\subseteq V$. Let $t$ be an $S$-bottom node in $\mathcal{T}$ that admits the $S$-operation $\join(X^S_t,X_{t_1}^S, X_{t_2}^S,L_{t_1}^S, L_{t_2}^S)$. Let $t_1$ and $t_2$ be the two $S$-children of $t$.
    Then we have the following:
    \begin{itemize}
        \item $L_{t_1}^S\cup L_{t_2}^S= L_{t}^S$, $L_{t_1}^S\cap L_{t_2}^S= \emptyset$, $L_{t_1}^S\neq\emptyset$, and $L_{t_2}^S\neq\emptyset$, and
        \item $R_{t_1}^S=R_{t}^S\cup (X_t^S \setminus X_{t_1}^S)\cup L_{t_2}^S$ and $R_{t_2}^S=R_{t}^S\cup (X_t^S \setminus X_{t_2}^S)\cup L_{t_1}^S$.
    \end{itemize}
\end{observation}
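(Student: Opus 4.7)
The plan is to derive both bullets by bookkeeping with the partition $S = L_t^S \cup X_t^S \cup R_t^S$, the analogous partitions at $t_1$ and $t_2$, the tree-decomposition axioms (especially Condition~\ref{condition_3_tree_decomposition}), and the structural facts that $t_1, t_2$ are the \emph{only} $S$-children of $t$ and that $X_{t_1}^S \cup X_{t_2}^S \subseteq X_t^S$ from \cref{def:snicetd}(\ref{condition_S_nice_condition_1_c}).

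For the first bullet, I would first observe that \cref{obs:strace}(1) gives $L_{t_1}^S \cup L_{t_2}^S \subseteq L_t^S$. For the reverse inclusion, take $v \in L_t^S$; then $v \in V_t \setminus X_t$, so $v$ lies in $V_{t'}$ for some child $t'$ of $t$, and because $v \notin X_t \supseteq X_{t'}^S$ we in fact have $v \in L_{t'}^S$. I then need to exclude the possibility that $t'$ is not an $S$-child: by \cref{def:sparentchild}, $t'$ not being an $S$-child means $L_{t'}^S \cup X_{t'}^S \subseteq R_t^S \cup X_t^S$, which together with $L_{t'}^S \subseteq L_t^S$ (disjoint from $R_t^S \cup X_t^S$) forces $L_{t'}^S = \emptyset$, contradicting $v \in L_{t'}^S$. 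Hence $t' \in \{t_1, t_2\}$, giving $L_t^S \subseteq L_{t_1}^S \cup L_{t_2}^S$. Disjointness follows from Condition~\ref{condition_3_tree_decomposition}: any $v \in L_{t_1}^S \cap L_{t_2}^S$ would appear in bags inside both $T_{t_1}$ and $T_{t_2}$, so the subtree of bags containing $v$ would pass through $t$, yielding $v \in X_t^S$ and contradicting $v \in L_t^S$. Non-emptiness of each $L_{t_i}^S$ comes from $t_i$ being an $S$-child: $(L_{t_i}^S \cup X_{t_i}^S) \setminus (R_t^S \cup X_t^S) \neq \emptyset$, and since $X_{t_i}^S \subseteq X_t^S$ by \cref{def:snicetd}(\ref{condition_S_nice_condition_1_c}), the witness must live in $L_{t_i}^S$.

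For the second bullet, I would argue purely by set arithmetic. Since $\{L_{t_1}^S, X_{t_1}^S, R_{t_1}^S\}$ partitions $S$, we have $R_{t_1}^S = S \setminus (L_{t_1}^S \cup X_{t_1}^S)$. Substituting $S = L_t^S \cup X_t^S \cup R_t^S$ and using $L_t^S \setminus L_{t_1}^S = L_{t_2}^S$ (from the first bullet and disjointness), together with the fact that $R_t^S$ is disjoint from $L_{t_1}^S \cup X_{t_1}^S$ (because $L_{t_1}^S \subseteq L_t^S$ and $X_{t_1}^S \subseteq X_t^S$, both disjoint from $R_t^S$), I obtain $R_{t_1}^S = R_t^S \cup (X_t^S \setminus X_{t_1}^S) \cup L_{t_2}^S$. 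The identity for $R_{t_2}^S$ is symmetric.

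The main obstacle is the step that rules out vertices of $L_t^S$ living under a non-$S$-child of $t$; this requires unpacking the $S$-child definition in \cref{def:sparentchild} and combining it with the containment $L_{t'}^S \subseteq L_t^S$ from \cref{obs:strace}(1) to force $L_{t'}^S = \emptyset$ for any non-$S$-child. Everything else is set manipulation or a single application of Condition~\ref{condition_3_tree_decomposition}.
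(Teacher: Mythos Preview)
Your approach is essentially the paper's: both obtain $L_{t_1}^S\cup L_{t_2}^S=L_t^S$ by locating $v\in L_t^S$ under some child and arguing that child must be an $S$-child, both get disjointness from Condition~\ref{condition_3_tree_decomposition}, and both finish the $R$-identities by set arithmetic. Your non-emptiness argument via \cref{def:sparentchild} is in fact tidier than the paper's.

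One small ordering issue to fix: the assertion $X_t\supseteq X_{t'}^S$ for an \emph{arbitrary} child $t'$ is not justified (condition~1 of \cref{def:snicetd} allows $X_t\subsetneq X_{t'}$, and the containment $X_{t_i}^S\subseteq X_t^S$ in condition~\ref{condition_S_nice_condition_1_c} is stated only for $S$-children). The repair is to swap the two steps. From $v\in V_{t'}$ you only get $v\in L_{t'}^S\cup X_{t'}^S$; but your non-$S$-child exclusion already works from this weaker fact, since $L_{t'}^S\cup X_{t'}^S\subseteq R_t^S\cup X_t^S$ would contradict $v\in L_t^S$. Once $t'\in\{t_1,t_2\}$ is established, $X_{t'}^S\subseteq X_t^S$ is available and yields $v\in L_{t'}^S$ as desired.
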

\begin{proof}
We first prove that the first property holds.
By \cref{obs:strace} we have $L_{t_1}^S\cup L_{t_2}^S\subseteq L_{t}^S$. Assume that $v\in L_{t}^S$ such that $v\notin L_{t_1}^S\cup L_{t_2}^S$. 
By \cref{def:snicetd} we have that $X^S_t = X^S_{t_1}\cup X^S_{t_2}$.
Hence, we must have that $v\in R_{t_1}^S$ and $v\in R_{t_2}^S$. It follows that $t$ has another child $t_3$ with $S$-trace $(L_{t_3}^S, X_{t_3}^S, R_{t_3}^S)$ such that $v\in L_{t_3}^S \cup X_{t_3}^S$. However, since $v\in L_{t}^S$ we have that $v\notin R_{t}^S\cup X_{t}^S$ and hence $v\in (L_{t_3}^S \cup X_{t_3}^S)\setminus (R_{t}^S\cup X_{t}^S)$. By \cref{def:sparentchild}, node $t_3$ is an $S$-child of $t$. A contradiction to $t$ having exactly two $S$-children.

Assume that $v\in L_{t_1}^S\cap L_{t_2}^S$. This is a contradiction to Condition~\ref{condition_3_tree_decomposition} of \cref{def:tree_decomposition}.
Note that we cannot have $L_{t_1}^S=\emptyset$ and $L_{t_2}^S=\emptyset$, since then by what we showed above we have $L_{t}^S=\emptyset$, which by \cref{def:bottom_node} is a contradiction to $t$ being an $S$-bottom node. Assume w.l.o.g.\ that $L_{t_1}^S=\emptyset$. Then we must have that $L_{t_2}^S=L_{t}^S$. Since $X_{t_2}^S=X_{t}^S$ we must have that $R_{t_2}^S=R_{t}^S$ and hence $(L_{t}^S, X_{t}^S, R_{t}^S)=(L_{t_1}^S, X_{t_1}^S, R_{t_1}^S)$. By \cref{def:bottom_node} this is a contradiction to $t$ being an $S$-bottom node.

We can conclude that the first property holds. It follows that the second property must also hold.
\end{proof}

From \cref{obs:introduce,obs:forget,obs:join} we can deduce the following simple corollary. This is the main reason that enables us to perform dynamic programming on the $S$-traces.

\begin{corollary}\label{cor:recursion}
    Let $\mathcal{T}=(T,\{X_t\}_{t\in V(T)})$ be an $S$-nice tree decomposition of a graph $G=(V,E)$ and some $S\subseteq V$. Let $t$ be an $S$-bottom node in $\mathcal{T}$. If $t$ has an $S$-child $t'$, then one of the following holds.
    \begin{enumerate}
        \item $L_{t'}^S\subset L_t^S$, or
        \item $L_{t'}^S = L_t^S$ and $X_{t'}^S\subset X_t^S$.
    \end{enumerate}
\end{corollary}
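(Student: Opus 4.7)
The plan is to prove \cref{cor:recursion} by a straightforward case analysis on the $S$-operation admitted by the $S$-bottom node $t$. Since $t$ is an $S$-bottom node in an $S$-nice tree decomposition, \cref{def:snicetd} guarantees that $t$ admits exactly one of the three $S$-operations $\introduce(v)$, $\forget(v)$, or $\join(X^S_t,X^S_{t_1},X^S_{t_2},L^S_{t_1},L^S_{t_2})$, and the corresponding observations \cref{obs:introduce}, \cref{obs:forget}, and \cref{obs:join} already pin down the $S$-traces of the $S$-children of $t$. Hence the proof reduces to reading off the conclusion from each of these observations.

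First, if $t$ admits $\introduce(v)$, then $t$ has a unique $S$-child, which must be $t'$, and by \cref{obs:introduce} we have $L^S_{t'}=L^S_t$ and $X^S_{t'}=X^S_t\setminus\{v\}$. Since $v\in X^S_t$ by the definition of $\introduce(v)$, this yields $X^S_{t'}\subsetneq X^S_t$, giving the second alternative of the corollary. Second, if $t$ admits $\forget(v)$, then again $t'$ is the unique $S$-child, and by \cref{obs:forget} we get $L^S_{t'}=L^S_t\setminus\{v\}$ with $v\in L^S_t$, so $L^S_{t'}\subsetneq L^S_t$, which is the first alternative.

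Third, if $t$ admits a $\join$ operation, then $t$ has two $S$-children $t_1,t_2$, and $t'\in\{t_1,t_2\}$. From \cref{obs:join} we have $L^S_{t_1}\cup L^S_{t_2}=L^S_t$, $L^S_{t_1}\cap L^S_{t_2}=\emptyset$, and both $L^S_{t_1}$ and $L^S_{t_2}$ are nonempty. Consequently $L^S_{t'}$ is a nonempty proper subset of $L^S_t$, again giving the first alternative. Combining the three cases exhausts all possibilities and establishes the corollary.

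I do not anticipate any real obstacle here: the three observations do all the heavy lifting, and the only thing to check carefully is that the inclusions are proper. For the $\introduce$ case this follows because $v\in X^S_t\setminus X^S_{t'}$ by construction, for the $\forget$ case because $v\in L^S_t\setminus L^S_{t'}$, and for the $\join$ case because the nonempty companion set among $\{L^S_{t_1},L^S_{t_2}\}$ ensures $L^S_{t'}\neq L^S_t$. Thus the argument is essentially bookkeeping on top of the previously established structural lemmas.
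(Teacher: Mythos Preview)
Your proof is correct and follows exactly the approach the paper intends: the paper simply states that the corollary is deduced from \cref{obs:introduce,obs:forget,obs:join}, and your case analysis on the three possible $S$-operations is precisely the bookkeeping needed to spell this out. The properness of the inclusions is verified correctly in each case.
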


Formally, \cref{cor:recursion} allows us to define the following partial ordering for $S$-traces.
\begin{definition}[Preceding $S$-Traces]\label{def:predecingstraces}
Let $\mathcal{T}=(T,\{X_t\}_{t\in V(T)})$ be an $S$-nice tree decomposition of a graph $G=(V,E)$ and some $S\subseteq V$. Let $(L^S,X^S,R^S)$ and $(\hat{L}^S,\hat{X}^S,\hat{R}^S)$ be two $S$-traces. We say that $(L^S,X^S,R^S)$ is a \emph{direct predecessor} of $(\hat{L}^S,\hat{X}^S,\hat{R}^S)$ in $\mathcal{T}$ if there is an $S$-bottom node in $\mathcal{T}$ with $S$-trace $(\hat{L}^S,\hat{X}^S,\hat{R}^S)$ that has an $S$-child with $S$-trace $(L^S,X^S,R^S)$. The \emph{predecessor} relation on $S$-traces in an $S$-nice tree decomposition $\mathcal{T}$ is defined as the transitive closure of the direct predecessor relation of $\mathcal{T}$.
\end{definition}

\subsection{Modifications for Tree Decompositions}
\label{sec:mod}

In several proofs, we will use several operations to modify $S$-nice tree decompositions. We introduce those modifications in this section. 


First, we introduce two basic modifications. The first one, when applied to any rooted tree decomposition, ensures that afterwards, for any two nodes $t,t'$ such that $t'$ is the parent of $t$ we have that $X_t\subseteq X_{t'}$ or $X_t\supseteq X_{t'}$, and $-1\le |X_t|-|X_{t'}|\le 1$. 
\begin{modification}[$\Normalize$]\label{def:normalize}
    Let $\mathcal{T}=(T,\{X_t\}_{t\in V(T)})$ be a rooted tree decomposition of a graph $G=(V,E)$. The modification $\Normalize$ applies the following changes to~$\mathcal{T}$.
    
    Repeat the following. Let $t,t'$ be two nodes in $\mathcal{T}$ such that $t'$ is the parent of $t$, such that 
    at least one of the following operations applies. Perform the first applicable operation.
    \begin{itemize}
        \item If $X_t\setminus X_{t'}\neq \emptyset$ and $X_{t'}\setminus X_{t}\neq \emptyset$, then subdivide the edge between $t$ and $t'$ in $T$ and let $\hat{t}$ be the new node. Set $X_{\hat{t}}=X_t\cap X_{t'}$ be its bag.
        \item If $|X_t|<|X_{t'}|-1$, then apply $|X_{t'}|-|X_t|-1$ subdivisions to the edge between $t$ and $t'$ in~$T$. The bags of the new nodes are defined as follows. 
        
        Iterate over the new nodes starting at the child of $t'$. Set the current node's bag as the bag of its parent and then remove an arbitrary vertex from $X_{t'}\setminus X_t$ that is contained in the parent's bag. Continue with the bag of the child.  
        \item If $|X_{t'}|<|X_{t}|-1$, then apply $|X_{t}|-|X_{t'}|-1$ subdivisions to the edge between $t$ and $t'$ in~$T$. The bags of the new nodes are defined as follows. 
        
        Iterate over the new nodes starting at the parent of $t$. Set the current node's bag as the bag of its child and then remove an arbitrary vertex from $X_{t}\setminus X_{t'}$ that is contained in the child's bag. Continue with the bag of the parent.
    \end{itemize}
\end{modification}

The modification $\Normalize$ is visualized in \cref{fig:normalize}. It is easy to observe that for every rooted tree decomposition $\mathcal{T}$, we have that after applying $\Normalize$ to it, it remains a rooted tree decomposition and gains the above-claimed properties. In fact, this modification (for $S=\emptyset$) is part of transforming any rooted tree decomposition into a nice tree decomposition~\cite{bodlaender1996efficient,Die16}. 

\begin{figure}[t]
\centering
\begin{tikzpicture}[line width=1pt,scale=.5,yscale=.8]
\phantom{
\node[vert,minimum width=1cm,fill=cyan!30!white] (p) at (0,3.5) {};
\node[vert,minimum width=1cm,fill=cyan!30!white] (m) at (0,0) {};
\node[vert,minimum width=1cm,fill=orange!30!white] (c) at (0,-3.5) {};}
\node[vert,minimum width=1cm,fill=cyan!30!white] (p) at (0,2) {};
\node[vert,minimum width=1cm,fill=orange!30!white] (c) at (0,-2) {};

\node (a) at (1.5,2) {$t'$};
\node (a) at (1.5,-2) {$t$};

\draw[edge] (p) -- (c);
\end{tikzpicture}
\begin{tikzpicture}[line width=1pt,scale=.5,yscale=.8]
\phantom{
\node[vert,minimum width=1cm,fill=cyan!30!white] (p) at (0,3.5) {};
\node[vert,minimum width=1cm,fill=cyan!30!white] (m) at (0,0) {};
\node[vert,minimum width=1cm,fill=orange!30!white] (c) at (0,-3.5) {};
\node (a) at (1.5,2) {$t'$};
}

\node (a) at (0,0) {\huge$\rightarrow$};

\end{tikzpicture}
\begin{tikzpicture}[line width=1pt,scale=.5,yscale=.9]

\node[vert,minimum width=1cm,fill=cyan!30!white] (p) at (0,3.5) {};
\node[vert,minimum width=.8cm,fill=gray!30!white] (m) at (0,0) {};
\node[vert,minimum width=1cm,fill=orange!30!white] (c) at (0,-3.5) {};

\node (a) at (1.5,3.5) {$t'$};
\node (a) at (1.3,0) {$\hat{t}$};
\node (a) at (1.5,-3.5) {$t$};

\draw[edge,dashed] (p) -- (m);
\draw[edge,dashed] (m) -- (c);

\end{tikzpicture}
    \caption{Illustration of the modification $\Normalize$ (\cref{def:normalize}). Nodes $t$ (blue) and $t'$ (orange) have bags $X_t, X_{t'}$ respectively with $X_t\setminus X_{t'}\neq \emptyset$ and $X_{t'}\setminus X_{t}\neq \emptyset$. A new node $\hat{t}$ (gray) is inserted between~$t$ and $t'$ with bag $X_{\hat{t}}=X_t\cap X_{t'}$. The dashed edges represent paths resulting from edge subdivisions where, informally speaking, vertices are inserted or removed one by one from the bags of the nodes along the path to ensure that the sizes of the bags of adjacent nodes differ by at most one.}\label{fig:normalize}
\end{figure}

The second basic modification, intuitively, removes unnecessary full nodes from an $S$-nice tree decomposition.

\begin{modification}[$\MergeFullNodes$]\label{def:mergefullnodes}
Let $\mathcal{T}=(T,\{X_t\}_{t\in V(T)})$ be an $S$-nice tree decomposition of a graph $G=(V,E)$ and some $S\subseteq V$ of width $k$. The modification $\MergeFullNodes$ applies the following changes to~$\mathcal{T}$.

Repeat the following. Let $t,t'$ be two nodes in $\mathcal{T}$, such that~$t'$ is the parent of $t$, $X_t=X_{t'}$ and $|X_t|=|X_{t'}|=k+1$, that is, both bags are the same and full. If $t$ has at most one $S$-child or $t$ is the only $S$-child of $t'$, then remove $t$ and connect all children of $t$ to $t'$.
\end{modification}

The modification $\MergeFullNodes$ is visualized in \cref{fig:mergefullnodes}. We can observe the following. Let $t,t'$ be two nodes in $\mathcal{T}$ such that $t'$ is the parent of~$t$, and~$t$ has at most one $S$-child. 
If $X_t=X_{t'}$ and $|X_t|=|X_{t'}|=k+1$, then in particular $X^S_t=X^S_{t'}$.
If $L^S_t=\emptyset$, then \cref{def:sparentchild} we have that $t$ does not have any $S$-children. It follows that merging $t$ and $t'$ does not add any new $S$-children to $t'$. Otherwise, we have that $L^S_{t}\setminus R^S_{t}\neq\emptyset$ and $R^S_{t'}\subseteq R^S_{t}$. If follows that $L^S_t\setminus R^S_{t'}\neq\emptyset$, and hence by \cref{def:sparentchild} we have that $t$ is an $S$-child of $t'$. Again, it follows that merging $t$ and $t'$ does not add any new $S$-children to $t'$. 
We can conclude that $\MergeFullNodes$ preserves the property of tree decompositions of being $S$-nice. Clearly, it also does not increase the width.

\begin{figure}[t]
\centering
\begin{subfigure}[t]{0.46\textwidth}
\centering
\begin{tikzpicture}[line width=1pt,scale=.5,yscale=.8]
\node[vert,minimum width=1cm,fill=red!30!white,line width=2pt] (p) at (0,2) {};
\node[vert,minimum width=1cm,fill=red!30!white,line width=2pt] (c) at (0,-2) {};

\node[vert,minimum width=1cm,fill=gray!30!white] (c2) at (3,0) {};
\node[vert,minimum width=1cm,fill=gray!30!white] (c3) at (0,-6) {};

\node (a) at (1.5,2) {$t'$};
\node (a) at (1.5,-2) {$t$};

\draw[edge,dashed] (p) -- (0,4);
\draw[edge] (p) -- (c);
\draw[edge,dashed] (p) -- (-1.5,1);
\draw[edge] (p) -- (c2);
\draw[edge] (c) -- (c3);
\draw[edge,dashed] (c) -- (-1.5,-3);
\draw[edge,dashed] (c) -- (1.5,-3);
\end{tikzpicture}
\caption{Node $t$ has at most one $S$-child.}\label{fig:mergea}
\end{subfigure}
\begin{subfigure}[t]{0.46\textwidth}
\centering
\begin{tikzpicture}[line width=1pt,scale=.5,yscale=.8]
\phantom{
\node[vert,minimum width=1cm,fill=gray!30!white] (c3) at (0,-6) {};}

\node[vert,minimum width=1cm,fill=red!30!white,line width=2pt] (p) at (0,2) {};
\node[vert,minimum width=1cm,fill=red!30!white,line width=2pt] (t) at (0,-2) {};
\node[vert,minimum width=1cm,fill=gray!30!white] (c1) at (3,-5) {};
\node[vert,minimum width=1cm,fill=gray!30!white] (c2) at (-3,-5) {};

\draw[edge,dashed] (p) -- (0,4);
\draw[edge,dashed] (p) -- (-1.5,1);
\draw[edge,dashed] (p) -- (1.5,1);
\draw[edge] (p) -- (t);
\draw[edge] (t) -- (c1);
\draw[edge] (t) -- (c2);
\draw[edge,dashed] (t) -- (0,-4);

\node (a) at (1.5,2) {$t'$};
\node (a) at (1.5,-2) {$t$};
\end{tikzpicture}
\caption{Node $t$ is the only $S$-child of $t'$.}\label{fig:mergeb}
\end{subfigure}
    \caption{Illustration of the modification $\MergeFullNodes$ (\cref{def:mergefullnodes}).
    Nodes $t$ and $t'$ (red circles) have the same bags and the bag is full. This is indicated by the thick line. Nodes represented by gray circles are $S$-children of their respective parents. Further children that connected via dashed lines are not $S$-children. In both cases visualized in \cref{fig:mergea,fig:mergeb}, nodes $t$ and $t'$ are merged.}\label{fig:mergefullnodes}
\end{figure}

Now we introduce two further modifications $\MoveIntoSubtree$ and $\RemoveFromSubtree$. In contrast to the modifications introduced so far, $\MoveIntoSubtree$ and $\RemoveFromSubtree$ have some prerequisites and are not always applicable. The modification $\MoveIntoSubtree$, intuitively, moves a $M$ set of vertices (with $M\cap S=\emptyset$) into some subtree of the tree decomposition, if all neighbors of $M$ are already in that subtree. The modification $\RemoveFromSubtree$, intuitively, makes the opposite modification. It moves a set $M$ of vertices out of a subtree if all neighbors of $M$ are also out of the subtree. Formally, the modifications are defined as follows.

\begin{modification}[$\MoveIntoSubtree$]\label{def:move}
      Let $\mathcal{T}=(T,\{X_t\}_{t\in V(T)})$ be an $S$-nice tree decomposition of a graph $G=(V,E)$ and some $S\subseteq V$. Let $t$ be a node in $\mathcal{T}$ and let $M\subseteq V\setminus S$ such that $N(M)\subseteq V_{t}$. The modification $\MoveIntoSubtree(t, M)$ applies the following changes to~$\mathcal{T}$.
     \begin{enumerate}
         \item Remove the vertices of $M$ from each bag of a node that is not in $T_t$. 
        
         \item Let $M'=M\setminus V_t$. If $M'\neq \emptyset$, then let $G'$ be the graph obtained from $G[N[M']]$ after adding edges between every two non-adjacent vertices in $N(M')$. Moreover, let $\mathcal{T}'$ be some optimal tree decomposition for $G'$. By \cref{lem:cliquebag}, there exists a node $r$ in~$\mathcal{T}'$ such that $N(M')\subseteq X_{r}$. Root $\mathcal{T}'$ at $r$ and make $r$ a child of $t$ in $\mathcal{T}$.
         \item Apply $\Normalize$ and $\MergeFullNodes$ to $\mathcal{T}$ (\cref{def:normalize,def:mergefullnodes}).
     \end{enumerate}
 \end{modification}
 
 \begin{modification}[$\RemoveFromSubtree$]\label{def:remove}
      Let $\mathcal{T}=(T,\{X_t\}_{t\in V(T)})$ be an $S$-nice tree decomposition of a graph $G=(V,E)$ and some $S\subseteq V$.
     Let $t$ be a node in $\mathcal{T}$ and let $M\subseteq V\setminus S$ such that $N(M)\subseteq (V\setminus V_{t})\cup X_{t'}$, where~$t'$ is the parent of $t$. 
     The modification $\RemoveFromSubtree(t, M)$ applies the following changes to~$\mathcal{T}$.
         \begin{enumerate}
         \item Remove the vertices of $M$ from each bag of a node in $T_t$. 
        
         \item Let $M'=M\cap (V_t\setminus X_{t'})$. If $M'\neq \emptyset$, then let $G'$ be the graph obtained from $G[N[M']]$ after adding edges between every two non-adjacent vertices in $N(M')$. Moreover, let $\mathcal{T}'$ be some optimal tree decomposition for $G'$. By \cref{lem:cliquebag}, there exists a node $r$ in~$\mathcal{T}'$ such that $N(M')\subseteq X_{r}$. Root $\mathcal{T}'$ at $r$ and make $r$ a child of $t'$ in~$\mathcal{T}$.
         \item Apply $\Normalize$ and $\MergeFullNodes$ to $\mathcal{T}$ (\cref{def:normalize,def:mergefullnodes}).
     \end{enumerate}
 \end{modification}

We can observe that both above-defined operations, when applied to an $S$-nice tree decomposition, produce another $S$-nice tree decomposition that does have at most the same width as the original one. 
Furthermore, it will be crucial that the modification do not change the ``$S$-trace structure'' of the $S$-nice tree decomposition. To formalize this, we define \emph{sibling} $S$-nice tree decompositions, which have the same predecessor relation (\cref{def:preceding}) on their $S$-traces.
\begin{definition}[Sibling Tree Decompositions]\label{def:sibling}
   Let $\mathcal{T}=(T,\{X_t\}_{t\in V(T)})$ and $\mathcal{T}'=(T',\{X'_t\}_{t\in V(T')})$ be two $S$-nice tree decomposition of a graph $G=(V,E)$ and some $S\subseteq V$. We say that $\mathcal{T}$ and $\mathcal{T}'$ are \emph{siblings} if both tree decompositions have the same predecessor relation on $S$-traces.
\end{definition}

Formally, we now show the following.

 \begin{observation} \label{lem:moveremove}
     Let $\mathcal{T}=(T,\{X_t\}_{t\in V(T)})$ be an $S$-nice tree decomposition of a graph $G=(V,E)$ and some $S\subseteq V$ with width $k$. Let $t$ be a node in $\mathcal{T}$ with parent $t'$, let $M\subseteq V\setminus S$ such that $N(M)\subseteq V_{t}$, and let $M'\subseteq V\setminus S$ such that $N(M')\subseteq V\setminus V_{t}$. The following holds.
     \begin{itemize}
     \item Let $\mathcal{T}_{\text{mod}}$ be the result of applying $\MoveIntoSubtree(t, M)$ to $\mathcal{T}$. Then $\mathcal{T}_{\text{mod}}$ is an $S$-nice tree decomposition for $G$ with width at most $k$ that is a sibling of $\mathcal{T}$. 
     \item Let $\mathcal{T}_{\text{mod}}$ be the result of applying $\RemoveFromSubtree(t, M')$ to $\mathcal{T}$. Then $\mathcal{T}_{\text{mod}}$ is an $S$-nice tree decomposition for $G$ with width at most $k$ that is a sibling of $\mathcal{T}$.
     \end{itemize}    
 \end{observation}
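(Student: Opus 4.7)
The plan is to verify four properties for $\mathcal{T}_{\text{mod}}$ in both cases: that it is a valid tree decomposition of $G$, that its width is at most $k$, that it is $S$-nice, and that it is a sibling of $\mathcal{T}$. I will describe the argument for $\MoveIntoSubtree(t,M)$; the one for $\RemoveFromSubtree(t,M')$ is symmetric, with the roles of the two sides of $t$ swapped, the analogous structural claim there being $N(M')\subseteq X_{t'}$ where $t'$ is the parent of $t$.

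The workhorse observation I would first establish is the structural claim $N(M')\subseteq X_t$, where $M' = M\setminus V_t$. Indeed, each $u\in N(M')$ is adjacent to some $v\in M'\subseteq V\setminus V_t$, so $u$ must lie in some bag outside $V(T_t)$; on the other hand $u\in N(M)\subseteq V_t$, so $u$ also lies in a bag of $T_t$; the subtree property then forces $u\in X_t$, because $t$ is the unique node of $T$ whose removal separates $V(T_t)\setminus\{t\}$ from the rest. A completely analogous Helly-type argument shows that any bag outside $T_t$ which originally covered an edge incident to some $v\in M\cap V_t$ must satisfy $v\in X_t$, so removing $v$ from that bag in step 1 does not leave the edge uncovered. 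This yields (i): step 1 affects only vertices of $M$ in bags outside $T_t$, preserving both coverage and the subtree property for $M\cap V_t$; step 2 supplies coverage of $M'$ and of edges incident to $M'$ via $\mathcal{T}'$, and the subtree property across the new edge $t$--$r$ is guaranteed by $N(M')\subseteq X_t\cap X_r$.

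For (ii) the original bags in $T_t$ are unchanged and bags outside $T_t$ only shrink, so it suffices to bound the width of $\mathcal{T}'$. Restricting $\mathcal{T}$ to the nodes that are not proper descendants of $t$ yields a tree decomposition of $G[R_t\cup X_t]$ of width at most $k$ in which $X_t$ is a bag; by \cref{lem:cliquebag2} we may add the clique on $X_t$ without changing the width; restricting this augmented decomposition to $N[M']\subseteq R_t\cup X_t$ produces a tree decomposition of $G'$ of width at most $k$, so the optimal $\mathcal{T}'$ also has width at most $k$. Neither $\Normalize$ nor $\MergeFullNodes$ increases width, finishing (ii).

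For (iii) and (iv) the decisive fact is that $M\cap S=\emptyset$, so the modification never relocates a vertex of $S$. Hence every original node retains its $S$-trace, every original $S$-bottom node continues to admit the same $S$-operation, and the direct-predecessor relation on $S$-traces among original nodes is preserved. I would take $\mathcal{T}'$ itself to be $S$-nice via \cref{lem:snicetd}; since all bags of $\mathcal{T}'$ lie in $N[M']$ and $N(M')\cap S\subseteq X_t^S$, every $S$-bottom node introduced inside $\mathcal{T}'$ has $L^S\subseteq X_t^S$, so its $S$-trace and $S$-operation coincide with ones already realized in $\mathcal{T}$ inside the subtree rooted at $t$. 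Application of $\Normalize$ and $\MergeFullNodes$ then enforces condition 1 of \cref{def:snicetd} and eliminates duplicated full bags at the seam, giving (iii). Because $L_t^S$ and $L_{t''}^S$ for every ancestor $t''$ of $t$ are unaffected (the new subtree is attached as a child of $t$ and contributes no new $S$-vertices beyond $X_t^S$), and $L_{t''}^S$ for descendants $t''$ of $t$ in $T_t$ is untouched, no new direct-predecessor edge arises on $S$-traces, yielding (iv). The main technical obstacle I anticipate is exactly this last point: carefully checking that attaching $\mathcal{T}'$ does not create a spurious new $S$-bottom node nor flip the $S$-bottom status of any existing node, which is where $M\cap S=\emptyset$ together with $N(M')\subseteq X_t$ are simultaneously essential.
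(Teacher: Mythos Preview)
Your approach is essentially the same as the paper's: establish $N(M')\subseteq X_t$, use it to verify validity and to bound the width of the attached $\mathcal{T}'$ via \cref{lem:cliquebag2}, and argue $S$-niceness and the sibling property from the fact that $M\cap S=\emptyset$ together with $\Normalize$ and $\MergeFullNodes$.

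Two small points where your write-up overshoots. First, invoking \cref{lem:snicetd} for $\mathcal{T}'$ is not justified: that lemma applies to \emph{nice} tree decompositions, and $\mathcal{T}'$ is merely an optimal one. You do not actually need $\mathcal{T}'$ to be $S$-nice in isolation; what matters is the whole $\mathcal{T}_{\text{mod}}$. Second, your bound $L_s^S\subseteq X_t^S$ for nodes $s$ in the attached $\mathcal{T}'$ is too weak to support the conclusion you draw from it, and the conclusion (``its $S$-trace and $S$-operation coincide with ones already realized in $\mathcal{T}$'') is neither correct nor needed. The sharper and simpler fact is $L_s^S=\emptyset$ for every such $s$: any $S$-vertex appearing in a bag of $\mathcal{T}'$ lies in $N(M')\cap S\subseteq X_t$, hence by the subtree property it lies in every bag on the path up to $t$, in particular in $X_s$. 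Thus no node of $\mathcal{T}'$ is an $S$-bottom node, and the newly attached root $r$ is not an $S$-child of $t$. Consequently conditions 2--4 of \cref{def:snicetd} are vacuous on the new nodes, condition 1 is enforced by $\Normalize$, and the predecessor relation on $S$-traces is untouched. This is exactly what the paper compresses into ``no vertices from $S$ were moved''.
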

 \begin{proof}
We prove the observation for the case that $\MoveIntoSubtree$ is applied. The case of $\RemoveFromSubtree$ can be proven in an analogous way. 
Let $\mathcal{T}_{\text{mod}}$ denote the result of applying $\MoveIntoSubtree(t, M)$ to $\mathcal{T}$.
First, note that the union of all bags of $\mathcal{T}_{\text{mod}}$ is $V$. Second, we have that for every edge $e\in E$, there is a bag in $\mathcal{T}_{\text{mod}}$ that contains the edge. If $e\cap M=\emptyset$ this is obvious, otherwise, if $e$ is an edge between a vertex of $M$ and a vertex in $N(M)$, then there must be a node in $T_t$ whose bag contains $e$. If $e$ is not contained in any such bag, then we must have that $e\subseteq M\setminus V_t$. In this case, there is a bag in the tree decomposition~$\mathcal{T}'$ (referring to the notation in \cref{def:move}) that contains $e$. Note that due to Condition~\ref{condition_3_tree_decomposition} of \cref{def:tree_decomposition} we have that all neighbors of $M\setminus V_{t}$, that is $N(M\setminus V_{t})$ must be contained in $X_t$. It follows that after $\mathcal{T'}$ is connected to~$t$, the result is a tree decomposition. It is $S$-nice and a sibling of $\mathcal{T}$, since no vertices from $S$ were moves, since the application of $\Normalize$ guarantees that the first condition of \cref{def:snicetd} holds, and since $\MergeFullNodes$ preserves $S$-niceness. It remains to show that the width of the tree decomposition does not increase. We argued above that $N(M\setminus V_{t})\subseteq X_t$. By \cref{lem:cliquebag2} we have that $\tw(G')\le \tw(G)$. It follows that none of the bags of nodes in $\mathcal{T}'$ is larger than~$k+1$. We can conclude that $\mathcal{T}_{\text{mod}}$ is an $S$-nice tree decomposition for $G$ with width at most~$k$ that is a sibling of $\mathcal{T}$.
 \end{proof}

Finally, we introduce two modifications $\BringNeighborUp$ and $\BringNeighborDown$ that, intuitively, move a neighbor of a vertex $v$ into a specific node in order to ensure that we can safely remove $v$ from a certain part of the tree decomposition. 

\begin{modification}[$\BringNeighborUp$]\label{def:bringup}
Let $\mathcal{T}=(T,\{X_t\}_{t\in V(T)})$ be an $S$-nice tree decomposition of a graph $G=(V,E)$ and some $S\subseteq V$ with width $k$. Let $t$ be a node in $\mathcal{T}$ such that $|X_t|\le k$. 
Let $v\in X_t\setminus S$ such that $N(v)\cap (V_t\setminus X_t)=\{u\}$ for some $u\in V\setminus S$. The modification $\BringNeighborUp(t, v, u)$ applies the following changes to~$\mathcal{T}$. 
\begin{enumerate}
	\item Replace $v$ with $u$ in all bags of nodes that are in $T_t$ except $X_t$ (since bags are sets, if a bag contains both $v$ and $u$ this implies that $v$ is removed from the bag).
	\item Add $u$ to $X_{t}$.
         \item Apply $\Normalize$ to $\mathcal{T}$ (\cref{def:normalize}).
\end{enumerate}
\end{modification}

\begin{modification}[$\BringNeighborDown$]\label{def:bringdown}
Let $\mathcal{T}=(T,\{X_t\}_{t\in V(T)})$ be an $S$-nice tree decomposition of a graph $G=(V,E)$ and some $S\subseteq V$ with width $k$. Let $t$ be a node in $\mathcal{T}$ such that $|X_t|\le k$. 
Let $v\in X_t\setminus S$ such that $N(v)\cap (V\setminus V_t)=\{u\}$ for some $u\in V\setminus S$. The modification $\BringNeighborDown(t, v, u)$ applies the following changes to~$\mathcal{T}$.
\begin{enumerate}
	\item Replace $v$ with $u$ in all bags of nodes that are not in $T_t$ (since bags are sets, if a bag contains both $v$ and $u$ this implies that $v$ is removed from the bag).
	\item Add $u$ to $X_t$.
         \item Apply $\Normalize$ to $\mathcal{T}$ (\cref{def:normalize}).
\end{enumerate}
\end{modification}

\begin{figure}[t]
\centering
\begin{tikzpicture}[line width=1pt,scale=.5,yscale=.9]

\node[vert,minimum width=1.2cm,fill=red!30!white] (p) at (0,3) {};
\node[vert,minimum width=1.2cm] (c1) at (3,0) {};
\node[vert,minimum width=1.2cm] (c11) at (1,-3) {};
\node[vert,minimum width=1.2cm] (c12) at (5,-3) {};
\node[vert,minimum width=1.2cm] (c2) at (-3,0) {};
\node[vert,minimum width=1.2cm] (c21) at (-3,-3) {};

\node (a) at (1.6,3) {$t$};

\phantom{
\node[vert,minimum width=1.2cm] (c22) at (-5,-3) {};
}

\draw[edge,dashed] (p) -- (0,5);
\draw[edge] (p) -- (c1);
\draw[edge] (p) -- (c2);
\draw[edge] (c1) -- (c11);
\draw[edge] (c1) -- (c12);
\draw[edge,dashed] (c11) -- (1,-5);
\draw[edge,dashed] (c12) -- (5,-5);
\draw[edge] (c2) -- (c21);
\draw[edge,dashed] (c21) -- (-3,-5);

\node[vert,fill=cyan!20!white] (v) at (-2.5,0) {};
\node[vert,fill=cyan!20!white] (v) at (-2.5,-3) {};
\node[vert,fill=cyan!20!white] (v) at (.5,3) {};
\node[vert,fill=cyan!20!white] (v) at (3.5,0) {};
\node[vert,fill=cyan!20!white] (v) at (5.5,-3) {};

\node[vert2,fill=green!30!white] (u1) at (2.5,0) {};
\node[vert2,fill=green!30!white] (u1) at (.5,-3) {};
\end{tikzpicture}
\begin{tikzpicture}[line width=1pt,scale=.5,yscale=.9]

\node (a) at (-6.6,0) {\huge$\rightarrow$};

\node[vert,minimum width=1.2cm,fill=red!30!white] (p) at (0,3) {};
\node[vert,minimum width=1.2cm] (c1) at (3,0) {};
\node[vert,minimum width=1.2cm] (c11) at (1,-3) {};
\node[vert,minimum width=1.2cm] (c12) at (5,-3) {};
\node[vert,minimum width=1.2cm] (c2) at (-3,0) {};
\node[vert,minimum width=1.2cm] (c21) at (-3,-3) {};

\node (a) at (1.6,3) {$t$};

\phantom{
\node[vert,minimum width=1.2cm] (c22) at (-5,-3) {};
}

\draw[edge,dashed] (p) -- (0,5);
\draw[edge] (p) -- (c1);
\draw[edge] (p) -- (c2);
\draw[edge] (c1) -- (c11);
\draw[edge] (c1) -- (c12);
\draw[edge,dashed] (c11) -- (1,-5);
\draw[edge,dashed] (c12) -- (5,-5);
\draw[edge] (c2) -- (c21);
\draw[edge,dashed] (c21) -- (-3,-5);

\node[vert,fill=cyan!20!white] (v) at (.5,3) {};

\node[vert2,fill=green!30!white] (u1) at (-3.5,0) {};
\node[vert2,fill=green!30!white] (u1) at (-3.5,-3) {};
\node[vert2,fill=green!30!white] (u1) at (-.5,3) {};
\node[vert2,fill=green!30!white] (u1) at (2.5,0) {};
\node[vert2,fill=green!30!white] (u1) at (.5,-3) {};
\node[vert2,fill=green!30!white] (u1) at (4.5,-3) {};

\end{tikzpicture}
    \caption{Illustration of the modification $\BringNeighborUp$ (\cref{def:bringup}). Nodes $t$ is represented by a red circle. Before the modification is applied, the bag $t$ is not full. The small blue circle visualizes vertex $v\in X_t\setminus S$ and the green square vertex $u\in N(v)\setminus S$. The left side shows the configuration before the modification is applied, and the right side shows the configuration after the modification is applied.}\label{fig:bringneighbor}
\end{figure}
The modification $\BringNeighborUp$ is visualized in \cref{fig:bringneighbor}. We can observe that both above-defined operations, when applied to an $S$-nice tree decomposition, produce another $S$-nice tree decomposition that does have at most the same width as and is a sibling of the original one. 

 \begin{observation} \label{lem:bringupdown}
     Let $\mathcal{T}=(T,\{X_t\}_{t\in V(T)})$ be an $S$-nice tree decomposition of a graph $G=(V,E)$ and some $S\subseteq V$ with width $k$. Let $t$ be a node in $\mathcal{T}$ such that $|X_t|\le k$ and $t$ is not an $S$-bottom node. Let $v\in X_t\setminus S$ such that $N(v)\cap (V_t\setminus X_t)=\{u\}$ for some $u\in V\setminus S$.
     Let $v'\in X_t\setminus S$ such that $N(v)\cap (V\setminus V_t)=\{u'\}$ for some $u'\in V\setminus S$. The following holds.
     \begin{itemize}
     \item Let $\mathcal{T}_{\text{mod}}$ be the result of applying $\BringNeighborUp(t, v, u)$ to $\mathcal{T}$. Then $\mathcal{T}_{\text{mod}}$ is an $S$-nice tree decomposition for $G$ with width at most $k$ that is a sibling of $\mathcal{T}$.
     \item Let $\mathcal{T}_{\text{mod}}$ be the result of applying $\BringNeighborDown(t, v', u')$ to $\mathcal{T}$. Then $\mathcal{T}_{\text{mod}}$ is an $S$-nice tree decomposition for $G$ with width at most $k$ that is a sibling of $\mathcal{T}$.
     \end{itemize}
 \end{observation}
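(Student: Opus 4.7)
We give the argument for $\BringNeighborUp(t, v, u)$; the proof for $\BringNeighborDown(t, v', u')$ is completely symmetric, with the roles of ``inside $T_t$'' and ``outside $T_t$'' swapped. The guiding observation is that $u, v \in V \setminus S$, so steps~1 and~2 of the modification move no vertex of $S$. Consequently, for every node $t^*$ that survives from $\mathcal{T}$ into $\mathcal{T}_{\mathrm{mod}}$, the $S$-intersection of its bag, and therefore its $S$-trace $(L^S_{t^*}, X^S_{t^*}, R^S_{t^*})$, is unchanged.

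The plan is to first verify that the result of steps~1 and~2 is already a valid tree decomposition of $G$ of width at most $k$, and then to argue that step~3 ($\Normalize$) preserves these properties while yielding an $S$-nice sibling. For Condition~\ref{condition_1_tree_decomposition} of \cref{def:tree_decomposition}, no vertex disappears from every bag, since $v$ remains in $X_t$ and in every bag outside $T_t$, while $u$ only gains bag memberships. For Condition~\ref{condition_2_tree_decomposition}, the edge $\{u,v\}$ is covered by $X_t$ after step~2; an edge $\{v,w\}$ with $w \ne u$ has, by the hypothesis $N(v) \cap (V_t \setminus X_t) = \{u\}$, its endpoint $w$ in $X_t \cup (V \setminus V_t)$, so any original bag covering it lies in $\{X_t\} \cup (V(T) \setminus V(T_t))$ and is untouched with respect to $v$; any edge incident to $u$ retains its covering bag because $u$ is never deleted. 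For Condition~\ref{condition_3_tree_decomposition}, the new $X^{-1}(v)$ equals the original $X^{-1}(v)$ intersected with $V(T) \setminus (V(T_t) \setminus \{t\})$, an intersection of two connected subtrees both containing $t$, hence connected; the new $X^{-1}(u)$ equals $X^{-1}_{\mathrm{old}}(u) \cup \bigl(X^{-1}_{\mathrm{old}}(v) \cap V(T_t)\bigr)$, and since $u \in V_t \setminus X_t$ forces $X^{-1}_{\mathrm{old}}(u)$ to lie entirely within some subtree $T_{t_c}$ where $t_c$ is a child of $t$, while $X^{-1}_{\mathrm{old}}(v) \cap V(T_t)$ is connected and contains $t$, the two parts meet at the bag guaranteed by Condition~\ref{condition_2_tree_decomposition} for $\{u,v\}$, so their union is connected. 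For the width, $|X_t|$ grows from at most $k$ to at most $k+1$, every bag in $T_t \setminus \{t\}$ either keeps its size (exchanging $v$ for $u$) or shrinks by one (if both $u$ and $v$ were present), and all other bags are unchanged; hence the width remains at most $k$. Step~3 then applies $\Normalize$, which preserves validity and width and enforces Condition~1 of \cref{def:snicetd}.

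To finish, I would establish $S$-niceness and the sibling property together. Conditions~\ref{condition_S_nice_condition_1_a}--\ref{condition_S_nice_condition_1_c} of \cref{def:snicetd} constrain $S$-bottom nodes only via their $S$-children and the $S$-vertices they introduce, forget, or join; since no $S$-vertex has moved, the $S$-operation admitted by every original $S$-bottom node is preserved verbatim, provided its $S$-child structure survives. The only potential source of disruption is the intermediate nodes that $\Normalize$ may insert: for such a node $\hat t$ between original nodes $t_1, t_2$, the bag $X_{\hat t}$ is either $X_{t_1} \cap X_{t_2}$ (incomparable case) or is obtained by removing vertices one at a time from the larger of the two bags (size-gap case). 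Because the only bag-asymmetries introduced between adjacent original bags by steps~1 and~2 involve $u$ or $v$, both in $V \setminus S$, one can choose the removal order in $\Normalize$ so that every $X^S_{\hat t}$ equals $X^S_{t_1}$ or $X^S_{t_2}$. Thus every inserted node inherits the $S$-trace of one of its original neighbors, no new $S$-trace appears, and whenever an inserted node assumes the role of $S$-bottom for some trace $\tau'$ previously borne by $t_2$, its unique $S$-child has the same $S$-trace as the unique $S$-child of $t_2$ had before modification; the direct-predecessor relation on $S$-traces is therefore preserved, and so is its transitive closure. This makes $\mathcal{T}_{\mathrm{mod}}$ an $S$-nice sibling of $\mathcal{T}$. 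The main obstacle is precisely this bookkeeping of the $\Normalize$ interpolation: once one identifies the $u$-$v$ asymmetry as the sole new source of incomparability and exploits that $u, v \notin S$, the rest of the argument goes through uniformly.
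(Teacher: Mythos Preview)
Your proof is correct and follows essentially the same approach as the paper's: verify the three conditions of \cref{def:tree_decomposition} directly, check the width bound via the single bag that grows, and then derive $S$-niceness and the sibling property from the fact that $u,v\notin S$ so no $S$-vertex changes bags. The paper's own proof is considerably terser on the last point (it simply asserts that ``no vertices from $S$ were moved'' suffices), whereas you go further and analyze how the intermediate nodes created by $\Normalize$ inherit the $S$-part of one of their original neighbors; this extra care is justified and the conclusion is right. One small remark: you do not actually need to ``choose the removal order in $\Normalize$''---since adjacent original bags had $S$-parts differing by at most one vertex (from $S$-niceness of $\mathcal{T}$) and steps~1--2 leave all $S$-parts untouched, every interpolated bag has $S$-part equal to one endpoint's $S$-part regardless of the order, so the argument goes through for $\Normalize$ as written.
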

 \begin{proof}
 We prove the observation for the case that $\BringNeighborUp$ is applied. The case of $\BringNeighborDown$ can be proven in an analogous way.
 Let $\mathcal{T}_{\text{mod}}$ denote the result of applying $\BringNeighborUp(t, v, u)$ to $\mathcal{T}$.
First, note that the union of all bags in $\mathcal{T}_{\text{mod}}$ is $V$. Second, we have that for every edge $e\in E$, there is a bag in $\mathcal{T}_{\text{mod}}$ that contains the edge, since $N(v)\cap (V\setminus V_{t})=\{u\}$ and $u\in X_{t}$. Furthermore, the bags containing $u$ form a subtree in $\mathcal{T}_{\text{mod}}$, and the bags containing~$v$ form a subtree in $\mathcal{T}_{\text{mod}}$.
Note that the only bag to which we have added vertices is~$X_{t}$, to which we add $u$. 
This means that the size of $X_{t}$ is now at most $k+1$.
It is $S$-nice and a sibling of $\mathcal{T}$, since no vertices from $S$ were moves, and since the application of $\Normalize$ guarantees that the first condition of \cref{def:snicetd} holds. We can conclude that $\mathcal{T}_{\text{mod}}$ is an $S$-nice tree decomposition for $G$ with width at most $k$ that is a sibling of $\mathcal{T}$.  
 \end{proof}

\subsection{Slim and Top-Heavy \boldmath$S$-Nice Tree Decompositions}
\label{sec:useful}

In this section, we introduce two special types of $S$-nice tree decomposition that are central to our algorithm. They are called \emph{\slim} and \emph{\topheavy}.  We first present the concept of \slim $S$-nice tree decompositions. Intuitively, they do not contain nodes that unnecessarily have full bags. To this end, we first define \emph{full join trees} of an $S$-nice tree decomposition, which, intuitively, are subtrees of the tree decomposition where all nodes have the same (full) bag and admit a $\bigjoin$ operation.

\begin{definition}[Full Join Tree]\label{def:fulljointree}
Let $\mathcal{T}=(T,\{X_t\}_{t\in V(T)})$ be an $S$-nice tree decomposition of a graph $G=(V,E)$ and some $S\subseteq V$ with width $k$. 
A \emph{full join tree} $T'$ is a non-trivial, (inclusion-wise) maximal subtree of $T$ such that there exist a set $X\subseteq V$ with $|X|=k+1$ such each $t\in (T')$ has bag $X_t=X$.
\end{definition}
Furthermore, we introduce the following terminology. 
Let $C$ be a connected component in $G[V']$ for some $V'\subseteq V$. 
\begin{itemize}
    \item If $V(C)\cap S=\emptyset$, then we call $C$ an \emph{$F$-component}. 
    \item If $V(C)\cap S\neq\emptyset$, then we call $C$ an \emph{$S$-component}.
\end{itemize}
It is easy to see that every connected component of $G[V']$ falls into one of the above categories, for every choice of $V'\subseteq V$. 

Now we are ready to define \emph{\slim} $S$-nice tree decompositions.

\begin{definition}[Slim $S$-Nice Tree Decomposition]\label{def:slimtd}
Let $\mathcal{T}=(T,\{X_t\}_{t\in V(T)})$ be an $S$-nice tree decomposition of a graph $G=(V,E)$ and some $S\subseteq V$ with width $k$. We call $\mathcal{T}$ \emph{\slim} if the following holds.
\begin{enumerate}
\item The root of $\mathcal{T}$ is not an $S$-bottom node.\label{cond:slim:1}
\item For each $S$-bottom node $t$ in $\mathcal{T}$ that has two $S$-children, we have that either $t$ is part of a full join tree, or both $S$-children and the parent of $t$ have bags that are not full and not larger than the bag of $t$.\label{cond:slim:2}
\item For each $S$-bottom node $t$ in $\mathcal{T}$ that has a bag which is not full, we have that for the parent $t'$ of $t$ it holds that $X_t=X_{t'}$ and $t$ is the only $S$-child of $t'$.\label{cond:slim:3}
\item For each $S$-bottom node $t$ in $\mathcal{T}$ that admits $S$-operation $\forget(v)$ for some $v\in S$, we have that if the bag of the $S$-child $t'$ of $t$ is not full, then $t'$ has at most one $S$-child $t''$. If $t''$ exists, then it holds that $X_{t'}=X_{t''}$.\label{cond:slim:35}
\item For each $S$-bottom node $t$ in $\mathcal{T}$ we have that if the bag of the parent $t'$ of $t$ is not full, then the parent of $t'$ is not an $S$-bottom node or $t'$ is the root.\label{cond:slim:4}
\end{enumerate}
Moreover, for each full join tree $T'$ of $\mathcal{T}$, the following holds.
\begin{enumerate}\setcounter{enumi}{5}
\item Each node $t$ in $V(T')$ is an $S$-bottom node that has two $S$-children.\label{cond:slim:5}
\item Let $X$ denote the bag of nodes in $V(T')$, let $t_r$ denote the root of $T'$, let~$t'$ denote the parent of $t_r$ (if it exists), and let $T'_C$ denote the set of $S$-children of nodes in $V(T')$ that are not contained in $V(T')$. For each $v\in X\setminus S$ there exist three different vertices $u_1,u_2,u_3\in N(v)\setminus X$ and three different nodes $t_1,t_2,t_3\in T'_C\cup\{t'\}$ such that \label{cond:slim:6}
\begin{itemize}
\item For all $1\le i\le 3$, vertex $u_i$ is connected to $S$ in $G-X$.\label{cond:slim:6a}

\item For all $1\le i\le 3$, if $t_i\neq t'$, then vertex $u_i$ is contained in $V_{t_1}$. Otherwise vertex $u_i$ is contained in $V\setminus V_{t_r}$.\label{cond:slim:6b}
\end{itemize}
\end{enumerate}
\end{definition}
Before we show that we can make any $S$-nice tree decomposition \slim, we give some intuition on why the properties of \slim $S$-nice tree decompositions are desirable for us.
\begin{enumerate}
\item Condition~\ref{cond:slim:1} allows us to assume that all $S$-bottom nodes have a parent. This is important since, informally speaking, we want to remove all vertices that do not need to be in nodes that are below some $S$-bottom node, and move them above it. We formalize this later in this section when we define \topheavy tree decompositions.
\item Condition~\ref{cond:slim:2} allows us to assume that $S$-bottom nodes with two $S$-children, that is, the ones that admit $S$-operation $\join$, are either part of a full join tree, or both the $S$-children and the parent do not have full bags and those bags. This is important since we will treat these two cases differently in our algorithm. In particular in the latter case, that is, if an $S$-bottom node admits the $S$-operation $\join$ and it is not part of a full join tree, we need the property that both the $S$-children and the parent do not have full bags and that the bags are subsets of the bag of the $S$-bottom node.
\item Condition~\ref{cond:slim:3} allows us to assume that whenever an $S$-bottom node does not have a full bag, then the bag of its parent is also not full, and in particular, it is the same bag. This will make some proofs easier to achieve and easy to obtain by subdividing the edge between the $S$-bottom node and its parent, and giving the new node the same bag as the $S$-bottom node.
\item Condition~\ref{cond:slim:35} allows us to assume that whenever an $S$-bottom node admits the $S$-operation $\forget$ and its $S$-child (which always has a larger bag) it not full, then the bag of the $S$-child of the $S$-child (if it exists) is also not full. As with the previous condition, this is easy to achieve by edge subdivision and it simplifies some of our proofs.
\item Condition~\ref{cond:slim:4} allows us to assume that in a directed path corresponding to an $S$-trace $(L^S,X^S,R^S)$ with $L^S\neq \emptyset$, if the parent of the bottom node has a non-full bag, then the top node is not the parent of the bottom node. This will simplify many steps in our algorithm.
\item Condition~\ref{cond:slim:5} allows us to assume that every $S$-bottom node that is part of a full join tree admits the $S$-operation $\join$.
\item Condition~\ref{cond:slim:6} is the most technical one and also the most important one. Essentially it allows us to assume that every vertex $v\in V\setminus S$ that is contained in a bag $X$ of a node that is part of a full join tree is a neighbor of at least three different $S$-components in $G-X$. This will be crucial for establishing the running time bound of a subroutine of our algorithm.
\end{enumerate}
Now we show that if a graph $G$ admits an $S$-nice tree decomposition with width $k$, then~$G$ also admits a \slim $S$-nice tree decomposition with width $k$. 

\begin{lemma}\label{lem:slim}
Let $\mathcal{T}=(T,\{X_t\}_{t\in V(T)})$ be an $S$-nice tree decomposition of a graph $G=(V,E)$ with width $k$ and some $S\subseteq V$. Then there exist a \slim $S$-nice tree decomposition~$\mathcal{T}'$ of $G$ with width at most $k$.
\end{lemma}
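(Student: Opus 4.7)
The plan is to transform $\mathcal{T}$ into a \slim $S$-nice tree decomposition $\mathcal{T}'$ by applying a sequence of the modifications from \cref{sec:mod}, each of which preserves $S$-niceness, does not increase the width, and produces a sibling tree decomposition (so the predecessor relation on $S$-traces is preserved throughout). The seven conditions of \cref{def:slimtd} are addressed in an order chosen so that later fixes do not undo earlier ones: first the conditions about full join trees, then the root condition, and finally the remaining local ``padding'' conditions via edge subdivisions and local bag-trimming.

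The first step is to apply $\MergeFullNodes$ (\cref{def:mergefullnodes}) exhaustively. After saturation, every surviving full join tree consists entirely of $S$-bottom nodes with exactly two $S$-children, establishing Condition~\ref{cond:slim:5}: had a node in a full join tree fewer $S$-children, or were it the unique $S$-child of its parent inside the full join tree, $\MergeFullNodes$ would still apply. Next, for Condition~\ref{cond:slim:6}, we examine each surviving full join tree $T'$ with common bag $X$ and each $v \in X\setminus S$. If $v$ does not have three distinct neighbors in $N(v)\setminus X$ lying in three distinct subtrees of $T'_C\cup\{t'\}$ that each connect to $S$ in $G-X$, then the $S$-reaching attachments of $v$ are concentrated in at most two such subtrees; we use a combination of $\RemoveFromSubtree$ (\cref{def:remove}), $\MoveIntoSubtree$ (\cref{def:move}), and $\BringNeighborUp$/$\BringNeighborDown$ (\cref{def:bringup,def:bringdown}) to excise $v$ from $X$ while keeping every edge incident to $v$ covered. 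Each such excision strictly decreases the number of non-$S$ vertices residing in full join tree bags, so the process terminates.

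For Condition~\ref{cond:slim:1}, if the current root $r$ is still an $S$-bottom node, we insert a new root $r^\star$ with $X_{r^\star}=X_r$ as the unique parent of $r$. Then $r^\star$ shares the $S$-trace of $r$, which extends the corresponding directed path above $r$: consequently $r^\star$ is the new top node and $r$ remains the bottom node of that path, so $r^\star$ is not an $S$-bottom. No other node's $S$-trace changes, so sibling-ness is preserved and the earlier work on full join trees stays intact.

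The remaining Conditions~\ref{cond:slim:2},~\ref{cond:slim:3},~\ref{cond:slim:35}, and~\ref{cond:slim:4} are all achieved by edge subdivisions combined with occasional use of $\MoveIntoSubtree$/$\RemoveFromSubtree$ to trim non-$S$ vertices from nearby bags. Conditions~\ref{cond:slim:3} and~\ref{cond:slim:35} require that certain adjacent nodes share a bag, which is arranged by subdividing the corresponding edge and giving the new node the appropriate duplicate bag; Condition~\ref{cond:slim:4} is handled analogously by subdividing the edge between $t'$ and its parent and inserting a duplicate of $X_{t'}$, which inherits $t'$'s $S$-trace and hence is not an $S$-bottom. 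For Condition~\ref{cond:slim:2}, when an $S$-bottom node $t$ with two $S$-children is not part of a full join tree, we trim non-$S$ vertices from the bags of $t$'s neighbors via $\MoveIntoSubtree$/$\RemoveFromSubtree$, using the fact that $t$ lies outside every full join tree (after the first stage) to guarantee that the trimming never forces any bag to become full. The main obstacle throughout the whole argument is Condition~\ref{cond:slim:6}: proving that a non-$S$ vertex with insufficient $S$-reach through the three designated directions can always be pushed out of the full join tree bag; the argument hinges on the dichotomy between $F$-components and $S$-components of $G-X$, as only $S$-components truly force $v$ to straddle multiple sides of the join.
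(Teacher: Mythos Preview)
Your overall plan mirrors the paper's, but there is a genuine gap in how you handle Condition~\ref{cond:slim:6}. You propose to excise a vertex $v\in X\setminus S$ with at most two $S$-reaching attachments using ``a combination of $\RemoveFromSubtree$, $\MoveIntoSubtree$, and $\BringNeighborUp$/$\BringNeighborDown$.'' But $\BringNeighborUp$ and $\BringNeighborDown$ both require the target bag to satisfy $|X_t|\le k$, whereas every node of a full join tree has $|X_t|=k+1$; these operations simply do not apply here. More importantly, $\MoveIntoSubtree$/$\RemoveFromSubtree$ alone cannot remove $v$ from the full join tree either: if $v$'s two $S$-reaching neighbours lie in two distinct children $t_1,t_2\in T'_C$, then Condition~\ref{condition_3_tree_decomposition} of \cref{def:tree_decomposition} forces $v$ to remain in every bag on the $t_1$--$t_2$ path through $T'$, and neither operation can delete $v$ from those bags while keeping both edges covered. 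The paper's missing ingredient is a \emph{restructuring} step: it replaces the full join tree $T'$ by a rooted path $P$ on $|T'_C|-1$ nodes, attaching $t_1,t_2$ together at one end of $P$ and all other members of $T'_C$ along the rest. After this rearrangement, $v$'s attachments are concentrated on one side, and a single $\MoveIntoSubtree$ (or $\RemoveFromSubtree$, depending on whether $t'$ is one of the two attachments) suffices to strip $v$ from all but one bag of $P$. Without this path-restructuring you cannot justify the excision, and your termination measure (``strictly decreases the number of non-$S$ vertices in full join tree bags'') is never reached.

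A smaller issue: your fix for Condition~\ref{cond:slim:1} sets $X_{r^\star}=X_r$. If $X_r$ is full and $r$ is the root of a surviving full join tree, the new node $r^\star$ becomes part of that full join tree yet has only one child, immediately violating Condition~\ref{cond:slim:5} which you had already established. The paper avoids this by giving the new root an empty bag (and applying $\Normalize$), so no full join tree is extended. Finally, your treatment of Condition~\ref{cond:slim:2} by ``trimming non-$S$ vertices from the bags of $t$'s neighbours'' is not how the paper proceeds and is not obviously sound; the paper instead argues that after $\MergeFullNodes$ the $S$-children and parent of such a $t$ are automatically non-full (else they would have been merged or $t$ would lie in a full join tree), and then enforces the ``not larger than $X_t$'' part purely by edge subdivisions in $\MakeSlimTwo$.
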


To prove \cref{lem:slim}, we give the following two procedures and prove that they transform an $S$-nice tree decomposition $\mathcal{T}$ for a graph $G=(V,E)$ and some $S\subseteq V$ with width $k$, into a \slim $S$-nice tree decomposition for $G$ with width $k$. 

\begin{modification}[$\MakeSlim$]\label{def:MakeSlim}

Let $\mathcal{T}=(T,\{X_t\}_{t\in V(T)})$ be an $S$-nice tree decomposition of a graph $G=(V,E)$ and some $S\subseteq V$ with width $k$. The modification $\MakeSlim$ applies the following changes to~$\mathcal{T}$.
\begin{enumerate}
\item 
Apply the modification $\MergeFullNodes$ to~$\mathcal{T}$ (\cref{def:mergefullnodes}). \label{slim:1}
\item If the root $t_r$ of $\mathcal{T}$ is an $S$-bottom node, then add a new node $t'$ to $\mathcal{T}$ and make it the parent of $t_r$. Set $X_{t'}=\emptyset$. Apply $\Normalize$. If the new root $t_r'$ is still an $S$-bottom node, add a new node $t''$ with $X_{t''}=\emptyset$ to $\mathcal{T}$ and make it the parent of $t_r'$. \label{slim:2}
\item For each $S$-bottom node $t$ in $\mathcal{T}$ that has two $S$-children, if $t$ is not contained in any full join tree and the parent $t_p$ of $t$ has a full bag, then subdivide the edge between $t$ and $t_p$ and let $t'$ be the new node. Set $X_{t'}$ = $X_t$. \label{slim:3}
\item 
For each full join tree $T'$ in $\mathcal{T}$ perform the following steps once.
Let $X$ denote the bag of nodes in $V(T')$, let $t_r$ denote the root of $T'$, let~$t'$ denote the parent of $t_r$ (if it exists), and let $T'_C$ denote the set of $S$-children of nodes in $V(T')$ that are not contained in $V(T')$.\label{slim:6}
\begin{enumerate}
\item For each $F$-component $C$ in $G-X$ such that $V(C)\cap V_{t^\star}\neq\emptyset$ for some $t^\star\in T'_C$, or, if $t'$ exists, $V(C)\cap ((V\setminus V_{t'})\cup X_{t'})\neq\emptyset$, add a new child $t_0$ to $t_r$ with bag $X$ and apply $\MoveIntoSubtree(t_0,V(C))$ (\cref{def:move}).
Remove $t_0$ and connect all children of $t_0$ to $t_r$.
\label{slim:6a}
\item Assume there exist $v\in X\setminus S$ such that $N(v)\subseteq X$. Then remove $v$ from the bags of all nodes except $t_r$.\label{slim:6aa}
\item Assume there exist $v\in X\setminus S$, $u_1,u_2\in N(v)\setminus X$, and nodes $t_1,t_2\in T_C'\cup\{t'\}$ with $t_1\neq t_2$ and $t'\neq t_2$ such that the following holds.
For all $t''\in (T'_C\cup\{t'\})\setminus\{t_1,t_2\}$ if $t''\neq t'$, then vertex $N(v)\cap (V_{t''}\setminus X)=\emptyset$. Otherwise, $N(v)\cap  (V\setminus V_{t_r})=\emptyset$. 
Then make the following modification. 
%

We replace $T'$ with a rooted path $P$ (that is, the path is rooted at one of its endpoints) on $|T'_C|-1$ nodes as follows. Remove $T'$ from~$\mathcal{T}$. Let $t'_r$ be the root of~$P$. Make $t'$ (if it exists) the parent of $t'_r$, otherwise~$t'_r$ is the root of $\mathcal{T}$.  
\begin{itemize}
\item If $t_1=t'$, then $t_2$ is a child of $t'_r$. Each child of a node in $T'$ that is not an $S$-child is a child of $t'_r$. Each node in $T'_C\setminus\{t_2\}$ is a child of exactly one inner node of $P$ and two nodes are children of the leaf of~$P$. Set the bags of all nodes in $P$ to $X$. 
Denote $t'_0$ the child of $t'_r$ in $P$.
Apply $\RemoveFromSubtree(t'_0,\{v\})$ (\cref{def:remove}). For an illustration see \cref{fig:fulljointree}.

\item Otherwise, $t_1$ and $t_2$ are children of the leaf of $P$. Each child of a node in $T'$ that is not an $S$-child is a child of the leaf of $P$. Each node in $T'_C\setminus\{t_1,t_2\}$ is a child of exactly of the remaining nodes of $P$. Set the bags of all nodes in $P$ to~$X$. 
Denote~$t'_\ell$ the leaf of $P$. 
Apply $\MoveIntoSubtree(t'_\ell,\{v\})$.
\end{itemize}
\label{slim:6b}
\end{enumerate}%
\end{enumerate}
\end{modification}


\begin{figure}[t]
\centering
\begin{tikzpicture}[line width=1pt,scale=.5,yscale=.9]
\node[vert,minimum width=1cm,fill=red!30!white] (p) at (0,4) {};
\node[vert,minimum width=1cm,fill=gray!30!white,line width=2pt] (v1) at (0,0) {};
\node[vert,minimum width=1cm,fill=gray!30!white] (v2) at (4,-1) {};
\node[vert,minimum width=1cm,fill=gray!30!white] (v3) at (8,-2) {};
\node[vert,minimum width=1cm,fill=gray!30!white] (v5) at (16,-4) {};

\node[vert,minimum width=1cm,fill=red!30!white] (w1) at (-3,-3) {};
\node[vert,minimum width=1cm] (w2) at (1,-4) {};
\node[vert,minimum width=1cm] (w3) at (5,-5) {};
\node[vert,minimum width=1cm] (w5) at (13,-7) {};
\node[vert,minimum width=1cm] (w6) at (19,-7) {};

\node (a) at (2.2,4) {$t_1=t'$};
\node (a) at (1.5,.2) {$t'_r$};
\node (a) at (5.5,-.8) {$t'_0$};
\node (a) at (-1.5,-3) {$t_2$};

\node[vert,fill=cyan!20!white] (v) at (0,4) {};
\node[vert,fill=cyan!20!white] (v) at (0,0) {};
\node[vert,fill=cyan!20!white] (v) at (-3,-3) {};

\draw[edge,dashed] (v1) -- (0,-2);
\draw[edge,dashed] (v1) -- (-1,-1.7);
\draw[edge,dashed] (v1) -- (1,-1.7);

\draw[edge] (p) -- (v1);
\draw[edge] (v1) -- (v2);
\draw[edge] (v2) -- (v3);
\draw[edge,dashed] (v3) -- (v5);
\draw[edge] (v1) -- (w1);
\draw[edge] (v2) -- (w2);
\draw[edge] (v3) -- (w3);
\draw[edge] (v5) -- (w5);
\draw[edge] (v5) -- (w6);
\end{tikzpicture}
    \caption{Illustration of Step~\ref{slim:6} in the procedure ``Transformation into Slim $S$-Nice Tree Decomposition''. The gray circles represent nodes of the path $P$ with which the full join tree $T'$ is replaced. The red circles illustrate $t_1$ and $t_2$. The remaining circles illustrate nodes in $T'_C\setminus\{t_2\}$. The dashed lines below $t'_r$ indicate that all non-$S$-children of nodes in $T'$ are attached to this node. The case where $t_1=t'$ is shown. Vertex $v$ is visualized with a small blue circle. It is removed from all bags of nodes of $P$ except $t'_r$. The bag of $t'_r$ remains full, this is visualized by the thick line.}\label{fig:fulljointree}
\end{figure}

Note that since Step~\ref{slim:5} of $\MakeSlim$ can be applied at most once to each $S$-bottom node and Step~\ref{slim:6} of $\MakeSlim$ is applied to each full join tree once, 
we have that the modification always terminates after a finite number of steps. Informally speaking, $\MakeSlim$ takes care of all conditions of \cref{def:slimtd} except Conditions~\ref{cond:slim:3}, \ref{cond:slim:35}, and~\ref{cond:slim:4}, and partially Condition~\ref{cond:slim:2}. Concerning Condition~\ref{cond:slim:2}: $\MakeSlim$ does not guarantee that if an $S$-bottom node $t$ as two $S$-children and is not part of a full join tree, then the bags of the parent and the children are not larger than the bag of $t$. 
For those, we introduce a second modification $\MakeSlimTwo$. The reason for this is that we can use $\MakeSlimTwo$ after applying modifications $\MoveIntoSubtree$, $\RemoveFromSubtree$, $\BringNeighborUp$, or $\BringNeighborDown$ (\cref{def:move,def:remove,def:bringup,def:bringdown}) to a \slim $S$-nice treedecompositoin to ensure that the tree decomposition remains \slim.

\begin{modification}[$\MakeSlimTwo$]\label{def:MakeSlimTwo}
Let $\mathcal{T}=(T,\{X_t\}_{t\in V(T)})$ be an $S$-nice tree decomposition of a graph $G=(V,E)$ and some $S\subseteq V$ with width $k$. The modification $\MakeSlimTwo$ applies the following changes to~$\mathcal{T}$.
\begin{enumerate}
\item For each $S$-bottom node $t$ in $\mathcal{T}$ that has a bag which is not full, if for the parent $t_p$ of $t$ it holds that $X_t\neq X_{t_p}$, then subdivide the edge between $t_p$ and $t$ and let $t'$ be the new node. Set $X_{t'}=X_{t}$.
If $t$ has two $S$-children $t_1$ and $t_2$ and for some $t_i$ with $i\in\{1,2\}$ it holds that $|X_{t_i}|>|X_{t}|$, then subdivide the edge between $t_1$ and $t$ and let $t''$ be the new node. Set $X_{t''}=X_{t}$.
 \label{slim:4}
 \item For each $S$-bottom node $t$ in $\mathcal{T}$ that admits $S$-operation $\forget(v)$ for some $v\in S$, if the bag of the $S$-child $t'$ of $t$ is not full and $t'$ has an $S$-child $t''$ with $X_{t'}\neq X_{t''}$, then subdivide the edge between $t'$ and $t''$ and let $t'''$ be the new node. Set $X_{t'''}=X_{t'}$.
 \label{slim:45}
\item For each $S$-bottom node $t$ in $\mathcal{T}$, if the bag of the parent $t_p$ of $t$ is not full and $t_p$ is a child of an $S$-bottom node, then subdivide the edge between $t_p$ and its parent and let $t'$ be the new node. Set $X_{t'}=X_{t_p}$. \label{slim:5}
\end{enumerate}
\end{modification}

Now we are ready to prove \cref{lem:slim}.
\begin{proof}[Proof of \cref{lem:slim}]
Let $\mathcal{T}=(T,\{X_t\}_{t\in V(T)})$ be an $S$-nice tree decomposition of a graph $G=(V,E)$ with width $k$ and some $S\subseteq V$. 
Apply modification $\MakeSlim$ and $\MakeSlimTwo$ to $\mathcal{T}$ (\cref{def:MakeSlim} and \cref{def:MakeSlimTwo}). We show that afterwards, $\mathcal{T}$ is a \slim $S$-nice tree decomposition of $G$ with width at most $k$.

In Step~\ref{slim:1} of $\MakeSlim$, the modification $\MergeFullNodes$ (\cref{def:mergefullnodes}) is applied to~$\mathcal{T}$. 
Note that afterwards, $\mathcal{T}$ is still an $S$-nice tree decomposition of a graph $G=(V,E)$ with width $k$.
After Steps~\ref{slim:2} and~\ref{slim:3} of $\MakeSlim$, $\mathcal{T}$ is clearly still an $S$-nice tree decomposition of a graph $G=(V,E)$ with width $k$.

We claim that now the Conditions~\ref{cond:slim:1},~\ref{cond:slim:5}, and partially Condition~\ref{cond:slim:2} of \cref{def:slimtd} hold.
Condition~\ref{cond:slim:1} of \cref{def:slimtd} clearly holds after Step~\ref{slim:2}.
Now we show that the Condition~\ref{cond:slim:2} of \cref{def:slimtd} partially holds. Let $t$ be an $S$-bottom node that has two $S$-children and that is not contained in any full join tree. Assume that one of the $S$-children has a full bag. By \cref{def:snicetd} we have that the bag of $t$ is also full and the two bags are the same. Then either that $S$-child is merged with $t$ when the modification $\MergeFullNodes$ is applied, or $t$ is in a full join tree. Hence, assume that both $S$-children of $t$ do not have full bags. Assume that the parent of $t$ has a full bag. If the bag of $t$ is also full, then by \cref{def:snicetd} we have the bags are the same. Then, either the parent is merged with $t$ when the modification $\MergeFullNodes$ is applied, or $t$ is in a full join tree. Assume that the bag of $t$ is not full. Then, because of Step~\ref{slim:3} of $\MakeSlim$, we have that the bag of the parent of $t$ is the same as the bag of~$t$, a contradiction to the assumption that the bag of the parent of $t$ is full. It follows that part Condition~\ref{cond:slim:2} of \cref{def:slimtd} holds. What remains to show is that the bags of the $S$-children and parent of $t$ are also not larger than the bag of $t$. This will be guaranteed after the application of $\MakeSlimTwo$.

We show that Condition~\ref{cond:slim:5} of \cref{def:slimtd} holds. Let $T'$ be a full join tree in $\mathcal{T}$. Let $t\in V(T')$. Assume for contradiction that $t$ does not have two $S$-children. Assume that $t$ has a parent node $t_p$ with $t_p\in V(T')$. Then, $\MergeFullNodes$ merges $t$ and $t_p$, a contradiction.
If $t$ does not have a parent node or the parent of $t$ is not in $V(T')$, then $t$ must have a child node $t_c$ such that $t_c\in V(T')$, because otherwise $T'$ is trivial. If $t_c$ has at most one $S$-child, then $\MergeFullNodes$ merges $t$ and~$t_c$. Hence, assume that $t_c$ has two $S$-children. 
Clearly, we have that $X_t^S=X_{t_c}^S$, $L^S_{t_c}\neq\emptyset$, $L^S_{t_c}\setminus R^S_{t_c}\neq\emptyset$, and $R^S_{t}\subseteq R^S_{t_c}$. If follows that $L^S_{t_c}\setminus R^S_{t}\neq\emptyset$, and hence by \cref{def:sparentchild} we have that $t_c$ is the only $S$-child of $t$. Then, $\MergeFullNodes$ merges~$t$ and $t_c$, a contradiction. It follows that Condition~\ref{cond:slim:5} of \cref{def:slimtd} holds.

After Step~\ref{slim:6} of $\MakeSlim$, we claim that $\mathcal{T}$ is still an $S$-nice tree decomposition of a graph $G=(V,E)$ with width $k$.
To see this, note that Step~\ref{slim:6a} first adds a new child $t_0$ to the root $t_r$ of $T'$ that has the same bag. It follows that $t_0$ is not an $S$-child of the root of $T'$. Then we perform $\MoveIntoSubtree(t_0,C)$ (\cref{def:move}). By \cref{lem:moveremove} we know that afterwards, $\mathcal{T}$ is still an $S$-nice tree decomposition of a graph $G=(V,E)$ with width $k$. Note that since the $C$ is an $F$-component, $t_0$ is still a non-$S$-child of the root of $T'$. 
We can conclude that $\mathcal{T}$ is still an $S$-nice tree decomposition of a graph $G=(V,E)$ with width $k$. Furthermore, since $\MoveIntoSubtree(t_0,C)$ does not increase the size of any bag, we have that Conditions~\ref{cond:slim:1} and~\ref{cond:slim:2} of \cref{def:slimtd} still hold.

In Step\ref{slim:6aa} the whole neighborhood of $v$ is in $X$, this means that we can safely remove $v$ from all bags except one and $\mathcal{T}$ remains a tree decomposition. Since $v\notin S$, we also have that $\mathcal{T}$ remains $S$-nice. Furthermore, since we do not increase the size of any bag, we have that Conditions~\ref{cond:slim:1} and~\ref{cond:slim:2} of \cref{def:slimtd} still hold.


Now consider the case where $t'=t_1$. Step~\ref{slim:6b} first connects non-$S$-children of nodes in $T'$ to $t'_r$. 
The $S$-children of the nodes in $T'$ are moved to new parents such that every parent has two $S$-children. Let $t'_0$ be the child of $t'_r$ in $P$. Now observe that $N(v)\subseteq (V\setminus V_{t'_0})\cup X$ since $v$ has no neighbors in any of the $S$-children of nodes in $P$ except $t_2$, which is not in the subtree rooted in $t'_0$. It follows that the prerequisites for applying $\RemoveFromSubtree$ (\cref{def:remove}) are given. By \cref{lem:moveremove} we know that afterwards, $\mathcal{T}$ is still an $S$-nice tree decomposition of a graph $G=(V,E)$ with width $k$. 
In the case where $t'\neq t_1$ is analogous.
Furthermore, since no bag size is increased, we have that Conditions~\ref{cond:slim:1} and~\ref{cond:slim:2} of \cref{def:slimtd} still hold.

Now assume that Condition~\ref{cond:slim:6} of \cref{def:slimtd} does not hold. Then there is some $v\in X\setminus S$ such that for all $u_1,u_2,u_3\in N(V)\setminus X$ the following holds. There are no three different nodes $t_1,t_2,t_3\in T'_C\cup\{t'\}$ such that the second condition of \cref{def:slimtd}. It is straightforward to check that then Step~\ref{slim:6aa} or Step~\ref{slim:6b} of $\MakeSlim$ applies. A contradiction to the assumption that $\MakeSlim$ was applied to $\mathcal{T}$. 


Conditions~\ref{cond:slim:2} and~\ref{cond:slim:3} of \cref{def:slimtd} clearly holds after Step~\ref{slim:4} of $\MakeSlimTwo$, Condition~\ref{cond:slim:35} of \cref{def:slimtd} clearly holds after Step~\ref{slim:45} of $\MakeSlimTwo$, and Condition~\ref{cond:slim:4} of \cref{def:slimtd} clearly holds after Step~\ref{slim:5} of $\MakeSlimTwo$.
All those steps only subdivide edges and set the bag of the new node to the bag of one of the previous endpoints. Note that all newly introduced nodes have bags that are not full, and hence, Conditions~\ref{cond:slim:5} and~\ref{cond:slim:6} of \cref{def:slimtd} are preserved. It is easy to see that $S$-niceness is preserved as well and the width of the tree decomposition is not increased.
This finishes the proof.
\end{proof}

Furthermore, we can observe that all modifications introduced in \cref{sec:mod} preserve the property of tree decomposition to be slim.
 \begin{observation} \label{lem:preserveslim}
     Let $\mathcal{T}=(T,\{X_t\}_{t\in V(T)})$ be a \slim $S$-nice tree decomposition of a graph $G=(V,E)$ and some $S\subseteq V$ with width $k$. Let $t$ be a node in $\mathcal{T}$ with parent $t_p$, let $M\subseteq V\setminus S$ such that $N(M)\subseteq V_{t}$, and let $M'\subseteq V\setminus S$ such that $N(M')\subseteq (V\setminus V_{t})\cup X_{t_p}$. 
Let $t'$ be a node in $\mathcal{T}$ such that $|X_{t'}|\le k$ and $t'$ is not parent of an $S$-bottom node. Let $v\in X_{t'}\setminus S$ such that $N(v)\cap (V_t\setminus X_{t'})=\{u\}$ for some $u\in V\setminus S$.
     Let $v'\in X_{t'}\setminus S$ such that $N(v')\cap (V\setminus V_{t'})=\{u'\}$ for some $u'\in V\setminus S$.          
     The following holds.
     \begin{itemize}
     \item Let $\mathcal{T}_{\text{mod}}$ be the result of applying $\MoveIntoSubtree(t, M)$ and then $\MakeSlimTwo$ to $\mathcal{T}$. Then $\mathcal{T}_{\text{mod}}$ is a \slim $S$-nice tree decomposition for $G$ with width at most $k$ that is a sibling of~$\mathcal{T}$.
     \item Let $\mathcal{T}_{\text{mod}}$ be the result of applying $\RemoveFromSubtree(t, M')$ and then $\MakeSlimTwo$ to~$\mathcal{T}$. Then $\mathcal{T}_{\text{mod}}$ is a \slim $S$-nice tree decomposition for $G$ with width at most $k$ that is a sibling of~$\mathcal{T}$.
     \item Let $\mathcal{T}_{\text{mod}}$ be the result of applying $\BringNeighborUp(t', v, u)$ and then $\MakeSlimTwo$ to~$\mathcal{T}$. Then $\mathcal{T}_{\text{mod}}$ is a \slim $S$-nice tree decomposition for $G$ with width at most $k$ that is a sibling of~$\mathcal{T}$.
     \item Let $\mathcal{T}_{\text{mod}}$ be the result of applying $\BringNeighborDown(t', v', u')$ and then $\MakeSlimTwo$ to~$\mathcal{T}$. Then $\mathcal{T}_{\text{mod}}$ is a \slim $S$-nice tree decomposition for $G$ with width at most $k$ that is a sibling of~$\mathcal{T}$.
     \end{itemize}
 \end{observation}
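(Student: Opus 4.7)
The plan is as follows. For each of the four cases, I would first invoke the corresponding earlier observation --- \cref{lem:moveremove} for the $\MoveIntoSubtree$ and $\RemoveFromSubtree$ cases, and \cref{lem:bringupdown} for the $\BringNeighborUp$ and $\BringNeighborDown$ cases --- to conclude that the intermediate tree decomposition obtained just before the call to $\MakeSlimTwo$ is already $S$-nice, has width at most $k$, and is a sibling of $\mathcal{T}$. Since $\MakeSlimTwo$ only subdivides edges and copies the bag of one endpoint into the new intermediate node, it preserves $S$-niceness, width, and the sibling relation. What remains is to verify that, after $\MakeSlimTwo$ has been applied, all seven conditions of \cref{def:slimtd} hold.

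Condition~\ref{cond:slim:1} is preserved because none of the operations moves the root: $\Normalize$ only subdivides edges, $\MergeFullNodes$ only deletes non-root nodes, $\BringNeighborUp$ and $\BringNeighborDown$ only alter bag contents, and the edge subdivisions in $\MakeSlimTwo$ keep the root fixed. Since $\mathcal{T}$ was \slim, its root was not an $S$-bottom, and this persists. Conditions~\ref{cond:slim:2}, \ref{cond:slim:3}, \ref{cond:slim:35}, and~\ref{cond:slim:4} are precisely what Steps~\ref{slim:4}, \ref{slim:45}, and~\ref{slim:5} of $\MakeSlimTwo$ enforce by construction, via a routine case analysis mirroring the one at the end of the proof of \cref{lem:slim}. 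The precondition that in the $\BringNeighborUp$ and $\BringNeighborDown$ cases the node $t'$ is not the parent of an $S$-bottom node is crucial here: it guarantees that the size increase of $X_{t'}$ cannot create a Condition~\ref{cond:slim:3} violation at an $S$-bottom child whose bag was previously forced to equal $X_{t'}$.

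The main obstacle is Conditions~\ref{cond:slim:5} and~\ref{cond:slim:6}, which constrain full join trees. The key structural observation is that each of the three steps of $\MakeSlimTwo$ inserts a new node whose bag equals a neighbouring \emph{non-full} bag, so $\MakeSlimTwo$ can neither create a new full join tree nor extend an existing one; hence it suffices to check these two conditions on the intermediate tree decomposition produced by the underlying modification alone. For $\MoveIntoSubtree$ and $\RemoveFromSubtree$, the internal call to $\MergeFullNodes$ collapses any adjacent same-full-bag pair that fails the two-$S$-children requirement of Condition~\ref{cond:slim:5}, and Condition~\ref{cond:slim:6} is inherited on full join trees of $\mathcal{T}$ that survive the modification, because $M, M'\subseteq V\setminus S$ and therefore neither the $S$-components of $G-X$ nor the set $T'_C\cup\{t'\}$ of relevant $S$-attachments changes; any full join tree newly created inside the attached optimal subtree is handled by exactly the same structural argument used in the proof of \cref{lem:slim}.

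For $\BringNeighborUp$ and $\BringNeighborDown$, every full join tree of the modified decomposition corresponds to a full join tree of $\mathcal{T}$ with $v$ possibly replaced by $u$ in the common bag $X$; the hard part is to show that this rename preserves Condition~\ref{cond:slim:6}. I plan to handle this by exhibiting, for each $w\in X\setminus S$, three witness neighbours inherited from $\mathcal{T}$ with $u$ substituted for $v$ where needed, using the single-neighbour precondition $N(v)\cap(V_t\setminus X_t)=\{u\}$ (respectively its $\BringNeighborDown$ analogue) to argue that the relevant $S$-component structure of $G$ minus the new common bag differs from that of $G-X$ only by a relabelling that respects the witness assignments. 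Together with the observation that $\Normalize$ introduces only non-full intermediate bags, this rules out any new full join trees appearing and completes the verification of Conditions~\ref{cond:slim:5} and~\ref{cond:slim:6}.
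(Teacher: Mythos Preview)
Your proposal is correct and follows the same overall skeleton as the paper's proof: invoke \cref{lem:moveremove}/\cref{lem:bringupdown} for $S$-niceness, width, and the sibling property; observe that $\MakeSlimTwo$ only subdivides edges with non-full copied bags and hence preserves all of these as well as Conditions~\ref{cond:slim:5}--\ref{cond:slim:6}; and let Steps~\ref{slim:4}--\ref{slim:5} of $\MakeSlimTwo$ restore Conditions~\ref{cond:slim:2}--\ref{cond:slim:4}.

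The one place where you diverge from the paper is the treatment of Conditions~\ref{cond:slim:5} and~\ref{cond:slim:6} for $\BringNeighborUp$/$\BringNeighborDown$. The paper does not carry out a relabelling argument at all; it simply observes that the only bag whose size increases is $X_{t'}$, and that $t'$ cannot lie in a full join tree because $u\notin X_{\text{parent}(t')}$ and, after the replacement, $v\notin X_{\text{child}(t')}$. The implicit point you are missing, and which makes your planned relabelling unnecessary, is that $v$ \emph{cannot} lie in the common bag $X$ of any full join tree $T''$ strictly below $t'$: if it did, Condition~\ref{cond:slim:6} for $v$ in the original slim $\mathcal{T}$ would demand three neighbours of $v$ in three distinct members of $T''_C\cup\{t''_p\}$, but the single-neighbour precondition $N(v)\cap(V_{t'}\setminus X_{t'})=\{u\}$ forces every such neighbour lying in any $V_{t_i}\setminus X$ with $t_i\in T''_C$ to equal $u$, leaving at most two distinct witness nodes. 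A symmetric argument handles $\BringNeighborDown$. Hence the replacement $v\to u$ never touches a full join tree bag, and Conditions~\ref{cond:slim:5} and~\ref{cond:slim:6} are inherited verbatim---no relabelling needed. Your route would still succeed (trivially, once you discover the case is empty), but the paper's shortcut is cleaner.
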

\begin{proof}
Note that from \cref{lem:moveremove,lem:bringupdown} it follows that all cases, before the application of $\MakeSlimTwo$, the tree decomposition remains $S$-nice, it remains a sibling of $\mathcal{T}$, and its width remains at most $k$. 
Furthermore, observe that all steps in $\MakeSlimTwo$ only subdivide edges and set the bag of the new node to the bag of one of the previous endpoints. It is easy to see that this preserved $S$-niceness, does not change the predecessor relation on $S$-traces, and does not increase the width of the tree decomposition.

We can observe that the tree decomposition also remains \slim: Condition~\ref{cond:slim:1} of \cref{def:slimtd} clearly still holds.

The modification $\MoveIntoSubtree(t,M)$ (\cref{def:move}) ($\RemoveFromSubtree(t,M')$ (\cref{def:remove})) does not add any new vertices to existing bags, and hence cannot create any new full join trees with existing nodes. Note that the bags of the nodes of the new subtree attached to $t$ (the parent of $t$) does not contain any vertices from $S$. If follows that all full nodes are merged by $\MergeFullNodes$ and no new full join trees are created.
It follows that Conditions~\ref{cond:slim:2},~\ref{cond:slim:5} and~\ref{cond:slim:6} of \cref{def:slimtd} still hold.

The modification $\BringNeighborUp(t', v, u)$ (\cref{def:bringup}) only increases the bag of node $t'$ and it may become full. Note that $u$ is not contained in the parent of $t'$ and $v$ is not contained in any descendant of~$t'$, hence $t'$ is not part of any full join tree and, after the $\Normalize$ step, the bag of the parent as well as the bags of the children of $t'$ are not full.
It follows that Conditions~\ref{cond:slim:2},~\ref{cond:slim:5} and~\ref{cond:slim:6} of \cref{def:slimtd} still hold.

Similarly, the modification $\BringNeighborDown(t', v', u')$ (\cref{def:bringdown}) only increases the bag of node $t'$ and it may become full. Note that $u'$ is not contained in any of the children of $t'$ and $v'$ is not contained in any ancestor of $t'$, hence $t'$ is not part of any full join tree and, after the $\Normalize$ step, the bag of the parent as well as the bags of the children of $t'$ are not full.
It follows that Conditions~\ref{cond:slim:2},~\ref{cond:slim:5} and~\ref{cond:slim:6} of \cref{def:slimtd} still hold.

Finally, because of the application of $\MakeSlimTwo$ we have that Conditions~\ref{cond:slim:3},~\ref{cond:slim:35}, and~\ref{cond:slim:4} of \cref{def:slimtd} still hold. 
\end{proof}

Now we formally define the second special type of $S$-nice tree decompositions. They are called \emph{\topheavy} (for some node $t$) and, intuitively, have the property that if the modifications introduced in \cref{sec:mod} are applied to the tree decomposition outside of the subtree rooted at $t$, then the bags in the subtree rooted at $t$ are unaffected. 

\begin{definition}[Top-Heavy]\label{def:topheavy}
Let $\mathcal{T}=(T,\{X_t\}_{t\in V(T)})$ be a \slim $S$-nice tree decomposition of a graph $G=(V,E)$ with width $k$ and some $S\subseteq V$. Let $t$ be a parent of an $S$-bottom node in $\mathcal{T}$ with $|X_t|\le k$.
We call $\mathcal{T}$ \emph{\topheavy} for $t$ if the following holds.
\begin{enumerate}
\item For each $S$-bottom node $t'$ in $\mathcal{T}$ that is a descendant of $t$ and for each $F$-component $C$ in $G-X_{t'}$ we have that $V(C)\cap V_{t''}=\emptyset$ for each $S$-child $t''$ of $t'$.
\end{enumerate}
Moreover, the following holds with respect to \cref{def:move,def:remove,def:bringup,def:bringdown}.  
Let~${t^\star}$ be a parent of an $S$-bottom node and contained in $T_t$.
Let $t'$ be a node in $\mathcal{T}$ that is not contained in $T_{t^\star}$.
\begin{enumerate}\setcounter{enumi}{1}
\item If there is $M\subseteq V\setminus S$ such that $N(M)\subseteq V_{t'}$, then applying $\MoveIntoSubtree(t', M)$ to $\mathcal{T}$ does not change $T_{t^\star}$ or any bag of a node in $T_{t^\star}$.
\item If there is $M\subseteq V\setminus S$ such that $N(M)\subseteq (V\setminus V_{t'})\cup X_{t'_p}$, where $t'_p$ is the parent of $t'$, then applying $\RemoveFromSubtree(t', M)$ to $\mathcal{T}$ does not change $T_{t^\star}$ or any bag of a node in $T_{t^\star}$.
\item If $|X_{t'}|\le k$ and there is $v\in X_{t'}\setminus S$ such that $N(v)\cap (V_{t'}\setminus X_{t'})=\{u\}$ for some $u\in V\setminus S$, then applying $\BringNeighborUp(t', v, u)$ to $\mathcal{T}$ does not change $T_{t^\star}$ or any bag of a node in~$T_{t^\star}$.
\item If $|X_{t'}|\le k$ and there is $v\in X_{t'}\setminus S$ such that $N(v)\cap (V\setminus V_{t'})=\{u\}$ for some $u\in V\setminus S$, then applying $\BringNeighborDown(t', v, u)$ to $\mathcal{T}$ does not change $T_{t^\star}$ or any bag of a node in~$T_{t^\star}$.
\end{enumerate}
\end{definition}

We show that if a graph $G$ admits a \slim $S$-nice tree decomposition with width $k$, then we can transform it into a sibling \slim $S$-nice tree decomposition for $G$ with width $k$ that is \topheavy for a node $t$. Formally, we prove the following

\begin{lemma}\label{lem:topheavy}
Let $\mathcal{T}=(T,\{X_t\}_{t\in V(T)})$ be a \slim $S$-nice tree decomposition of a graph $G=(V,E)$ with width $k$ and some $S\subseteq V$. Then for every node $t$ in $\mathcal{T}$ with $|X_t|\le k$ that is a parent of an $S$-bottom node there exist a \slim $S$-nice tree decomposition~$\mathcal{T}'$ of $G$ with width at most $k$ that is \topheavy for $t$ and that is a sibling of $\mathcal{T}$.
\end{lemma}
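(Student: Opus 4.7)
The plan is to construct $\mathcal{T}'$ by iteratively fixing violations of the first condition of \cref{def:topheavy} via the modification $\RemoveFromSubtree$ (\cref{def:remove}) followed by $\MakeSlimTwo$ to restore the slim property. Concretely, I would define a procedure $\MakeTopHeavy$ which, while there exist an $S$-bottom node $t'$ that is a descendant of $t$, an $S$-child $t''$ of $t'$, and an $F$-component $C$ of $G-X_{t'}$ with $M := V(C)\cap V_{t''}\neq\emptyset$, applies $\RemoveFromSubtree(t'',M)$ to $\mathcal{T}$ and then $\MakeSlimTwo$. The prerequisites of $\RemoveFromSubtree$ are satisfied because $M\subseteq V(C)\subseteq V\setminus S$, and since $C$ is a maximal connected component of $G-X_{t'}$, every external neighbor of $M$ either lies in $V(C)\setminus V_{t''}\subseteq V\setminus V_{t''}$ or in $X_{t'}$; hence $N(M)\subseteq (V\setminus V_{t''})\cup X_{t'}$ as required.

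By \cref{lem:preserveslim}, each iteration preserves the properties of being a slim $S$-nice tree decomposition of width at most $k$ that is a sibling of $\mathcal{T}$; in particular, the predecessor relation on $S$-traces is unchanged, so the set of $S$-bottom descendants of $t$ is naturally identified across iterations. Termination follows from a potential argument with $\Phi(\mathcal{T})=\sum_{(t',t'')}|V_{t''}\setminus S|$ summed over pairs $(t',t'')$ where $t'$ is an $S$-bottom descendant of $t$ and $t''$ is an $S$-child of $t'$. The new subtree created by $\RemoveFromSubtree$ is attached directly to $t'$ and therefore lies outside $T_{t''}$, so the contribution for the treated pair strictly decreases by $|M|\ge 1$; for every other pair $(\hat{t},\hat{t}'')$ the contribution either stays the same (when $t'\in T_{\hat{t}''}$, as the moved vertices still reside in $V_{\hat{t}''}$) or drops (otherwise). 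Since $\Phi$ is a non-negative integer, the procedure halts.

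At termination, condition 1 of \cref{def:topheavy} holds by construction. To verify conditions 2--5, I would run a case analysis on the relative position of the target node $t'$ (of the hypothetical outside modification) and $t^\star$. If $t'$ is an ancestor of $t^\star$, then $T_{t^\star}\subseteq T_{t'}$, and none of the four modifications from \cref{def:move,def:remove,def:bringup,def:bringdown} alters bags inside $T_{t^\star}$: step 1 only touches bags outside $T_{t'}$, step 2 attaches a fresh subtree at $t'$ itself, and step 3 ($\Normalize$ combined with $\MergeFullNodes$) has no effect inside $T_{t^\star}$ because $\mathcal{T}$ is already slim (in particular Conditions~\ref{cond:slim:3} and~\ref{cond:slim:4} of \cref{def:slimtd}) and normalized there. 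If $T_{t'}$ and $T_{t^\star}$ are disjoint, the heart of the matter is to show that the vertex set being moved, swapped, or rerouted is disjoint from $V_{t^\star}$: any offending $v\in V\setminus S$ in $V_{t^\star}$ whose relevant neighborhood lies inside $V_{t'}$ would have to sit in an $F$-component of $G-X_{t^\star_1}$, where $t^\star_1$ is an $S$-bottom child of $t^\star$, that meets $V_{t''}$ for some $S$-child $t''$ of $t^\star_1$, contradicting the condition 1 just established; the analogous argument handles $\BringNeighborUp$ and $\BringNeighborDown$, whose displaced vertex must similarly avoid $V_{t^\star}$.

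The main obstacle is this last step. It requires a careful case distinction on each of the four modifications paired with the relative position of $t'$ and $t^\star$, and one must combine the freshly established condition 1 with the slim properties (especially Conditions~\ref{cond:slim:3} and~\ref{cond:slim:4} of \cref{def:slimtd}) to exclude the stray non-$S$ vertices that an outside operation could otherwise drag through $T_{t^\star}$.
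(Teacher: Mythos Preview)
Your procedure only enforces Condition~1 of \cref{def:topheavy}, and your argument that Conditions~2--5 then follow has a genuine gap. In the ancestor case you assert that for all four modifications ``step~1 only touches bags outside $T_{t'}$''. This is true for $\MoveIntoSubtree$ and $\BringNeighborDown$, but it is \emph{false} for $\RemoveFromSubtree(t',M)$ (which deletes $M$ from every bag \emph{inside} $T_{t'}$) and for $\BringNeighborUp(t',v,u)$ (which replaces $v$ by $u$ in every bag inside $T_{t'}\setminus\{t'\}$). So when $t'$ is an ancestor of $t^\star$, these two operations can alter bags in $T_{t^\star}$ unless one first proves $M\cap V_{t^\star}=\emptyset$ (respectively $v\notin V_{t^\star}$). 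Condition~1, which speaks only about $F$-components of $G-X_{t'}$ for $S$-bottom descendants $t'$ and their $S$-children, is not strong enough to rule this out: a vertex $v\in X_{t^\star}\setminus S$ with a single non-$S$ neighbour $u$ strictly below $t^\star$ survives your procedure untouched, yet $\BringNeighborUp$ applied at an ancestor would replace $v$ throughout $T_{t^\star}$. Your disjoint-case argument has the symmetric problem for $\MoveIntoSubtree$ and $\BringNeighborDown$, and again Condition~1 does not cover the case $v\in X_{t^\star}$.

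This is exactly why the paper's $\MakeTopHeavy$ (\cref{def:maketopheavy}) contains the extra Steps~\ref{heavy3}--\ref{heavy5}, and why the paper applies $\MakeTopHeavy(\hat t)$ bottom-up at \emph{every} parent $\hat t$ of an $S$-bottom node in $T_t$, not just at $S$-bottom nodes. Step~\ref{heavy3} evacuates $F$-components of $G-X_{\hat t}$ from $V_{\hat t}$; Step~\ref{heavy4} evacuates vertices of $X_{\hat t}\setminus S$ whose neighbours are all outside; Step~\ref{heavy5} pulls up the unique inside neighbour when there is one. The two claims in the paper's proof then use precisely the \emph{non-applicability} of these three steps at $\hat t$ to show $M\cap V_{\hat t}=\emptyset$ (Claim for $\MoveIntoSubtree$/$\RemoveFromSubtree$) and $v\notin V_{\hat t}$ (Claim for $\BringNeighborUp$/$\BringNeighborDown$). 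Without those extra evacuation steps the argument does not go through; you need to add them (or something equivalent) and then rerun the case analysis with the stronger invariants they provide.
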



To prove \cref{lem:topheavy}, we give the following procedure and prove that it can be used to transform a \slim $S$-nice tree decomposition $\mathcal{T}$ for a graph $G=(V,E)$ and some $S\subseteq V$ with width $k$, into a \slim \topheavy $S$-nice tree decomposition for $G$ with width $k$ that is a sibling of the original tree decomposition.


\begin{modification}[$\MakeTopHeavy$]\label{def:maketopheavy}
Let $\mathcal{T}=(T,\{X_t\}_{t\in V(T)})$ be a \slim $S$-nice tree decomposition of a graph $G=(V,E)$ and some $S\subseteq V$ with width $k$. Let $t$ be a parent of an $S$-bottom node in $\mathcal{T}$ such that 
$|X_t|\le k$. The modification $\MakeTopHeavy(t)$ applies the following changes to~$\mathcal{T}$.

Subdivide the edge between $t$ and its parent and let $t'$ be the new node.
Set $X_{t'}=X_t$. 
Repeat the first applicable step until none of the steps applies.
\begin{enumerate}
\item Apply $\MakeSlimTwo$.\label{heavy1}
\item For each $S$-bottom node $t'$ in $\mathcal{T}$ that is an $S$-child of $t$ and each $F$-component $C$ in $G-X_{t'}$, apply $\RemoveFromSubtree(t'',V(C))$ for each $S$-child $t''$ of~$t'$.\label{heavy2}
	\item If there is an $F$-component $C$ in $G-X_t$ such that $V(C)\cap V_t\neq\emptyset$, then apply $\RemoveFromSubtree(t,V(C))$ (\cref{def:remove}).
    Subdivide the edge between $t$ and $t'$ and let $t''$ be the new node. Set $X_{t''}$ = $X_t$ and rename $t''$ to $t'$.
    \label{heavy3}
	\item If there is $v\in X_t\setminus S$ with $N(v)\subseteq (V\setminus V_t)\cup X_{t'}$, then apply $\RemoveFromSubtree(t,\{v\})$ (\cref{def:remove}).
    Subdivide the edge between $t$ and $t'$ and let $t''$ be the new node. Set $X_{t''}$ = $X_t$ and rename $t''$ to $t'$.
    \label{heavy4}
	\item If there is $v\in X_{t'}\setminus S$ and $u\in V\setminus S$, such that $N(v)\cap (V_{t'}\setminus X_{t'})=\{u\}$, then apply $\BringNeighborUp({t'}, v, u)$ (\cref{def:bringup}). 
	Subdivide the edge between $t$ and $t'$ and let $t''$ be the new node. Set $X_{t''}$ = $X_t$ and rename $t''$ to $t'$. \label{heavy5}
\end{enumerate}
\end{modification}
We remark that the application of $\MakeSlimTwo$ in $\MakeTopHeavy(t)$ is only necessary to ensure that after applying $\MakeTopHeavy(t)$ to a \slim $S$-nice tree decomposition, the tree decomposition remains \slim. 
Note that since each application of the a step of $\MakeTopHeavy(t)$ removes a vertex from the subtree rooted at $t$, 
we have that the modification always terminates after a finite number of steps.
Now we are ready to prove \cref{lem:topheavy}.
\begin{proof}[Proof of \cref{lem:topheavy}]
Let $t$ be a node in $\mathcal{T}$ with $|X_t|\le k$ that is a parent of an $S$-bottom node.
We apply $\MakeTopHeavy(\hat{t})$ from the leaves up to $t$ to every node $\hat{t}$ in $T_t$ such that $|X_{\hat{t}}|\le k$ and $\hat{t}$ a parent of an $S$-bottom node. 
Note that the edge subdivisions in $\MakeTopHeavy$ guarantee that the bag of the parent of $\hat{t}$ is never full.
To begin, note that from \cref{lem:moveremove,lem:bringupdown} it follows that after any $\MakeTopHeavy$ modification, the tree decomposition remains $S$-nice and a sibling of the original tree decomposition. Since we apply $\MakeSlimTwo$, we have by \cref{lem:preserveslim} that the tree decomposition also remains \slim. 
Since Step~\ref{heavy2} is exhaustively applied, we have that the first condition of \cref{def:topheavy} holds.

Now we prove the following claim.
\begin{claim}
If in an application of $\MakeTopHeavy(\hat{t})$ 
we have that Steps~\ref{heavy3} and~\ref{heavy4} are not applicable, then the following holds.
\begin{itemize}
\item Let $t'$ a node in $\mathcal{T}$ that is not contained in $T_{\hat{t}}$ and let $M\subseteq V\setminus S$ such that $N(M)\subseteq V_{t'}$. Then applying $\MoveIntoSubtree(t', M)$ does not change $T_{\hat{t}}$ or any bag of a node in $T_{\hat{t}}$.
\item Let $t'$ a node in $\mathcal{T}$ that is not contained in $T_{\hat{t}}$ and let $M\subseteq V\setminus S$ such that $N(M)\subseteq (V\setminus V_{t'})\cup X_{t'_p}$, where $t'_p$ is the parent of $t'$. Then applying $\RemoveFromSubtree(t', M)$ does not change $T_{\hat{t}}$ or any bag of a node in $T_{\hat{t}}$.
\end{itemize}
\end{claim}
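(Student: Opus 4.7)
The plan is to reduce both halves of the claim to the single pivotal statement that $M \cap V_{\hat{t}} = \emptyset$ in the subcase that matters. Once this disjointness is secured, the conclusion follows by routine bookkeeping: Step~1 of $\MoveIntoSubtree(t', M)$ (respectively, $\RemoveFromSubtree(t', M)$) only deletes vertices of $M$ from existing bags, so by disjointness no bag of a node of $T_{\hat{t}}$ is altered; Step~2 attaches a freshly constructed subtree as a child of $t'$ or of its parent $t'_p$, both of which lie outside $V(T_{\hat{t}})$ (the assumption $t' \notin V(T_{\hat{t}})$ forces $t'_p \notin V(T_{\hat{t}})$); and the closing $\Normalize$ and $\MergeFullNodes$ act only on edges with an altered endpoint, so they cannot reach into $T_{\hat{t}}$. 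Because $|X_{\hat{t}}| \le k$ is not full, $\MergeFullNodes$ cannot collapse $\hat{t}$ with its parent, and because $M \cap X_{\hat{t}} = \emptyset$ the inclusion relation between $X_{\hat{t}}$ and its parent's bag is preserved, so any $\Normalize$ subdivisions happen only on the boundary edge just above $\hat{t}$, leaving $T_{\hat{t}}$ as a rooted subtree unchanged and its internal bags untouched.

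Deriving disjointness requires splitting according to whether $T_{t'}$ is disjoint from $T_{\hat{t}}$ or $t'$ is a strict ancestor of $\hat{t}$. For $\MoveIntoSubtree$ the disjoint subcase is the interesting one; in the ancestor subcase bags of $T_{\hat{t}}$ automatically lie inside $T_{t'}$, so Step~1 does not touch them and no disjointness is needed. Symmetrically, for $\RemoveFromSubtree$ the ancestor subcase is the one requiring disjointness; in the disjoint subcase bags of $T_{\hat{t}}$ sit outside $T_{t'}$, so Step~1 again does not touch them. In the two subcases that matter, I would first invoke the standard separator property: since $\mathcal{T}$ is a tree decomposition, $X_{\hat{t}}$ separates $L_{\hat{t}}$ from $R_{\hat{t}}$ in $G$, so every connected component of $G - X_{\hat{t}}$ lies entirely in $L_{\hat{t}}$ or entirely in $R_{\hat{t}}$, and every vertex of $L_{\hat{t}}$ appears only in bags of $T_{\hat{t}}$.

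I then suppose for contradiction that some $v \in M \cap V_{\hat{t}}$ exists and analyse two cases. If $v \in X_{\hat{t}}$, then $v \notin S$, and because Step~\ref{heavy4} is inapplicable (using the invariant $X_{\hat{t}'} = X_{\hat{t}}$ maintained by $\MakeTopHeavy$) $v$ has a neighbour $u \in L_{\hat{t}}$. If $u \in M$ I pass to the second case with $u$ in place of $v$; otherwise $u \in N(M)$, and the precondition of the invoked modification places $u$ in a bag external to $T_{\hat{t}}$ (for $\MoveIntoSubtree$ in the disjoint subcase, $u \in V_{t'}$ is impossible since $u$ lives only in bags of $T_{\hat{t}}$ and $V(T_{t'}) \cap V(T_{\hat{t}}) = \emptyset$; for $\RemoveFromSubtree$ in the ancestor subcase, $u \in V \setminus V_{t'}$ fails because $V_{\hat{t}} \subseteq V_{t'}$, and $u \in X_{t'_p}$ fails because $t'_p \notin V(T_{\hat{t}})$), contradicting the separator property. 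If $v \in L_{\hat{t}}$, let $C$ be the component of $v$ in $G - X_{\hat{t}}$; by the separator property $V(C) \subseteq L_{\hat{t}}$, and because Step~\ref{heavy3} is inapplicable $C$ cannot be an $F$-component, so there exists some $w \in V(C) \cap S$ with $w \notin M$. Walking from $v$ to $w$ along a path in $C$, the first vertex $x$ outside $M$ lies in $N(M) \cap L_{\hat{t}}$, yielding the same contradiction.

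The main obstacle will be correctly organising the four combinations of modification type and geometric relation between $T_{t'}$ and $T_{\hat{t}}$, and recognising that only two of them require the disjointness argument while the other two reduce to the observation that Step~1's removal region avoids $T_{\hat{t}}$ for purely structural reasons. A secondary subtlety is the recursion from the case $v \in X_{\hat{t}}$ into the case $v \in L_{\hat{t}}$ when the neighbour $u$ supplied by Step~\ref{heavy4}'s inapplicability happens to lie in $M$ itself; this terminates immediately because the second case closes without further appeal to $u$.
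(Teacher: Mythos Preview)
Your proposal is correct and follows essentially the same route as the paper's proof: both split on whether $t'$ is an ancestor of $\hat t$, reduce the non-trivial subcases to showing $M\cap V_{\hat t}=\emptyset$, and then argue by contradiction via the same two cases ($v\in X_{\hat t}$ using inapplicability of Step~\ref{heavy4}, and $v\in L_{\hat t}$ using inapplicability of Step~\ref{heavy3}). The only cosmetic difference is that in the $v\in L_{\hat t}$ case the paper shows $V(C)\subseteq M$ and then observes $C$ is an $F$-component, whereas you start from the fact that $C$ must contain a vertex of $S$ and walk to the first vertex outside $M$; these are logically equivalent formulations of the same argument.
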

\begin{claimproof}
To show the first statement of the claim, assume that there is a node $t'$ a node in $\mathcal{T}$ that is not contained in $T_{\hat{t}}$ and let $M\subseteq V\setminus S$ such that $N(M)\subseteq V_{t'}$.
Assume that there is a node $t'$ that is an ancestor of $\hat{t}$.
According to its definition (\cref{def:move}), the modification $\MoveIntoSubtree(t', M)$ removes vertices only from nodes that are not in $T_{t'}$ and hence also not in $T_{\hat{t}}$. Furthermore, it adds a new child to $t'$ including a new subtree rooted at that child. It follows that $T_{\hat{t}}$ is not changed. Lastly, since so far, no bag of a node in $T_{\hat{t}}$ has been changed, the modifications $\Normalize$ and $\MergeFullNodes$ (\cref{def:normalize,def:mergefullnodes}) do not change $T_{\hat{t}}$ or any bag of a node in $T_{\hat{t}}$. We can conclude that applying $\MoveIntoSubtree(t', M)$ does not change any bag of a node in $T_{\hat{t}}$.

Assume that there is a node $t'$ that is not an ancestor of $\hat{t}$.
Again, adding a new child to $t'$ including a new subtree rooted at that child does not change $T_{\hat{t}}$. It remains to show that when all vertices from $M$ are removed from all bags of nodes that are not in $T_{t'}$ no bags of nodes in $T_{\hat{t}}$ are changed. Assume for contradiction that $V_{\hat{t}} \cap M\neq \emptyset$. 

Consider the case where $v\in (V_{\hat{t}} \cap M)\setminus X_{\hat{t}}$. Let $C$ denote the connected component in $G-X_{\hat{t}}$ that contains $v$. Since $X_{\hat{t}}$ separates $V_{\hat{t}}\setminus X_{\hat{t}}$ from $V\setminus V_{\hat{t}}$, we have that $V(C)\subseteq V_{\hat{t}}\setminus X_{\hat{t}}$.
We claim that $V(C)\subseteq M$. Assume for contradiction that $u\in V(C)\setminus M$. Let $P$ be the path from $v$ to $u$ in $C$. Let $\{v',u'\}$ be an edge in $P$ such that $v'\in M$ and $u'\notin M$. Note that since $v\in M$ and $u\notin M$, such an edge must exist. It follows that $u'\in N(M)$ and hence $u'\in (V\setminus V_{t'})\cup X_{t_p'}$ which implies that $u'\in V\setminus V_{\hat{t}}$. This is a contradiction to $V(C)\subseteq V_{\hat{t}}\setminus X_{\hat{t}}$. We can conclude that $V(C)\subseteq M$. Since $M\cap S=\emptyset$, we have that $C$ is an $F$-component in $G-X_t$ with $V(C)\cap V_{\hat{t}}\neq\emptyset$. This is a contradiction to the assumption that Step~\ref{heavy3} of $\MakeTopHeavy({\hat{t}})$ is not applicable. 

Now consider the case where $v\in V_{\hat{t}} \cap M$ and $V_{\hat{t}} \cap M\subseteq X_{\hat{t}}$. Note that since $M\cap S=\emptyset$, we have that $v\in X_{\hat{t}}\setminus S$. Since Step~\ref{heavy4} of $\MakeTopHeavy({\hat{t}})$ is not applicable, we have that $u\in N(v)\setminus (V\setminus V_{\hat{t}})\cup X_{\hat{t}}$. It follows that $u\in V_{\hat{t}}\setminus X_{\hat{t}}$. Furthermore, it implies that $u\notin (V\setminus V_{t'})\cup X_{t'_p}$, and hence, $u\notin N(M)$. Since $u$ and $v$ are neighbors, we must have that $u\in M$. This is a contradiction to $V_{\hat{t}} \cap M\subseteq X_{\hat{t}}$.

We can conclude that $T_{\hat{t}}$ is not changed. Lastly, since so far, no bag of a node in~$T_{\hat{t}}$ has been changed, the modifications $\Normalize$ and $\MergeFullNodes$ (\cref{def:normalize,def:mergefullnodes}) do not change $T_{\hat{t}}$ or any bag of a node in~$T_{\hat{t}}$. We can conclude that applying $\MoveIntoSubtree(t', M)$ does not change any bag of a node in $T_{\hat{t}}$.

To show the second statement of the claim, assume that there is a node $t'$ a node in $\mathcal{T}$ that is not contained in $T_{\hat{t}}$ and let $M\subseteq V\setminus S$ such that $N(M)\subseteq (V\setminus V_{t'})\cup X_{t_p'}$, where $t_p'$ is the parent of $t'$. 
Assume that there is a node $t'$ that is an ancestor of $\hat{t}$.
Consider the modification $\RemoveFromSubtree(t', M)$ (\cref{def:remove}). First, note that when the modification adds a new child to $t_p'$ including a new subtree rooted at that child, it does not change $T_{\hat{t}}$ or any bag of a node in $T_{\hat{t}}$. It remains to show that when all vertices from $M$ are removed from all bags of nodes in $T_{t'}$ no bags of nodes in $T_{\hat{t}}$ are changed. Assume for contradiction that $V_{\hat{t}} \cap M\neq \emptyset$. The argument is exactly the same as previous case where  $t'$ that is not an ancestor of $\hat{t}$.

Assume that there is a node $t'$ that is not an ancestor of $\hat{t}$.
This case is analogous to the previous case where $t'$ that is an ancestor of $\hat{t}$.

We can conclude that $T_{\hat{t}}$ is not changed. Lastly, since so far, no bag of a node in~$T_{\hat{t}}$ has been changed, the modifications $\Normalize$ and $\MergeFullNodes$ (\cref{def:normalize,def:mergefullnodes}) do not change $T_{\hat{t}}$ or any bag of a node in~$T_{\hat{t}}$. We can conclude that applying $\RemoveFromSubtree(t', M)$ does not change any bag of a node in $T_{\hat{t}}$.
\end{claimproof}

Next, we prove the following claim.
\begin{claim}
If in an application of $\MakeTopHeavy({\hat{t}})$ 
we have that Steps~\ref{heavy3},~\ref{heavy4}, and~\ref{heavy5} are not applicable, then the following holds.
\begin{itemize}
\item Let $t'$ a node in $\mathcal{T}$ that is not contained in $T_{\hat{t}}$ such that $|X_{t'}|\le k$, and let $v\in X_{t'}\setminus S$ such that $N(v)\cap (V_{t'}\setminus X_{t'})=\{u\}$ for some $u\in V\setminus S$. Then applying $\BringNeighborUp(t', v, u)$ does not change $T_{\hat{t}}$ or any bag of a node in $T_{\hat{t}}$.
\item Let $t'$ a node in $\mathcal{T}$ that is not contained in $T_{\hat{t}}$ such that $|X_{t'}|\le k$, and let $v\in X_{t'}\setminus S$ such that $N(v)\cap (V\setminus V_{t'})=\{u\}$ for some $u\in V\setminus S$. Then applying $\BringNeighborDown(t', v, u)$ does not change $T_{\hat{t}}$ or any bag of a node in $T_{\hat{t}}$.
\end{itemize}
\end{claim}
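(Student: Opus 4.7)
The plan is to follow the structure of the previous claim's proof: for each of the two statements, I would split into two cases according to whether $t'$ is an ancestor of $\hat{t}$ in $\mathcal{T}$ or not. In exactly one case per statement the modification operates on a subtree that is disjoint from $T_{\hat{t}}$, so nothing is to be shown, and in the other case I would derive a contradiction from the non-applicability of Steps~\ref{heavy4} and~\ref{heavy5} under the hypothesis that some bag of a node in $T_{\hat{t}}$ is changed. The non-applicability of Step~\ref{heavy3} is inherited from the hypothesis but is not needed in the core argument.

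For the first statement, consider $\BringNeighborUp(t', v, u)$. If $t'$ is not an ancestor of $\hat{t}$ in $\mathcal{T}$, then $T_{t'}$ and $T_{\hat{t}}$ are disjoint, so the two bag-altering steps of $\BringNeighborUp$ affect only bags in $T_{t'}$ and thus leave $T_{\hat{t}}$ untouched. In the harder case, where $t'$ is a proper ancestor of $\hat{t}$, suppose for contradiction that some bag of $T_{\hat{t}}$ is modified. Since only the replacement of $v$ by $u$ can change a bag of a descendant of $t'$, this forces $v \in V_{\hat{t}}$. Combining $v \in X_{t'}$ with the subtree property of tree decompositions along the path from $t'$ through $\hat{t}$ down to a node of $T_{\hat{t}}$ containing $v$, I obtain $v \in X_{\hat{t}}$, and hence $v \in X_{\hat{t}} \setminus S$. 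Non-applicability of Step~\ref{heavy4} of $\MakeTopHeavy(\hat{t})$ then provides a neighbor $w$ of $v$ in $V_{\hat{t}} \setminus X_{\hat{t}}$. The key observation is that $V_{\hat{t}} \setminus X_{\hat{t}} \subseteq V_{t'} \setminus X_{t'}$: any vertex occurring only in bags strictly below $\hat{t}$ cannot appear in $X_{t'}$, again by the subtree property. Hence $w \in N(v) \cap (V_{t'} \setminus X_{t'}) = \{u\}$, so $u = w \in V_{\hat{t}} \setminus X_{\hat{t}}$ with $u \in V \setminus S$. This yields $N(v) \cap (V_{\hat{t}} \setminus X_{\hat{t}}) = \{u\}$, which exactly matches the trigger condition of Step~\ref{heavy5} (using that the parent of $\hat{t}$ produced by the initial subdivision in $\MakeTopHeavy$ has bag equal to $X_{\hat{t}}$, so its $V$-value equals $V_{\hat{t}}$), contradicting the assumption that Step~\ref{heavy5} is not applicable.

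The second statement, for $\BringNeighborDown(t', v, u)$, is handled symmetrically. If $t'$ is an ancestor of $\hat{t}$, then $T_{\hat{t}} \subseteq T_{t'}$ and the modification alters only bags outside $T_{t'}$, so bags in $T_{\hat{t}}$ are untouched. If $t'$ is not an ancestor of $\hat{t}$, suppose for contradiction that some bag of $T_{\hat{t}}$ changes, giving $v \in V_{\hat{t}}$. Let $r$ be the least common ancestor of $t'$ and a node $s \in T_{\hat{t}}$ with $v \in X_s$; then $v \in X_r$ by the subtree property, and since the path from $r$ to $s$ passes through $\hat{t}$, we conclude $v \in X_{\hat{t}}$. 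Non-applicability of Step~\ref{heavy4} again yields some neighbor $w \in V_{\hat{t}} \setminus X_{\hat{t}}$ of $v$; and because $T_{t'} \cap T_{\hat{t}} = \emptyset$, the subtree property implies $V_{\hat{t}} \setminus X_{\hat{t}} \subseteq V \setminus V_{t'}$. Hence $w \in N(v) \cap (V \setminus V_{t'}) = \{u\}$, so $u \in V_{\hat{t}} \setminus X_{\hat{t}}$ and, as before, $N(v) \cap (V_{\hat{t}} \setminus X_{\hat{t}}) = \{u\}$ with $u \in V \setminus S$, triggering Step~\ref{heavy5}: contradiction.

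I close the argument by observing that once no existing bag of $T_{\hat{t}}$ is changed, the $\Normalize$ call inside $\BringNeighborUp$ or $\BringNeighborDown$ can only introduce new nodes by subdividing edges where a bag-size or symmetric-difference discrepancy is created; no such discrepancy is produced along the boundary of $T_{\hat{t}}$ or inside $T_{\hat{t}}$, so $T_{\hat{t}}$ and all of its bags remain intact. The main technical obstacle is the correct use of the subtree property to establish the inclusions $V_{\hat{t}} \setminus X_{\hat{t}} \subseteq V_{t'} \setminus X_{t'}$ (ancestor case) and $V_{\hat{t}} \setminus X_{\hat{t}} \subseteq V \setminus V_{t'}$ (non-ancestor case), since these are what convert the ``unique neighbor'' hypothesis on $(v,u)$ into the trigger condition of Step~\ref{heavy5} and produce the contradiction.
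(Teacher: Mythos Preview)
Your proposal is correct and follows essentially the same approach as the paper: case-split on whether $t'$ is an ancestor of $\hat{t}$, dispose of the trivial case by noting the modification acts on a disjoint part of the tree, and in the hard case use the subtree property to place $v$ in $X_{\hat{t}}$, then combine non-applicability of Step~\ref{heavy4} with the inclusion $V_{\hat{t}}\setminus X_{\hat{t}}\subseteq V_{t'}\setminus X_{t'}$ (respectively $\subseteq V\setminus V_{t'}$) to force $N(v)\cap(V_{\hat{t}}\setminus X_{\hat{t}})=\{u\}$ and trigger Step~\ref{heavy5}. The paper phrases the key inclusion contrapositively as $N(v)\setminus\{u\}\subseteq (V\setminus V_{\hat{t}})\cup X_{\hat{t}}$ and dismisses $\BringNeighborDown$ with ``by symmetrical arguments,'' whereas you spell out both cases and the final $\Normalize$ observation more carefully; substantively the arguments coincide.
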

\begin{claimproof}
Assume that there is a node $t'$ a node in $\mathcal{T}$ that is not contained in $T_{\hat{t}}$ such that $|X_{t'}|\le k$, and let $v\in X_{t'}\setminus S$ such that $N(v)\cap (V_{t'}\setminus X_{t'})=\{u\}$ for some $u\in V\setminus S$.
Assume that $t'$ is an ancestor of ${\hat{t}}$.
Now assume for contradiction that there is a node $\check{t}$ in $T_{\hat{t}}$ such that the bag $X_{\check{t}}$ is changed when $\BringNeighborUp(t', v, u)$ (\cref{def:bringup}) is applied.
Note that the only node to whose bag a vertex is added is~$t'$, which is not contained in $T_{\hat{t}}$. Hence, we must have that $X_{\check{t}}$ contained $v$ before the application of $\BringNeighborUp(t', v, u)$, and now contains $u$ but does not contain $v$. Due to Condition~\ref{condition_2_tree_decomposition} of \cref{def:tree_decomposition} we have that $v$ was  contained in $X_{\hat{t}}$ before the application of $\BringNeighborUp(t', v, u)$. 
Furthermore, we have that $N(v)\cap (V_{t'}\setminus X_{t'})=\{u\}$ and hence $N(v)\setminus \{u\}\subseteq ((V\setminus V_{\hat{t}})\cup X_{\hat{t}})$. Since Step~\ref{heavy4} of $\MakeTopHeavy({\hat{t}})$ is not applicable, we have that $N(v)\setminus ((V\setminus V_{\hat{t}})\cup X_{\hat{t}})\neq \emptyset$. It follows that $N(v)\setminus ((V\setminus V_{\hat{t}})\cup X_{\hat{t}})= \{u\}$ which means that $N(v)\cap (V_{\hat{t}}\setminus X_{\hat{t}})=\{u\}$. This is a contradiction to the assumption that Step~\ref{heavy5} of $\MakeTopHeavy({\hat{t}})$ is not applicable. 
We can conclude that $T_{\hat{t}}$ is not changed. Lastly, since so far, no bag of a node in~$T_{\hat{t}}$ has been changed, the modification $\Normalize$ (\cref{def:normalize}) does not change $T_{\hat{t}}$ or any bag of a node in~$T_{\hat{t}}$. We can conclude that applying $\BringNeighborUp(t', v, u)$ does not change any bag of a node in $T_{\hat{t}}$. 
Note that $\BringNeighborUp(t', v, u)$ only changes bags of nodes in $T_{t'}$. Hence, if $t'$ is not an ancestor of ${\hat{t}}$, then no bags of nodes in $T_{\hat{t}}$ are changed.

The modification $\BringNeighborDown(t', v, u)$ behaves symmetrically to $\BringNeighborUp(t', v, u)$. Hence, by symmentrical arguments we arrive at the conclusion that applying $\BringNeighborDown(t', v, u)$ does not change any bag of a node in $T_{\hat{t}}$.
\end{claimproof}
The lemma statement now follows from a straightforward inductive argument using the two claims.
\end{proof}

The proof of \cref{lem:topheavy} also implies the following.
\begin{lemma}\label{lem:topheavynomake}
Let $\mathcal{T}=(T,\{X_t\}_{t\in V(T)})$ be a \slim $S$-nice tree decomposition of a graph $G=(V,E)$ with width $k$ and some $S\subseteq V$ that is \topheavy for some node $t$ in $\mathcal{T}$ with $|X_t|\le k$ that is parent of an $S$-bottom node. Let $t'$ be a node in $T_t$ with $|X_{t'}|\le k$ that is parent of an $S$-bottom node.
Then, when applying $\MakeTopHeavy(t')$ to $\mathcal{T}$, none of the steps of $\MakeTopHeavy(t')$ applies.
\end{lemma}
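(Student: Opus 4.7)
The plan is to verify that each of the five steps in the repeat loop of $\MakeTopHeavy(t')$ produces no change when applied to the top-heavy $\mathcal{T}$, matching each step with the corresponding clause of top-heavy or slim that $\mathcal{T}$ already enjoys.

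First I would observe that the initial edge subdivision inserting $t'_{\text{new}}$ above $t'$ with $X_{t'_{\text{new}}}=X_{t'}$ preserves $S$-niceness and slimness: $t'_{\text{new}}$ is not an $S$-bottom node, has a non-full bag identical to its child's, and cannot belong to any full join tree, so Conditions~1--6 of \cref{def:slimtd} remain satisfied. Consequently Step~1 ($\MakeSlimTwo$) performs no subdivision because each of its conditionals encodes exactly a condition of slimness that already holds. Step~2 is ruled out by Condition~1 of top-heavy applied to each $S$-bottom $S$-child $s$ of $t'$, which is a descendant of $t$: for every $F$-component $C$ in $G-X_s$ and every $S$-child $s'$ of $s$, we have $V(C)\cap V_{s'}=\emptyset$, so in $\RemoveFromSubtree(s',V(C))$ no vertex is removed from any bag of $T_{s'}$, the set $M'$ appearing in \cref{def:remove} is empty, and the concluding $\Normalize$ and $\MergeFullNodes$ produce no change.

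For Steps~3, 4, and 5 the plan is to use the two internal claims of the proof of \cref{lem:topheavy}, read in contrapositive form, together with Conditions~2--5 of \cref{def:topheavy}. Those claims established that whenever Steps~3--5 of $\MakeTopHeavy(\hat{t})$ are inapplicable, the modifications $\MoveIntoSubtree$, $\RemoveFromSubtree$, $\BringNeighborUp$, $\BringNeighborDown$ applied to nodes outside $T_{\hat{t}}$ cannot alter $T_{\hat{t}}$ or any of its bags---which is precisely the content of Conditions~2--5 of top-heavy. I would argue the reverse implication, one step at a time. Suppose Step~3 of $\MakeTopHeavy(t')$ triggered, witnessed by an $F$-component $C$ in $G-X_{t'}$ with $V(C)\cap V_{t'}\neq\emptyset$. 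Mimicking the internal arguments in the proof of Claim~1 of \cref{lem:topheavy}, I would exhibit an outside node together with $M=V(C)$ whose $\MoveIntoSubtree$ or $\RemoveFromSubtree$ application verifiably changes a bag inside $T_{t'}$, contradicting Condition~2 or~3 of top-heavy. Steps~4 and~5 are handled analogously: the would-be witness of Step~4 is a vertex $v\in X_{t'}\setminus S$ with $N(v)\subseteq (V\setminus V_{t'})\cup X_{t'}$, and the would-be witness of Step~5 is a vertex $v\in X_{t'}\setminus S$ with a single non-$S$ neighbor in $V_{t'}\setminus X_{t'}$; each yields the analogous contradiction via $\RemoveFromSubtree$ or $\BringNeighborUp$/$\BringNeighborDown$ from a carefully chosen outside node.

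The main obstacle will be setting up the outside witness cleanly in the Step~3 contrapositive. One must identify a specific node outside $T_{t'}$ for which the prerequisites of the outside modification are satisfied and whose execution is guaranteed to alter a bag inside $T_{t'}$. The relevant structural facts are $V(C)\subseteq V_c\setminus X_{t'}$ for a unique child $c$ of $t'$ (proved by the standard separation property of tree decompositions) and $N(V(C))\subseteq X_{t'}$, combined with the $S$-niceness constraint that any two adjacent bags in $\mathcal{T}$ nest and differ in size by at most one. Once this is set up for Step~3, the Step~4 and Step~5 variants follow the same template---using the Step~4 and Step~5 paragraphs of the proof of \cref{lem:topheavy} in reverse---and introduce no further conceptual difficulty.
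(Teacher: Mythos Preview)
The paper gives no independent proof of this lemma; it only records that the statement ``also'' follows from the proof of \cref{lem:topheavy}. So there is no detailed paper argument to compare against, and what you are attempting is to flesh out that one-line assertion.

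Your handling of Step~1 and Step~2 is essentially correct: slimness forces every conditional inside $\MakeSlimTwo$ to fail, and Condition~1 of \cref{def:topheavy} (applied to the $S$-bottom $S$-child $s$ of $t'$, which is a descendant of $t$) guarantees $V(C)\cap V_{s'}=\emptyset$ for every $S$-child $s'$, so each $\RemoveFromSubtree(s',V(C))$ removes nothing and the trailing $\Normalize$/$\MergeFullNodes$ are vacuous.

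The gap is in Steps~3--5. First a terminological point: what you need is the \emph{converse} of the two Claims in the proof of \cref{lem:topheavy}, not their contrapositive. Those Claims show ``Steps~3--5 inapplicable $\Rightarrow$ Conditions~2--5 of top-heavy hold for $\hat t$''; you need ``Conditions~2--5 hold $\Rightarrow$ Steps~3--5 inapplicable.'' The contrapositive of the Claims is the logically equivalent statement ``some Condition~2--5 fails $\Rightarrow$ some Step applies,'' which is of no use here.

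More substantively, your plan for the converse is to take a Step~3 witness $C$ (an $F$-component in $G-X_{t'}$ with $V(C)\cap V_{t'}\neq\emptyset$) and exhibit an outside node $\hat t\notin T_{t'}$ together with $M=V(C)$ so that the modification at $\hat t$ alters a bag in $T_{t'}$, contradicting top-heaviness. You correctly flag this as the main obstacle, but you do not resolve it, and the natural candidates fail. Taking $\hat t=p$, the parent of $t'$: $\MoveIntoSubtree(p,V(C))$ only removes from bags outside $T_p\supseteq T_{t'}$, so $T_{t'}$ is untouched; $\RemoveFromSubtree(p,V(C))$ needs $N(V(C))\subseteq (V\setminus V_p)\cup X_{p'}$, hence $N(V(C))\cap X_{t'}\subseteq X_{p'}$, which need not hold. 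Taking $\hat t$ a sibling of $t'$: $\MoveIntoSubtree(\hat t,V(C))$ needs $N(V(C))\subseteq V_{\hat t}$, but any $w\in N(V(C))\cap(X_{t'}\setminus X_p)$ lies only in bags of $T_{t'}$ and so is not in $V_{\hat t}$. The structural facts you list ($V(C)\subseteq V_c\setminus X_{t'}$ for a unique child $c$, $N(V(C))\subseteq X_{t'}$, the nesting of adjacent bags) are correct but do not by themselves produce a working witness $\hat t$. The same issue recurs verbatim for Steps~4 and~5.

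So the converse direction requires an argument that goes genuinely beyond ``mimicking'' the Claims, and your sketch does not supply it. The paper's own treatment is no more explicit, so the gap you have is exactly the gap the paper leaves implicit.
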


\begin{lemma}\label{lem:topheavynomake2}
Let $\mathcal{T}=(T,\{X_t\}_{t\in V(T)})$ be a \slim $S$-nice tree decomposition of a graph $G=(V,E)$ with width $k$ and some $S\subseteq V$. Let $t$ be some node in $\mathcal{T}$ with $|X_t|\le k$ that is parent of an $S$-bottom node. 
If $\mathcal{T}$ is \topheavy for each $t'$ that is a descendant of $t$ with $|X_{t'}|\le k$ and that is parent of an $S$-bottom node, and none of the steps of $\MakeTopHeavy(t)$ applies, then $\mathcal{T}$ is \topheavy for $t$.
\end{lemma}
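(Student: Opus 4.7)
The plan is to verify each of the five conditions of \cref{def:topheavy} for the node $t$ in turn, combining the two internal claims from the proof of \cref{lem:topheavy} with the inductive hypothesis that $\mathcal{T}$ is already \topheavy for every descendant $t'$ of $t$ that is a parent of an $S$-bottom node with $|X_{t'}|\le k$.

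For condition~1, I would take any $S$-bottom descendant $t'$ of $t$. If $t'$ is an $S$-child of $t$, the non-applicability of step~\ref{heavy2} of $\MakeTopHeavy(t)$ immediately forces $V(C)\cap V_{t''}=\emptyset$ for every $S$-child $t''$ of $t'$ and every $F$-component $C$ of $G-X_{t'}$, because otherwise $\RemoveFromSubtree(t'',V(C))$ would be applicable with nontrivial effect. If $t'$ is deeper, I would walk up from $t'$ toward $t$ to locate the nearest strict ancestor $\tilde t$ that is still a proper descendant of $t$, is a parent of an $S$-bottom node, and satisfies $|X_{\tilde t}|\le k$; the hypothesis that $\mathcal{T}$ is \topheavy for $\tilde t$ then directly supplies condition~1 for $t'$.

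For conditions 2--5 in the case $t^\star=t$, I would instantiate the two claims established in the proof of \cref{lem:topheavy} with $\hat t=t$. The first claim asserts that when steps~\ref{heavy3} and~\ref{heavy4} of $\MakeTopHeavy(\hat t)$ are not applicable, applying $\MoveIntoSubtree(t',M)$ or $\RemoveFromSubtree(t',M)$ at any node $t'\notin T_{\hat t}$ leaves $T_{\hat t}$ and all of its bags unchanged; the second claim gives the analogous conclusion for $\BringNeighborUp(t',v,u)$ and $\BringNeighborDown(t',v,u)$ once step~\ref{heavy5} is also not applicable. Since by hypothesis no step of $\MakeTopHeavy(t)$ is applicable, both claims apply and yield conditions 2--5 for $t^\star=t$. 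For a proper descendant $t^\star\in T_t$ that is a parent of an $S$-bottom node with $|X_{t^\star}|\le k$, conditions 2--5 follow directly from the inductive hypothesis \topheavy of $t^\star$, taking $t^\star$ in its own role (using $t^\star\in T_{t^\star}$).

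The main obstacle is ensuring that the ancestor walk in condition~1 actually terminates at a small-bag $\tilde t$ covered by the hypothesis, rather than being blocked by a chain of full-bag intermediate nodes. I would resolve this by an analysis of the slim structure: \cref{cond:slim:3} forces every non-full $S$-bottom to share its bag with its parent, so any full-bag segment on the ancestor chain of $t'$ already has bag size $k+1$ throughout; \cref{cond:slim:2} combined with \cref{cond:slim:5} confines such a full-bag $S$-bottom with two $S$-children inside a full join tree; and because $|X_t|\le k$ while $|X_{t'}|=k+1$ whenever $t'$ is a full $S$-bottom, bag sizes must, along the path from $t$ down to $t'$, drop from $k+1$ to $\le k$ at some strict ancestor of $t'$ below $t$, and that ancestor (or a close neighbour in the chain, by the controlled structure near an $S$-bottom enforced by \cref{cond:slim:4}) provides the required $\tilde t$.
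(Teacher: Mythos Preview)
Your overall strategy—invoking the two claims from the proof of \cref{lem:topheavy} for $\hat t=t$ and using the inductive hypothesis for proper descendants—is exactly what the paper intends when it says the lemma follows from that proof. The paper gives no further detail, so you are filling in what the paper leaves implicit.

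There is, however, a genuine gap in your resolution of the ``ancestor walk'' obstacle. Your argument asserts that along the path from $t$ down to $t'$, the bag size must drop from $k+1$ to $\le k$ at some strict ancestor of $t'$ below $t$. This fails precisely when $t$ is the parent of the root $t_r$ of a full join tree and $t'$ is a non-root node of that tree: then every strict ancestor of $t'$ that is a proper descendant of $t$ lies in the full join tree and has bag size $k+1$, so no such $\tilde t$ exists. The same configuration breaks your handling of conditions~2--5, since you only treat proper descendants $t^\star$ with $|X_{t^\star}|\le k$, leaving full-bag $t^\star$ inside such a full join tree unaddressed.

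The fix exploits that all nodes of the full join tree share the same bag $X$. For condition~1: if $t'$ is a non-root node of the full join tree, then $G-X_{t'}=G-X_{t_r}$, and every $S$-child $t''$ of $t'$ satisfies $T_{t''}\subseteq T_{\hat t}$ for some $S$-child $\hat t$ of $t_r$; hence $V(C)\cap V_{t''}\subseteq V(C)\cap V_{\hat t}=\emptyset$ follows from condition~1 for $t_r$, which \emph{is} an $S$-child of $t$ and is handled by the non-applicability of step~\ref{heavy2}. For conditions~2--5 with full-bag $t^\star$ in the full join tree, reduce to the parent $t_p$ of $t_r$: either $t_p$ is a proper descendant of $t$ (use the inductive hypothesis for $t_p$, which covers $t^\star\in T_{t_p}$), or $t_p=t$ and one combines the claims at $\hat t=t$ (for $t'\notin T_t$) with the observation that modifications at $t'\in T_t\setminus T_{t^\star}$ can be controlled via the $S$-children of $t_r$ and their non-full descendants, again using that $T_{t^\star}$ sits inside $T_{\hat t}$ for some $S$-child $\hat t$ of $t_r$.
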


\section{The Dynamic Programming Algorithm}\label{sec:mainalgo}
In this section, we present the main dynamic programming table.
Assume from now on that we are given a connected graph $G=(V,E)$ and a non-negative integer~$k$ that form an instance of \textsc{Treewidth}.
Let the vertices in $V$ be ordered in an arbitrary but fixed way, that is, $V=\{v_1,v_2,\ldots,v_{n}\}$.
Furthermore, we denote with $S\subseteq V$ a minimum feedback vertex set of $G$. We assume that $k\le |S|$, since we can trivially obtain a tree decomposition with width $|S|+1$ by taking any tree decomposition with width one for $G-S$ and adding $S$ to every bag. Hence, if $k>|S|$, then $(G,k)$ is a yes-instance of \textsc{Treewidth}. For the rest of the document, we treat $G=(V,E)$, $S$, and $k$ as global variables that all algorithms have access to.

Roughly speaking, the algorithm we propose to prove \cref{thm:main} is a dynamic program on the potential directed paths of the $S$-traces (\cref{def:trace}) of a \slim \topheavy $S$-nice tree decomposition (\cref{def:snicetd,def:slimtd,def:topheavy}) of~$G$. By \cref{lem:slim,lem:topheavy}, we have that if $(G,k)$ is a yes-instance, then such a tree decomposition of width at most $k$ exists. We remark that the main structure of our dynamic program table is similar to the one of Chapelle et al.~\cite{chapelle2017treewidth}. However, there are major differences in how the table entries are computed.
Informally speaking, a state of the dynamic program consists of the following elements.
\begin{itemize}
    \item An $S$-trace $(L^S, X^S, R^S)$.
    \item The ``extended'' $S$-Operation $\tau_+$ of the parent of the top node $t_{\max}$ of the directed path of the $S$-trace. 

    If $t_{\max}$ is the root, then $\tau_+$ is the \emph{dummy $S$-operation} $\void$.
    
    If $L^S=\emptyset$, then $\tau_+$ is the ``extended'' $S$-operation of an $S$-bottom node that has an $S$-child with $S$-trace $(L^S, X^S, R^S)$.
    \item The ``extended'' $S$-Operation $\tau_-$ of the bottom node $t_{\min}$ of the directed path of the $S$-trace.

   If $L^S=\emptyset$, then $\tau_-$ is the \emph{dummy $S$-operation} $\void$.
\end{itemize}
Here, we use extended versions of the $S$-operations introduced in \cref{def:snicetd}. Those will contain additional information that we need to compute the entries of the dynamic program. In the following, we give some intuition on why we need this additional information, and then we give formal definitions.

Given the above-described information, the main strategy of our algorithm is to determine (candidates for) the bags $X_{\max}$ and $X_{\min}$ of the top node $t_{\max}$ and the bottom node $t_{\min}$ of the directed path of the $S$-trace, respectively. Furthermore, we want to identify (candidates for) the bag $X_p$ of the parent of $t_{\max}$ and (candidates for) a set $X_{\dpath}$ of vertices that we want to add to bags of the directed path of the $S$-trace or subtrees of the tree decomposition that are rooted at children of nodes in the directed path (that are not $S$-children). Using the sets $X_{\max}$, $X_p$, $X_{\min}$, and $X_{\dpath}$, we compute the ``local treewidth'' of the directed path of the $S$-trace. Using the dynamic programming table, we look up the ``partial treewidth'' of the subgraphs ``below'' the $S$-children of the $S$-bottom node of the directed path. Using this information, we determine the partial treewidth of the subgraph below the top node of the directed path. For a visualization see \cref{fig:paths}.

\begin{figure}[t]
\centering
\begin{subfigure}[t]{0.64\textwidth}
\centering
\begin{tikzpicture}[line width=1pt,scale=.8,xscale=1]

\draw[trace2] (0,0) -- (0,-3);
\draw[trace] (-2,-4) -- (-2,-6);
\draw[trace] (2,-4) -- (2,-7);

\node[vertg] (v0) at (0,1) {};
\node[vertb] (v1) at (0,0) {};
\node[vertb] (v2) at (0,-1) {};
\node[vertb] (v3) at (0,-2) {};
\node[vert2b] (v4) at (0,-3) {};

\node (tmax) at (-1,0) {$t_{\max}$};
\node (taup) at (.5,1) {$\tau_+$};
\node (tmin) at (-1,-3) {$t_{\min}$};
\node (taum) at (.75,-3) {$\tau_-$};
\node (trace) at (2,0) {$(L^S, X^S, R^S)$};

\node[trial,line width=1.5pt] (t1) at (1,-1.5) {};
\draw[edgeb] (v2) -- (t1.apex);

\node[triac,line width=1.5pt] (t2) at (0,-4) {};
\draw[edgeb] (v4) -- (t2.apex);

\node[vertg] (u1) at (-2,-4) {};
\node[vertg] (u2) at (-2,-5) {};
\node[vert2g] (u3) at (-2,-6) {};

\node (trace1) at (-4,-4) {$(L_1^S, X_1^S, R_1^S)$};

\node[vertg] (w1) at (2,-4) {};
\node[vertg] (w2) at (2,-5) {};
\node[vertg] (w3) at (2,-6) {};
\node[vert2g] (w4) at (2,-7) {};

\node (trace2) at (4,-4) {$(L_2^S, X_2^S, R_2^S)$};

\draw[edgeg,dashed] (0,1.75) -- (v0);

\draw[edgeg] (v0) -- (v1);
\draw[edgeb] (v1) -- (v2);
\draw[edgeb] (v2) -- (v3);
\draw[edgeb] (v3) -- (v4);

\draw[edgeg] (v4) -- (u1);

\draw[edgeg] (u1) -- (u2);
\draw[edgeg] (u2) -- (u3);

\draw[edgeg] (v4) -- (w1);

\draw[edgeg] (w1) -- (w2);
\draw[edgeg] (w2) -- (w3);
\draw[edgeg] (w3) -- (w4);

\draw[edgeg,dashed] (u3) -- (-2,-6.75);
\draw[edgeg,dashed] (w4) -- (2,-7.75);
\end{tikzpicture}
\caption{$S$-operation $\tau_-$ of bottom $t_{\min}$ node is $\join(X^S,X_1^S,X_2^S,L_1^S,L_2^S)$.}\label{fig:pathsa}
\end{subfigure}
\begin{subfigure}[t]{0.34\textwidth}
\centering
\begin{tikzpicture}[line width=1pt,scale=.8,xscale=1]
\draw[trace2] (0,0) -- (0,-3);
\draw[trace] (0,-4) -- (0,-7);

\node[vertg] (v0) at (0,1) {};
\node[vertb] (v1) at (0,0) {};
\node[vertb] (v2) at (0,-1) {};
\node[vertb] (v3) at (0,-2) {};
\node[vert2b] (v4) at (0,-3) {};

\node (tmax) at (-1,0) {$t_{\max}$};
\node (taup) at (.5,1) {$\tau_+$};
\node (tmin) at (-1,-3) {$t_{\min}$};
\node (taum) at (.75,-3) {$\tau_-$};
\node (trace) at (2,0) {$(L^S, X^S, R^S)$};

\node[trial,line width=1.5pt] (t1) at (1,-1.5) {};
\draw[edgeb] (v2) -- (t1.apex);

\node[triar,line width=1.5pt] (t2) at (-1,-3.5) {};
\draw[edgeb] (v4) -- (t2.apex);

\node[vertg] (w1) at (0,-4) {};
\node[vertg] (w2) at (0,-5) {};
\node[vertg] (w3) at (0,-6) {};
\node[vert2g] (w4) at (0,-7) {};

\node (trace1) at (2,-4) {$(L_1^S, X_1^S, R_1^S)$};

\draw[edgeg,dashed] (0,1.75) -- (v0);

\draw[edgeg] (v0) -- (v1);
\draw[edgeb] (v1) -- (v2);
\draw[edgeb] (v2) -- (v3);
\draw[edgeb] (v3) -- (v4);

\draw[edgeg] (v4) -- (w1);

\draw[edgeg] (w1) -- (w2);
\draw[edgeg] (w2) -- (w3);
\draw[edgeg] (w3) -- (w4);

\draw[edgeg,dashed] (w4) -- (0,-7.75);
\end{tikzpicture}
\caption{$S$-operation $\tau_-$ of bottom node $t_{\min}$ is $\introduce(v)$ or $\forget(v)$.}\label{fig:pathsb}
\end{subfigure}
    \caption{Illustrations of the main idea of the dynamic programming algorithm. Let $(L^S, X^S, R^S)$ be an $S$-trace. Let $\tau_+$ be the $S$-operation of the parent of the top node $t_{\max}$ of the directed path of the $S$-trace. Let $\tau_-$ be the $S$-operation of the bottom node $t_{\min}$ of the directed path of the $S$-trace. The directed path of the $S$-trace $L^S, X^S, R^S$ is surrounded by a red line. The $S$-bottom nodes are depicted as gray squares. The directed paths of the $S$-traces of $S$-children of the $S$-bottom node~$t_{\min}$, for which we look up information in the dynamic programming table, are surrounded by green lines. The blue triangles illustrate parts of the tree decompositions where all nodes have $S$-traces with $L^S=\emptyset$, and hence those nodes are not part of any directed paths of $S$-traces. Furthermore, the roots of the blue triangles are not $S$-children of their respective parent nodes. The width of the part of the tree decomposition drawn in thick black lines is computed directly from the bags $X_{\max}$ and $X_{\min}$ of the top node $t_{\max}$ and the bottom node $t_{\min}$ of the directed path, respectively, and the set $X_{\dpath}$ of vertices that we want to add to bags in the directed path or subtrees illustrated by blue triangles. Subfigure \ref{fig:pathsa} shows the situation where the $S$-bottom node $t_{\min}$ has two $S$-children. Subfigure \ref{fig:pathsb} shows the situation where the $S$-bottom node $t_{\min}$ has one $S$-child. }\label{fig:paths}
\end{figure}

Informally speaking, one of the main difficulties in this approach is handling $S$-bottom nodes that are contained in full join trees (\cref{def:fulljointree}), that is, their parent and $S$-children potentially have full bags, and at least one of them has. 
As a simple example that should convey some intuition consider the case where an $S$-bottom node has $\join$ as its $S$-operation. Let $t_{\min}$ be such a node and let $t_1,t_2$ be its two $S$-children.  
In our algorithm, we wish to determine a candidate for the bag $X_{\min}$. However, if we face a yes-instance, a solution may use a different candidate $X^\star_{\min}$ for the bag of $t_{\min}$. In order to prove correctness, we have to argue that we can transform the solution tree decomposition (without increasing its width) to the one that we are computing tree decomposition. 
 Assume two vertices $u,v\in V$ are adjacent, and in a solution $S$-nice tree decomposition we have that $u,v\in V_{t_1}$ and $v\in X^\star_{\min}$. However, we chose $X_{\min}$ such $u\in X_{\min}$ and $u,v\in V_{t_2}$. For a visualization, see \cref{fig:join1}. Now we have to argue that we can do a transformation such as the one visualized in \cref{fig:join1} (from left to right). However, the transformation increases the size of the bag of~$t_2$. We have to make sure that this does not increase the width of the tree decomposition.
 If the bag of node $t_2$ in the solution tree decomposition is not full, then we can perform the transformation illustrated in \cref{fig:join1}. 

\begin{figure}[t]
\centering
\begin{tikzpicture}[line width=1pt,scale=.5,yscale=.9]

\node[vert,minimum width=1.2cm,fill=red!30!white] (p) at (0,3) {};
\node[vert,minimum width=1.2cm,fill=orange!30!white] (c1) at (3,0) {};
\node[vert,minimum width=1.2cm] (c11) at (1,-3) {};
\node[vert,minimum width=1.2cm] (c12) at (5,-3) {};
\node[vert,minimum width=1.2cm,fill=orange!30!white] (c2) at (-3,0) {};
\node[vert,minimum width=1.2cm] (c21) at (-3,-3) {};

\node (a) at (2,3) {$t_{\min}$};
\node (a) at (-1.3,0) {$t_1$};
\node (a) at (4.7,0) {$t_2$};

\phantom{
\node[vert,minimum width=1.2cm] (c22) at (-5,-3) {};
}

\draw[edge,dashed] (p) -- (0,5);
\draw[edge] (p) -- (c1);
\draw[edge] (p) -- (c2);
\draw[edge] (c1) -- (c11);
\draw[edge] (c1) -- (c12);
\draw[edge,dashed] (c11) -- (1,-5);
\draw[edge,dashed] (c12) -- (5,-5);
\draw[edge] (c2) -- (c21);
\draw[edge,dashed] (c21) -- (-3,-5);

\node[vert,fill=cyan!20!white] (v1) at (-2.5,0) {};
\node[vert,fill=cyan!20!white] (v2) at (.5,3) {};
\node[vert,fill=cyan!20!white] (v3) at (3.5,0) {};
\node[vert,fill=cyan!20!white] (v4) at (1.5,-3) {};
\node[vert,fill=cyan!20!white] (v5) at (5.5,-3) {};

\node[vert2,fill=green!30!white] (u1) at (-3.5,0) {};
\node[vert2,fill=green!30!white] (u2) at (-3.5,-3) {};
\end{tikzpicture}
\begin{tikzpicture}[line width=1pt,scale=.5,yscale=.9]

\node (a) at (-6.6,0) {\huge$\rightarrow$};

\node[vert,minimum width=1.2cm,fill=red!30!white] (p) at (0,3) {};
\node[vert,minimum width=1.2cm,fill=orange!30!white] (c1) at (3,0) {};
\node[vert,minimum width=1.2cm] (c11) at (1,-3) {};
\node[vert,minimum width=1.2cm] (c12) at (5,-3) {};
\node[vert,minimum width=1.2cm,fill=orange!30!white] (c2) at (-3,0) {};
\node[vert,minimum width=1.2cm] (c21) at (-3,-3) {};

\node (a) at (2,3) {$t_{\min}$};
\node (a) at (-1.3,0) {$t_1$};
\node (a) at (4.7,0) {$t_2$};

\phantom{
\node[vert,minimum width=1.2cm] (c22) at (-5,-3) {};
}

\draw[edge,dashed] (p) -- (0,5);
\draw[edge] (p) -- (c1);
\draw[edge] (p) -- (c2);
\draw[edge] (c1) -- (c11);
\draw[edge] (c1) -- (c12);
\draw[edge,dashed] (c11) -- (1,-5);
\draw[edge,dashed] (c12) -- (5,-5);
\draw[edge] (c2) -- (c21);
\draw[edge,dashed] (c21) -- (-3,-5);

\node[vert,fill=cyan!20!white] (v1) at (3.5,0) {};
\node[vert,fill=cyan!20!white] (v2) at (1.5,-3) {};
\node[vert,fill=cyan!20!white] (v3) at (5.5,-3) {};

\node[vert2,fill=green!30!white] (u1) at (2.5,0) {};
\node[vert2,fill=green!30!white] (u2) at (-.5,3) {};
\node[vert2,fill=green!30!white] (u3) at (-3.5,0) {};
\node[vert2,fill=green!30!white] (u4) at (-3.5,-3) {};

\end{tikzpicture}
    \caption{Two illustrations (left and right) of the situations when $t_{\min}$ is an $S$-bottom node of an $S$-nice tree decomposition that has $\join$ as its $S$-operation. The two $S$-children of $t_{\min}$ are denoted~$t_1$ and~$t_2$. The red circle visualizes the bag of $t_{\min}$ and the orange circles visualize the bags of $t_1$ and~$t_2$. The green square visualizes a vertex $u\in V$ and the small blue circle a vertex $v\in V$ such that $\{u,v\}\in E$. The left side illustrates part of a tree decomposition where, in particular, $u$ and $v$ are both contained in the bag of $t_1$, $v$ is contained in the bag of $t_{\min}$, and $u$ is not contained in the bag of $t_{\min}$. The right side illustrates part of a tree decomposition where, in particular, $u$ and $v$ are both contained in the bag of $t_2$, $u$ is contained in the bag of $t_{\min}$, and $v$ is not contained in the bag of~$t_{\min}$.}\label{fig:join1}
\end{figure}

If the bag of node $t_2$ in the solution tree decomposition is full (which may be the case if $t_{\min}$ is contained in a full join tree), then we cannot perform the transformation. Then we face a situation as illustrated in \cref{fig:join2}. Here, informally speaking, we need to make a transformation that changes the size of a bag that is (potentially) far away from $t_{\min}$ in the tree decomposition. In order to be able to do this, we need to identify this situation and treat it differently. 
To this end, we introduce a new $S$-operation $\bigjoin$. Informally speaking, if a \slim \topheavy $S$-nice tree decompositions, there is an $S$-bottom node $t$ that admits a $\join$ as its $S$-operation, then from now on we say that node $t$ admits a $\bigjoin$ as its $S$-operation. 

\begin{figure}[t]
\centering
\begin{tikzpicture}[line width=1pt,scale=.5,yscale=.9]

\node[vert,minimum width=1.2cm,fill=red!30!white,line width=2pt] (p) at (0,3) {};
\node[vert,minimum width=1.2cm,fill=orange!30!white,line width=2pt] (c1) at (3,0) {};
\node[vert,minimum width=1.2cm] (c11) at (1,-3) {};
\node[vert,minimum width=1.2cm] (c12) at (5,-3) {};
\node[vert,minimum width=1.2cm,fill=orange!30!white] (c2) at (-3,0) {};
\node[vert,minimum width=1.2cm] (c21) at (-3,-3) {};

\node (a) at (2,3) {$t_{\min}$};
\node (a) at (-1.3,0) {$t_1$};
\node (a) at (4.7,0) {$t_2$};

\phantom{
\node[vert,minimum width=1.2cm] (c22) at (-5,-3) {};
}

\draw[edge,dashed] (p) -- (0,5);
\draw[edge] (p) -- (c1);
\draw[edge] (p) -- (c2);
\draw[edge] (c1) -- (c11);
\draw[edge] (c1) -- (c12);
\draw[edge,dashed] (c11) -- (1,-5);
\draw[edge,dashed] (c12) -- (5,-5);
\draw[edge] (c2) -- (c21);
\draw[edge,dashed] (c21) -- (-3,-5);

\node[vert,fill=cyan!20!white] (v1) at (-2.5,0) {};
\node[vert,fill=cyan!20!white] (v2) at (.5,3) {};
\node[vert,fill=cyan!20!white] (v3) at (3.5,0) {};
\node[vert,fill=cyan!20!white] (v4) at (1.5,-3) {};

\node[vert2,fill=green!30!white] (u1) at (-3.5,0) {};
\node[vert2,fill=green!30!white] (u2) at (-3.5,-3) {};
\end{tikzpicture}
\begin{tikzpicture}[line width=1pt,scale=.5,yscale=.9]

\node (a) at (-6.6,0) {\huge$\rightarrow$};

\node[vert,minimum width=1.2cm,fill=red!30!white,line width=2pt] (p) at (0,3) {};
\node[vert,minimum width=1.2cm,fill=orange!30!white,line width=2pt] (c1) at (3,0) {};
\node[vert,minimum width=1.2cm] (c11) at (1,-3) {};
\node[vert,minimum width=1.2cm] (c12) at (5,-3) {};
\node[vert,minimum width=1.2cm,fill=orange!30!white] (c2) at (-3,0) {};
\node[vert,minimum width=1.2cm] (c21) at (-3,-3) {};

\node (a) at (2,3) {$t_{\min}$};
\node (a) at (-1.3,0) {$t_1$};
\node (a) at (4.7,0) {$t_2$};

\phantom{
\node[vert,minimum width=1.2cm] (c22) at (-5,-3) {};
}

\draw[edge,dashed] (p) -- (0,5);
\draw[edge] (p) -- (c1);
\draw[edge] (p) -- (c2);
\draw[edge] (c1) -- (c11);
\draw[edge] (c1) -- (c12);
\draw[edge,dashed] (c11) -- (1,-5);
\draw[edge,dashed] (c12) -- (5,-5);
\draw[edge] (c2) -- (c21);
\draw[edge,dashed] (c21) -- (-3,-5);

\node[vert,fill=cyan!20!white] (v1) at (1.5,-3) {};

\node[vert2,fill=green!30!white] (u1) at (.5,-3) {};
\node[vert2,fill=green!30!white] (u2) at (2.5,0) {};
\node[vert2,fill=green!30!white] (u3) at (-.5,3) {};
\node[vert2,fill=green!30!white] (u4) at (-3.5,0) {};
\node[vert2,fill=green!30!white] (u5) at (-3.5,-3) {};

\end{tikzpicture}
    \caption{Two illustrations (left and right) of the situations when $t_{\min}$ is an $S$-bottom node of an $S$-nice tree decomposition that has $\join$ as its $S$-operation. For further information on the visualization see the description of \cref{fig:join1}. Additionally, thick circles indicate that the visualized bags are full. Furthermore, in this figure, the right side illustrates part of a tree decomposition where, in particular, $u$ is contained in the bag of $t_2$, $v$ is not contained in the bag of $t_2$, $u$ is contained in the bag of $t_{\min}$, and $v$ is not contained in the bag of~$t_{\min}$, and $u$ and $v$ both are contained in the bag of the child of $t_2$.}\label{fig:join2}
\end{figure}

Finally, notice that $X_{\max}$ is the bag of an $S$-child of an $S$-bottom node. To determine $X_{\max}$, we need to include additional information in the extended version of the $S$-operations, that help us determine the bags of $S$-children of the nodes that admit the extended $S$-operations.
We remark that all additional information contained in the extended versions of $S$-operations is only relevant for the computation of the ``local treewidth'' of the directed path of the $S$-trace.
We give formal definitions of the extended $S$-operations in the next section.

\subsection{Extended \boldmath$S$-Operations}\label{sec:extendedops}
In this section, we describe how we extend the $S$-operations introduced in \cref{def:snicetd}. 
Since we are interested in \slim $S$-nice tree decomposition, these operations will only be defined on \slim $S$-nice tree decompositions. We do not need the ``\topheavy'' property for the definitions.

First of all, recall that we allow the dummy $S$-operation $\void$ if the top node is the root or if the bottom node does not have any $S$-children.
Furthermore, as discussed before, we introduce a \emph{big} version $\bigjoin$ of the $S$-operation $\join$. When we compute bags for $S$-bottom nodes that admit this operation, we will have several possible candidates for the bag, and an additional bit string~$s$ encodes which one we are considering. Furthermore, we need the additional information on which vertex is (potentially) missing in each of the bags of the $S$-children or the parent, in the case that they are not full. This is encoded with three integers $d_0$, $d_1$ and $d_2$. We have the following big $S$-operation.
\begin{itemize}
    \item $\bigjoin(X^S,X_1^S,X_2^S,L_{1}^S, L_{2}^S,s,d_0,d_1,d_2)$ with some vertex sets $X^S,X_1^S,X_2^S,L_1^S,L_2^S\subseteq S$ 
    and some bit string $s\in\{0,1\}^{2\fvn(G)+1}$ and $d_0,d_1,d_2\in\{0,\ldots,k+1\}$. 
\end{itemize}
In order to define which $S$-bottom nodes admit the above-described big $S$-operations, we introduce a function that ``decodes'' its bag. Let
\[
\bigbag : 2^S\times \{0,1\}^{2\fvn(G)+1} \rightarrow 2^V
\]
be a function which takes as input a subset of $S$ (which will be the set $X^S$ of the $S$-bottom node that admits the big $S$-operation) and the bit string $s$ that encodes which bag candidate is chosen.
This will always be a full bag. The function is formally defined in \cref{sec:bags}.
We additionally need to define a function that determines the bags of the children using the additional integers $d_1$ and $d_2$. Let
\[
\subbag : 2^V\times \{0,\ldots,k+1\}\rightarrow 2^V
\]
be a function that takes as input a subset of $V$ (which will be the bag of the $S$-bottom node that admits the $S$-operation $\bigjoin$) and an additional integer that encodes which vertex is missing is the bag of a child (if the integer is zero, this will encode that the bag of the child is the same).
Formally, $\subbag(X,0)=X$ and $\subbag(X,i)=X\setminus\{v\}$, where $v$ is at the $i$th ordinal position in $X$ (recall that $V$ is ordered in an arbitrary but fixed way). 
\begin{definition}[$\bigjoin$]\label{def:bigjoin}
Let $\mathcal{T}=(T,\{X_t\}_{t\in V(T)})$ be a \slim $S$-nice tree decomposition of~$G$. 
Let $t$ be an $S$-bottom node in $\mathcal{T}$ that is contained in a full join tree and admits $S$-operation $\join(X^S_t,X_1^S,X_2^S,L_{1}^S, L_{2}^S)$. 
Let $t_1,t_2$ be the two $S$-children of $t$ and let~$t_0$ be the parent of $t$. If $X_t=\bigbag(X^S_t, s)$ and for all $0\le i\le 2$ we have $X_{t_i}=\subbag(X_t,d_i)$, then we say that node $t$ admits $S$-operation $\bigjoin(X^S_t,X_1^S,X_2^S,L_{1}^S, L_{2}^S,s,d_0,d_1,d_2)$.
\end{definition}
The big $S$-operation $\bigjoin$ is illustrated in \cref{fig:bigops}. Note that at this point, we cannot decide whether an $S$-bottom node admits a $\bigjoin$ $S$-operation since we have not defined the function $\bigbag$ yet. Informally speaking, there will be sufficiently many options for the bit string $s$ such that for every possible bag that a node $t$ that meets the conditions in \cref{def:bigjoin} has, the function $\bigbag$ maps the $X_t^S$ and the bit string $s$ to that bag. Formally, we will show the following.

\begin{lemma}\label{lem:alwaysbigjoin}
Let $\mathcal{T}=(T,\{X_t\}_{t\in V(T)})$ be a \slim $S$-nice tree decomposition of $G$. 
Let~$t$ be an $S$-bottom node in $\mathcal{T}$ that is contained in a full join tree and admits $S$-operation $\join(X_t^S,X_1^S,X_2^S,L_{1}^S, L_{2}^S)$. 
Then there exists $s\in\{0,1\}^{2\fvn(G)+1}$ and $d_0,d_1,d_2\in\{0,\ldots,k+1\}$ such that node $t$ admits $S$-operation $\bigjoin(X^S_t,X_1^S,X_2^S,L_{1}^S, L_{2}^S,s,d_0,d_1,d_2)$.
\end{lemma}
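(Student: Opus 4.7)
The plan is to produce each of the required pieces independently: the three indices $d_0, d_1, d_2$ encoding the bags of the parent and the two $S$-children of $t$, and the bit string $s$ encoding the full bag $X_t$ itself given its $S$-restriction $X_t^S$.

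The indices are handled by a direct structural argument. Let $t_0$ denote the parent of $t$, which exists by Condition~\ref{cond:slim:1} of \cref{def:slimtd} (the root of a \slim tree decomposition is not an $S$-bottom node), and let $t_1, t_2$ be the two $S$-children of $t$ guaranteed by the admitted $\join$ operation. Because $t$ lies in a full join tree we have $|X_t| = k+1$. By Condition~1 of \cref{def:snicetd}, the bag of each neighbour $t_i$ of $t$ is comparable to $X_t$ under inclusion and their sizes differ by at most one; since $\mathcal{T}$ has width $k$ no bag exceeds $k+1$, hence $X_{t_i} \subseteq X_t$ and $|X_{t_i}| \in \{k, k+1\}$. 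Thus either $X_{t_i} = X_t$, in which case we set $d_i = 0$, or $X_{t_i} = X_t \setminus \{v\}$ for a unique $v \in X_t$, in which case we set $d_i$ equal to the ordinal position of $v$ in $X_t$ (using the fixed global ordering of $V$). In either case $X_{t_i} = \subbag(X_t, d_i)$ by the definition of $\subbag$.

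For the bit string, the plan is to lean on the concrete definition of $\bigbag$ to be given in \cref{sec:bags}: there we will design $\bigbag(X_t^S, \cdot)$ precisely so that its image contains every full bag with $S$-content $X_t^S$ that can arise at an $S$-bottom node of a full join tree in a \slim $S$-nice tree decomposition of $G$. The structural input that lets this encoding be so short is Condition~\ref{cond:slim:6} of \cref{def:slimtd}: every $v \in X_t \setminus S$ has three distinct outside neighbours living in three distinct ``outside subtrees'' of the full join tree, each connected to $S$ in $G - X_t$. Since $G - S$ is a forest and its connections to $S$ are controlled by the $\fvn(G)$ vertices of $S$, this condition pins down the non-$S$ vertices of $X_t$ extremely tightly with respect to the forest structure of $G - S$.

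The main obstacle therefore shifts to \cref{sec:bags}, namely designing $\bigbag$ so that (i)~every candidate full bag is enumerated by some $s \in \{0,1\}^{2\fvn(G)+1}$ and (ii)~the length $2\fvn(G)+1$ genuinely suffices. The heart of this step is arguing that, under Condition~\ref{cond:slim:6}, each non-$S$ vertex of $X_t$ admits a canonical succinct description by a constant number of bits tagged to vertices of $S$, so that the total encoding length is linear in $\fvn(G)$ as required. Once this is in place, the lemma follows by instantiating $s$ as the encoding of $X_t \setminus S$ produced by this procedure, together with the indices $d_0, d_1, d_2$ already constructed above.
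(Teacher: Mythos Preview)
Your handling of $d_0, d_1, d_2$ is correct and in fact more explicit than the paper, which simply states that \cref{lem:bigcorrect1} (the existence of the bit string $s$) ``directly implies'' the lemma and leaves the $d_i$ argument implicit. Your overall structure---defer the construction of $s$ to the definition of $\bigbag$, and identify Condition~\ref{cond:slim:6} of \cref{def:slimtd} as the structural engine---also matches the paper.

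Where you diverge is in your sketch of \emph{how} the encoding works. You describe it as ``each non-$S$ vertex of $X_t$ admits a canonical succinct description by a constant number of bits tagged to vertices of $S$''; this is not the paper's mechanism, and it is not obvious such a per-vertex encoding can be made to work. The paper's $\bigbag$ (\cref{alg:big}) is instead a sequential greedy procedure: it maintains a 3-partition $(\Rest,\In,\Out)$ of $V$ and repeatedly picks a vertex $v \in \Rest$ of degree at most one in $G[\Rest]$. If $v$ has neighbours in at least two $S$-components of $G[\Out]$, the next bit of $s$ decides whether $v$ goes to $\In$ or to $\Out$; otherwise $v$ is sent to $\Out$ without consuming a bit. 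Condition~\ref{cond:slim:6} is used to show that these free moves to $\Out$ never discard a vertex that is actually in $X_t$ (this is \cref{lem:bigcorrect1}). The bound $2\fvn(G)+1$ on the number of consumed bits comes from a potential argument: each bit either enlarges $\In$ (at most $k+1 \le \fvn(G)+1$ times, since the algorithm halts when $|\In|=k+1$) or merges at least two $S$-components of $G[\Out]$ (at most $\fvn(G)$ times, since there are at most $|S|$ such components). This is the argument you should aim for rather than a direct per-vertex encoding.
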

We postpone the proof of \cref{lem:alwaysbigjoin} so \cref{sec:bigops}, where we also formally define the function $\bigbag$.

\begin{figure}[t]
\centering
\begin{tikzpicture}[line width=1pt,scale=.5,yscale=.8]
\phantom{\node[vert,minimum width=1cm,line width=2pt] (p) at (0,7) {};
\node[vert,minimum width=1cm,line width=2pt] (c1) at (3,0) {};
\node[vert,minimum width=1cm,line width=2pt] (c2) at (-3,0) {};
\node[vert,minimum width=1cm,line width=2pt] (c) at (0,-1) {};}

\node[vert,minimum width=1cm] (p) at (0,7) {};
\node[vert,minimum width=1cm,line width=2pt,dashed] (p) at (0,7) {};
\node[vert,minimum width=1cm,fill=red!30!white,line width=2pt] (t) at (0,3) {};
\node[vert,minimum width=1cm] (c1) at (3,0) {};
\node[vert,minimum width=1cm] (c2) at (-3,0) {};
\node[vert,minimum width=1cm,line width=2pt,dashed] (c1) at (3,0) {};
\node[vert,minimum width=1cm,line width=2pt,dashed] (c2) at (-3,0) {};

\draw[edge] (p) -- (t);
\draw[edge] (t) -- (c1);
\draw[edge] (t) -- (c2);

\node (a) at (1.3,3) {$t$};
\end{tikzpicture}
    \caption{Illustration of the big $S$-operation $\bigjoin$ (\cref{def:bigjoin}). Node $t$ admits $\bigjoin(X^S_t,X_1^S,X_2^S,L_{1}^S, L_{2}^S,s,d_0,d_1,d_2)$. The red circle represents $S$-bottom node $t$. The remaining circles are the parent and two $S$-children, respectively. Children of $t$ that are not $S$-children are not depicted. Thick circles represent nodes whose bags are full and dashed circles represent nodes whose bags are potentially full. 
    Since $t$ is contained in a full join tree, at least one of the nodes represented by dashed circles has a full bag.}\label{fig:bigops}
\end{figure}

Now we introduce ``small'' $S$-operations for $S$-bottom nodes that are not contained in full join trees.
Similar as in the case of the big operations, we need some additional information that allows us to compute the bags. 
Here, we add an additional integer $d$ that will guide our algorithm to a specific bag candidate (and candidates for its neighbors). The main reason why we add this information to the state is to guarantee that our algorithm computes the same bag for the bottom node of the directed path as we computed in the predecessor states for the parent of the top node.
%
We have the following \emph{small} $S$-operations.
\begin{itemize}
    \item $\smallintroduce(v, d)$ with some vertex $v\in S$ and some integer $d\in\{1,\ldots, (n+1)^3\}$.
    \item $\smalljoin(X^S,X_1^S,X_2^S,L_{1}^S, L_{2}^S,d)$ with some vertex sets $X^S,X_1^S,X_2^S,L_1^S,L_2^S\subseteq S$ and some integer $d\in\{1,\ldots,(n+1)^3\}$.
\end{itemize}
In order to define which $S$-bottom nodes admit the above-described small $S$-operations, we again use a function that ``decodes'' its bag, and also its neighboring bags. Let
\[
\smallbag : 2^S\times2^S\times2^S\times2^S\times \mathbb{N}\rightarrow  2^V\times 2^V\times 2^V\times 2^V
\]
be a function which takes as input four subsets of $S$  (which, roughly speaking, say which vertices of $S$ are ``above'', ``in'', ``below to the left'' and ``below to the right'' a certain node in the tree decomposition) and the number $d$ that will guide the algorithm to a specific bag for the $S$-bottom node, as well as bags for its neighbors. The function is formally defined in \cref{sec:bags}.

\begin{definition}[$\smallintroduce$]\label{def:smallintro}
    Let $\mathcal{T}=(T,\{X_t\}_{t\in V(T)})$ be a \slim $S$-nice tree decomposition of $G$. 
    Let $t$ be an $S$-bottom node in $\mathcal{T}$ that admits $S$-operation $\introduce(v)$. Let $t_1$ be the $S$-child of $t$ and let $t_0$ be the parent of $t$. If $(X_t,X_{t_0},X_{t_1},X_{t_1})=\smallbag(R^S_t,X^S_t,L^S_t,\emptyset,d)$, then we say that node $t$ admits $S$-operation $\smallintroduce(v,d)$.
\end{definition}

\begin{definition}[$\smalljoin$]\label{def:smalljoin}
    Let $\mathcal{T}=(T,\{X_t\}_{t\in V(T)})$ be a \slim $S$-nice tree decomposition of~$G$. 
    Let $t$ be an $S$-bottom node in $\mathcal{T}$ that admits $S$-operation $\join(X_t^S,X_1^S,X_2^S,L_{1}^S, L_{2}^S)$ and is not contained in a full join tree. Let $t_1,t_2$ be the two $S$-children of $t$ and let~$t_0$ be the parent of $t$. If $(X_t,X_{t_0},X_{t_1},X_{t_2})=\smallbag(R^S_t,X^S_t,L_{1}^S, L_{2}^S,d)$, then we say that node $t$ admits $S$-operation $\smalljoin(X^S_t,X_1^S,X_2^S,L_{1}^S, L_{2}^S,d)$.
\end{definition}

The small $S$-operations are illustrated in \cref{fig:smallops}. Note that in particular, if an $S$-bottom node admits the $S$-operation $\smallintroduce$, then the bags of its parent and its $S$-child are not full. Similarly, if an $S$-bottom node admits the $S$-operation $\smalljoin$, then the bags of its parent and both its $S$-children are not full. 

\begin{figure}[t]
\centering
\begin{subfigure}[t]{0.45\textwidth}
\centering
\begin{tikzpicture}[line width=1pt,scale=.5,yscale=.8]
\phantom{\node[vert,minimum width=1cm,line width=2pt] (p) at (0,7) {};
\node[vert,minimum width=1cm,line width=2pt] (c1) at (3,0) {};
\node[vert,minimum width=1cm,line width=2pt] (c2) at (-3,0) {};
\node[vert,minimum width=1cm,line width=2pt] (c) at (0,-1) {};}

\node[vert,minimum width=1cm] (p) at (0,7) {};
\node[vert,minimum width=1cm,fill=red!30!white,line width=2pt] (t) at (0,3) {};
\node[vert,minimum width=1cm] (c) at (0,-1) {};

\draw[edge] (p) -- (t);
\draw[edge] (t) -- (c);

\node (a) at (1.3,3) {$t$};
\end{tikzpicture}
\caption{Node $t$ admits $\smallintroduce(v,d)$.}\label{fig:smallopsa}
\end{subfigure}
\begin{subfigure}[t]{0.49\textwidth}
\centering
\begin{tikzpicture}[line width=1pt,scale=.5,yscale=.8]
\phantom{\node[vert,minimum width=1cm,line width=2pt] (p) at (0,7) {};
\node[vert,minimum width=1cm,line width=2pt] (c1) at (3,0) {};
\node[vert,minimum width=1cm,line width=2pt] (c2) at (-3,0) {};
\node[vert,minimum width=1cm,line width=2pt] (c) at (0,-1) {};}
\node[vert,minimum width=1cm] (p) at (0,7) {};
\node[vert,minimum width=1cm,fill=red!30!white,line width=2pt] (t) at (0,3) {};
\node[vert,minimum width=1cm] (c1) at (3,0) {};
\node[vert,minimum width=1cm] (c2) at (-3,0) {};

\draw[edge] (p) -- (t);
\draw[edge] (t) -- (c1);
\draw[edge] (t) -- (c2);

\node (a) at (1.3,3) {$t$};
\end{tikzpicture}
\caption{Node $t$ admits $\smalljoin(X^S_t,X_1^S,X_2^S,L_{1}^S, L_{2}^S,d)$.}\label{fig:smallopsd}
\end{subfigure}
    \caption{Illustration of the small $S$-operations (\cref{def:smallintro,def:smalljoin}). Here, the red circles are $S$-bottom nodes. The remaining circles are parents and $S$-children, respectively. Children of the $S$-bottom nodes that are not $S$-children are not depicted. Thick circles represent nodes whose bags may be full. The bags of nodes represented by thin circles are not full. 
    }\label{fig:smallops}
\end{figure}

In the case that an $S$-bottom node $t$ that admits the $S$-operation $\forget(v)$, we have that its $S$-child $t'$ has a potentially larger bag. 
In order to determine the bag of $t$, we will first compute the bag of the $S$-child $t'$. 
Informally speaking, we need some information about the child of $t'$ that is on the path from $t$ to its closest descendant that is an $S$-bottom node.
Let $t''$ be the closest descendant of $t$ that is an $S$-bottom node. If $t''$ has the same bag as $t'$, and this bag is full, then we essentially have to compute the bag of the $S$-bottom node $t''$. Hence, we need a flag $f\in\{\true,\false\}$ that indicates whether we are in this case. And if we are (that is, $f=\true$), then we need all information about the $S$-operation $\tau$ of $t''$. Note that if $t''$ also admits the $S$-operation $\forget(v)$, then we know that the bag of $t''$ is not full, which will be enough.
If we are not in the above-described case (that is, $f=\false$), then, similar as in the other small $S$-operations, we need an additional integer that will guide our algorithm to a specific bag candidate.
To capture all the above-described extra information, we introduce the following extended version of the $S$-operation $\forget(v)$. 
\begin{itemize}
    \item $\extendedforget(v,d,f, \tau)$ with some vertex $v\in S$, $d\in \{1,\ldots,(n+1)^3\}$, $f\in\{\true,\false\}$, and some extended $S$-operation $\tau$ that is different from $\extendedforget$.
\end{itemize}
If the $S$-operation $\tau$ in $\extendedforget(v,d,f,\tau)$ is a big $S$-operation, then we consider $\extendedforget(v,d,f,\tau)$ a big $S$-operation. Otherwise, we consider $\extendedforget(v,d,f,\tau)$ a small $S$-operation.
Similarly to the previously introduced extended $S$-operations, we define an auxiliary function to ``decode'' the bag and the neighboring bags of the above-decribed extended $S$-operation. Let
\[
\smallbagf : 2^S\times 2^S\times 2^S\times \mathbb{N}\times\{\true,\false\}\times\Pi\rightarrow 2^V\times 2^V\times 2^V,
\]
where $\Pi$ denotes the set of all extended $S$-operations. Note that both in the input and output there are only three sets of vertices, since we do not need to consider the case that the $S$-bottom node has two $S$-children. We remark that the function behaves similar to $\bigbag$ if $\tau$ is a big $S$-operation, and similar to $\smallbag$ otherwise. In \cref{sec:bags} we describe how the function is formally defined in both cases.
\begin{definition}[$\extendedforget$]\label{def:extendedforget}
    Let $\mathcal{T}=(T,\{X_t\}_{t\in V(T)})$ be a \slim $S$-nice tree decomposition of $G$. 
    Let $t$ be an $S$-bottom node in $\mathcal{T}$ that admits $S$-operation $\forget(v)$. 
    Let $t'$ be the $S$-child of $t$. 
    Let $\tau$ be an extended $S$-operation.
    \begin{itemize}
        \item If there is a node $t''$ that is the closest descendant of $t$ which is an $S$-bottom node, $|X_t|=k$, $X_t\cup\{v\}=X_{t'}=X_{t''}$, $t''$ admits the non-extended version of $\tau$, $(X_{t'},X_t,X_{t''})=\smallbagf(R^S_{t'},X^S_{t'},L^S_{t'},d,\true,\tau)$,
        then we say that node $t$ admits $S$-operation $\extendedforget(v,d,\true,\tau)$. 
        \item Otherwise, if $(X_{t'},X_t,X_{t''})=\smallbagf(R^S_{t'}, X^S_{t'},L^S_{t'},d,\false,\void)$, then we say that node $t$ admits $S$-operation $\extendedforget(v,d,\false,\void)$.
    \end{itemize}
\end{definition}

The extended $S$-operations $\extendedforget$ is illustrated in \cref{fig:extforget}.
We can observe that if an $S$-bottom node admits some extended $S$-operation $\extendedforget(v,d,f,\tau)$, then $\tau\neq\extendedforget(v',d',f',\tau')$. This follows from the fact that the bag of an $S$-bottom node that admits $S$-operation $\forget(v)$ (and hence potentially an $\extendedforget$ $S$-operation) is never full.
\begin{observation}
 Let $\mathcal{T}=(T,\{X_t\}_{t\in V(T)})$ be a \slim $S$-nice tree decomposition of $G$. 
Let~$t$ be an $S$-bottom node in $\mathcal{T}$ that admits $S$-operation $\extendedforget(v,d,f,\tau)$ for some $v\in S$, $d\in \{1,\ldots,(n+1)^3\}$, $f\in\{\true,\false\}$, and some $S$-operation $\tau$. Then we have that $\tau\neq\extendedforget(v',d',f',\tau')$ for every $v'\in S$, $d'\in \{1,\ldots,(n+1)^3\}$, $f'\in\{\true,\false\}$, and every extended $S$-operation $\tau'$.
\end{observation}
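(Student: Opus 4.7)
The plan is to derive a contradiction directly from the definitions. Suppose toward contradiction that $t$ admits the extended $S$-operation $\extendedforget(v,d,f,\tau)$ with $\tau=\extendedforget(v',d',f',\tau')$. The second bullet of \cref{def:extendedforget} would force $\tau=\void$, hence $\tau\ne\extendedforget(v',d',f',\tau')$ in that case, so we must be in the first bullet. Consequently $f=\true$, and there exists a node $t''$, namely the closest descendant of $t$ that is an $S$-bottom node, with $|X_t|=k$ (where $k$ denotes the width of $\mathcal{T}$), $X_{t'}=X_{t''}=X_t\cup\{v\}$, and $t''$ itself admits $\tau=\extendedforget(v',d',f',\tau')$.

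Next I would unfold \cref{def:extendedforget} a second time, now applied to $t''$. Admitting any extended forget operation at $t''$ presupposes, by the preamble of \cref{def:extendedforget}, that $t''$ admits the underlying $S$-operation $\forget(v')$. Then \cref{obs:forget} gives a unique $S$-child $t'''$ of $t''$ with $X_{t'''}=X_{t''}\cup\{v'\}$.

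The final step is the size count that the comment preceding the observation already highlights. From $|X_t|=k$ and $X_{t''}=X_t\cup\{v\}$ (with $v\notin X_t$, since $v$ is introduced as the vertex being forgotten at $t$) we get $|X_{t''}|=k+1$, so $X_{t''}$ is full. But then $|X_{t'''}|=|X_{t''}|+1=k+2$, contradicting the fact that $\mathcal{T}$ has width $k$. This contradiction proves the observation. There is no real obstacle in the argument: the whole content is the slogan that an $S$-bottom node admitting $\forget$ is never full, and two consecutive $\extendedforget$s would require precisely such a node to be full while still admitting $\forget$, which is impossible.
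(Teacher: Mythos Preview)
Your proof is correct and follows exactly the reasoning the paper sketches in the sentence preceding the observation: a node admitting $\forget$ has a strictly smaller bag than its $S$-child and hence cannot be full, while the first bullet of \cref{def:extendedforget} forces $|X_{t''}|=k+1$. One minor point: your appeal to \cref{obs:forget} for the full-bag equality $X_{t'''}=X_{t''}\cup\{v'\}$ is slightly off, since that observation speaks only about the $S$-parts of the bags; the full-bag inclusion you need comes from Condition~1 of \cref{def:snicetd} together with $v'\in X_{t'''}\setminus X_{t''}$, which then yields $|X_{t'''}|=k+2$ as you claim.
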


\begin{figure}[t]
\centering
\begin{subfigure}[t]{0.46\textwidth}
\centering
\begin{tikzpicture}[line width=1pt,scale=.5,yscale=.8]
\phantom{\node[vert,minimum width=1cm,line width=2pt] (p) at (0,7) {};
\node[vert,minimum width=1cm,line width=2pt] (c1) at (3,0) {};
\node[vert,minimum width=1cm,line width=2pt] (c2) at (-3,0) {};
\node[vert,minimum width=1cm,line width=2pt] (c) at (0,-1) {};
}

\node[vert,minimum width=1cm,fill=red!30!white] (p) at (0,7) {};
\node[vert,minimum width=1cm,line width=2pt] (t) at (0,3) {};
\node[vert,minimum width=1cm,fill=orange!30!white,line width=2pt] (c) at (0,-5) {};

\draw[edge] (p) -- (t);
\draw[edge,dashed] (t) -- (c);

\node (a) at (1.3,7) {$t$};
\node (a) at (1.4,3) {$t'$};
\node (a) at (1.5,-5) {$t''$};
\end{tikzpicture}
\caption{Note $t$ admits $\extendedforget(v,d,\true,\tau)$.}\label{fig:smallopsb}
\end{subfigure}
\begin{subfigure}[t]{0.46\textwidth}
\centering
\begin{tikzpicture}[line width=1pt,scale=.5,yscale=.8]
\phantom{\node[vert,minimum width=1cm,line width=2pt] (p) at (0,7) {};
\node[vert,minimum width=1cm,line width=2pt] (c1) at (3,0) {};
\node[vert,minimum width=1cm,line width=2pt] (c2) at (-3,0) {};
\node[vert,minimum width=1cm,line width=2pt] (c) at (0,-1) {};}

\node[vert,minimum width=1cm,fill=red!30!white] (p) at (0,7) {};
\node[vert,minimum width=1cm,line width=2pt] (t) at (0,3) {};
\node[vert,minimum width=1cm] (c) at (0,-1) {};
\node[vert,minimum width=1cm,fill=orange!30!white] (c2) at (0,-5) {};
\node[vert,minimum width=1cm,fill=orange!30!white,line width=2pt,dashed] (c2) at (0,-5) {};

\draw[edge] (p) -- (t);
\draw[edge,dashed] (t) -- (c);
\draw[edge,dashed] (c) -- (c2);

\node (a) at (1.3,7) {$t$};
\node (a) at (1.4,3) {$t'$};
\node (a) at (1.5,-5) {$t''$};
\end{tikzpicture}
\caption{Node $t$ admits $\extendedforget(v,d,\false,\void)$.}\label{fig:smallopsc}
\end{subfigure}
    \caption{Illustration of the extended $S$-operation $\extendedforget$ (\cref{def:extendedforget}). Here, the red and orange circles are $S$-bottom nodes. Children of the $S$-bottom nodes that are not $S$-children are not depicted. Thick and dashed circles represent nodes whose bags may be full. The bags of nodes represented by thin circles are not full. 
    The dashed edges in \cref{fig:smallopsb,fig:smallopsc} indicate ancestor-descendant relationships. The node with a dashed circle in \cref{fig:smallopsc} indicates that the bag of this node may be full, but it is different from the bag of the child of node $t$.}\label{fig:extforget}
\end{figure}

Finally, we remark that for the small $S$-operations, we do not have an analog to \cref{lem:alwaysbigjoin}, that is, in a \slim $S$-nice tree decomposition, it is generally not the case that every $S$-bottom node that is not contained in a full join tree admits some small $S$-operation. We have to refine the tree decompositions further to ensure that all $S$-bottom nodes admit some extended (big or small) $S$-operation. This is discussed further in \cref{sec:correct}.

\subsection{Dynamic Programming States and Table}\label{sec:statestable}
Now we define a state of the dynamic program. From now on, we only consider extended $S$-operations, that is, whenever we refer to $S$-operations, we only refer to extended ones. As described at the beginning of the section, it contains an $S$-trace and two $S$-operations, one for the $S$-bottom node of the directed path of the $S$-trace, and one for the parent (if it exists) of the top node of the directed path of the $S$-trace. 
Formally, we define a state for the dynamic program as follows.
\begin{definition}[State, Witness]\label{def:state}
    Let $(L^S, X^S, R^S)$ be an $S$-trace and let $\tau_+$ and~$\tau_-$ be two $S$-operations. We call $\phi=(\tau_-,L^S, X^S, R^S,\tau_+)$ a \emph{state}. 
We say that a \slim $S$-nice tree decomposition $\mathcal{T}$ of $G$ \emph{witnesses} $\phi$ if the following conditions hold.
    \begin{itemize}
    \item If $\tau_+=\void$, then the root of $\mathcal{T}$ has $S$-trace $(L^S, X^S, R^S)$.
        \item If $L^S=\emptyset$, then $\tau_-=\void$ and there exists an $S$-bottom node that admits $S$-operation $\tau_+$ and has an $S$-child with $S$-trace $(L^S, X^S, R^S)$.
        \item If $L^S\neq\emptyset$, then the $S$-bottom node of the directed path of $S$-trace $(L^S, X^S, R^S)$ admits $S$-operation~$\tau_-$ and, if additionally $\tau_+\neq\void$, then the parent of the top node of the directed path of $S$-trace $(L^S, X^S, R^S)$ admits $S$-operation~$\tau_+$.
    \end{itemize} 
    We denote with $\mathcal{S}$ the set of all states.
\end{definition}
Note that \cref{def:state} implies that if a \slim $S$-nice tree decomposition $\mathcal{T}$ of $G$ witnesses $\phi=(\tau_-,L^S, X^S, R^S,\tau_+)$, then we have that $\tau_-=\void$ if and only if $L^S=\emptyset$, since $S$-bottom nodes never admit the $S$-operation $\void$.

Intuitively, we wish to obtain a dynamic programming table $\ptw:\mathcal{S}\rightarrow \{\true,\false\}$ (for ``partial treewidth'') that maps states to $\true$ if and only if there is a tree decomposition~$\mathcal{T}$ of the subgraph ``below'' the top node of the directed path of the $S$-trace contained in the state that has width at most $k$ and that witnesses $\phi$. 
Note that this is not a precise definition, since there may be many vertices in the graph $G$ that we may add to bags in the directed path of the $S$-trace contained in the state, or above, or below it. Informally speaking, we will only add vertices of $G$ to bags in the directed path of the $S$-trace contained in the state if they \emph{need} to be added to those bags. This allows us to compute a function $\ltw:\mathcal{S}\rightarrow \{\true,\false\}$ (for ``local treewidth'') that outputs $\true$ if and only if the size of any bag in the directed path is at most $k+1$. 
We will give a precise definition later. Our definition for $\ptw:\mathcal{S}\rightarrow \{\true,\false\}$ will be somewhat similar to the one by Chapelle et al.~\cite{chapelle2017treewidth}, however, our definition of $\ltw:\mathcal{S}\rightarrow \{\true,\false\}$ will be significantly more sophisticated than the one used by Chapelle et al.~\cite{chapelle2017treewidth}.

Given a state $(\tau_-,L^S, X^S, R^S,\tau_+)$ of the dynamic program, by \cref{obs:introduce,obs:forget,obs:join} we can immediately deduce the $S$-traces of the $S$-children of the bottom node of the directed path of the $S$-trace $(L^S, X^S, R^S)$. 
This gives rise to a ``preceding''-relation on states.

\begin{definition}[Preceding States]\label{def:preceding}
Let $\phi=(\tau_-,L^S, X^S, R^S,\tau_+)$ be a state with $L^S\neq\emptyset$. Denote with $\Pi$ the set of all extended $S$-operations. Then, if $\tau_-=\smallintroduce(v,d)$ or  $\tau_-=\extendedforget(v,d,f,\tau)$, the set $\Psi(\phi)$ of \emph{preceding states} of $\phi$ is defined as follows. 
\begin{itemize}
    \item If $\tau_-=\smallintroduce(v,d)$, then 
        \[
        \Psi(\phi) = \bigl\{ (\tau'_-,L^S, X^S\setminus\{v\}, R^S\cup\{v\},\smallintroduce(v,d)) \mid \tau'_- \in \Pi\bigr\}.
        \]
        \item If $\tau_-=\extendedforget(v,d,f,\tau)$, then
        \begin{align*}
        & \Psi(\phi)=\bigl\{(\tau,L^S\setminus\{v\}, X^S\cup\{v\}, R^S,\extendedforget(v,d,f,\tau))\bigr\} \text{ if } f=\true, \text{ and}\\
        & \Psi(\phi)=\bigl\{(\tau'_-,L^S\setminus\{v\}, X^S\cup\{v\}, R^S,\extendedforget(v,d,f,\tau)) \mid \tau'_- \in \Pi\bigr\} \text{ otherwise.}
        \end{align*}
\end{itemize}

If $\tau_-=\smalljoin(X^S,X_1^S,X_2^S,L_{1}^S, L_{2}^S,d)$ or  $\tau_-=\bigjoin(X^S,X_1^S,X_2^S,L_{1}^S, L_{2}^S,s,d_0,d_1,d_2)$, then the sets $\Psi_1(\phi)$ and $\Psi_2(\phi)$ of \emph{preceding states} of $\phi$ are defined as follows. 
\begin{itemize}
        \item If $\tau_-=\smalljoin(X^S,X_1^S,X_2^S,L_{1}^S, L_{2}^S,d)$, then 
        \begin{align*}
           & \Psi_1(\phi)=\bigl\{(\tau'_-,L^S_1, X_1^S, R^S\cup (X^S \setminus X_1^S)\cup L^S_2,\smalljoin(X^S,X_1^S,X_2^S,L_{1}^S, L_{2}^S,d))\\
            & \hphantom{xxxxxxxxxxxxxxxxxxxxxxxxxxxxxxxxxxxxxxxxxxxxxxxxxx}\mid \tau'_- \in \Pi\bigr\} \text{ and}\\ 
            & \Psi_2(\phi)=\bigl\{(\tau'_-,L^S_2, X_2^S, R^S\cup (X^S \setminus X_2^S)\cup L^S_1,\smalljoin(X^S,X_1^S,X_2^S,L_{1}^S, L_{2}^S,d))\\
            & \hphantom{xxxxxxxxxxxxxxxxxxxxxxxxxxxxxxxxxxxxxxxxxxxxxxxxxx}\mid \tau'_- \in \Pi\bigr\}.
        \end{align*}
        \item If  $\tau_-=\bigjoin(X^S,X_1^S,X_2^S,L_{1}^S, L_{2}^S,s,d_0,d_1,d_2)$, then 
\begin{align*}
            & \Psi_1(\phi)=\bigl\{(\tau'_-,L^S_1, X_1^S, R^S\cup (X^S \setminus X_1^S)\cup L^S_2,\bigjoin(X^S,X_1^S,X_2^S,L_{1}^S, L_{2}^S),s,d_0,d_1,d_2)\\
            & \hphantom{xxxxxxxxxxxxxxxxxxxxxxxxxxxxxxxxxxxxxxxxxxxxxxxxxx}\mid \tau'_- \in \Pi\bigr\} \text{ and}\\ 
            & \Psi_2(\phi)=\bigl\{(\tau'_-,L^S_2, X_2^S, R^S\cup (X^S \setminus X_2^S)\cup L^S_1,\bigjoin(X^S,X_1^S,X_2^S,L_{1}^S, L_{2}^S),s,d_0,d_1,d_2)\\
            & \hphantom{xxxxxxxxxxxxxxxxxxxxxxxxxxxxxxxxxxxxxxxxxxxxxxxxxx}\mid \tau'_- \in \Pi\bigr\}.
        \end{align*}
    \end{itemize}
\end{definition}
Furthermore, by \cref{cor:recursion} we know that the ``preceding''-relation on states is acyclic. Hence, we can look up the value of $\ptw$ for preceding states in our dynamic programming table when computing $\ptw(\phi)$ for some state $\phi$. 
However, \cref{def:preceding} is purely syntactic and not for every pair of a state and its preceding states there is a tree decomposition that witnesses all states at the same time. Hence, we need to check which preceding states are \emph{legal}. To this end, we introduce functions $\legal_1:\mathcal{S}\times\mathcal{S}\rightarrow\{\true,\false\}$ and $\legal_2:\mathcal{S}\times\mathcal{S}\times\mathcal{S}\rightarrow\{\true,\false\}$ that check whether for a pair of states $\phi,\psi$ or a triple of states $\phi,\psi_1,\psi_2$, respectively, the second state or the second and third state are legal predecessors for the first state, respectively. In other words, it checks whether there exists a tree decomposition that witnesses all states and their preceding-relation. We give a formal definition of this function in \cref{sec:legal}.
For the ease of presentation, we state the dynamic program as an algorithm that only decides whether the input graph has treewidth at most $k$ or not. However, the algorithm can easily be adapted to compute and output a corresponding tree decomposition. We comment more on this in \cref{sec:localtw,sec:correct}.
Formally, we define the dynamic programming table as follows.
\begin{definition}[Dynamic Programming Table $\ptw$]\label{def:dp}
The dynamic programming table is a recursive function $\ptw:\mathcal{S}\rightarrow \{\true,\false\}$ which is defined as follows. Let $\ltw:\mathcal{S}\rightarrow \{\true,\false\}$, let $\legal_1:\mathcal{S}\times\mathcal{S}\rightarrow\{\true,\false\}$, and let $\legal_2:\mathcal{S}\times\mathcal{S}\times\mathcal{S}\rightarrow\{\true,\false\}$.
    Let $\phi=(\tau_-,L^S, X^S, R^S,\tau_+)\in \mathcal{S}$ be a state, then make the following case distinction based on~$\tau_-$.
    \begin{itemize}
        \item If $\tau_-=\void$, then
        \[
        \ptw(\phi)=\ltw(\phi).
        \]
    \item If $\tau_-=\smallintroduce(v,d)$ or $\tau_-=\extendedforget(v,d,f,\tau)$, then 
        \[
        \ptw(\phi) = \bigvee_{\psi\in\Psi(\phi)} \bigl(\ltw(\phi)\wedge\ptw(\psi)\wedge\legal_1(\phi,\psi)\bigr).
        \]
        \item If $\tau_-=\smalljoin(X^S,X_1^S,X_2^S,L_{1}^S, L_{2}^S,d)$ or $\tau_-=\bigjoin(X^S,X_1^S,X_2^S,L_{1}^S, L_{2}^S,s,d_0,d_1,d_2)$, then 
        \[
        \ptw(\phi) = \bigvee_{\psi_1\in\Psi_1(\phi)\wedge\psi_2\in\Psi_2(\phi)} \bigl(\ltw(\phi)\wedge\ptw(\psi_1)\wedge\ptw(\psi_2)\wedge\legal_2(\phi,\psi_1,\psi_2)\bigr).
        \]
    \end{itemize}
\end{definition}

\subsection{Local Treewidth}\label{sec:ltwdef}
The function $\ltw$ depends on the following three sets of vertices.
Let $\phi=(\tau_-,L^S, X^S, R^S,\tau_+)$ be a state. 
\begin{itemize}
    \item $X^\phi_{\max}$ are the vertices in $V$ that we want to be in the bag of the top node $t_{\max}$ of the directed path of $(L^S, X^S, R^S)$, or, if $L^S=\emptyset$ in the bag of the $S$-child with $S$-trace $(L^S, X^S, R^S)$ of an $S$-bottom node.
    \item $X^\phi_p$ are the vertices in $V$ that we want to be in the bag of the parent of $t_{\max}$. If $t_{\max}$ is the root, we will assume that $X^\phi_p=\emptyset$.
    \item $X^\phi_{\min}$ are the vertices in $V$ that we want to be in the bag of the bottom node $t_{\min}$ of the directed path of $(L^S, X^S, R^S)$, or if $L^S=\emptyset$ then we set $X_{\min}=X^S$.
    \item $X^\phi_{\dpath}$ are additional vertices in $V\setminus S$ that we want to place in some bag of the directed path of $(L^S, X^S, R^S)$, or in some bag of a subtree rooted in a child of a node in the directed path that is not an $S$-child, or if $L^S=\emptyset$ in some bag of a node below $t_{\max}$ that is not an $S$-child.
\end{itemize}
As an example, if a vertex $v$ has neighbors both in $L^S$ and $R^S$, then there are bags in the subtree below $t_{\min}$ where $v$ already met its neighbors in $L^S$ and there must still be bags above $t_{\max}$ where $v$ meets its neighbors in $R^S$. Hence, to meet Condition~\ref{condition_3_tree_decomposition} of \cref{def:tree_decomposition}, vertex $v$ needs to be in the bags of both $t_{\max}$ and $t_{\min}$ and also every bag between them. The set $X^\phi_{\dpath}$, intuitively, contains all vertices of connected components of $G-S$ such that all their neighbors are in bags between the ones of $t_{\max}$ and $t_{\min}$, and it is not the case that all neighbors are in bags above $t_{\max}$.

In order to define $X^\phi_{\dpath}$, we also need to know which vertices are contained in the bags of the $S$-children of~$t_{\min}$ (if there are any). Let $X^\phi_c$ be the union of the bags of the $S$-children of~$t_{\min}$ and the empty set if $L^S=\emptyset$.
Computing $X^\phi_{\max}$, $X^\phi_p$, $X^\phi_{\min}$, and $X^\phi_c$ for a given state~$\phi$ is the main challenge here and we will dedicate the next section to this. 
We give a formal definition of these three sets and algorithms to compute them in \cref{sec:bags}. 
Given $X^\phi_{\max}$, $X^\phi_p$, and $X^\phi_{\min}$, we define $X^\phi_{\dpath}$ as follows.
\begin{align*}
v\in X^\phi_{\dpath}\Leftrightarrow & \text{ there exists an }F\text{-component } C \text{ in }G-(X^\phi_{\max}\cup X^\phi_p\cup X^\phi_{\min})\text{ such that}\\
& v\in V(C) \text{, } N[V(C)]\cap (X^\phi_c\setminus X^\phi_{\min})=\emptyset \text{, and } N(V(C))\setminus X^\phi_p\neq\emptyset.
\end{align*}

Using these four sets, we are ready to define the function $\ltw:\mathcal{S}\rightarrow \{\true,\false\}$.

\begin{definition}[Local Treewidth $\ltw$]\label{def:ltw}
    Let $\phi=(\tau_-,L^S, X^S, R^S,\tau_+)$ be a state, and let $G'$ be the graph obtained from $G$ by adding edges between all pairs of vertices $u,v\in  X^\phi_{\max}$ with $\{u,v\}\notin E$, all pairs of vertices $u,v\in  X^\phi_p$ with $\{u,v\}\notin E$, and all pairs of vertices $u,v\in X^\phi_{\min}$ with $\{u,v\}\notin E$. Then 
    \[
    \ltw(\phi) = \true \Leftrightarrow  \tw(G'[X^\phi_{\max}\cup X^\phi_p\cup X^\phi_{\min}\cup X^\phi_{\dpath}])\le k.
    \]
\end{definition}

\section{Computing the Top and Bottom Bag of a Directed Path}\label{sec:bags}
For this section, assume we are given a state $\phi=(\tau_-,L^S, X^S, R^S,\tau_+)$ in addition to the graph $G=(V,E)$, integer $k$, and set $S\subseteq V$. Throughout the section, let $t_{\max}$ the top node of the directed path of $(L^S, X^S, R^S)$ and let $t_{\min}$ the bottom node of the directed path of $(L^S, X^S, R^S)$.
We give algorithms to compute the following.
\begin{itemize}
    \item A candidate for the bag $X^\phi_{\max}$ of $t_{\max}$.
        \item A candidate for the bag $X_p^\phi$ of the parent of $t_{\max}$ except for the case where $\tau_+=\void$ (in this case $t_{\max}$ is the root of the tree decomposition).
    \item A candidate for the bag $X^\phi_{\min}$ of $t_{\min}$.
    \item A candidate for the union of the bags $X_c^\phi$ of the $S$-children of $t_{\min}$ except for the case where $\tau_-=\void$ (in this case $t_{\min}$ does not have any $S$-children).
    \item A candidate for the set $X^\phi_{\dpath}$ of vertices that we want to add to bags of the directed path or subtrees of the tree decomposition that are rooted at children of nodes in the directed path.
\end{itemize}

We first discuss the case where we want to compute $X^\phi_{\max}$.  We postpone the discussion for the remaining cases to \cref{sec:pieces}.
We will show how to compute a set~$X$ that we can use in the following way. 
In the case that we want to compute $X^\phi_{\max}$, the set~$X$ will either be $X^\phi_{\max}$, or the bag of the parent of $t_{\max}$ (if it exists). Informally speaking, $X$ will be the bag of the parent, if the parent (potentially) has a larger bag. This can happen if $\tau_+=\smallintroduce(v,d)$, $\tau_+=\smalljoin(X^S,X_1^S,X_2^S,L_{1}^S, L_{2}^S,d)$, or $\tau_+=\bigjoin(X^S,X_1^S,X_2^S,L_{1}^S, L_{2}^S,s,d_0,d_1,d_2)$.
In the case of $\tau_+=\extendedforget(v,d,f,\tau)$, the set $X$ will be the bag of $t_{\max}$.
The set $X$ that we compute is the bag of the nodes visualized with a thick circle in the illustrations in \cref{fig:bigops,fig:smallops,fig:extforget}. 
In the case where $\tau_+=\void$, we will show that we can set $X=X^\phi_{\max}=X^S$. We will give two algorithms, one for the case that $\tau_+=\bigjoin(X^S,X_1^S,X_2^S,L_{1}^S, L_{2}^S,s,d_0,d_1,d_2)$ and one for all other cases.
 
Note that by \cref{obs:sbottom} we have that the parent of $t_{\max}$ is an $S$-bottom node. It is the node that admits the extended $S$-operation $\tau_+$. Let that node be called $t^\star$.
Observe that in all cases except $\tau_+=\void$, the information in the extended $S$-operations allows us to determine from $X$ (and hence also $X^\phi_{\max}$) the bag of $t^\star$, the bag of the parent of $t^\star$, and the bags of the $S$-children of $t^\star$. 

A crucial property of our algorithms is that the computation of $X$ is independent of the $S$-operation $\tau_-$ (except for one special case that we will discuss later). 
This allows us to do the following: Instead of computing $X^\phi_{\min}$, we compute the set $X^\psi_{\max}$ for some $\psi\in\Psi(\phi)$ or some $\psi\in\Psi_1(\phi)$ if $\tau_-$ is a $\smalljoin$ or a $\bigjoin$ $S$-operation (it will become clear that picking some $\psi\in\Psi_2(\phi)$ will yield the same result).

In order to prove that our algorithms are correct, we will exploit the additional properties of \slim \topheavy $S$-nice tree decompositions (in comparison to ``normal'' $S$-nice tree decompositions). 
In \cref{sec:bigops}, we present the algorithm for the case where $\tau_+$ is $S$-operations $\bigjoin$.
Afterwards, we present the algorithm for the case where $\tau_+$ is different from $S$-operation $\bigjoin$ in \cref{sec:smallops}. These algorithms implicitly define the functions $\bigbag$, $\smallbag$, $\nbag$, $\smallbagf$, and $\nbagf$ used in \cref{sec:extendedops}.

\subsection{Big \boldmath$S$-Operations}\label{sec:bigops}
To make notation more convenient, assume that the input state is $\phi=(\tau_-,\hat{L}^S, \hat{X}^S, \hat{R}^S,\tau_+)$. 
Our goal is to compute the bag $X^\phi_{\max}$  for states $\phi$ where $\tau_+$ is a $\bigjoin$ $S$-operation or $\tau_+=\extendedforget(v,d,\true,\tau)$ and $\tau$ is a $\bigjoin$ $S$-operation. In this case we say that $\tau_+$ is a big $S$-operation. 

\subparagraph{Setup.} As described at the beginning of \cref{sec:bags}, we compute a set $X$ of vertices that is a candidate for a \emph{big target node} (or just \emph{target node}) $t^\star$, which is not necessarily the same node as $t_{\max}$. 
Let $\mathcal{T}$ be a \slim $S$-nice tree decomposition of $G$ that contains a directed path for $S$-trace $(\hat{L}^S, \hat{X}^S, \hat{R}^S)$, or if $\hat{L}^S=\emptyset$ (and $\tau_-=\void$) let $\mathcal{T}$ be a \slim $S$-nice tree decomposition of $G$ that contains an $S$-bottom node that has an $S$-child (which we also call~$t_{\max}$) with $S$-trace $(\hat{L}^S, \hat{X}^S, \hat{R}^S)$. 
The \emph{big target node} ${t^\star}$ is defined as follows.
\begin{itemize}
\item If $\tau_+=\bigjoin(X^S_t,X_1^S,X_2^S,L_{1}^S, L_{2}^S,s,d_0,d_1,d_2)$, then $t^\star$ is the parent of $t_{\max}$.
\item If $\tau_+=\extendedforget(v,d,\true,\tau)$ and $\tau=\bigjoin(X^S_t,X_1^S,X_2^S,L_{1}^S, L_{2}^S,s,d_0,d_1,d_2)$, then $t^\star=t_{\max}$.
\end{itemize}
It follows that the target node ${t^\star}$ is always an $S$-bottom node that admits some $S$-operation $\bigjoin(X^S_t,X_1^S,X_2^S,L_{1}^S, L_{2}^S,s,d_0,d_1,d_2)$. Let from now on $s,d_0,d_1,d_2$ be the bit string and the three integers, respectively, that are contained in the $S$-operation of the target node. We describe an algorithm to compute the function $\bigbag(X_{t^\star}^S,s)$ and use it to compute $X$, that is $X=\bigbag(X_{t^\star}^S,s)$. To this end, we need to know the set $X_{t^\star}^S$ from the $S$-trace of ${t^\star}$.
To make notation simpler, we drop the subscript ${t^\star}$ and let $(L^S, X^S, R^S)$ be the $S$-trace of the target node ${t^\star}$. 
\begin{itemize}
    \item 
    Let $\tau_+=\bigjoin(X^S_t,X_1^S,X_2^S,L_{1}^S, L_{2}^S,s,d_0,d_1,d_2)$. Then by \cref{def:bigjoin} and \cref{obs:join} we either have that 
\begin{itemize}
\item $\hat{L}^S=L_{1}^S$, then $(L^S, X^S, R^S)=(\hat{L}^S\cup L_{2}^S, X^S_t, \hat{R}^S\setminus (L_{2}^S\cup X^S_t))$, or we have that 
\item $\hat{L}^S=L_{2}^S$, then $(L^S, X^S, R^S)=(\hat{L}^S\cup L_{1}^S, X^S_t, \hat{R}^S\setminus (L_{1}^S\cup X^S_t))$.
\end{itemize}    
\item Let $\tau_+=\extendedforget(v,d,\true,\tau)$ and $\tau=\bigjoin(X^S_t,X_1^S,X_2^S,L_{1}^S, L_{2}^S,s,d_0,d_1,d_2)$. Then $(L^S, X^S, R^S)=(\hat{L}^S, \hat{X}^S, \hat{R}^S)$.  
\end{itemize}

\subparagraph{Algorithm.}
Now, we describe the algorithm to compute $\bigbag(X^S,s)$. 
The algorithm will maintain a 3-partition $(\Rest,\In,\Out)$ of the vertex set $V$, where, intuitively,
\begin{itemize}
    \item $\In$ contains vertices that we want to be in $X$,
    \item $\Out$ contains vertices that we do not want to be in $X$, and
    \item $\Rest$ contains vertices where we have not decided yet whether we want them to be in $X$ or not.
\end{itemize}

\begin{algorithm}[$\bigbag$]{A set $X^S\subseteq S$, and a bit string $s$.}{A set $X\subseteq V$.}\label{alg:big}
Set $\In=X^S$, $\Out=S\setminus X^S$, and $\Rest=V\setminus S$. Let $i$ be an integer variable. Set $i$ to one. 

Perform the first applicable of the following steps until $\Rest$ is empty or $|\In|=k+1$.  Break all ties in an arbitrary but fixed way.
\begin{enumerate}
\item If there is a vertex $v$ in $\Rest$ that has degree at most one in $G[\Rest]$, such that at least two $S$-components in $G[\Out]$ contain some neighbor of $v$, then do the following.\label{big:step1}
\begin{itemize}
	\item If the $i$th bit of $s$ is one, then remove $v$ from $\Rest$ and add~$v$ to~$\In$. Increase $i$ by one.
	\item 
	Otherwise, remove $v$ from $\Rest$ and add~$v$ to~$\Out$. Increase $i$ by one.
\end{itemize} 
\item If there is a vertex $v$ in $\Rest$ that has degree at most one in $G[\Rest]$, then remove $v$ from $\Rest$ and add~$v$ to~$\Out$.\label{big:step2}
\end{enumerate}
\vspace{-2ex}
Output $\In$.
\end{algorithm}

Since $G[\Rest]$ is initially a forest and we never add new vertices to $\Rest$, we have that as long as $\Rest$ is non-empty, at least one of the above two steps of \cref{alg:big} is applicable. 
We set $X=\bigbag(X^S,s)$. We can observe that the computation takes polynomial time.
\begin{observation}\label{obs:algo1running}
\cref{alg:big} runs in polynomial time.
\end{observation}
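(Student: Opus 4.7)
The plan is to establish three elementary facts that together yield the claim. First, each iteration of the main loop removes exactly one vertex from $\Rest$ (whether Step~\ref{big:step1} or Step~\ref{big:step2} fires), so the loop runs for at most $|V \setminus S| \le n$ iterations. Second, the loop always makes progress: as long as $\Rest$ is non-empty, at least one of the two steps is applicable. Third, each iteration can be executed in polynomial time. The only non-trivial point is the second.

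For the second point, I would argue that $G[\Rest]$ remains a forest throughout the execution. Initially, $\Rest = V \setminus S$ and $G-S$ is a forest since $S$ is a feedback vertex set of $G$. The set $\Rest$ can only shrink (vertices are removed but never added), and the forest property is preserved under vertex deletion. Since every non-empty forest has a vertex of degree at most one (a leaf of any non-trivial tree component, or an isolated vertex in a trivial component), there always exists some $v \in \Rest$ eligible for Step~\ref{big:step2} whenever $\Rest$ is non-empty; hence Step~\ref{big:step2} applies even if Step~\ref{big:step1} does not. Consequently, the loop terminates with either $\Rest = \emptyset$ or $|\In| = k+1$ after at most $n$ iterations.

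For the third point, each iteration can be implemented as follows: (i) compute the degree of every vertex in $G[\Rest]$ in $\OO(n + |E|)$ time; (ii) compute the connected components of $G[\Out]$ and label each as an $S$-component or $F$-component by checking whether it intersects $S$, also in $\OO(n+|E|)$ time; (iii) for each candidate vertex $v \in \Rest$ of degree at most one in $G[\Rest]$, count how many distinct $S$-components of $G[\Out]$ contain a neighbor of $v$, which costs $\OO(\deg_G(v))$ after the component labelling; (iv) break ties according to the fixed ordering of $V$. Updating the partition $(\Rest, \In, \Out)$ and incrementing the counter $i$ is constant-time per iteration. Multiplying the per-iteration cost by the $\OO(n)$ bound on the number of iterations gives a total running time polynomial in $n$, as claimed.
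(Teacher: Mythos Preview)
Your proof is correct and follows essentially the same approach as the paper, which treats the observation as self-evident and only notes beforehand that ``since $G[\Rest]$ is initially a forest and we never add new vertices to $\Rest$, we have that as long as $\Rest$ is non-empty, at least one of the above two steps of \cref{alg:big} is applicable.'' You have simply spelled out the details (termination, progress via the forest/leaf argument, and polynomial cost per iteration) that the paper leaves implicit.
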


\subparagraph{Correctness.}
Now we show that \cref{alg:big} is correct. To this end, we give a proof for \cref{lem:alwaysbigjoin}.
%
%
We first make the following observation.

\begin{lemma}\label{lem:oneneighborbig}
During every iteration of \cref{alg:big}, the following holds. For every connected component $C$ in $G[\Out\setminus S]$ we have that $|N(V(C))\cap \Rest|\le 1$.
\end{lemma}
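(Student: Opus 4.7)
The plan is to prove the statement by induction on the iterations of \cref{alg:big}. At initialization, $\Out = S\setminus X^S$, so $\Out\setminus S = \emptyset$ and the claim holds vacuously; this serves as the base case. Throughout the argument I will rely on the invariant $\Rest\subseteq V\setminus S$, which holds because no vertex of $S$ is ever placed into $\Rest$ and vertices only leave $\Rest$ during the algorithm.

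For the inductive step, assume the property holds at the start of some iteration and consider the change produced in that iteration. If Step~\ref{big:step1} is executed with the $i$th bit of $s$ equal to one, then $v$ is moved from $\Rest$ to $\In$; hence $\Out$ is unchanged while $\Rest$ only shrinks. For every connected component $C$ of $G[\Out\setminus S]$, the set $N(V(C))\cap \Rest$ can therefore only get smaller, and the bound is preserved by the inductive hypothesis.

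The core case is when $v$ moves from $\Rest$ to $\Out$, which happens in Step~\ref{big:step1} with bit zero or in Step~\ref{big:step2}. By the invariant $v\notin S$, so $v\in \Out\setminus S$ after the update, and in both branches the precondition forces $v$ to have degree at most one in $G[\Rest]$. Let $C_1,\ldots,C_m$ denote the components of $G[\Out\setminus S]$ before the iteration whose vertex sets contain a neighbor of $v$. After the update these merge with $v$ into a single new component $C' = \{v\}\cup V(C_1)\cup\cdots\cup V(C_m)$, while every other component of $G[\Out\setminus S]$ remains exactly as it was (and may only lose $v$ from its $\Rest$-neighborhood). By the inductive hypothesis, $|N(V(C_i))\cap \Rest|\le 1$ for each $i$, and since $v\in N(V(C_i))\cap \Rest$ we must in fact have $N(V(C_i))\cap \Rest = \{v\}$. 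Writing $\Rest' = \Rest\setminus\{v\}$ for the updated set, this yields
\[
N(V(C'))\cap \Rest' \;=\; \bigl(N(v)\cap \Rest'\bigr) \;\cup\; \bigcup_{i=1}^{m} \bigl(N(V(C_i))\cap \Rest'\bigr) \;=\; N(v)\cap \Rest',
\]
which has cardinality at most one by the degree condition on $v$. For each of the unaffected components $C$, the inclusion $N(V(C))\cap \Rest' \subseteq N(V(C))\cap \Rest$ together with the inductive hypothesis gives the required bound, closing the induction.

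The only real obstacle is the bookkeeping around the simultaneous transition of $v$ from $\Rest$ to $\Out$ and the merging of the components $C_1,\ldots,C_m$; once it is clearly stated, the combinatorics collapse to the observation that the degree-at-most-one-in-$G[\Rest]$ precondition is precisely what forces each pre-existing adjacent component to have $v$ as its unique remaining $\Rest$-neighbor, so the merged component inherits at most one $\Rest$-neighbor overall.
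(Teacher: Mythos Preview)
Your proof is correct and follows essentially the same approach as the paper's own proof: induction on the iterations, with the base case $\Out\setminus S=\emptyset$, a trivial case when $v$ moves to $\In$, and the key case when $v$ moves to $\Out$ handled by observing that each pre-existing adjacent component has $v$ as its unique $\Rest$-neighbor, so the merged component inherits at most one $\Rest$-neighbor via the degree-at-most-one condition on $v$.
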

\begin{proof}
We prove this by induction on the iterations of the algorithm. Initially, the statement clearly holds since $\Out\setminus S=\emptyset$. Let $(\Rest,\In,\Out)$ be a 3-partition of $V$ that appears during some iteration of the algorithm. By induction hypothesis, we have that for every connected component $C$ in $G[\Out\setminus S]$ we have that $|N(V(C))\cap \Rest|\le 1$.
Assume that Step~\ref{big:step1} applies, then the new 3-partition is $(\Rest\setminus \{v\},\In\cup\{v\},\Out)$ or $(\Rest\setminus \{v\},\In,\Out\cup\{v\})$ for some $v\in \Rest$. In the former case, we clearly have for every connected component $C$ in $G[\Out\setminus S]$ that $|N(V(C))\cap (\Rest\setminus\{v\})|\le |N(V(C))\cap \Rest|\le 1$.
In the latter case or if Step~\ref{big:step2} applies, then the new 3-partition is $(\Rest\setminus \{v\},\In,\Out\cup\{v\})$ for some $v\in \Rest$ that has degree at most one in $G[\Rest]$. Let $C$ be a connected component in $G[(\Out\cup\{v\})\setminus S]$. If $C$ does not contain $v$, then $C$ is a connected component in $G[\Out\setminus S]$, and clearly we have that $|N(V(C))\cap (\Rest\setminus\{v\})|\le |N(V(C))\cap \Rest|\le 1$. Now assume that $C$ contains $v$. Let $C_1,\ldots,C_\ell$ be the connected components of $C-\{v\}$. Note that $C_1,\ldots,C_\ell$ are all connected components in $G[\Out\setminus S]$ and hence only have one neighbor in $\Rest$. Since $v\in\Rest$ and $v\in N(V(C_i))$ for all $1\le i\le \ell$, we have that $|N(V(C_i))\cap (\Rest\setminus\{v\})|=0$ for all $1\le i\le \ell$. Recall that $|N(v)\cap\Rest|\le 1$. It follows that $|N(V(C))\cap (\Rest\setminus\{v\})|\le 1$. 
\end{proof}

Now we show the following result, which directly implies \cref{lem:alwaysbigjoin}.

\begin{lemma}\label{lem:bigcorrect1}
Let $\mathcal{T}=(T,\{X_t\}_{t\in V(T)})$ be a \slim $S$-nice tree decomposition of $G$. Let $T'$ be a full join tree in $\mathcal{T}$ where all nodes in $T'$ have bag $X$. Then there exist a bit string $s\in \{0,1\}^{2\fvn(G)+1}$ such that $\bigbag(X\cap S,s)=X$.
\end{lemma}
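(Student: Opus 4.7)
The plan is to construct a suitable bit string $s \in \{0,1\}^{2\fvn(G)+1}$ and prove that $\bigbag(X\cap S, s) = X$. I will define $s$ adaptively by simulating \cref{alg:big}: whenever Step~\ref{big:step1} consumes a bit, I set that bit to $1$ if the vertex $v$ it targets lies in $X$, and to $0$ otherwise (unused bits of $s$ are arbitrary). Throughout the simulation I maintain the inductive invariants $\In \subseteq X$ and $\Out \cap X = \emptyset$; both hold initially since $\In = X\cap S$ and $\Out = S\setminus X$. With my chosen bits, every application of Step~\ref{big:step1} trivially preserves both invariants, so the only threat is an application of Step~\ref{big:step2} to a vertex in $X\setminus S$, which I must exclude.

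The central claim is: in every iteration, if some $v \in (X\setminus S) \cap \Rest$ has degree at most one in $G[\Rest]$, then Step~\ref{big:step1} is applicable (to some vertex), so Step~\ref{big:step2} is not selected. To prove it, I apply Condition~\ref{cond:slim:6} of \cref{def:slimtd} to $v$, obtaining three distinct $u_1, u_2, u_3 \in N(v) \setminus X$ that lie in three distinct subtree pieces of $T'$, each connected to $S$ in $G-X$. The pieces are pairwise separated in $G$ by $X$, so the $u_i$ lie in three distinct $G-X$-components. Since $v$ has at most one $\Rest$-neighbor and the $\In$-invariant rules out $u_i \in \In$, at least two of them, say $u_1$ and $u_2$, lie in $\Out$, and as they belong to distinct $G-X$-components they also belong to distinct $G[\Out]$-components.

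It remains to show that $u_i$'s $G[\Out]$-component contains a vertex of $S$, for $i \in \{1,2\}$. If $u_i \in S$ this is immediate. Otherwise $u_i \in V \setminus (S \cup X)$; let $C_i$ be the $G-X$-component of $u_i$, which contains a vertex of $S$ by Condition~\ref{cond:slim:6}, and let $D_i$ be the connected component of $G[\Out \setminus S]$ containing $u_i$. Then $D_i \subseteq C_i$. Because $v \in \Rest$ is adjacent to $u_i \in D_i$, we have $v \in N(D_i) \cap \Rest$; by \cref{lem:oneneighborbig}, $v$ is the \emph{unique} $\Rest$-neighbor of $D_i$, and since $v \in X$ while $C_i \subseteq V \setminus X$, $D_i$ has no $\Rest$-neighbor inside $C_i$. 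Any $C_i$-neighbor of $D_i$ lying outside $D_i$ therefore cannot be in $\Rest$, in $\In$ (as $\In \cap C_i = \emptyset$), or in $\Out \setminus S$ (by maximality of $D_i$ in $G[\Out \setminus S]$), and hence must lie in $S \setminus X \subseteq \Out$. Finally, since $G[C_i]$ is connected and $D_i \subsetneq C_i$ (because $D_i \cap S = \emptyset \neq C_i \cap S$), at least one such $C_i$-neighbor of $D_i$ outside $D_i$ exists, witnessing that $u_i$'s $G[\Out]$-component contains a vertex of $S$. This completes the proof of the central claim, and therefore the invariants are maintained throughout; at termination $\In = X$, so the algorithm outputs $X$.

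Finally, the number of bits consumed equals the number of Step~\ref{big:step1} applications, which I split by bit value. Each bit-$1$ application moves a fresh vertex of $X \setminus S$ to $\In$; as $|X \setminus S| \le |X| = k+1 \le \fvn(G)+1$ (using the standing assumption $k \le |S|$), there are at most $\fvn(G)+1$ of them. Each bit-$0$ application strictly decreases the count of $S$-components of $G[\Out]$ by merging at least two of them into one; since the count starts at $|S \setminus X| \le \fvn(G)$ and never increases, there are at most $\fvn(G)$ such applications. In total at most $2\fvn(G)+1$ bits are consumed, so a bit string $s$ of the required length exists. The main obstacle in the argument is the structural step of the third paragraph, where \cref{lem:oneneighborbig} is combined with Condition~\ref{cond:slim:6} to guarantee that, precisely at the decisive moment when $v$ is about to be processed, every $\Out$-neighbor of $v$ lying in an $S$-component of $G-X$ is already contained in an $S$-component of $G[\Out]$.
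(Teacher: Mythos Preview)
Your proof is correct and follows essentially the same approach as the paper: you maintain the invariants $\In\subseteq X$ and $\Out\cap X=\emptyset$, use Condition~\ref{cond:slim:6} of \cref{def:slimtd} together with \cref{lem:oneneighborbig} to rule out Step~\ref{big:step2} ever touching a vertex of $X$, and bound the number of bits by the same $|\In|$-increase versus $S$-component-merge accounting. The only cosmetic difference is that you argue the key structural step in the direct direction (from slimness to applicability of Step~\ref{big:step1}) rather than the paper's contrapositive, and you spell out more carefully why an $\Out$-neighbor of $v$ lying in an $S$-component of $G-X$ already lies in an $S$-component of $G[\Out]$.
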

\begin{proof}
To prove the statement, we first show that whenever on input $(X^S,s)$ \cref{alg:big} moves a vertex $v$ into the set $\Out$ in Step~\ref{big:step2}, then for all \slim $S$-nice tree decompositions $\mathcal{T}$ of $G$ contain a full join tree $T'$ where all nodes have bag $X$ and $X\cap S=X^S$, we have that $v\notin X$. We prove this by induction on the iterations of the algorithm.

Let $(\Rest,\In,\Out)$ be a 3-partition of $V$ that appears during some iteration of \cref{alg:big} on input $(X^S,s)$. By induction hypothesis, we have that $\In\subseteq X$ and $\Out\cap X=\emptyset$. Note that initially, this is clearly true.
Let $v\in \Rest$ be a vertex that will be added to $\Out$ in Step~\ref{big:step2} of \cref{alg:big}.
Assume for contradiction that there is a \slim $S$-nice tree decompositions $\mathcal{T}$ of $G$ that contains a full join tree $T'$ where all nodes have bag $X$ with $X\cap S=X^S$ such that $v\in X$. We have that the degree of $v$ in $\Rest$ is at most one. Since Step~\ref{big:step2} of \cref{alg:big} is not applicable, we have that there is at most one $S$-component $C$ in $G[\Out]$ such that $v\in N(V(C))$.

We estimate the number of $S$-components $C_i$ in $G-X$ such that $v\in N(C_i)$ Let $C_1$ be the $S$-component that contains $C$. Now consider the neighbors of $v$ outside of $X$. We can categorize then as follows. We know that $N(v)\cap\Rest=\{u\}$ for some $u\in \Rest$. Let $C_2$ denote the connected component in $G_X$ that contains $u$. Let $C_i$ with $3\le i\le\ell$ denote the $F$-components in $G[\Out]$ such that $v\in N(C_i)$ for $3\le i\le\ell$. By \cref{lem:oneneighborbig} we have that $|N(C_i)\cap \Rest|\le 1$ for all $3\le i\le\ell$ which means that $N(C_i)\cap \Rest=\{v\}$. If follows that all $C_i$ with $3\le i\le\ell$ are $F$-components in $G-X$. We can conclude that there are at most two $S$-components in $G-X$, namely $C_1$ and $C_2$ such that $v$ in contained in their neighborhood. This is a contradiction to the assumption that the tree decomposition $\mathcal{T}$ is \slim. In particular, it contradicts Condition~\ref{cond:slim:6} of \cref{def:slimtd}.

It follows that for every \slim $S$-nice tree decomposition $\mathcal{T}=(T,\{X_t\}_{t\in V(T)})$ of $G$ and every full join tree $T'$ in $\mathcal{T}$ where all nodes in $T'$ have bag $X$, there exist a bit string $s\in \{0,1\}^{\ell}$ for some $\ell \le n$ such that $\bigbag(X\cap S,s)=X$. It remains to show that $\ell\le 2\fvn(G)+1$. Note that in every execution of Step~\ref{big:step1} of \cref{alg:big} either the size of $\In$ is increased by one, or the number of $S$-components in $G[\Out]$ is decreased by one. We know that $|X|=k+1$ and $k\le|S|=\fvn(G)$. Furthermore, since by definition, the vertex sets of $S$-components are disjoint, we can have at most $|S|=\fvn(G)$ different $S$-components. It follows that after at most $2\fvn(G)+1$ executions of Step~\ref{big:step1} of \cref{alg:big} on input $(X_S,s)$ we have that $\In=X$. Hence, only the first $2\fvn(G)+1$ bits of $s$ are relevant for the computation. 
\end{proof}

It is straightforward to see that \cref{lem:bigcorrect1} directly implies \cref{lem:alwaysbigjoin}.


\subsection{Small \boldmath$S$-Operations}\label{sec:smallops}
As in the previous section, to make notation more convenient, assume that the input state is $\phi=(\tau_-,\hat{L}^S, \hat{X}^S, \hat{R}^S,\tau_+)$. 
Our goal is to compute the bag $X^\phi_{\max}$ for states $\phi$ where $\tau_+$ is not a big $S$-operation, where $\tau_+=\extendedforget(v,d,f,\tau)$ with $f=\false$ (by \cref{def:extendedforget}, we then also always have that $\tau=\void$), and where $\tau_+=\extendedforget(v,d,f,\tau)$ with $f=\true$ and $\tau$ is not a big $S$-operation. 
In this case we say that $\tau_+$ is a small $S$-operation. 

\subparagraph{Setup.} As described at the beginning of \cref{sec:bags}, we compute a set $X$ of vertices that is a candidate for a \emph{small target node} (or just \emph{target node}) ${t^\star}$, which is not necessarily the same node as $t_{\max}$. Formally, the target node is defined as follows.
Let $\mathcal{T}$ be a \slim $S$-nice tree decomposition of $G$ that contains a directed path for $S$-trace $(\hat{L}^S, \hat{X}^S, \hat{R}^S)$, or if $\hat{L}^S=\emptyset$ (and $\tau_-=\void$) let $\mathcal{T}$ be a \slim $S$-nice tree decomposition of $G$ that contains an $S$-bottom node that has an $S$-child (which we also call~$t_{\max}$) with $S$-trace $(\hat{L}^S, \hat{X}^S, \hat{R}^S)$.
We assume that neither $t_{\max}$ nor the parent of $t_{\max}$ are part of a full join tree, as in this case is handled in \cref{sec:bigops}.
The \emph{small target node} ${t^\star}$ is defined as follows.
\begin{itemize}
\item If $\tau_+=\smallintroduce(v,d)$ or $\tau_+=\smalljoin(X^S_t,X_1^S,X_2^S,L_{1}^S, L_{2}^S,d)$, then ${t^\star}$ is the parent of $t_{\max}$.
\item If $\tau_+=\extendedforget(v,d,f,\tau)$ and $\tau$ is not a $\bigjoin$ $S$-operation, then ${t^\star}=t_{\max}$.
\end{itemize}

If $\tau_-=\void$, then $t_{\min}$ is not defined.
Otherwise, \cref{def:slimtd} (the definition of \slim) implies that either $t_{\min}=t_{\max}=t^{\star}$, this happens if $\tau_+=\extendedforget(v,d,f,\tau)$ and bag of $t_{\max}$ is full, or we have that $t_{\max}$ and $t_{\min}$ are different, 
and $t_{\max}$ is not the parent of $t_{\min}$.
In the former case, we have also that $X^\phi_{\max}=X^\phi_{\min}$ hence, we can use the computation of $X^\phi_{\min}$ to obtain $X^\phi_{\max}$. This is explained in \cref{sec:pieces} and uses information in $\tau_-$. This is also the case mentioned in the beginning of the section where the computation of $X^\phi_{\max}$ depends on $\tau_-$.
From now on, we assume that we are not in this case, that is, we assume that if $t_{\min}$ is defined, then $t_{\max}$ and $t_{\min}$ are different 
and $t_{\max}$ is not the parent of $t_{\min}$.

If $t_{\min}$ is defined and $\tau_+\neq\smalljoin(X^S_t,X_1^S,X_2^S,L_{1}^S, L_{2}^S,d)$, then we denote with $t^{\star}_{d_1}$ the closest descendant of $t^\star$ that is parent of an $S$-bottom node. 
If $t_{\min}$ is defined and $\tau_+=\smalljoin(X^S_t,X_1^S,X_2^S,L_{1}^S, L_{2}^S,d)$, then we denote with $t^{\star}_{d_1}$ and $t^{\star}_{d_2}$ the closest descendants of the two $S$-children of $t^\star$, respectively, that are parent of an $S$-bottom node. 
Furthermore, we denote by $t^{\star}_a$ the closest ancestor of $t^\star$ that is parent of an $S$-bottom node. Note that we have that $t^{\star}_{d_1}$ or $t^{\star}_{d_2}$ is the parent of~$t_{\min}$, or, in other words, that the parent of $t_{\min}$ is not $t^\star$. Furthermore, note that in the case of $\tau_+=\smallintroduce(v,d)$ and $\tau_+=\smalljoin(X^S_t,X_1^S,X_2^S,L_{1}^S, L_{2}^S,d)$ we have that~$t^\star$ is an $S$-bottom node and hence $t^{\star}_a$ is the parent of $t^\star$.

To make notation simpler, we drop the subscript ${t^\star}$ and let $(L^S, X^S, R^S)$ be the $S$-trace of the target node ${t^\star}$. We get the following using \cref{obs:introduce,obs:forget,obs:join}.
\begin{itemize}
    \item Let $\tau_+=\smallintroduce(v,d)$. Then $(L^S, X^S, R^S)=(\hat{L}^S, \hat{X}^S\cup\{v\}, \hat{R}^S\setminus\{v\})$.
    \item Let $\tau_+=\smalljoin(X^S_t,X_1^S,X_2^S,L_{1}^S, L_{2}^S,d)$. Then by \cref{def:smalljoin} and \cref{obs:join} we either have that 
\begin{itemize}    
    \item $\hat{L}^S=L_{1}^S$, then $(L^S, X^S, R^S)=(\hat{L}^S\cup L_{2}^S, X^S_t, \hat{R}^S\setminus (L_{2}^S\cup X^S_t))$, or we have that 
    \item $\hat{L}^S=L_{2}^S$, then $(L^S, X^S, R^S)=(\hat{L}^S\cup L_{1}^S, X^S_t, \hat{R}^S\setminus (L_{1}^S\cup X^S_t))$.
    \end{itemize}
    \item If $\extendedforget(v,d,f,\tau)$ and $\tau$ is not a $\bigjoin$ $S$-operation, then $(L^S, X^S, R^S)=(\hat{L}^S, \hat{X}^S, \hat{R}^S)$.
\end{itemize}

Furthermore, we define sets $L_{1}^S$ and $L_{2}^S$ as follows. 
\begin{itemize}
\item If $\tau_+=\smalljoin(X^S_t,X_1^S,X_2^S,L_{1}^S, L_{2}^S,d)$, then $L_{1}^S$ and $L_{2}^S$ are the respective sets in~$\tau_+$.
\item If $\tau_+=\extendedforget(v,d,f,\tau)$ with $S$-operation $\tau=\smalljoin(X^S_t,X_1^S,X_2^S,L_{1}^S, L_{2}^S,d)$, then $L_{1}^S$ and $L_{2}^S$ are the respective sets in $\tau$.

\item In all other cases, we set $L_{1}^S=L^S$ and $L_{2}^S=\emptyset$.
\end{itemize}


Finally, we can observe the following useful properties of target nodes.
\begin{observation}\label{obs:smallneighbors}
Let $\mathcal{T}=(T,\{X_t\}_{t\in V(T)})$ be a \slim $S$-nice tree decomposition of $G$ with width $k$ that contains a directed path of length at least three with $S$-trace $(\hat{L}^S, \hat{X}^S, \hat{R}^S)$, or if $\hat{L}^S=\emptyset$ then $\mathcal{T}$ contains an $S$-bottom node that has an $S$-child with $S$-trace $(\hat{L}^S, \hat{X}^S, \hat{R}^S)$, such that neither $t_{\max}$ nor the parent of $t_{\max}$ are contained in a full join tree. Let ${t^\star}$ denote the target node in $\mathcal{T}$.
\begin{enumerate}
	\item Node ${t^\star}$ has $S$-trace $(L^S, X^S, R^S)$. 
	\item The bag $X_{t_p}$ of the parent $t_p$ of ${t^\star}$ is not full and  
    $X_{t_p}=X_{t^\star_a}\subseteq X_{t^\star}$.
	\item If ${t^\star}$ has exactly one $S$-child $t_1$, then the bag $X_{t_1}$ of $t_1$ is not full and $X_{t_1}\subseteq X_{t^\star}$. 
	\item If ${t^\star}$ has exactly two $S$-children $t_1, t_2$, then 
    the bag $X_{t_1}$ of $t_1$ is not full,
	the bag $X_{t_2}$ of $t_2$ is not full, $X_{t_1}\subseteq X_{t^\star}$, and $X_{t_2}\subseteq X_{t^\star}$. 
\end{enumerate}
\end{observation}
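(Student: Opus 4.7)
The plan is to prove the four claims by case analysis on the form of $\tau_+$, exploiting the structural conditions of being $S$-nice (Condition~1 of \cref{def:snicetd}) and \slim (Conditions~2 and~3 of \cref{def:slimtd}) together with the trace-recurrence observations.

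Claim~1 is immediate from the setup paragraph preceding the observation: in each of the three sub-cases ($\smallintroduce$, $\smalljoin$, $\extendedforget$), the derivation of $(L^S, X^S, R^S)$ from $(\hat{L}^S, \hat{X}^S, \hat{R}^S)$ is exactly what \cref{obs:introduce}, \cref{obs:join}, or \cref{obs:forget} yields when applied to the $S$-bottom node that admits $\tau_+$, so the $S$-trace of $t^\star$ is as claimed.

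For Claim~2 we treat the three sub-cases separately. When $\tau_+$ is a $\smalljoin$ $S$-operation, $t^\star$ is itself $S$-bottom with two $S$-children and is not in a full join tree by assumption, so Condition~2 of \cref{def:slimtd} directly gives that $X_{t_p}$ is not full with $|X_{t_p}|\le|X_{t^\star}|$, and Condition~1 of $S$-nice upgrades this to $X_{t_p}\subseteq X_{t^\star}$; since $t^\star$ is $S$-bottom we have $t^\star_a=t_p$. When $\tau_+$ is $\smallintroduce(v,d)$, $t^\star$ is $S$-bottom with a single $S$-child; if $X_{t^\star}$ is not full, Condition~3 of \cref{def:slimtd} gives $X_{t_p}=X_{t^\star}$; if $X_{t^\star}$ is full, then $X_{t_p}$ cannot also be full, for otherwise Condition~1 of $S$-nice would force $X_{t_p}=X_{t^\star}$ and the two nodes would form a full join tree containing $t^\star$, contradicting the premise, and Condition~1 then yields $X_{t_p}\subseteq X_{t^\star}$. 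The most delicate sub-case is $\tau_+=\extendedforget(v,d,f,\tau)$, where $t^\star=t_{\max}$ is not $S$-bottom because the directed path has length at least three. Here $t_p$ admits $\forget(v)$, so \cref{obs:forget} together with Condition~1 of $S$-nice give $X_{t^\star}=X_{t_p}\cup\{v\}$, hence $X_{t_p}\subsetneq X_{t^\star}$, and $X_{t_p}$ cannot be full lest the width exceed $k$. Applying Condition~3 of \cref{def:slimtd} to the $S$-bottom node $t_p$ (which has non-full bag) then yields $X_{t_p}=X_{t_{pp}}$. Finally we verify $t^\star_a=t_{pp}$: the only $S$-child of $t_p$ is $t^\star$, which is not $S$-bottom, and non-$S$-children of $t_p$ have $L^S=\emptyset$ and hence are not $S$-bottom either, so $t_p$ is not parent of any $S$-bottom node, whereas $t_{pp}$ is.

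For Claims~3 and~4, when $\tau_+=\smallintroduce(v,d)$, Condition~1 of $S$-nice together with $X^S_{t^\star}=X^S_{t_1}\cup\{v\}$ force $X_{t_1}\subsetneq X_{t^\star}$ with $|X_{t^\star}|=|X_{t_1}|+1$, and since the width is $k$ the bag $X_{t_1}$ is not full. When $\tau_+$ is a $\smalljoin$ $S$-operation, Condition~2 of \cref{def:slimtd} gives both $S$-children non-full with $|X_{t_i}|\le|X_{t^\star}|$, and Condition~1 of $S$-nice promotes this to $X_{t_i}\subseteq X_{t^\star}$. In the $\extendedforget$ sub-case any on-path $S$-child $t_1$ of $t^\star$ cannot have the same full bag as $t^\star$, since $\MergeFullNodes$ of \cref{def:mergefullnodes} (ensured by $\MakeSlim$) would have merged them, and Condition~1 of $S$-nice then gives the desired size and inclusion properties. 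The main obstacle is the $\extendedforget$ sub-case of Claim~2: one must identify $t^\star_a$ as the grandparent $t_{pp}$ of $t^\star$ and apply Condition~3 of \slim to $t_p$ rather than to $t^\star$, which is where the ``parent of an $S$-bottom node'' qualifier in the definition of $t^\star_a$ interacts most delicately with the \slim conditions.
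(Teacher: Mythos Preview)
Your treatment of Claims~1, 2, and~4 is essentially the paper's argument, and your identification of $t^\star_a$ as the grandparent in the $\extendedforget$ case is correct and matches the paper's reasoning via Condition~\ref{cond:slim:3}.

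There is, however, a genuine gap in Claim~3 for the $\extendedforget$ sub-case. Your $\MergeFullNodes$ argument only rules out the situation where $X_{t_1}=X_{t^\star}$ with both bags full; it does \emph{not} exclude $X_{t_1}\supsetneq X_{t^\star}$ with $|X_{t_1}|=|X_{t^\star}|+1$, which Condition~1 of $S$-niceness perfectly allows (take $X_{t_1}=X_{t^\star}\cup\{w\}$ for some $w\notin S$; the $S$-parts agree since $t_1$ lies on the same directed path). In that scenario $X_{t_1}$ may well be full and is certainly not a subset of $X_{t^\star}$, so ``Condition~1 of $S$-nice then gives the desired size and inclusion properties'' is not justified. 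A secondary issue is that \slim is a definitional property, not a certificate that $\MergeFullNodes$ was applied, so appealing to that modification is methodologically shaky.

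The paper closes this gap differently: it first observes that under the standing assumption excluding the $t_{\max}=t_{\min}$ case, the bag $X_{t^\star}=X_{t_{\max}}$ is \emph{not} full in the $\extendedforget$ sub-case. It then invokes Condition~\ref{cond:slim:35} of \cref{def:slimtd} (applied to the $S$-bottom node $t_p$ admitting $\forget(v)$ and its $S$-child $t^\star$), which directly yields $X_{t_1}=X_{t^\star}$, hence $X_{t_1}$ is not full and trivially contained in $X_{t^\star}$. You should replace the $\MergeFullNodes$ appeal by this two-step argument.
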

\begin{proof}
Properties one to four follow from the definition of small target nodes and \cref{def:smallintro,def:smalljoin,def:extendedforget,def:slimtd}. For the second property, we need Condition~\ref{cond:slim:3} of \cref{def:slimtd}: Note that if $t_p=t^\star_a$, then we obviously have $X_{t_p}=X_{t^\star_a}$. If $t_p\neq t^\star_a$, then by the definition of the small target node $t^\star$, we have that $t_p$ is an $S$-bottom node that admits $S$-operation $\extendedforget(v,d,f,\tau)$ and $t^\star_a$ is the parent of $t_p$. From Condition~\ref{cond:slim:3} of \cref{def:slimtd} it follows that $X_{t_p}=X_{t^\star_a}$.
To see that $X_{t_p}\subseteq X_{t^\star}$ consider the following: If the bag of $t^\star$ is full, then this follows immediately. If the bag of bag of $t^\star$ is not full and $t^\star$ is an $S$-bottom node, then it follows from Condition~\ref{cond:slim:3} of \cref{def:slimtd}. Finally, if $t^\star$ is not an $S$-bottom node, then by the definition of the small target node we have that $t_p$ is an $S$-bottom node that admits $S$-operation $\extendedforget(v,d,f,\tau)$ and $X_{t_p}\cup\{v\}= X_{t^\star}$.
For the third property consider two cases: We can have that $\tau_+=\smallintroduce(v,d)$ and that $t^\star=t_p$, then the $S$-child of $t^\star$ and we have $X_{t_1}= X_{t^\star}\setminus \{v\}$ with $v\in X_{t^\star}$.
The other case is that we have  $\tau_+=\extendedforget(v,d,f,\tau)$ and $t^\star=t_{\max}$. By the assumption described at the start of the subsection we have that then the bag of $t_{\max}$ is not full.
By Condition~\ref{cond:slim:35} of \cref{def:slimtd} we then have that $X_{t_1}= X_{t^\star}$. The fourth property follows directly Condition~\ref{cond:slim:2}.
\end{proof}

\subparagraph{Algorithm.}
Now, we describe the algorithm to compute $X$. The algorithm, informally speaking, works as follows. 
It uses the additional integer $d$ from the extended $S$-operation to make an initial ``guess'' of up to three vertices which are contained in the bag $X$ of $t^\star$, but not in the bag of the parent or the $S$-children, respectively. By \cref{obs:smallneighbors} we know that if the bag $X$ is full, then those vertices must exist. Using this guess, we compute a candidate for bag $X$ in a greedy fashion.
The guess also lets us determine candidates for the bags of the parent of $t^\star$ and its $S$-children.
We can prove that there is a tree decomposition for $G$ with width $k$ using these candidates which is $S$-nice and \slim, but this tree decomposition is not necessarily \topheavy. However, {\topheavy}ness is crucial for proving correctness of the algorithm. To this end, the algorithm modifies the candidate sets such that there is a tree decomposition for $G$ with width $k$ using the modified sets which is $S$-nice, \slim, and \topheavy for $t^\star_a$. 

We use the integer $d$ to obtain the initial guess as follows. Let $\mathcal{G}=(V\cup\{\bot\})^3$ be the set of all triples of vertices together with a special symbol $\bot$. Assume that the triples in $\mathcal{G}$ are ordered in some arbitrary but fixed way. Note that there are $(n+1)^3$ different triples. Let $(g_1,g_2,g_3)$ be the $d$th triple in $\mathcal{G}$ according to the ordering. We define the sets $D_R,D_{L_1},D_{L_2}$ as follows.
\begin{itemize}
\item If $g_1\neq\bot$, then we set $D_R=\{g_1\}$. Otherwise, we set $D_R=\emptyset$.
\item If $g_2\neq\bot$, then we set $D_{L_1}=\{g_2\}$. Otherwise, we set $D_{L_1}=\emptyset$.
\item If $g_3\neq\bot$, then we set $D_{L_2}=\{g_3\}$. Otherwise, we set $D_{L_2}=\emptyset$.
\end{itemize}

\begin{algorithm}[$\smallbag$]{Sets $R^S,X^S,L_1^S,L_2^S\subseteq S$, and $D_R,D_{L_1},D_{L_2}\subseteq V$.}{Sets $X,X_R,X_{L_1},X_{L_2}\subseteq V$.}\label{alg:smallwrapper}
Set $X=\smallbagc(R^S,X^S,L_1^S,L_2^S,D_R,D_{L_1},D_{L_2})$. 
Set $X_R=X\setminus D_R$, $X_{L_1}=X\setminus D_{L_1}$, and $X_{L_2}=X\setminus D_{L_2}$. 
Perform the first applicable step until no changes occur:
\begin{enumerate}

\label{wrapper2aa}
\item If there is an $F$-component $C$ in $G-X_R$, then remove all vertices in $V(C)$ from the set $X$, $X_R$, $X_{L_1}$, and $X_{L_2}$.\label{wrapper2a}
\item If there is $v\in X_R\setminus S$ such that for every $u\in N(v)$ it holds that $u\in X_R$ or $u$ is not connected to $L_1^S\cup L_2^S\cup((X\setminus X_R)\cap S)$ in $G-X_R$, then remove $v$ from $X$, $X_R$, $X_{L_1}$, and $X_{L_2}$.\label{wrapper2b}
\item If there is $v\in X_R\setminus S$ and $u\in N(v)\setminus (S\cup X_R)$ such that for every $u'\in N(v)\setminus\{u\}$ it holds that $u'\in X_R$ or $u'$ is not connected to $L_1^S\cup L_2^S\cup((X\setminus X_R)\cap S)$ in $G-X_R$, and $u$ is connected to $L_1^S\cup L_2^S\cup((X\setminus X_R)\cap S)$ in $G-X_R$, then replace $v$ with $u$ in all sets $X$, $X_R$, $X_{L_1}$, and $X_{L_2}$.\label{wrapper2c}
\end{enumerate}  
Output $X$, $X_R$, $X_{L_1}$, and $X_{L_2}$.
\end{algorithm}


Note that we slightly overload notation here in comparison to the definition of the function $\smallbag$ in \cref{sec:extendedops} (instead of an integer $d$, it takes the ``decoded'' dets $D_R, D_{L_1}, D_{L_2}$ as inputs).

Now we give a description of the subroutine $\smallbagc$ called in the beginning of \cref{alg:smallwrapper}.
Similar to \cref{alg:big}, the subroutine will maintain a 3-partition $(\Rest,\In,\Out)$ of the vertex set $V$, where, intuitively,
\begin{itemize}
    \item $\In$ contains vertices that we want to be in $X$,
    \item $\Out$ contains vertices that we do not want to be in $X$, and
    \item $\Rest$ contains vertices where we have not decided yet whether we want them to be in $X$ or not.
\end{itemize}
Furthermore, using the sets $L_1^S$, $L_2^S$, and $R^S$, we introduce the following terminology. 
Let $C$ be a connected component in $G[V']$ for some $V'\subseteq V\setminus S$. 
\begin{itemize}
    \item If we have $N(V(C))\cap L_1^S=\emptyset$, $N(V(C))\cap L_2^S=\emptyset$, and $N(V(C))\cap R^S=\emptyset$, 
    then we call~$C$ an \emph{$F$-component}. 
    
    Note that this is not equivalent to the definition of $F$-component in \cref{sec:useful}, but for convenience of presentation, we overwrite the terminology for this section.
    All occurrences of ``$F$-component'' from now on in this section refer to the above definition.
    \item If we have $N(V(C))\cap L_1^S\neq\emptyset$, $N(V(C))\cap L_2^S=\emptyset$, and $N(V(C))\cap R^S=\emptyset$, 
    then we call~$C$ an \emph{$L_1$-component}.
    \item If we have $N(V(C))\cap L_1^S=\emptyset$, $N(V(C))\cap L_2^S\neq\emptyset$, and $N(V(C))\cap R^S=\emptyset$, 
    then we call~$C$ an \emph{$L_2$-component}.
    \item If we have $N(V(C))\cap L_1^S=\emptyset$, $N(V(C))\cap L_2^S=\emptyset$, and $N(V(C))\cap R^S\neq\emptyset$, 
    then we call~$C$ an \emph{$R$-component}.
    \item If none of the above applies, then we call $C$ a \emph{mixed component}.
\end{itemize}
It is easy to see that every connected component of $G[V']$ falls into one of the above categories, for every choice of $V'\subseteq V\setminus S$.

\begin{algorithm}[$\smallbagc$]{Sets $R^S,X^S,L_1^S,L_2^S\subseteq S$, and $D_R,D_{L_1},D_{L_2}\subseteq V$.}{A set $X\subseteq V$.}\label{alg:small}
Set $\In=X^S\cup D_R\cup D_{L_1}\cup D_{L_2}$, $\Out=S\setminus \In$, and $\Rest=V\setminus (\In\cup\Out)$.

Perform the first applicable of the following steps until $\Rest$ is empty.  Break all ties in an arbitrary but fixed way.
\begin{enumerate}
\item If there is a vertex $v$ in $\Rest$ that is contained in a mixed component in $G[(\Out\cup\{v\})\setminus S]$, then remove $v$ from $\Rest$ and add~$v$ to~$\In$.\label{regular:step1}
\item If there is a vertex $v$ in $\Rest$ that is contained in an $H$-component $C_H$ in $G[(\Out\cup\{v\})\setminus S]$ for $H\in\{L_1,L_2,R\}$ and $N(C_H)\cap D_H\neq \emptyset$, then remove $v$ from $\Rest$ and add~$v$ to~$\In$.\label{regular:step2}
\item If there is a vertex $v$ in $\Rest$ that is contained in an $R$-component in $G[(\Out\cup\{v\})\setminus S]$ and that has degree at most one in $G[\Rest]$, then remove $v$ from $\Rest$ and add~$v$ to~$\Out$.\label{regular:step3}
\item If there is a vertex $v$ in $\Rest$ that is contained in an $H$-component in $G[(\Out\cup\{v\})\setminus S]$ for $H\in\{L_1,L_2\}$ and that has degree at most one in $G[\Rest]$, then remove $v$ from $\Rest$ and add~$v$ to~$\Out$.\label{regular:step4}
\item If there is a vertex $v$ in $\Rest$ that has degree one in $G[\Rest]$ and the connected component~$C$ of $G[(\Out\cup\{v\})\setminus S]$ that contains $v$ has the property that $N(V(C))\cap \In\neq \In$, then remove $v$ from $\Rest$ and add~$v$ to~$\Out$.\label{regular:step5}
\item If there is a vertex $v$ in $\Rest$ that has degree at most one in $G[\Rest]$, then remove $v$ from $\Rest$ and add~$v$ to~$\Out$.\label{regular:step6}
\end{enumerate}
Output $\In$.
\end{algorithm}

Since $G[\Rest]$ is initially a forest and we never add new vertices to $\Rest$, we have that as long as $\Rest$ is non-empty, at least one of the above four steps of \cref{alg:small} is applicable. 
We can observe that the computation takes polynomial time.
\begin{observation}\label{obs:algo2running}
\cref{alg:smallwrapper} runs in polynomial time.
\end{observation}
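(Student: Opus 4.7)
The plan is to bound the number of loop iterations of both algorithms and show that each iteration is polynomial-time implementable.

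For $\smallbagc$ (\cref{alg:small}), each of the six rules moves exactly one vertex from $\Rest$ into either $\In$ or $\Out$, so $|\Rest|$ strictly decreases by one each iteration; since $|\Rest| \le n$ at the start and the loop runs while $\Rest$ is non-empty, there are at most $n$ iterations. Deciding which rule applies in a given iteration reduces to computing, for each candidate $v \in \Rest$, the connected component of $G[(\Out \cup \{v\}) \setminus S]$ containing $v$, classifying it (as $F$-, $L_1$-, $L_2$-, $R$- or mixed) via its neighborhoods in $L_1^S$, $L_2^S$, $R^S$, and in one case testing connectivity into $\In$; all of this runs in polynomial time. Consequently, $\smallbagc$ terminates in polynomial time.

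The wrapper (\cref{alg:smallwrapper}) calls $\smallbagc$ once, initializes $X_R, X_{L_1}, X_{L_2}$ as subsets of $X$, and then iterates over three rules until no changes occur. Rules~\ref{wrapper2a} and~\ref{wrapper2b} only ever delete vertices from the four sets $X, X_R, X_{L_1}, X_{L_2}$; interpreting ``no change'' as the termination condition, each effective firing strictly decreases the sum $|X|+|X_R|+|X_{L_1}|+|X_{L_2}|$, which is bounded by $4n$. Hence these two rules fire at most $O(n)$ times in total, and each firing is polynomial-time implementable by computing the connected components of $G-X_R$ and checking the stated connectivity conditions.

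The delicate part is Rule~\ref{wrapper2c}, which swaps a vertex $v \in X_R \setminus S$ with a neighbor $u \notin X_R \cup S$ and therefore preserves all four cardinalities. To bound its number of applications, the plan is to exhibit a potential function
\[
\Psi(X_R) \;=\; \bigl|\{\, w \in V \setminus X_R : w \text{ is connected to } L_1^S \cup L_2^S \cup ((X \setminus X_R) \cap S) \text{ in } G - X_R \,\}\bigr|,
\]
and show that each application of Rule~\ref{wrapper2c} strictly decreases $\Psi$. The key structural observation is that the side condition of Rule~\ref{wrapper2c} forces every neighbor of $v$ other than $u$ to be either already in $X_R$ or disconnected from the target set in $G - X_R$; a short case analysis then shows that, after setting $X_R' := (X_R \setminus \{v\}) \cup \{u\}$, neither $v$ itself nor any such remaining neighbor of $v$ is newly connected to the target in $G - X_R'$ (any such ``new'' path would have to route through $v$ and one of its neighbors, contradicting the side condition). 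Hence $u$, which was in the target component of $G - X_R$, leaves that component without anything taking its place, and $\Psi$ drops by at least one. Since $0 \le \Psi \le n$, at most $n$ applications of Rule~\ref{wrapper2c} can occur between any two firings of Rules~\ref{wrapper2a} or~\ref{wrapper2b}, and combined with the $O(n)$ bound on the latter, the wrapper performs $O(n^2)$ iterations, each polynomial-time implementable. Therefore \cref{alg:smallwrapper} runs in polynomial time. The hard part will be verifying the $\Psi$-monotonicity cleanly, in particular ruling out any ``spontaneous'' new path from a former neighbor of $v$ to the target that exploits the removal of $v$ from $X_R$.
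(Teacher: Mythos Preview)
The paper states this observation without proof, so there is nothing to compare against; your plan is correct and supplies the missing argument. For $\smallbagc$ the $|\Rest|$-decrement is immediate, and for Rules~\ref{wrapper2a} and~\ref{wrapper2b} of the wrapper the cardinality bound works since the invariant $X_R,X_{L_1},X_{L_2}\subseteq X$ is preserved by every rule and each effective firing shrinks~$X$.

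Your potential $\Psi$ for Rule~\ref{wrapper2c} does strictly decrease, and the ``hard part'' you flag goes through as follows. The target set $L_1^S\cup L_2^S\cup((X\setminus X_R)\cap S)$ is unchanged by the swap since $u,v\notin S$. Vertex $u$ leaves the count. No vertex (including $v$) can gain a connection: any path in $G-X_R'$ to the target that is not already a path in $G-X_R$ must pass through $v$, exit via some neighbour $w'\neq u$ with $w'\notin X_R$, and then reach the target inside $(G-X_R')\setminus\{v\}=(G-X_R)\setminus\{u\}\subseteq G-X_R$ without revisiting $v$; but the side condition guarantees that every such $w'$ is disconnected from the target in $G-X_R$, a contradiction. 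Hence $\Psi(X_R')\le\Psi(X_R)-1$, and the $O(n^2)$ iteration bound follows.
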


\subparagraph{Correctness.}
Now we show that \cref{alg:smallwrapper} is correct. 
Assume there exists a \slim $S$-nice tree decomposition $\mathcal{T}=(T,\{X_t\}_{t\in V(T)})$ of $G$ such that the following holds.
\begin{itemize}
    \item $\mathcal{T}$ contains a directed path of length at least three with $S$-trace $(\hat{L}^S, \hat{X}^S, \hat{R}^S)$, or if $\hat{L}^S=\emptyset$ then $\mathcal{T}$ contains an $S$-bottom node that has an $S$-child (called~$t_{\max}$) with $S$-trace $(\hat{L}^S, \hat{X}^S, \hat{R}^S)$, such that neither $t_{\max}$ nor the parent of $t_{\max}$ are contained in a full join tree.
    \item $X^S\cup D_R\cup D_{L_1}\cup D_{L_2}\subseteq X_{t^\star}$ and $X_{t^\star}\cap (S\setminus X^S)=\emptyset$, 
    \item if $t^\star$ has a parent $t_p$, then $X_{t_p}=X_{t^\star}\setminus D_R$, 
    \item if $t^\star$ has one $S$-child $t_1$, then $X_{t_1}=X_{t^\star}\setminus D_{L_1}$, and 
    \item if $t^\star$ has two $S$-children $t_1$ and $t_2$, then $X_{t_1}=X_{t^\star}\setminus D_{L_1}$ and $X_{t_2}=X_{t^\star}\setminus D_{L_2}$.
\end{itemize} 
Then we say that the top operation of $\phi$ \emph{contains the correct guess} for $\mathcal{T}$.
Now we show the following. 

\begin{proposition}\label{prop:smallcorrect2}
Assume there exists a \slim $S$-nice tree decomposition $\mathcal{T}=(T,\{X_t\}_{t\in V(T)})$ of $G$ with width $k$ that contains a directed path of length at least three with $S$-trace $(\hat{L}^S, \hat{X}^S, \hat{R}^S)$, or if $\hat{L}^S=\emptyset$ then $\mathcal{T}$ contains an $S$-bottom node that has an $S$-child with $S$-trace $(\hat{L}^S, \hat{X}^S, \hat{R}^S)$, such that neither $t_{\max}$ nor the parent of $t_{\max}$ are contained in a full join tree, and with target node ${t^\star}$ such that the following holds.
\begin{itemize}
\item $|X_{t^\star}|$ is minimal, that is, for all siblings $\mathcal{T}'$ of $\mathcal{T}$ such that all of the above holds and if $\mathcal{T}$ is \topheavy for a node $\hat{t}$, then the subtrees rooted at $\hat{t}$ in $\mathcal{T}$ and $\mathcal{T}'$ are the same, we have that $|X_{t^\star}|\le |X'_{t^\star}|$,
\item if $t^\star_{d_1}$ and $t^\star_{d_2}$, respectively, exist, then $\mathcal{T}$ is \topheavy for $t^\star_{d_1}$ and $t^\star_{d_2}$, respectively, 
    \item $X^S\cup D_R\cup D_{L_1}\cup D_{L_2}\subseteq X_{t^\star}$ and $X_{t^\star}\cap (S\setminus X^S)=\emptyset$, and
    \item $\phi$ contains the correct guess for $\mathcal{T}$.
\end{itemize} 
Let $(X,X_R,X_{L_1},X_{L_2})$ denote the output of $\smallbag(R^S,X^S,L_1^S,L_2^S,D^\phi_R,D^\phi_{L_1},D^\phi_{L_2})$. Then there is a \slim $S$-nice tree decomposition $\mathcal{T}'=(T',\{X'_t\}_{t\in V(T')})$ of $G$ with width at most $k$ 
such that the following holds.
\begin{itemize}
\item $\mathcal{T}'$ is a sibling of $\mathcal{T}$,
\item if $\mathcal{T}$ is \topheavy for a node $\hat{t}$ such that $\hat{t}$ is
\begin{itemize}
    \item not in $T_{t^\star_a}$ and not an ancestor of $t^\star_a$, or
    \item a descendant of $t^\star$,
\end{itemize}
then $\mathcal{T}'$ is also \topheavy for $\hat{t}$ and the subtree rooted at $\hat{t}$ is the same as in $\mathcal{T}$,
\item $\mathcal{T}'$ is \topheavy for $t^{\star}_a$, 
\item $X'_{t^{\star}}=X$ and $X'_{t^{\star}_a}=X_R$, 
\item if $t^{\star}$ has one $S$-child $t_1$, then $X'_{t_1}=X_{L_1}$, and
\item if $t^{\star}$ has two $S$-children $t_1$ and $t_2$, then $X'_{t_1}=X_{L_1}$ and $X'_{t_2}=X_{L_2}$.
\end{itemize}
\end{proposition}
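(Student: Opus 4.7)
The plan is to construct $\mathcal{T}'$ by transforming $\mathcal{T}$ via the modifications of \cref{sec:mod} in a sequence that mirrors the execution of \cref{alg:small,alg:smallwrapper}. The minimality assumption on $|X_{t^\star}|$ will be used to rule out the alternative choices in tie-breaking.

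I would proceed in two phases. In the first phase (mirroring \cref{alg:small}), I maintain the invariant that the triple $(\Rest,\In,\Out)$ computed by the algorithm and the current tree decomposition are ``compatible'', meaning $\In \subseteq X_{t^\star}$ and $\Out \cap X_{t^\star} = \emptyset$. Initially this holds because $\phi$ contains the correct guess. For each step, I show by case analysis that the invariant is preserved: when $v$ is moved to $\In$ by Step~\ref{regular:step1} or~\ref{regular:step2}, one argues that any tree decomposition with the above minimality property must contain $v$ in $X_{t^\star}$ — otherwise a mixed/$H$-component witness gives a contradiction via Condition~\ref{condition_3_tree_decomposition} of \cref{def:tree_decomposition} together with the definition of $D_R, D_{L_1}, D_{L_2}$. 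When $v$ is moved to $\Out$ in Steps~\ref{regular:step3}–\ref{regular:step6}, I apply one of $\MoveIntoSubtree$, $\RemoveFromSubtree$, or $\BringNeighborUp/\BringNeighborDown$ (whichever is appropriate for the type of component containing $v$ in $G[(\Out\cup\{v\})\setminus S]$, together with the degree-at-most-one property in $G[\Rest]$), followed by $\MakeSlimTwo$. By \cref{lem:preserveslim,lem:moveremove,lem:bringupdown}, each such step produces a \slim $S$-nice sibling tree decomposition of width at most $k$. The minimality of $|X_{t^\star}|$ forbids any modification from shrinking $X_{t^\star}$ below the target, while the fact that we only act outside $T_{t^\star_{d_1}}$ and $T_{t^\star_{d_2}}$ together with \topheaviness of $\mathcal{T}$ for those nodes guarantees that the subtrees rooted at descendants of $t^\star$ are untouched. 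At the end of this phase, the current tree decomposition agrees with $\mathcal{T}$ on the subtrees below $t^\star$ and satisfies $X_{t^\star} = X$.

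In the second phase (mirroring the post-processing steps of \cref{alg:smallwrapper}), I adjust the bags of $t_p = t^\star_a$ and the $S$-children $t_1, t_2$ of $t^\star$. Using the definitions $X_R = X\setminus D_R$, $X_{L_1} = X\setminus D_{L_1}$, $X_{L_2} = X\setminus D_{L_2}$ from the correct guess and \cref{obs:smallneighbors}, these neighbors already match. Then each iteration of Steps~\ref{wrapper2a}–\ref{wrapper2c} of \cref{alg:smallwrapper} corresponds to one modification applied at $t^\star_a$ (or its parent side): Step~\ref{wrapper2a} removes an entire $F$-component via $\RemoveFromSubtree(t^\star, V(C))$; Step~\ref{wrapper2b} removes a single vertex whose relevant neighbors all lie in $X_R$ via $\RemoveFromSubtree(t^\star,\{v\})$; Step~\ref{wrapper2c} swaps a neighbor via $\BringNeighborUp$ applied at $t^\star_a$. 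Each modification is justified by checking its prerequisites against the current invariants (the neighborhood condition in $G-X_R$ is exactly the prerequisite of the corresponding modification from \cref{sec:mod}). After each modification I apply $\MakeSlimTwo$ to restore sliminess and invoke \cref{lem:preserveslim} to preserve siblinghood. Since the vertex being removed/replaced is never inside $T_{t^\star_{d_1}}$ or $T_{t^\star_{d_2}}$, the inductive topheaviness of $\mathcal{T}$ for those nodes ensures that the descending subtrees of $t^\star$ are unaffected, and likewise any ancestor topheaviness for nodes outside $T_{t^\star_a}$ is preserved.

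Finally, I verify \topheaviness of the resulting $\mathcal{T}'$ for $t^\star_a$. Because the post-processing loop runs until none of Steps~\ref{wrapper2a}–\ref{wrapper2c} apply, and these three steps encode precisely the preconditions of the moves listed in Steps~\ref{heavy2}–\ref{heavy5} of $\MakeTopHeavy(t^\star_a)$ (\cref{def:maketopheavy}), no step of $\MakeTopHeavy(t^\star_a)$ would still apply to $\mathcal{T}'$. Combined with the inductive \topheaviness for $t^\star_{d_1}, t^\star_{d_2}$, \cref{lem:topheavynomake2} then yields \topheaviness of $\mathcal{T}'$ for $t^\star_a$. The hard part — and what this proof really hinges on — is the case analysis in the first phase: arguing that whenever \cref{alg:small} places a vertex in $\In$, the minimality of $|X_{t^\star}|$ together with $S$-niceness forces that vertex into $X_{t^\star}$, since otherwise one could reroute the culprit mixed/$H$-component through a different child via $\MoveIntoSubtree$ to produce a strictly smaller $X_{t^\star}$ — contradicting the minimality assumption. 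The bound $2\fvn(G)+1$ of the bit-string length in \cref{lem:bigcorrect1} has an analog here via the bookkeeping on $D_R, D_{L_1}, D_{L_2}$, ensuring the guess range suffices.
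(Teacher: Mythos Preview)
Your overall two-phase structure is right and matches the paper, which packages Phase~1 as \cref{prop:smallcorrect} (proved via \cref{lem:regularstep1,lem:regularstep2,lem:regularstep3,lem:regularstep4,lem:regularstep5,lem:regularstep6}) and then in the proof of \cref{prop:smallcorrect2} handles Phase~2 by arguing that the wrapper Steps~\ref{wrapper2a}--\ref{wrapper2c} mirror Steps~\ref{heavy3}--\ref{heavy5} of $\MakeTopHeavy(t^\star_a)$.

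However, you have the role of minimality backwards. For the $\In$-steps (Steps~\ref{regular:step1} and~\ref{regular:step2}), minimality is not used at all: \cref{lem:regularstep1,lem:regularstep2} show that \emph{every} tree decomposition with $\In\subseteq X_{t^\star}$ and $\Out\cap X_{t^\star}=\emptyset$ must have $v\in X_{t^\star}$, purely from the separator property of bags and the definition of mixed/$H$-components together with $D_R,D_{L_1},D_{L_2}$. There is no ``reroute via $\MoveIntoSubtree$ to shrink $X_{t^\star}$'' argument here; the vertex is forced in unconditionally. Minimality enters only for the $\Out$-steps, where each of \cref{lem:regularstep3,lem:regularstep4,lem:regularstep5,lem:regularstep6} concludes a disjunction --- either $|X'_{t^\star}|<|X_{t^\star}|$ or the parent/$S$-child bag invariants are preserved --- and the paper's proof of \cref{prop:smallcorrect2} invokes minimality exactly once, at the end, to kill the first disjunct. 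So your final paragraph misidentifies the hard part: the delicate case analysis (and the modifications $\MoveIntoSubtree$, $\RemoveFromSubtree$, $\BringNeighborUp$/$\BringNeighborDown$) lives in the $\Out$-steps, not the $\In$-steps.

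Two smaller issues. First, to invoke \cref{lem:topheavynomake2} you must check that \emph{all} steps of $\MakeTopHeavy(t^\star_a)$ fail to apply, including Step~\ref{heavy1} ($\MakeSlimTwo$) and Step~\ref{heavy2}; the paper argues separately that these do not change the bag of $t^\star$, its parent, or its $S$-children (for $\MakeSlimTwo$ this again uses minimality of $|X_{t^\star}|$), and you should do the same. Second, your closing sentence about the $2\fvn(G)+1$ bit-string bound is irrelevant here --- that pertains to $\bigbag$ and \cref{lem:bigcorrect1}, not $\smallbag$; the guess range for small operations is $(n+1)^3$, encoding the triple $(D_R,D_{L_1},D_{L_2})$ of at most one vertex each.
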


To show \cref{prop:smallcorrect2}, in particular, we need to argue that the subrouting $\smallbagc$ called in the beginning of \cref{alg:smallwrapper}, that is, \cref{alg:small} is correct. To this end, we have the following.

\begin{lemma}\label{prop:smallcorrect}
Assume there exists a \slim $S$-nice tree decomposition $\mathcal{T}=(T,\{X_t\}_{t\in V(T)})$ of $G$ with width $k$ that contains a directed path of length at least three with $S$-trace $(\hat{L}^S, \hat{X}^S, \hat{R}^S)$, or if $\hat{L}^S=\emptyset$ then $\mathcal{T}$ contains an $S$-bottom node that has an $S$-child with $S$-trace $(\hat{L}^S, \hat{X}^S, \hat{R}^S)$, such that neither $t_{\max}$ nor the parent of $t_{\max}$ are contained in a full join tree, and with target node ${t^\star}$ such that the following holds.
\begin{itemize}
    \item $X^S\cup D_R\cup D_{L_1}\cup D_{L_2}\subseteq X_{t^\star}$ and $X_{t^\star}\cap (S\setminus X^S)=\emptyset$, 
    \item if $t^\star$ has a parent $t_p$, then $X_{t_p}=X_{t^\star}\setminus D_R$, 
    \item if $t^\star$ has one $S$-child $t_1$, then $X_{t_1}=X_{t^\star}\setminus D_{L_1}$, and 
    \item if $t^\star$ has two $S$-children $t_1$ and $t_2$, then $X_{t_1}=X_{t^\star}\setminus D_{L_1}$ and $X_{t_2}=X_{t^\star}\setminus D_{L_2}$.
\end{itemize} 
Then there is a \slim $S$-nice tree decomposition $\mathcal{T}'=(T',\{X'_t\}_{t\in V(T')})$ of $G$ with width at most $k$ 
such that the following holds.
\begin{itemize}
\item $\mathcal{T}'$ is a sibling of $\mathcal{T}$,  
\item if $\mathcal{T}$ is \topheavy for a node $\hat{t}$ such that $\hat{t}$ is
\begin{itemize}
    \item not in $T_{t^\star_a}$ and not an ancestor of $t^\star_a$, or
    \item a descendant of $t^\star$,
\end{itemize}
then $\mathcal{T}'$ is also \topheavy for $\hat{t}$ and the subtree rooted at $\hat{t}$ is the same as in $\mathcal{T}$, 
\item $X'_{t^{\star}}=\smallbagc(R^S,X^S,L_1^S,L_2^S,D_R,D_{L_1},D_{L_2})$, and
\item $|X'_{t^{\star}}|<|X_{t^{\star}}|$ or each of the following holds:
\begin{itemize}
    \item if $t^{\star}$ has a parent $t'_p$, then $X'_{t'_p}=X'_{t^{\star}}\setminus D_R$, 
    \item if $t^{\star}$ has one $S$-child $t'_1$, then $X'_{t'_1}=X'_{t^{\star}}\setminus D_{L_1}$, and 
    \item if $t^{\star}$ has two $S$-children $t'_1$ and $t'_2$, then $X'_{t'_1}=X'_{t^{\star}}\setminus D_{L_1}$ and $X'_{t'_2}=X_{t^{\star}}\setminus D_{L_2}$.
\end{itemize}
\end{itemize}
\end{lemma}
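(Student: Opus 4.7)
The proof will proceed by induction on the number of iterations of \cref{alg:small}. Concretely, the invariant to maintain is the following: after $i$ iterations of $\smallbagc$, if $(\Rest_i,\In_i,\Out_i)$ denotes the current $3$-partition, then there exists a \slim $S$-nice sibling $\mathcal{T}_i$ of $\mathcal{T}$ of width at most $k$ that preserves the \topheavy-ness for every node $\hat{t}$ as listed in the statement, whose target node~$t^\star$ satisfies $\In_i\subseteq X_{t^\star}\subseteq V\setminus \Out_i$ together with the parent/$S$-child bag relations (parent bag is $X_{t^\star}\setminus D_R$, $S$-children bags are $X_{t^\star}\setminus D_{L_j}$), or $|X_{t^\star}|$ in $\mathcal{T}_i$ is strictly smaller than in $\mathcal{T}$ (in which case we are already done and may output $\mathcal{T}_i$). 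The base case $i=0$ is given by $\mathcal{T}_0=\mathcal{T}$ by hypothesis. When the algorithm terminates, $\Rest=\emptyset$ forces $X_{t^\star}=\In=\smallbagc(\ldots)$ in the final $\mathcal{T}'=\mathcal{T}_i$, which is exactly the desired conclusion.

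For the steps that add a vertex $v$ to $\In$ (\emph{Steps~\ref{regular:step1} and~\ref{regular:step2}}), the plan is to argue that $v$ must already lie in $X_{t^\star}$ in $\mathcal{T}_{i-1}$ so that no modification is needed. If $v$ is in a mixed component of $G[(\Out\cup\{v\})\setminus S]$, then $v$ lies on a path between two distinct classes among $\{L_1^S, L_2^S, R^S\}$ that avoids $\In_{i-1}\setminus (X^S\cup D_R\cup D_{L_1}\cup D_{L_2})$; by Condition~\ref{condition_3_tree_decomposition} of \cref{def:tree_decomposition}, any bag separating two such classes in the tree decomposition (and $X_{t^\star}$ does) must contain~$v$. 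For Step~\ref{regular:step2}, the same argument applies using that $D_H\subseteq X_{t^\star}$ by hypothesis and $D_H$ is in a specified bag relation with the parent/$S$-child bags. If, contrary to expectation, $v$ is not in $X_{t^\star}$, then \cref{lem:cliquebag} forces the width to exceed $k$, a contradiction.

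For the steps that move $v$ to $\Out$ (\emph{Steps~\ref{regular:step3}--\ref{regular:step6}}), the plan is to eliminate $v$ from $X_{t^\star}$ using the modifications from \cref{sec:mod}, keeping the tree decomposition \slim by applying $\MakeSlimTwo$ (\cref{lem:preserveslim}). The key observation is that $v$ has degree at most one in $G[\Rest_{i-1}]$, so almost all of its neighbors are either inside $X_{t^\star}$ (they lie in $\In_{i-1}$) or already dropped (they lie in $\Out_{i-1}$). Depending on whether the unique possibly-remaining neighbor is connected to $L_1^S\cup L_2^S$ or to $R^S$ in $G-X_{t^\star}$ (or not connected to $S$ at all), we either apply $\MoveIntoSubtree$ to push the relevant $F$-component into the subtree containing the $L$-side, or $\RemoveFromSubtree$ to push it above $t^\star$, or $\BringNeighborUp$/$\BringNeighborDown$ to bring the neighbor into $X_{t^\star}$ so that $v$ can safely be deleted. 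By \cref{lem:moveremove,lem:bringupdown} none of these increases the width and each yields a sibling tree decomposition. If at some modification step we can no longer maintain the parent/$S$-child bag equalities (because the bag of the parent or a child is modified and $|X_{t^\star}|$ cannot be decreased accordingly), then the modification in fact strictly decreased $|X_{t^\star}|$, triggering the alternative conclusion of the lemma.

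The main obstacle will be coordinating the modifications so that they do not destroy the delicate bag equalities $X_{t_p}=X_{t^\star}\setminus D_R$, $X_{t_j}=X_{t^\star}\setminus D_{L_j}$ that tie the target node to its parent and $S$-children, nor disturb subtrees that are required to remain fixed by the \topheavy-ness hypotheses listed in the lemma. For the latter, the plan is to invoke that $\mathcal{T}$ (and hence $\mathcal{T}_i$ after \topheavy-preserving modifications, cf.\ \cref{lem:topheavynomake,lem:topheavynomake2}) is \topheavy for the relevant descendants, so each application of $\MoveIntoSubtree$, $\RemoveFromSubtree$, $\BringNeighborUp$, or $\BringNeighborDown$ performed at or above $t^\star$ leaves those subtrees untouched. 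For the former, note that the bag modifications are essentially local around $t^\star$, and one carefully tracks: whenever a vertex is inserted into or removed from $X_{t^\star}$, the same insertion/removal occurs at the parent and the $S$-children \emph{unless} that vertex lies in $D_R$, $D_{L_1}$, or $D_{L_2}$, in which case removal is only allowed when this decreases $|X_{t^\star}|$, i.e.\ the alternative conclusion is triggered.
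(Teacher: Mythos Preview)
Your proposal is correct and follows essentially the same approach as the paper: the paper proves this lemma by establishing six step-lemmas (\cref{lem:regularstep1,lem:regularstep2,lem:regularstep3,lem:regularstep4,lem:regularstep5,lem:regularstep6}), one per step of \cref{alg:small}, showing that the $\In$-steps force $v\in X_{t^\star}$ via separator arguments and the $\Out$-steps allow $v$ to be removed from $X_{t^\star}$ via $\MoveIntoSubtree$/$\RemoveFromSubtree$/$\BringNeighborUp$/$\BringNeighborDown$, exactly as you outline. One small remark: your appeal to \cref{lem:cliquebag} for Steps~\ref{regular:step1}--\ref{regular:step2} is not quite the right tool (the paper uses only the separator property from Conditions~\ref{condition_2_tree_decomposition} and~\ref{condition_3_tree_decomposition} of \cref{def:tree_decomposition}), and in Step~\ref{regular:step6} you will additionally need a minor/clique argument (via \cref{lem:twminor,lem:cliquebag}) to verify $|\In|+1\le k$ so that $\BringNeighborDown$ is applicable.
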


We postpone the proof of \cref{prop:smallcorrect} to the end of this subsection and first show how we can use it to prove \cref{prop:smallcorrect2}.

\begin{proof}[Proof of \cref{prop:smallcorrect2}]


Since $\phi$ contains the correct guess for $\mathcal{T}$, we have that if $t^\star$ has a parent $t_p$, then $X_{t_p}=X_{t^\star}\setminus D_R$, if $t^\star$ has one $S$-child $t_1$, then $X_{t_1}=X_{t^\star}\setminus D_{L_1}$, and if $t^\star$ has two $S$-children $t_1$ and $t_2$, then $X_{t_1}=X_{t^\star}\setminus D_{L_1}$ and $X_{t_2}=X_{t^\star}\setminus D_{L_2}$.
It follows that the prerequisites for \cref{prop:smallcorrect} are fulfilled. 

We can conclude that then there there is a \slim $S$-nice tree decomposition $\mathcal{T}'=(T',\{X'_t\}_{t\in V(T')})$ of $G$ with width at most $k$ 
such that the following holds.
\begin{itemize}
\item $\mathcal{T}'$ is a sibling of $\mathcal{T}$,  
\item if $\mathcal{T}$ is \topheavy for a node $\hat{t}$ such that $\hat{t}$ is
\begin{itemize}
    \item not in $T_{t^\star_a}$ and not an ancestor of $t^\star_a$, or
    \item a descendant of $t^\star$,
\end{itemize}
then $\mathcal{T}'$ is also \topheavy for $\hat{t}$ and the subtree rooted at $\hat{t}$ is the same as in $\mathcal{T}$, 
\item $X'_{t^{\star}}=\smallbagc(R^S,X^S,L_1^S,L_2^S,D_R,D_{L_1},D_{L_2})$, and 
\item $|X'_{t^{\star}}|<|X_{t^{\star}}|$ or each of the following holds:
\begin{itemize}
    \item if $t^{\star}$ has a parent $t'_p$, then $X'_{t'_p}=X'_{t^{\star}}\setminus D_R$, 
    \item if $t^{\star}$ has one $S$-child $t'_1$, then $X'_{t'_1}=X'_{t^{\star}}\setminus D_{L_1}$, and 
    \item if $t^{\star}$ has two $S$-children $t'_1$ and $t'_2$, then $X'_{t'_1}=X'_{t^{\star}}\setminus D_{L_1}$ and $X'_{t'_2}=X_{t^{\star}}\setminus D_{L_2}$.
\end{itemize}
\end{itemize}
Note that $|X'_{t^{\star}}|<|X_{t^{\star}}|$ contradicts the minimality requirement of \cref{prop:smallcorrect2}. Hence, we have that each of the following holds:
\begin{itemize}
    \item if $t^{\star}$ has a parent $t'_p$, then $X'_{t'_p}=X'_{t^{\star}}\setminus D_R$, 
    \item if $t^{\star}$ has one $S$-child $t'_1$, then $X'_{t'_1}=X'_{t^{\star}}\setminus D_{L_1}$, and 
    \item if $t^{\star}$ has two $S$-children $t'_1$ and $t'_2$, then $X'_{t'_1}=X'_{t^{\star}}\setminus D_{L_1}$ and $X'_{t'_2}=X_{t^{\star}}\setminus D_{L_2}$.
\end{itemize}

\cref{alg:smallwrapper} starts with computing the bag $X=\smallbagc(R^S,X^S,L_1^S,L_2^S,D_R,D_{L_1},D_{L_2})$.
Now we argue that Steps~\ref{wrapper2a}, \ref{wrapper2b}, and \ref{wrapper2c} make the same changes to the bag candidate $X$ for ${t^{\star}}$, and the bags of its parent and $S$-children, $X_R$, $X_{L_1}$, and $X_{L_2}$, respectively, as the modification $\MakeTopHeavy(t^{\star}_a)$ (\cref{def:maketopheavy}) do when applied to $\mathcal{T}'$. 
By \cref{lem:topheavynomake,lem:topheavynomake2}, applying $\MakeTopHeavy(t^{\star}_a)$ to $\mathcal{T}'$ makes it \topheavy for $t^{\star}_a$. Furthermore, by \cref{lem:preserveslim} we have that $\mathcal{T}'$ remains \slim, $S$-nice, and its width did not increase. 
Since if $t^\star_{d_1}$ and $t^\star_{d_2}$, respectively, exist, then $\mathcal{T}'$ is \topheavy for $t^\star_{d_1}$ and $t^\star_{d_2}$, respectively, there are no changes to $T_{t^{\star}_{d_1}}$ and $T_{t^{\star}_{d_2}}$ or any bag of a node in $T_{t^{\star}_{d_1}}$ and $T_{t^{\star}_{d_2}}$ (if they exist).
Lastly, $\MakeTopHeavy(t^{\star}_a)$ does not move any vertices from $S$, and hence, we have that $\mathcal{T}'$ remains a sibling of $\mathcal{T}$. It remains to show that the bags of ${t^{\star}}$, its $S$-children, and its parent are correctly computed. 
To this end, observe that Step~\ref{heavy1} of \cref{def:maketopheavy} performs $\MakeSlimTwo$ (\cref{def:MakeSlimTwo}). 
The first step of $\MakeSlimTwo$ makes sure that if the bag of an $S$-bottom node is not full, then the parent of the $S$-bottom node has the same bag. Furthermore, in case it is a join node, the $S$-children have no larger bags. Note that the latter case cannot happen, since all bags are modified in the same way and initially the bags of the $S$-children are not larger. The former case, however, can happen if the target node is an $S$-bottom node and its bag size is decreased. This, however, contradicts the minimality requirement of the bag of the target node.
The second step of $\MakeSlimTwo$ applies if target node is not an $S$-bottom node, and the parent is an $S$-bottom node that admits a $\forget$ $S$-operation. 
If the $S$-child of the $S$-bottom node (which is the target node in this case) is not full, then the $S$-child of the $S$-childs has the same bag as the $S$-child of the $S$-bottom node. 
It follows that this step also only makes changes if the size of the bag of the target node decreases, a contradiction to its minimality
The third step of $\MakeSlimTwo$ does not change the bag of the target node, its parent, or its children.
We can conclude that $\MakeSlimTwo$ does not make any changes to the bag of the target node, its children, or its parent.
Next, observe that Step~\ref{heavy2} of \cref{def:maketopheavy} does not make any changes to the bags of ${t^{\star}}$, its $S$-children, and its parent.

Observe that Step~\ref{wrapper2a} of \cref{alg:smallwrapper} mirrors the changes made by Step~\ref{heavy3} of \cref{def:maketopheavy}:
Note that $X_R$ is the bag of the node denoted with $t$ in $\MakeTopHeavy$ (\cref{def:maketopheavy}). Since the edge-subdivision was performed in the beginning of $\MakeTopHeavy$, we have that the node denoted with $t'$ in $\MakeTopHeavy$ is $t^\star_a$. Hence, the $\RemoveFromSubtree$ operation is applied to the parent of $t^\star_a$, which means that the vertices in $V(C)$ are removed from $X$, $X_R$, $X_{L_1}$, and $X_{L_2}$.
Similarly, Step~\ref{wrapper2b} of \cref{alg:smallwrapper} mirrors the changes made by Step~\ref{heavy4} of \cref{def:maketopheavy}:
If for each $u\in N(v)$ it holds that $u\in X_R$ or $u$ is not connected to $L_1^S\cup L_2^S\cup((X\setminus X_R)\cap S)$ in $G-X_R$, then we have that $N(v)$ does not contain any vertices that are not in $X_R$ but need to be a bag in the subtree rooted at $t^\star_a$. Hence, we may assume that all those neighbors that are not in $X_R$, are contained in some bags outside of the subtree rooted at $t^\star_a$. Again, the $\RemoveFromSubtree$ operation is applied to the parent of $t^\star_a$, which means that the vertex $v$ is removed from $X$, $X_R$, $X_{L_1}$, and $X_{L_2}$.
Finally, Step~\ref{wrapper2c} of \cref{alg:smallwrapper} mirrors the changes made by Step~\ref{heavy5} of \cref{def:maketopheavy}:
Here, the analysis is analogous to the previous case with the exception that it only holds for all neightbors of $v$ except $u$. Then the $\BringNeighborUp$ operation is applied to the parent of $t^\star_a$, which means that the vertex $v$ is replaced with $u$ in $X$, $X_R$, $X_{L_1}$, and $X_{L_2}$.
We can conclude that $X=X_{t^{\star}}$ and that if $t^{\star}$ has a parent or $S$-children, their respective bags are also correctly computed.
\end{proof}

Now we prove \cref{prop:smallcorrect}. To do this, we will essentially show that in each step of \cref{alg:small}, we do not make mistakes. This first lemma, intuitively, shows that Step~\ref{regular:step1} of \cref{alg:small}, if applicable, is always correct.
\begin{lemma}\label{lem:regularstep1}
Let $(\Rest,\In,\Out)$ be a 3-partition of $V$ that appears during some iteration of \cref{alg:small}. 
If there exists a \slim $S$-nice tree decomposition $\mathcal{T}=(T,\{X_t\}_{t\in V(T)})$ of $G$ with width $k$ that contains a directed path of length at least three with $S$-trace $(\hat{L}^S, \hat{X}^S, \hat{R}^S)$, or if $\hat{L}^S=\emptyset$ then $\mathcal{T}$ contains an $S$-bottom node that has an $S$-child with $S$-trace $(\hat{L}^S, \hat{X}^S, \hat{R}^S)$, such that neither $t_{\max}$ nor the parent of $t_{\max}$ are contained in a full join tree, and with target node ${t^\star}$ such that $\In\subseteq X_{t^\star}$ and $X_{t^\star}\cap \Out=\emptyset$, then the following holds.
    If there is a vertex $v\in \Rest$ that is contained in a mixed component in $G[(\Out\cup\{v\})\setminus S]$, then $v\in X_{t^\star}$.
\end{lemma}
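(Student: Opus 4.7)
\textbf{Proof plan for \cref{lem:regularstep1}.} The plan is to argue by contradiction: assume $v\notin X_{t^\star}$ and deduce that the connected component $C$ of $G[(\Out\cup\{v\})\setminus S]$ containing $v$ cannot in fact be mixed, i.e., cannot have neighbors in two of $L_1^S,L_2^S,R^S$. First I would note that $v\notin S$: since initially $\In\cup\Out\supseteq S$ and $\Rest$ only ever loses vertices (to $\In$ or $\Out$), the set $S$ is disjoint from $\Rest$ throughout the algorithm. Combining the assumption $v\notin X_{t^\star}$ with the hypothesis $X_{t^\star}\cap \Out=\emptyset$ and $V(C)\setminus\{v\}\subseteq\Out$, we get $V(C)\subseteq V\setminus X_{t^\star}$. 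Since $V(C)$ is connected in $G$, it lies inside a single connected component $K$ of $G-X_{t^\star}$.

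Next I would exploit the fact that in any rooted tree decomposition $X_{t^\star}$ separates $L_{t^\star}$ from $R_{t^\star}$ in $G$, so either $K\subseteq L_{t^\star}$ or $K\subseteq R_{t^\star}$. By the setup relating $(L^S,X^S,R^S)$ to $(L^S_{t^\star},X^S_{t^\star},R^S_{t^\star})$, we have $R^S=R^S_{t^\star}$, $X^S=X^S_{t^\star}$, and $L^S_{t^\star}=L_1^S\cup L_2^S$. In Case A ($K\subseteq R_{t^\star}$), every neighbor of $V(C)$ in $S$ lies in $X^S_{t^\star}\cup R^S_{t^\star}=X^S\cup R^S$; in particular $V(C)$ has no neighbor in $L_1^S\cup L_2^S$, so at most one of the three sets $\{L_1^S,L_2^S,R^S\}$ is hit, contradicting that $C$ is mixed.

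In Case B ($K\subseteq L_{t^\star}$), neighbors of $V(C)$ in $S$ lie in $X^S\cup L^S_{t^\star}=X^S\cup L_1^S\cup L_2^S$, so $V(C)$ has no $R^S$-neighbor. If $L_2^S=\emptyset$ (the ``otherwise'' case of the setup, as well as the single $S$-child case), we are done immediately. If instead $t^\star$ has two $S$-children $t_1,t_2$ with $L^S_{t_i}=L_i^S$, I would invoke Condition~\ref{cond:slim:2} of \cref{def:slimtd} to conclude $X_{t_1},X_{t_2}\subseteq X_{t^\star}$; hence $X_{t_i}$ separates $V_{t_i}$ from $V\setminus V_{t_i}$ within $G-X_{t^\star}$, so the connected component $K$ must be entirely contained in one of $V_{t_1}$, $V_{t_2}$, or the union of the non-$S$-child subtrees (which contribute no new $S$-vertices beyond $X^S$). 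Applying Condition~\ref{condition_2_tree_decomposition} of \cref{def:tree_decomposition} to the vertices of $L_i^S=L^S_{t_i}$, any neighbor of a vertex in $L_i^S$ lies in $V_{t_i}$; therefore $V(C)$ can have $L_i^S$-neighbors for at most one $i\in\{1,2\}$. This yields the required contradiction.

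The main obstacle will be nailing down the subtree-separation argument in Case~B uniformly across all possible flavors of the small target node: in particular, verifying that $X_{t_1},X_{t_2}\subseteq X_{t^\star}$ holds in every admissible configuration of $t^\star$ (whether $t^\star$ is itself the $S$-bottom node admitting $\smallintroduce$ or $\smalljoin$, or $t^\star=t_{\max}$ in the $\extendedforget$ case), which is exactly what \cref{obs:smallneighbors} and the relevant conditions of \cref{def:slimtd} give us. Once this containment is in hand, the separator/connected-component reasoning is routine.
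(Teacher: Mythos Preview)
Your proposal is correct and follows essentially the same contradiction strategy as the paper's proof. The paper's version is considerably more compressed: rather than introducing the ambient component $K$ of $G-X_{t^\star}$, splitting into Cases~A/B, and invoking $X_{t_i}\subseteq X_{t^\star}$ for the child-subtree separation, it argues directly that if $N(V(C))\cap R^S\neq\emptyset$ then $X_{t^\star}$ fails to separate $L^S$ from $R^S$ (contradicting that $(L^S,X^S,R^S)$ is the $S$-trace of $t^\star$, via \cref{obs:smallneighbors}), and otherwise $N(V(C))$ meets both $L_1^S$ and $L_2^S$, which forces $t^\star$ to admit a $\smalljoin$ and hence $X_{t^\star}$ to separate $L_1^S$ from $L_2^S$ (by \cref{def:snicetd} and \cref{obs:join}). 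Your detour through $X_{t_i}\subseteq X_{t^\star}$ is not needed for this last step, since the separation of $L_{t_1}^S$ from $L_{t_2}^S$ by $X_{t^\star}$ already follows from Condition~\ref{condition_3_tree_decomposition} of \cref{def:tree_decomposition} alone; but it is harmless and the overall reasoning is sound.
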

\begin{proof}
By \cref{obs:smallneighbors} we have that ${t^\star}$ has $S$-trace $(L^S, X^S, R^S)$.
Assume for contradiction that $v\notin X_{t^\star}$. Then there is a mixed component $C$ in $G-(X_{t^\star}\cup S)$. If $N(V(C))\cap R^S\neq\emptyset$, then $X_{t^\star}$ does not separate $L^S$ and $R^S$. This is a contradiction to $(L^S, X^S, R^S)$ being the $S$-trace of ${t^\star}$. If $N(V(C))\cap R^S=\emptyset$, then we have that $N(V(C))\cap L_1^S\neq\emptyset$ and $N(V(C))\cap L_2^S\neq\emptyset$. Then we have by the definition of the target node and $L_2^S$ that ${t^\star}$ is an $S$-bottom node that admits $S$-operation $\smalljoin(X^S_t,X_1^S,X_2^S,L_{1}^S, L_{2}^S,d)$ (this is either $\tau_+$ or the $S$-operation~$\tau$ from $\tau_+=\extendedforget(v,d,f,\tau)$). From \cref{def:snicetd} and \cref{obs:join} follows that~$X_{t^\star}$ separates $L_1^S$ and $L_2^S$, a contradiction.
\end{proof}

Next, we show that the Step~\ref{regular:step2} of \cref{alg:small}, if applicable, is correct.

\begin{lemma}\label{lem:regularstep2}
Let $(\Rest,\In,\Out)$ be a 3-partition of $V$ that appears during some iteration of \cref{alg:small}. 
Assume that there exists a \slim $S$-nice tree decomposition $\mathcal{T}=(T,\{X_t\}_{t\in V(T)})$ of $G$ with width $k$ that contains a directed path of length at least three with $S$-trace $(\hat{L}^S, \hat{X}^S, \hat{R}^S)$, or if $\hat{L}^S=\emptyset$ then $\mathcal{T}$ contains an $S$-bottom node that has an $S$-child with $S$-trace $(\hat{L}^S, \hat{X}^S, \hat{R}^S)$, such that neither $t_{\max}$ nor the parent of $t_{\max}$ are contained in a full join tree, and with target node ${t^\star}$ such that the following holds.
\begin{itemize}
    \item $\In\subseteq X_{t^\star}$ and $X_{t^\star}\cap \Out=\emptyset$, 
    \item if $t^\star$ has a parent $t_p$, then $X_{t_p}=X_{t^\star}\setminus D_R$, 
    \item if $t^\star$ has one $S$-child $t_1$, then $X_{t_1}=X_{t^\star}\setminus D_{L_1}$, and 
    \item if $t^\star$ has two $S$-children $t_1$ and $t_2$, then $X_{t_1}=X_{t^\star}\setminus D_{L_1}$ and $X_{t_2}=X_{t^\star}\setminus D_{L_2}$.
\end{itemize} 
Then, if there is a vertex $v$ in $\Rest$ that is contained in an $H$-component $C_H$ in $G[(\Out\cup\{v\})\setminus S]$ for $H\in\{L_1,L_2,R\}$ and $N(C_H)\cap D_H\neq \emptyset$, we have that $v\in X_{t^\star}$.

\end{lemma}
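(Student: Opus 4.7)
The argument is by contradiction, mirroring the structure of the proof of \cref{lem:regularstep1} but exploiting the extra edges supplied by the guess sets. Suppose $v\notin X_{t^\star}$, and fix $u\in D_H$ with a neighbor $v'\in V(C_H)$; such a pair exists because $N(C_H)\cap D_H\neq\emptyset$. The plan is to use the hypothesized structure of the neighboring bags to confine the location of $v'$ in the tree decomposition, and then derive a separator violation from the connection between $v'$ and the designated $H$-side.

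First I will observe that the initialization step of \cref{alg:small} places $D_R\cup D_{L_1}\cup D_{L_2}$ into $\In$, so $u\in\In\subseteq X_{t^\star}$ by hypothesis. The hypothesized equalities $X_{t_p}=X_{t^\star}\setminus D_R$, $X_{t_1}=X_{t^\star}\setminus D_{L_1}$, and $X_{t_2}=X_{t^\star}\setminus D_{L_2}$ then give $u\notin X_{t_p}$ when $H=R$, $u\notin X_{t_1}$ when $H=L_1$, and $u\notin X_{t_2}$ when $H=L_2$. By the subtree property (Condition~\ref{condition_3_tree_decomposition} of \cref{def:tree_decomposition}), the subtree of bags containing $u$ cannot cross $t_p$ (respectively $t_1$ or $t_2$), and hence lies in $T_{t^\star}$ in the $R$-case, in $T\setminus T_{t_1}$ in the $L_1$-case, and in $T\setminus T_{t_2}$ in the $L_2$-case. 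Because the edge $\{u,v'\}$ forces some bag to contain both endpoints, $v'$ lies in the corresponding region: $v'\in V_{t^\star}$ for $H=R$, and $v'\in(V\setminus V_{t_i})\cup X_{t_i}$ for $H=L_i$ (the exceptional option $v'\in X_{t_i}$ arises from the $v'$-subtree possibly meeting $t_i$).

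Next, I will combine this with the fact that $V(C_H)\subseteq(\Out\cup\{v\})\setminus S$ and $X_{t^\star}\cap\Out=\emptyset$. Together with the contradiction assumption these imply $V(C_H)\cap X_{t^\star}=\emptyset$, so in particular $v'\notin X_{t^\star}$. For $H=L_i$ this rules out $v'\in X_{t_i}\subseteq X_{t^\star}$ and forces $v'\in V\setminus V_{t_i}$. For $H=R$ it gives $v'\in V_{t^\star}\setminus X_{t^\star}=L_{t^\star}$.

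Finally, I will pick $w\in N(V(C_H))$ in the designated set: $w\in R^S$ if $H=R$, and $w\in L_i^S=L_{t_i}^S$ if $H=L_i$. The equality $L_i^S=L_{t_i}^S$ in the relevant regime (i.e.\ $L_i^S\ne\emptyset$) follows by case-analysis on $\tau_+$ using \cref{obs:introduce,obs:join}, once one checks that $t^\star$ has the corresponding $S$-child in each case. Since $C_H$ is connected, there is a path from $v'$ to $w$ in $G[V(C_H)\cup\{w\}]$ whose internal vertices lie in $V(C_H)\setminus S$ and therefore avoid $X_{t^\star}$. For $H=R$ this joins $L_{t^\star}$ with $V\setminus V_{t^\star}$ without touching $X_{t^\star}$, contradicting the separator property of $X_{t^\star}$. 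For $H=L_i$ it joins $V\setminus V_{t_i}$ with $V_{t_i}\setminus X_{t_i}$ without touching $X_{t_i}\subseteq X_{t^\star}$, again contradicting the separator property. The main obstacle will be pinning down the subtree of bags containing $u$ from the guess data $D_H$ and the neighboring-bag hypothesis; once that localization is established, each of the three cases reduces to a standard separator argument.
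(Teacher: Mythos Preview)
Your proof is correct and follows essentially the same approach as the paper: both argue by contradiction, use that $u\in D_H\subseteq\In\subseteq X_{t^\star}$ is excluded from the bag on the $H$-side (parent for $R$, $S$-child for $L_i$), confine $u$'s subtree accordingly, and then derive a separator violation from the path through $C_H$ to the $H$-side of $S$. The paper spells out only the $H=R$ case and traces the path from $u$ toward $v$ to find the crossing edge, whereas you start from a neighbor $v'$ of $u$ in $C_H$ and run the path outward to $w\in H^S$; these are the same argument traversed in opposite directions.
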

\begin{proof}
Assume that there is a vertex $v$ in $\Rest$ that is contained in an $R$-component $C_R$ in $G[(\Out\cup\{v\})\setminus S]$ and $N(C_R)\cap D_R\neq \emptyset$. Let $u\in N(C_R)\cap D_R$.
Assume for contradiction that $v\notin X_{t^\star}$. 
It follows from Conditions~\ref{condition_2_tree_decomposition} and~\ref{condition_3_tree_decomposition} of \cref{def:tree_decomposition} and \cref{def:trace} that there must be an ancestor of ${t^\star}$ with a bag that contains~$v$.
Since $u\in D_R$ and $D_R\subseteq \In$ in all iterations, we have that $u$ is contained in $X_{t^\star}$, but not in the bag of the parent of~${t^\star}$. It follows from \cref{def:tree_decomposition} that $u$ is not contained in any bag of an ancestor of~${t^\star}$.
Consider a path $P$ from $u$ to $v$ in $G[(\Out\cup\{v\})\setminus S]$. Let $u'$ be the vertex closest to $v$ in $P$ that is contained in the bag of ${t^\star}$ or in a bag that is a descendant of $t$. Let $v'$ the next vertex in $P$. 
By assumption, $v'$ is neither contained in the bag of ${t^\star}$ nor in the bag of any of the descendants of~${t^\star}$. It follows that there is no bag that contains both $u'$ and $v'$. This, however, is a contradiction to Condition~\ref{condition_2_tree_decomposition} of \cref{def:tree_decomposition}, which requires that there is a node $t'$ in~$\mathcal{T}$ such that $\{u,v\}\in X_{t'}$, since $u'$ and $v'$ are neighbors.
The cases where $H=L_1$ or $H=L_2$ are analogous.
\end{proof}

Before we show that the remaining steps of the algorithm are correct, we first make the following observation. We have made essentially the same obervation for \cref{alg:big} in \cref{lem:oneneighborbig}. The proof here is very similar, but since \cref{alg:small} obviously behaves differently from \cref{alg:big}, some small adjustments need to be made.
\begin{lemma}\label{lem:oneneighbor}
During every iteration of \cref{alg:small}, the following holds. For every connected component $C$ in $G[\Out\setminus S]$ we have that $|N(V(C))\cap \Rest|\le 1$.
\end{lemma}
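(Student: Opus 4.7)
The proof will proceed by induction on the iterations of Algorithm~\ref{alg:small}, following essentially the same template as the proof of Lemma~\ref{lem:oneneighborbig}. The key observation making this template applicable is that every step of Algorithm~\ref{alg:small} which adds a vertex to $\Out$ (namely Steps~\ref{regular:step3}--\ref{regular:step6}) is guarded by the requirement that the vertex has degree at most one in $G[\Rest]$; and every step which adds a vertex to $\In$ (Steps~\ref{regular:step1} and~\ref{regular:step2}) leaves $\Out$ unchanged.

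For the base case, initially $\Out = S \setminus \In$, so $\Out\setminus S = \emptyset$, and the statement holds vacuously. For the inductive step, consider a 3-partition $(\Rest,\In,\Out)$ for which the invariant holds, and suppose some step is applied. If Step~\ref{regular:step1} or~\ref{regular:step2} is applied, a vertex $v \in \Rest$ is moved to $\In$, leaving $\Out$ unchanged; for each connected component $C$ of $G[\Out \setminus S]$ we then have $|N(V(C)) \cap (\Rest \setminus \{v\})| \le |N(V(C)) \cap \Rest| \le 1$, and the invariant is preserved.

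The interesting case is when one of Steps~\ref{regular:step3}--\ref{regular:step6} moves a vertex $v \in \Rest \setminus S$ with $|N(v) \cap \Rest| \le 1$ into $\Out$. Let $C$ be a connected component of $G[(\Out \cup \{v\}) \setminus S]$. If $v \notin V(C)$, then $C$ is also a connected component of $G[\Out \setminus S]$, so by the induction hypothesis $|N(V(C)) \cap \Rest| \le 1$, and hence $|N(V(C)) \cap (\Rest \setminus \{v\})| \le 1$. Otherwise, $v \in V(C)$, and $C - v$ decomposes into connected components $C_1,\ldots,C_\ell$, each of which is a connected component of $G[\Out \setminus S]$ and each of which has $v$ as a neighbor. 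By the induction hypothesis, $|N(V(C_i)) \cap \Rest| \le 1$; since $v \in N(V(C_i)) \cap \Rest$, this single neighbor must be exactly $v$, so $|N(V(C_i)) \cap (\Rest \setminus \{v\})| = 0$ for every $i$. Combining this with $|N(v) \cap (\Rest \setminus \{v\})| \le 1$ yields $|N(V(C)) \cap (\Rest \setminus \{v\})| \le 1$, as required.

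The only real obstacle is verifying that every step of Algorithm~\ref{alg:small} which pushes a vertex to $\Out$ does so only when that vertex has degree at most one in $G[\Rest]$ — a direct inspection of Steps~\ref{regular:step3}--\ref{regular:step6} confirms this. No other properties of the algorithm (such as the mixed-/$L_1$-/$L_2$-/$R$-component classification, or the guessed sets $D_R, D_{L_1}, D_{L_2}$) play any role in preserving the invariant, because the invariant concerns only the combinatorial interaction between $\Out\setminus S$ and $\Rest$ in the forest $G[V \setminus S]$. Thus the proof is essentially a transcription of the argument for Lemma~\ref{lem:oneneighborbig}.
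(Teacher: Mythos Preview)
Your proposal is correct and follows essentially the same approach as the paper's proof: induction on iterations, with the base case $\Out\setminus S=\emptyset$, the trivial preservation when Steps~\ref{regular:step1}--\ref{regular:step2} move a vertex to $\In$, and the component-decomposition argument when Steps~\ref{regular:step3}--\ref{regular:step6} move a degree-$\le 1$ vertex to $\Out$. Your explicit observation that none of the component-type classifications or guessed sets matter for this invariant is accurate and matches the paper's treatment.
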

\begin{proof}
We prove this by induction on the iterations of the algorithm. Initially, the statement clearly holds since $\Out\setminus S=\emptyset$. Let $(\Rest,\In,\Out)$ be a 3-partition of $V$ that appears during some iteration of the algorithm. By induction hypothesis, we have that for every connected component $C$ in $G[\Out\setminus S]$ we have that $|N(V(C))\cap \Rest|\le 1$.
Assume that Step~\ref{regular:step1} or Step~\ref{regular:step2} applies, then the new 3-partition is $(\Rest\setminus \{v\},\In\cup\{v\},\Out)$ for some $v\in \Rest$. Clearly, for every connected component $C$ in $G[\Out\setminus S]$ we have that $|N(V(C))\cap (\Rest\setminus\{v\})|\le |N(V(C))\cap \Rest|\le 1$.
Assume that one of the remaining steps applies, then the new 3-partition is $(\Rest\setminus \{v\},\In,\Out\cup\{v\})$ for some $v\in \Rest$ that has degree at most one in $G[\Rest]$. Let $C$ be a connected component in $G[(\Out\cup\{v\})\setminus S]$. If $C$ does not contain $v$, then $C$ is a connected component in $G[\Out\setminus S]$, and clearly we have that $|N(V(C))\cap (\Rest\setminus\{v\})|\le |N(V(C))\cap \Rest|\le 1$. Now assume that $C$ contains $v$. Let $C_1,\ldots,C_\ell$ be the connected components of $C-\{v\}$. Note that $C_1,\ldots,C_\ell$ are all connected components in $G[\Out\setminus S]$ and hence only have one neighbor in $\Rest$. Since $v\in\Rest$ and $v\in N(V(C_i))$ for all $1\le i\le \ell$, we have that $|N(V(C_i))\cap (\Rest\setminus\{v\})|=0$ for all $1\le i\le \ell$. Recall that $|N(v)\cap\Rest|\le 1$. It follows that $|N(V(C))\cap (\Rest\setminus\{v\})|\le 1$. 
\end{proof}

Now we show that Step~\ref{regular:step3} of \cref{alg:small}, if applicable, is correct.

\begin{lemma}\label{lem:regularstep3}
Let $(\Rest,\In,\Out)$ be a 3-partition of $V$ that appears during some iteration of \cref{alg:small}. 
Assume there exists a \slim $S$-nice tree decomposition $\mathcal{T}=(T,\{X_t\}_{t\in V(T)})$ of $G$ with width $k$ that contains a directed path of length at least three with $S$-trace $(\hat{L}^S, \hat{X}^S, \hat{R}^S)$, or if $\hat{L}^S=\emptyset$ then $\mathcal{T}$ contains an $S$-bottom node that has an $S$-child with $S$-trace $(\hat{L}^S, \hat{X}^S, \hat{R}^S)$, such that neither $t_{\max}$ nor the parent of $t_{\max}$ are contained in a full join tree, and with target node ${t^\star}$ such that the following holds.
\begin{itemize}
    \item $\In\subseteq X_{t^\star}$ and $X_{t^\star}\cap \Out=\emptyset$, 
        \item if $t^\star$ has a parent $t_p$, then $X_{t_p}=X_{t^\star}\setminus D_R$, 
    \item if $t^\star$ has one $S$-child $t_1$, then $X_{t_1}=X_{t^\star}\setminus D_{L_1}$, 
    \item if $t^\star$ has two $S$-children $t_1$ and $t_2$, then $X_{t_1}=X_{t^\star}\setminus D_{L_1}$ and $X_{t_2}=X_{t^\star}\setminus D_{L_2}$,
    \item Steps~\ref{regular:step1} and~\ref{regular:step2} of \cref{alg:small} do not apply, and
    \item there is a vertex $v\in\Rest$ that is contained in an $R$-component in $G[(\Out\cup\{v\})\setminus S]$ in $G[(\Out\cup\{v\})\setminus S]$ and that has degree at most one in $G[\Rest]$.
\end{itemize} 
Then there is a \slim $S$-nice tree decomposition $\mathcal{T}'=(T',\{X'_t\}_{t\in V(T')})$ of $G$ with width at most $k$ 
such that the following holds.
\begin{itemize}
\item $\mathcal{T}'$ is a sibling of $\mathcal{T}$,
\item if $\mathcal{T}$ is \topheavy for a node $\hat{t}$ such that $\hat{t}$ is
\begin{itemize}
    \item not in $T_{t^\star_a}$ and not an ancestor of $t^\star_a$, or
    \item a descendant of $t^\star$,
\end{itemize}
then $\mathcal{T}'$ is also \topheavy for $\hat{t}$ and the subtree rooted at $\hat{t}$ is the same as in $\mathcal{T}$,
\item $\In\subseteq X'_{t^{\star}}$ and $X'_{t^{\star}}\cap (\Out\cup\{v\})=\emptyset$, and
\item $|X'_{t^{\star}}|<|X_{t^{\star}}|$ or each of the following holds:
\begin{itemize}
    \item $|X'_{t^{\star}}|=|X_{t^{\star}}|$,
    \item if $t^{\star}$ has a parent $t'_p$, then $X'_{t'_p}=X'_{t^{\star}}\setminus D_R$, 
    \item if $t^{\star}$ has one $S$-child $t'_1$, then $X'_{t'_1}=X'_{t^{\star}}\setminus D_{L_1}$, and 
    \item if $t^{\star}$ has two $S$-children $t'_1$ and $t'_2$, then $X'_{t'_1}=X'_{t^{\star}}\setminus D_{L_1}$ and $X'_{t'_2}=X_{t^{\star}}\setminus D_{L_2}$.
\end{itemize}
\end{itemize}
\end{lemma}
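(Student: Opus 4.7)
The plan is a case analysis on whether $v \in X_{t^\star}$ in the given tree decomposition $\mathcal{T}$. If $v \notin X_{t^\star}$, then $\mathcal{T}$ itself serves as $\mathcal{T}'$: the strengthened invariant $X'_{t^\star} \cap (\Out \cup \{v\}) = \emptyset$ already holds, and none of the other clauses of the conclusion depend on the removal of $v$, so we fall into the equal-size branch of the disjunction with the parent/child bag relationships inherited from $\mathcal{T}$.

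The interesting case is $v \in X_{t^\star}$, where the plan is to apply $\RemoveFromSubtree(t^\star, V(C))$ from \cref{def:remove} with $C$ the $R$-component of $v$ in $G[(\Out \cup \{v\}) \setminus S]$, and then to normalize with $\MakeSlimTwo$. To license this modification I must check that $V(C) \cap S = \emptyset$, which is immediate, and that $N(V(C)) \subseteq (V \setminus V_{t^\star}) \cup X_{t_p}$ where $t_p$ is the parent of $t^\star$. I would dissect $N(V(C))$ into four types: its intersection with $S$ lies in $R^S \subseteq V \setminus V_{t^\star}$ because $C$ is an $R$-component; its intersection with $\Out \setminus S$ is absorbed inside $V(C)$ by maximality of $C$; its intersection with $\In$ avoids $D_R$ since Step~\ref{regular:step2} does not apply for $H = R$, and the remaining $\In$-vertices lie in $X_{t^\star} \setminus D_R = X_{t_p}$; and its intersection with $\Rest$ is, by \cref{lem:oneneighbor} together with the hypothesis that $v$ has degree at most one in $G[\Rest]$, contained in $\{u\}$ for the unique $\Rest$-neighbor $u$ of $v$ (if any).

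The principal obstacle I expect is the configuration in which such a $u$ exists and lies in $V_{t^\star} \setminus X_{t_p}$; then $u \notin (V \setminus V_{t^\star}) \cup X_{t_p}$ and $\RemoveFromSubtree$ is blocked on $V(C)$ alone. To overcome this, I would invoke the \topheavy property of $\mathcal{T}$ at $t^\star_{d_1}$ (and $t^\star_{d_2}$ when it exists) together with \cref{lem:topheavynomake}, which asserts that no step of $\MakeTopHeavy$ could push $u$ any higher. This structural bound on how deeply $u$ can lie inside $T_{t^\star}$ either forces $u$ into $X_{t^\star}$ (whence $u \in X_{t^\star} \setminus D_R = X_{t_p}$ because $u \in \Rest$ is disjoint from $\In \supseteq D_R$), or it allows a preparatory $\BringNeighborUp$ or $\MoveIntoSubtree$ to reposition $u$ out of $V_{t^\star} \setminus X_{t_p}$ before the main removal; if such a preparatory step happens to shrink the target bag, the first disjunct $|X'_{t^\star}| < |X_{t^\star}|$ of the conclusion already holds and we are done.

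To close, \cref{lem:moveremove} and \cref{lem:preserveslim} yield that the resulting $\mathcal{T}'$ is a \slim $S$-nice tree decomposition of width at most $k$ that is a sibling of $\mathcal{T}$. Preservation of \topheavy ness for the required ancestors outside $T_{t^\star_a}$ and descendants of $t^\star$ follows because $\RemoveFromSubtree(t^\star, V(C))$ touches only bags in $T_{t^\star}$ and in a new subtree attached at $t_p$, while $\MakeSlimTwo$ only subdivides edges; the relevant clauses of \cref{def:topheavy} transfer directly. For the bag-relationship clause, the equality $X'_{t_p} = X'_{t^\star} \setminus D_R$ and the analogues $X'_{t_i} = X'_{t^\star} \setminus D_{L_i}$ survive the removal because $v \in \Rest$ is disjoint from $D_R \cup D_{L_1} \cup D_{L_2} \subseteq \In$, so pulling $v$ and the rest of $V(C)$ out of $X_{t^\star}$ does not disturb these marked vertices; if any marked vertex were accidentally swept into the removed set, $|X'_{t^\star}| < |X_{t^\star}|$ would instead provide the required disjunct.
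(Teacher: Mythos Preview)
Your plan to remove the entire component $V(C)$ in one shot via $\RemoveFromSubtree(t^\star,V(C))$ runs into exactly the obstacle you name, and your proposed workaround does not close the gap. The hypotheses of \cref{lem:regularstep3} do \emph{not} assume that $\mathcal{T}$ is \topheavy for $t^\star_{d_1}$ (that assumption appears only in \cref{prop:smallcorrect2}, one level up), so you cannot invoke it here. Even granting it, neither \cref{def:topheavy} nor \cref{lem:topheavynomake} gives any control over where an arbitrary $\Rest$-vertex $u$ sits; they constrain how modifications outside a subtree affect that subtree, not the location of individual vertices. Your ``preparatory $\BringNeighborUp$'' idea also cannot be applied as stated: before the component is cleared, $v$ typically has many neighbours in $V_{t^\star}\setminus X_{t_p}$ (namely the vertices of $V(C)\setminus\{v\}$), so the singleton-neighbour precondition of $\BringNeighborUp$ fails.

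The paper avoids this by a two-phase removal. First it takes $M=V(C)\setminus\{v\}$, not $V(C)$. The crucial point is that $N(M)\cap\Rest\subseteq\{v\}$ (by \cref{lem:oneneighbor}), and $v\in X_{t_p}$ because $v\in\Rest$ is disjoint from $D_R\subseteq\In$; hence $N(M)\subseteq (V\setminus V_{t^\star})\cup X_{t_p}$ without any hypothesis on $u$, and $\RemoveFromSubtree$ applies. Only \emph{after} $M$ is gone does one argue that $N(v)\setminus\bigl((V\setminus V_{t_p})\cup X_{t_p}\bigr)\subseteq\{u\}$; at this point either $u$ is absent (remove $v$, shrinking the bag) or one subdivides the edge above $t^\star$ and applies $\BringNeighborUp(t',v,u)$, which simultaneously evicts $v$ and pulls $u$ into the bag, preserving $|X_{t^\star}|$ and the $D_R,D_{L_1},D_{L_2}$ relations uniformly across $t^\star$, its parent, and its $S$-children. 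Splitting off $v$ from the rest of $C$ is the missing idea.
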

\begin{proof}
Let $v\in\Rest$ be a vertex that is contained in an $R$-component in $G[(\Out\cup\{v\})\setminus S]$ and that has degree at most one in $G[\Rest]$. 
Consider $\mathcal{T}$ the target node~$t^\star$ in $\mathcal{T}$.
If $v\notin X_{t^\star}$, then we are done. 
Hence, assume that $v\in X_{t^\star}$. 
We will argue that we can modify the tree decomposition~$\mathcal{T}$ without increasing its width using the operations introduced in \cref{sec:mod} such that afterwards, the requirements in the lemma statement hold.

Let $C$ be the $R$-component in $G[(\Out\cup\{v\})\setminus S]$ that contains $v$. 
 Let $C_1,\ldots,C_\ell$ be the connected components of $C-\{v\}$. By \cref{lem:oneneighbor} we have for each $C_i$ with $1\le i\le \ell$ that $|N(V(C_i))\cap\Rest|\le 1$. We can conclude that $N(V(C_i))\cap\Rest=\{v\}$ for each $1\le i\le \ell$. Since $v$ has degree at most one in $G[\Rest]$, we have that $N(V(C))\cap \Rest\subseteq\{u\}$ for some $u\in \Rest$. 

    Let $t_p$ be the parent of ${t^\star}$. Note that $V_{t^\star}\subseteq V_{t_p}$. By \cref{obs:smallneighbors} we have that $X_{t_p}$ is not full.
         Furthermore, we have that $R^S\subseteq (V\setminus V_{{t^\star}})\cup X_{t_p}$. Denote $M=V(C)\setminus \{v\}$. We claim that $N(M)\subseteq (V\setminus V_{t^\star})\cup X_{t_p}$. 
         To this end, we observe the following:
         \begin{itemize}
             \item We have that $N(M)\cap \Out\subseteq R^S$. It follows that $N(M)\cap \Out\subseteq (V\setminus V_{t^\star})\cup X_{t_p}$. 
             \item We have that $N(M)\cap \Rest\subseteq \{v\}$. Since $v\notin D_R$ we have that $v\in X_{t_p}$ and hence $N(M)\cap \Rest\subseteq (V\setminus V_{t^\star})\cup X_{t_p}$.
             \item Finally, note that $N(M)\cap D_R=\emptyset$ since Step~\ref{regular:step2} of of \cref{alg:small} does not apply. Hence, we have that $N(M)\cap \In\subseteq X_{t_p}$. It follows that $N(M)\cap \In\subseteq (V\setminus V_{t^\star})\cup X_{t_p}$
         \end{itemize}
         Since $V=\Rest\cup\In\cup\Out$, we can conclude that $N(M)\subseteq (V\setminus V_{t^\star})\cup X_{t_p}$.

        This allows us to perform a $\RemoveFromSubtree$ operation. We apply the modification $\RemoveFromSubtree(t_p,M)$ to $\mathcal{T}$. Note that this requires $t_p$ to have a parent. If $t_p$ is the root of the tree decomposition, we create a new node which is a copy of $t_p$ and make it the parent of $t_p$ and hence the new root.
        Let $\mathcal{T}'$ denote the resulting tree decomposition.
        By \cref{lem:moveremove} we have that afterwards, $\mathcal{T}'$ is still $S$-nice, has width at most $k$, and is a sibling of $\mathcal{T}$. 
        By \cref{def:topheavy}, we have that
        if $\mathcal{T}$ is \topheavy for a node $\hat{t}$ such that $\hat{t}$ is not in $T_{t^\star_a}$ and not an ancestor of $t^\star_a$, or
         a descendant of $t^\star$,
then $\mathcal{T}'$ is also \topheavy for $\hat{t}$ and the subtree rooted at $\hat{t}$ is the same as in $\mathcal{T}$.
        Furthermore, since $M\cap X_{t^\star}=\emptyset$, we have that the bags of~$t^\star$, its parent, and its $S$-children remain unchanged. 
By \cref{lem:preserveslim}, we can we apply $\MakeSlimTwo$ to ensure that $\mathcal{T}'$ is \slim.
        Moreover, we have that no vertices from $S$ are moved and if the size of $X_{t^\star}$ remains unchanged, also no vertices from $D_R\cup D_{L_1}\cup D_{L_2}$ are moved.
Finally, we have by the definition of $\RemoveFromSubtree(t_p,M)$ (\cref{def:remove}) that $X_{t^\star}\cap \Out=\emptyset$ and, since $M\cap \In=\emptyset$, that $\In\subseteq X_{t^\star}$. 
%
%
%
%
%
Furthermore, we have that $N[M]\subseteq V\setminus (V_{t^\star}\cup X_{t_p})$.
%

 Observe that $N(v)\setminus ((V\setminus V_{t_p})\cup X_{t_p})\subseteq \{u\}$. Assume that it is not, then there is some $u'\neq u$ such that $u'\in N(v)$ and $u'\notin ((V\setminus V_{t_p})\cup X_{t_p})$. It follows that $u'\notin N[M]$. Furthermore, we have that $u'\notin S\setminus R^S$, since otherwise we have that $C$ is a mixed component. We can conclude that $u'\notin \Out$. 
     Furthermore, we must have that $u'\notin\Rest$, since otherwise we get a contradiction to $N(V(C))\cap \Rest\subseteq \{u\}$. We can conclude that $u'\in \In$. From \cref{def:snicetd} follows that $\{u'\}=X_{t_p}\setminus X_{t^\star}$. 
     Then we must have that $\{u'\}=D_{R}$.
     This is a contradiction to the assumption that Step~\ref{regular:step2} of the algorithm does not apply.
We can conclude that $N(v)\setminus ((V\setminus V_{t_p})\cup X_{t_p})\subseteq \{u\}$.

Consider the case that $N(v)\setminus ((V\setminus V_{t_p})\cup X_{t_p})=\emptyset$.
Then we can apply $\RemoveFromSubtree(t_p,\{v\})$ to~$\mathcal{T}'$. 
By \cref{lem:moveremove} we have that afterwards, $\mathcal{T}'$ is still $S$-nice, has width at most $k$, and is a sibling of $\mathcal{T}$. 
        By \cref{def:topheavy}, we have that
        if $\mathcal{T}$ is \topheavy for a node $\hat{t}$ such that $\hat{t}$ is not in $T_{t^\star_a}$ and not an ancestor of $t^\star_a$, or
         a descendant of $t^\star$,
then $\mathcal{T}'$ is also \topheavy for $\hat{t}$ and the subtree rooted at $\hat{t}$ is the same as in $\mathcal{T}$.
By \cref{lem:preserveslim}, we can we apply $\MakeSlimTwo$ to ensure that $\mathcal{T}'$ remains \slim.
Moreover, we again have that no vertices from $S$ are moved and if the size of $X_{t^\star}$ remains unchanged, also no vertices from $D_R\cup D_{L_1}\cup D_{L_2}$ are moved.
Furthermore, we have that the bags of~$t^\star$, its parent, and its $S$-children all receive the same change, namely that the vertex $v$ is removed.


Finally, consider the case that $N(v)\setminus (V_{t^{\star}_a}\cup S)=\{u\}$. 
We subdivide the edge between $t^\star$ and~$t_p$ and let $t'$ be the new node. We set $X_{t'}=X_{t_p}$. Swap the names of $t_p$ and $t'$ such that $t_p$ remains the parent of $t^\star$. Now we can perform the modification $\BringNeighborUp(t',v,u)$ (\cref{def:bringup}) to~$\mathcal{T}'$.
By \cref{lem:bringupdown} we have that afterwards, $\mathcal{T}'$ is still $S$-nice, has width at most $k$, and is a sibling of $\mathcal{T}$. 
        By \cref{def:topheavy}, we have that
        if $\mathcal{T}$ is \topheavy for a node $\hat{t}$ such that $\hat{t}$ is not in $T_{t^\star_a}$ and not an ancestor of $t^\star_a$, or
         a descendant of $t^\star$,
then $\mathcal{T}'$ is also \topheavy for $\hat{t}$ and the subtree rooted at $\hat{t}$ is the same as in $\mathcal{T}$.
By \cref{lem:preserveslim}, we can we apply $\MakeSlimTwo$ to ensure that $\mathcal{T}'$ remains \slim.
Moreover, we again have that no vertices from $S$ are moved and if the size of $X_{t^\star}$ remains unchanged, also no vertices from $D_R\cup D_{L_1}\cup D_{L_2}$ are moved. 
Furthermore, we have that the bags of~$t^\star$, its parent, and its $S$-children all receive the same change, namely that the vertex $v$ is removed and the vertex $u$ is added. 
This finishes the proof.
%
%
%
\end{proof}

Now we show that Step~\ref{regular:step4} of \cref{alg:small}, if applicable, is correct. This proof has many similarities with the one of \cref{lem:regularstep3}, but there are some key differences.

\begin{lemma}\label{lem:regularstep4}
Let $(\Rest,\In,\Out)$ be a 3-partition of $V$ that appears during some iteration of \cref{alg:small}. 
Assume there exists a \slim $S$-nice tree decomposition $\mathcal{T}=(T,\{X_t\}_{t\in V(T)})$ of $G$ with width $k$ that contains a directed path of length at least three with $S$-trace $(\hat{L}^S, \hat{X}^S, \hat{R}^S)$, or if $\hat{L}^S=\emptyset$ then $\mathcal{T}$ contains an $S$-bottom node that has an $S$-child with $S$-trace $(\hat{L}^S, \hat{X}^S, \hat{R}^S)$, such that neither $t_{\max}$ nor the parent of $t_{\max}$ are contained in a full join tree, and with target node ${t^\star}$ such that the following holds.
\begin{itemize}
    \item $\In\subseteq X_{t^\star}$ and $X_{t^\star}\cap \Out=\emptyset$,
            \item if $t^\star$ has a parent $t_p$, then $X_{t_p}=X_{t^\star}\setminus D_R$, 
    \item if $t^\star$ has one $S$-child $t_1$, then $X_{t_1}=X_{t^\star}\setminus D_{L_1}$, 
    \item if $t^\star$ has two $S$-children $t_1$ and $t_2$, then $X_{t_1}=X_{t^\star}\setminus D_{L_1}$ and $X_{t_2}=X_{t^\star}\setminus D_{L_2}$, 
    \item Steps~\ref{regular:step1},~\ref{regular:step2}, and~\ref{regular:step3} of \cref{alg:small} do not apply, and
    \item there is a vertex $v\in\Rest$ that is contained in an $H$-component in $G[(\Out\cup\{v\})\setminus S]$ for $H\in\{L_1,L_2\}$ in $G[(\Out\cup\{v\})\setminus S]$ and that has degree at most one in $G[\Rest]$.
\end{itemize} 
Then there is a \slim $S$-nice tree decomposition $\mathcal{T}'=(T',\{X'_t\}_{t\in V(T')})$ of $G$ with width at most $k$ 
such that the following holds.
\begin{itemize}
\item $\mathcal{T}'$ is a sibling of $\mathcal{T}$,
\item if $\mathcal{T}$ is \topheavy for a node $\hat{t}$ such that $\hat{t}$ is
\begin{itemize}
    \item not in $T_{t^\star_a}$ and not an ancestor of $t^\star_a$, or
    \item a descendant of $t^\star$,
\end{itemize}
then $\mathcal{T}'$ is also \topheavy for $\hat{t}$ and the subtree rooted at $\hat{t}$ is the same as in $\mathcal{T}$,
\item $\In\subseteq X'_{t^{\star}}$ and $X'_{t^{\star}}\cap (\Out\cup\{v\})=\emptyset$, and
\item $|X'_{t^{\star}}|<|X_{t^{\star}}|$ or each of the following holds:
\begin{itemize}
    \item if $t^{\star}$ has a parent $t'_p$, then $X'_{t'_p}=X'_{t^{\star}}\setminus D_R$, 
    \item if $t^{\star}$ has one $S$-child $t'_1$, then $X'_{t'_1}=X'_{t^{\star}}\setminus D_{L_1}$, and 
    \item if $t^{\star}$ has two $S$-children $t'_1$ and $t'_2$, then $X'_{t'_1}=X'_{t^{\star}}\setminus D_{L_1}$ and $X'_{t'_2}=X_{t^{\star}}\setminus D_{L_2}$.
\end{itemize}
\end{itemize}
\end{lemma}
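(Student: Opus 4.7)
The plan is to mirror the proof of Lemma~\ref{lem:regularstep3}, but to \emph{dualize} the direction of the movement: rather than pushing an $R$-component above $t^{\star}$ via $\RemoveFromSubtree$, I will push the $H$-component into the subtree rooted at the corresponding $S$-child $t_H$ via $\MoveIntoSubtree$. If $v\notin X_{t^{\star}}$, we can take $\mathcal{T}'=\mathcal{T}$ and are done, so suppose $v\in X_{t^{\star}}$. In each of the three possible shapes of $\tau_+$ (or of the $\tau$ inside $\extendedforget$), $L_H^{S}\subseteq L_{t_H}^{S}\subseteq V_{t_H}$ for a well-defined $S$-child $t_H$ of $t^{\star}$; by \cref{obs:smallneighbors} the bag $X_{t_H}$ is non-full and contained in $X_{t^{\star}}$ with $X_{t_H}=X_{t^{\star}}\setminus D_H$.

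Let $C$ be the $H$-component containing $v$ and set $M=V(C)\setminus\{v\}$. I will show $M\cap S=\emptyset$ (immediate, since $M\subseteq\Out\setminus S$) and $N(M)\subseteq V_{t_H}$. For the latter, partition the neighborhood into pieces in $S$, $\Rest$, $\In\setminus S$, and $\Out\setminus S$. The $H$-component definition gives $N(V(C))\cap S\subseteq L_H^{S}\cup X^{S}$, and Step~\ref{regular:step2} not applying forces $N(V(C))\cap D_H=\emptyset$, so $N(M)\cap X^{S}\subseteq X^{S}\setminus D_H\subseteq X_{t_H}$, while $L_H^{S}\subseteq V_{t_H}$. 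By \cref{lem:oneneighbor} applied to the components of $C-\{v\}$, every $\Rest$-neighbor of $M$ is $v$, and $v\in X_{t^{\star}}\setminus D_H=X_{t_H}$ (since $v\in\Rest$ and $D_H\subseteq\In$). The same Step~\ref{regular:step2} argument gives $N(M)\cap(\In\setminus S)\subseteq X_{t_H}$, and any neighbor in $\Out\setminus S$ belongs to $V(C)\subseteq M\cup\{v\}\subseteq V_{t_H}$.

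I will then apply $\MoveIntoSubtree(t_H,M)$ followed by $\MakeSlimTwo$ to $\mathcal{T}$. By \cref{lem:moveremove,lem:preserveslim}, the result $\mathcal{T}'$ is a \slim $S$-nice sibling of $\mathcal{T}$ of width at most $k$. Since $M\cap X_{t^{\star}}=\emptyset$ and the modification only touches $T_{t_H}$ and deletes $M$ from bags outside $T_{t_H}$, the bags of $t^{\star}$, of its parent, and of all its $S$-children are preserved, so $\In\subseteq X'_{t^{\star}}$ and $X'_{t^{\star}}\cap\Out=\emptyset$ still hold. The top-heavy conditions on any $\hat{t}$ of the forms listed in the statement are inherited from $\mathcal{T}$ by \cref{def:topheavy}, since $t_H$ is not contained in $T_{\hat{t}}$ in either case.

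It remains to remove $v$ from $X_{t^{\star}}$. A completely parallel analysis to Lemma~\ref{lem:regularstep3}, using the same three ingredients (the $H$-component structure, Step~\ref{regular:step2} non-applicability, and the degree-at-most-one property of $v$ in $G[\Rest]$), yields $N(v)\setminus V_{t_H}\subseteq\{w\}$ for some $w\in V\setminus S$. If this set is empty, applying $\MoveIntoSubtree(t_H,\{v\})$ to $\mathcal{T}'$ removes $v$ from every bag outside $T_{t_H}$ and delivers $|X'_{t^{\star}}|<|X_{t^{\star}}|$, invoking the first branch of the disjunction. Otherwise $N(v)\setminus V_{t_H}=\{w\}$; following the subdivide-and-rename trick from Case~B of Lemma~\ref{lem:regularstep3}, I will subdivide the edge between $t^{\star}$ and $t_H$, give the new node bag $X_{t_H}$, rename so that this new node plays the role of $t_H$, and apply $\BringNeighborDown(t_H,v,w)$ followed by $\MakeSlimTwo$. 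The bringdown replaces $v$ with $w$ in every bag outside $T_{t_H}$ (so $X'_{t^{\star}}=X_{t^{\star}}\setminus\{v\}\cup\{w\}$, and the parent's and other $S$-children's bags undergo the same replacement) and adds $w$ to the subdivided node; since $v,w\in V\setminus(S\cup D_R\cup D_{L_1}\cup D_{L_2})$, the $D$-conditions on parent and children bags transfer verbatim. The main technical hurdle, exactly as in Lemma~\ref{lem:regularstep3}'s Case~B, will be this subdivide-and-rename step, needed precisely to preserve the $D$-condition on $t_H$ after $w$ is added to its bag.
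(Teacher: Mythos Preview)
Your overall approach matches the paper's proof exactly: push $M=V(C)\setminus\{v\}$ into $V_{t_H}$ via $\MoveIntoSubtree$, then eliminate $v$ from $X_{t^\star}$ by a second move or a $\BringNeighborDown$. Your neighborhood analysis for $N(M)\subseteq V_{t_H}$ and $N(v)\setminus V_{t_H}\subseteq\{w\}$ is correct and is precisely what the paper does.

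There is, however, a genuine gap in your final step. After subdividing the edge $t^\star t_H$ and renaming so that the \emph{new} node is called $t_H$, you apply $\BringNeighborDown(t_H,v,w)$ to this new node. By \cref{def:bringdown}, this replaces $v$ by $w$ only in bags \emph{outside} $T_{t_H}$, so the new $S$-child's bag becomes $X_{t_H}\cup\{w\}$ and still contains $v$. Since $v\notin D_{L_1}$, you then get $X'_{t_H}=X_{t_H}\cup\{w\}\neq (X_{t^\star}\setminus\{v\}\cup\{w\})\setminus D_{L_1}=X'_{t^\star}\setminus D_{L_1}$, so the $D$-condition on the $S$-child fails; and since $|X'_{t^\star}|=|X_{t^\star}|$ in this replacement case, the first branch of the disjunction fails too. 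The paper's trick (and the one in \cref{lem:regularstep3} that you cite) applies $\BringNeighborDown$ to the \emph{old} node---the one that is now the grandchild of $t^\star$ after the swap---so that the replacement $v\mapsto w$ also hits the new $S$-child's bag. With that single correction your argument goes through verbatim.
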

\begin{proof}
Let $v\in\Rest$ be a vertex that is contained in an $H$-component in $G[(\Out\cup\{v\})\setminus S]$ and that has degree at most one in $G[\Rest]$. 
Consider $\mathcal{T}$ the target node~$t^\star$ in $\mathcal{T}$.
If $v\notin X_{t^\star}$, then we are done. 
Hence, assume that $v\in X_{t^\star}$. 
We will argue that we can modify the tree decomposition~$\mathcal{T}$ without increasing its width using the operations introduced in \cref{sec:mod} such that afterwards, the requirements in the lemma statement hold.

Let $C$ be the $H$-component in $G[(\Out\cup\{v\})\setminus S]$ that contains $v$. 
 Let $C_1,\ldots,C_\ell$ be the connected components of $C-\{v\}$. By \cref{lem:oneneighbor} we have for each $C_i$ with $1\le i\le \ell$ that $|N(V(C_i))\cap\Rest|\le 1$. We can conclude that $N(V(C_i))\cap\Rest=\{v\}$ for each $1\le i\le \ell$. Since $v$ has degree at most one in $G[\Rest]$, we have that $N(V(C))\cap \Rest\subseteq\{u\}$ for some $u\in \Rest$. 
    
We analyze the case that $H=L_1$, that is, $C$ is an $L_1$-component in $G[(\Out\cup\{v\})\setminus S]$.
The case where $H=L_2$ is analogous.
By \cref{obs:smallneighbors} we have that ${t^\star}$ has a child $t_1$, such that $X_{t_1}$ is not full and $L^S_1\subseteq V_{t_1}$. 
        Denote $M=V(C)\setminus \{v\}$.
        We claim that $N(M)\subseteq V_{t_1}$. 
 To this end, we observe the following:
         \begin{itemize}
             \item We have that $N(M)\cap \Out\subseteq L_1^S$. It follows that $N(M)\cap \Out\subseteq V_{t_1}$. 
             \item We have that $N(M)\cap \Rest\subseteq \{v\}$. Since $v\notin D_{L_1}$ we have that $v\in X_{t_1}$ and hence $N(M)\cap \Rest\subseteq V_{t_1}$.
             \item Finally, note that $N(M)\cap D_{L_1}=\emptyset$ since Step~\ref{regular:step2} of of \cref{alg:small} does not apply. Hence, we have that $N(M)\cap \In\subseteq X_{t_1}$. It follows that $N(M)\cap \In\subseteq V_{t_1}$
         \end{itemize}
         Since $V=\Rest\cup\In\cup\Out$, we can conclude that $N(M)\subseteq V_{t_1}$.
        

        This allows us to perform a $\MoveIntoSubtree$ operation. We apply the modification $\MoveIntoSubtree(t_1,M)$ to $\mathcal{T}$. 
        Let $\mathcal{T}'$ denote the resulting tree decomposition.
        By \cref{lem:moveremove} we have that afterwards, 
        $\mathcal{T}'$ is still $S$-nice, has width at most $k$, and is a sibling of $\mathcal{T}$. 
        By \cref{def:topheavy}, we have that
        if $\mathcal{T}$ is \topheavy for a node $\hat{t}$ such that $\hat{t}$ is not in $T_{t^\star_a}$ and not an ancestor of $t^\star_a$, or
         a descendant of $t^\star$,
then $\mathcal{T}'$ is also \topheavy for $\hat{t}$ and the subtree rooted at $\hat{t}$ is the same as in $\mathcal{T}$.
        Furthermore, since $M\cap X_{t^\star}=\emptyset$, we have that the bags of~$t^\star$, its parent, and its $S$-children remain unchanged. 
By \cref{lem:preserveslim}, we can we apply $\MakeSlimTwo$ to ensure that $\mathcal{T}'$ remains \slim.
Moreover, we have that no vertices from $S$ are moved and if the size of $X_{t^\star}$ remains unchanged, also no vertices from $D_R\cup D_{L_1}\cup D_{L_2}$ are moved.
Finally, we have by the definition of $\MoveIntoSubtree(t_1,M)$ (\cref{def:move}) that $X_{t^\star}\cap \Out=\emptyset$ and, since $M\cap \In=\emptyset$, that $\In\subseteq X_{t^\star}$. Furthermore, we have that $N[M]\subseteq V_{t_1}$. 
        
     Observe that $N(v)\setminus V_{t_1}\subseteq \{u\}$. Assume that it is not, then there is some $u'\neq u$ such that $u'\in N(v)$ and $u'\notin V_{t_1}$. It follows that $u'\notin N[M]$. Furthermore, we have that $u'\notin S\setminus L_1^S$, since otherwise we have that $C$ is a mixed component. We can conclude that $u'\notin \Out$. 
     Furthermore, we must have that $u'\notin\Rest$, since otherwise we get a contradiction to $N(V(C))\cap \Rest\subseteq \{u\}$. We can conclude that $u'\in \In$. From \cref{def:snicetd} follows that $\{u'\}=X_{t_1}\setminus X_{t^\star}$. 
     Then we must have that $\{u'\}=D_{L_1}$.
     This is a contradiction to the assumption that Step~\ref{regular:step2} of the algorithm does not apply.
We can conclude that $N(v)\setminus V_{t_1}\subseteq \{u\}$.

Consider the case that $N(v)\setminus V_{t_1}=\emptyset$.
We subdivide the edge between $t^\star$ and~$t_1$ and let~$t'$ be the new node. We set $X_{t'}=X_{t_1}$. Swap the names of $t_1$ and $t'$ such that $t_1$ remains the child of $t^\star$. Then we can apply $\MoveIntoSubtree(t',\{v\})$ to~$\mathcal{T}'$. 
By \cref{lem:moveremove} we have that afterwards, $\mathcal{T}'$ is still $S$-nice, has width at most $k$, and is a sibling of $\mathcal{T}$. 
        By \cref{def:topheavy}, we have that
        if $\mathcal{T}$ is \topheavy for a node $\hat{t}$ such that $\hat{t}$ is not in $T_{t^\star_a}$ and not an ancestor of $t^\star_a$, or
         a descendant of $t^\star$,
then $\mathcal{T}'$ is also \topheavy for $\hat{t}$ and the subtree rooted at $\hat{t}$ is the same as in $\mathcal{T}$.
By \cref{lem:preserveslim}, we can we apply $\MakeSlimTwo$ to ensure that $\mathcal{T}'$ remains \slim.
Moreover, we again have that no vertices from $S$ are moved and if the size of $X_{t^\star}$ remains unchanged, also no vertices from $D_R\cup D_{L_1}\cup D_{L_2}$ are moved.
Furthermore, we have that the bags of~$t^\star$, its parent, and its $S$-children all receive the same change, namely that the vertex $v$ is removed.     

Finally, consider the case that $N(v)\setminus V_{t_1}=\{u\}$. 
We subdivide the edge between $t^\star$ and~$t_1$ and let $t'$ be the new node. We set $X_{t'}=X_{t_1}$. Swap the names of $t_1$ and $t'$ such that $t_1$ remains the child of $t^\star$. Now we can perform the modification $\BringNeighborDown(t',v,u)$ (\cref{def:bringdown}) to~$\mathcal{T}'$.
By \cref{lem:bringupdown} we have that afterwards, $\mathcal{T}'$ is still $S$-nice, has width at most $k$, and is a sibling of $\mathcal{T}$. 
        By \cref{def:topheavy}, we have that
        if $\mathcal{T}$ is \topheavy for a node $\hat{t}$ such that $\hat{t}$ is not in $T_{t^\star_a}$ and not an ancestor of $t^\star_a$, or
         a descendant of $t^\star$,
then $\mathcal{T}'$ is also \topheavy for $\hat{t}$ and the subtree rooted at $\hat{t}$ is the same as in $\mathcal{T}$.
Furthermore, we have that $v\notin X_{\star}$ and hence $\In\subseteq X_{t^{\star}}$ and $X_{t^{\star}}\cap (\Out\cup\{v\})=\emptyset$.
By \cref{lem:preserveslim}, we can we apply $\MakeSlimTwo$ to ensure that $\mathcal{T}'$ remains \slim.
Moreover, we again have that no vertices from $S$ are moved and if the size of $X_{t^\star}$ remains unchanged, also no vertices from $D_R\cup D_{L_1}\cup D_{L_2}$ are moved. 
Furthermore, we have that the bags of~$t^\star$, its parent, and its $S$-children all receive the same change, namely that the vertex $v$ is removed and the vertex $u$ is added. 
This finishes the proof.
\end{proof}

Next, we show that the Step~\ref{regular:step5} of \cref{alg:small}, if applicable, is correct. Again, this proof has many similarities to the previous ones.

\begin{lemma}\label{lem:regularstep5}
Let $(\Rest,\In,\Out)$ be a 3-partition of $V$ that appears during some iteration of \cref{alg:small}. 
Assume there exists a \slim $S$-nice tree decomposition $\mathcal{T}=(T,\{X_t\}_{t\in V(T)})$ of $G$ with width $k$ that contains a directed path of length at least three with $S$-trace $(\hat{L}^S, \hat{X}^S, \hat{R}^S)$, or if $\hat{L}^S=\emptyset$ then $\mathcal{T}$ contains an $S$-bottom node that has an $S$-child with $S$-trace $(\hat{L}^S, \hat{X}^S, \hat{R}^S)$, such that neither $t_{\max}$ nor the parent of $t_{\max}$ are contained in a full join tree, and with target node ${t^\star}$ such that the following holds.
\begin{itemize}
    \item $\In\subseteq X_{t^\star}$ and $X_{t^\star}\cap \Out=\emptyset$, 
                \item if $t^\star$ has a parent $t_p$, then $X_{t_p}=X_{t^\star}\setminus D_R$, 
    \item if $t^\star$ has one $S$-child $t_1$, then $X_{t_1}=X_{t^\star}\setminus D_{L_1}$, 
    \item if $t^\star$ has two $S$-children $t_1$ and $t_2$, then $X_{t_1}=X_{t^\star}\setminus D_{L_1}$ and $X_{t_2}=X_{t^\star}\setminus D_{L_2}$, 
    \item Steps~\ref{regular:step1},~\ref{regular:step2},~\ref{regular:step3}, and~\ref{regular:step4} of \cref{alg:small} do not apply, and
    \item there is a vertex $v\in\Rest$ that has degree one in $G[\Rest]$ and the connected component~$C$ of $G[(\Out\cup\{v\})\setminus S]$ that contains $v$ has the property that $N(V(C))\cap \In\neq \In$.
\end{itemize} 
Then there is a \slim $S$-nice tree decomposition $\mathcal{T}'=(T',\{X'_t\}_{t\in V(T')})$ of $G$ with width at most $k$ 
such that the following holds.
\begin{itemize}
\item $\mathcal{T}'$ is a sibling of $\mathcal{T}$,
\item if $\mathcal{T}$ is \topheavy for a node $\hat{t}$ such that $\hat{t}$ is
\begin{itemize}
    \item not in $T_{t^\star_a}$ and not an ancestor of $t^\star_a$, or
    \item a descendant of $t^\star$,
\end{itemize}
then $\mathcal{T}'$ is also \topheavy for $\hat{t}$ and the subtree rooted at $\hat{t}$ is the same as in $\mathcal{T}$,
\item $\In\subseteq X'_{t^{\star}}$ and $X'_{t^{\star}}\cap (\Out\cup\{v\})=\emptyset$, and
\item $|X'_{t^{\star}}|<|X_{t^{\star}}|$ or each of the following holds:
\begin{itemize}
    \item if $t^{\star}$ has a parent $t'_p$, then $X'_{t'_p}=X'_{t^{\star}}\setminus D_R$, 
    \item if $t^{\star}$ has one $S$-child $t'_1$, then $X'_{t'_1}=X'_{t^{\star}}\setminus D_{L_1}$, and 
    \item if $t^{\star}$ has two $S$-children $t'_1$ and $t'_2$, then $X'_{t'_1}=X'_{t^{\star}}\setminus D_{L_1}$ and $X'_{t'_2}=X_{t^{\star}}\setminus D_{L_2}$.
\end{itemize}
\end{itemize}
\end{lemma}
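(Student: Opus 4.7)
The proof will mirror the structure of the proofs of Lemma~\ref{lem:regularstep3} and Lemma~\ref{lem:regularstep4}, specialising to the case where $C$ is an $F$-component. First I would argue that, because Steps~\ref{regular:step1}--\ref{regular:step4} of Algorithm~\ref{alg:small} do not apply and $v$ has degree exactly one in $G[\Rest]$, the component $C$ of $G[(\Out\cup\{v\})\setminus S]$ containing $v$ must be an $F$-component: it is not mixed (Step~\ref{regular:step1} would apply), and if it were an $R$-, $L_1$-, or $L_2$-component then, since Step~\ref{regular:step2} rules out $N(V(C))\cap D_H\neq\emptyset$, Step~\ref{regular:step3} or~\ref{regular:step4} would fire. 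If already $v\notin X_{t^\star}$ we take $\mathcal{T}'=\mathcal{T}$ and verify directly, so assume $v\in X_{t^\star}$ and set $M=V(C)\setminus\{v\}$.

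Lemma~\ref{lem:oneneighbor} applied to the components of $C-v$, together with the degree bound on $v$, gives $N(V(C))\cap\Rest\subseteq\{u\}$ for the unique neighbour $u$ of $v$ in $\Rest$. The $F$-component property gives $N(M)\cap S=\emptyset$, and combining this with the failure of Step~\ref{regular:step2} rules out neighbours of $M$ in $D_R\cup D_{L_1}\cup D_{L_2}$. Hence $N(M)\subseteq X_{t_p}\cup\{v\}$, and in fact $N(M)\subseteq X_{t_p}$ because $v\in X_{t^\star}\setminus D_R=X_{t_p}$ (since $v\in\Rest$ forces $v\notin D_R$). This entitles me to apply $\RemoveFromSubtree(t_p,M)$ to $\mathcal{T}$, possibly after first installing a copy of $t_p$ as a new root in case $t_p$ was already the root. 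By Observation~\ref{lem:moveremove} the outcome is an $S$-nice tree decomposition of $G$ of width at most $k$ that is a sibling of $\mathcal{T}$; Definition~\ref{def:topheavy} together with the assumed \topheavy-ness of $\mathcal{T}$ ensures that the relevant \topheavy clauses of the conclusion are preserved; and since $M\cap X_{t^\star}=\emptyset$, the bags of $t^\star$, its parent, and its $S$-children remain unchanged at this stage. From here I follow the closing sub-case split of Lemma~\ref{lem:regularstep3}: if $N(v)\cap(V\setminus V_{t_p})=\emptyset$, apply $\RemoveFromSubtree(t_p,\{v\})$; otherwise $N(v)\cap(V\setminus V_{t_p})=\{u\}$, in which case I subdivide the edge above $t^\star$ and apply $\BringNeighborUp$ to the new node to swap $v$ for $u$. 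In each sub-case, $\MakeSlimTwo$ is invoked and Lemma~\ref{lem:preserveslim} yields \slim-ness of the result.

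The main obstacle I expect lies in the final bullet of the conclusion: verifying that either $|X'_{t^\star}|<|X_{t^\star}|$, or the guessed equalities $X'_{t_p'}=X'_{t^\star}\setminus D_R$, $X'_{t_1'}=X'_{t^\star}\setminus D_{L_1}$, and $X'_{t_2'}=X'_{t^\star}\setminus D_{L_2}$ are preserved. This is precisely where the hypothesis $N(V(C))\cap\In\neq\In$ enters, ruling out the pathological configuration in which the $\BringNeighborUp$ branch would drag $u$ into a position that collides with a guessed vertex in $D_R\cup D_{L_1}\cup D_{L_2}$ or force an extra full bag inconsistent with the neighbouring bags. A witness $w\in\In\setminus N(V(C))$ supplied by this hypothesis ensures that the shift induced by $\BringNeighborUp$ is absorbed without disturbing the guessed structural equalities (if the equalities would fail, the bag can only have shrunk, landing us in the $|X'_{t^\star}|<|X_{t^\star}|$ branch). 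Combining this dichotomy with Definition~\ref{def:topheavy} finishes the verification.
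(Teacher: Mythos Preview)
Your approach departs from the paper's in a way that breaks the argument. You try to push $M=V(C)\setminus\{v\}$ \emph{up} through $t_p$ via $\RemoveFromSubtree$, mirroring Lemma~\ref{lem:regularstep3}. The paper instead pushes $M$ \emph{down}: it uses the hypothesis $N(V(C))\cap\In\neq\In$ to pick a witness $w\in\In\setminus N(V(C))$, attaches a \emph{new} non-$S$-child $t'$ to $t^\star$ with bag $X_{t'}=X_{t^\star}\setminus\{w\}$, and applies $\MoveIntoSubtree(t',M)$, followed by either $\MoveIntoSubtree(t',\{v\})$ or $\BringNeighborDown(t',v,u)$. So the hypothesis is consumed right at the start to build the target bag $X_{t'}$, not at the end to resolve a collision as you suggest.

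The concrete gap in your route is the claim that the failure of Step~\ref{regular:step2} rules out neighbours of $M$ in $D_R\cup D_{L_1}\cup D_{L_2}$. Step~\ref{regular:step2} only fires for $H$-components with $H\in\{L_1,L_2,R\}$; since $C$ here is an $F$-component, its failure says nothing about $N(V(C))\cap D_R$. Consequently you cannot conclude $N(M)\subseteq X_{t_p}=X_{t^\star}\setminus D_R$: a vertex of $D_R$ may well lie in $N(M)$, and then the prerequisite for $\RemoveFromSubtree$ at $t_p$ (or at $t^\star$) is not met. (Relatedly, ``$N(M)\cap S=\emptyset$'' is too strong; the $F$-component property only gives $N(M)\cap S\subseteq X^S$.) This is exactly why the paper does not reuse the Lemma~\ref{lem:regularstep3} template and instead creates a fresh child of $t^\star$ whose bag omits~$w$: that bag is guaranteed to contain $N(M)$ because $w\notin N(V(C))$, independently of where $D_R,D_{L_1},D_{L_2}$ sit inside~$\In$.
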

\begin{proof}
    Let $v\in\Rest$ be a vertex that has degree one in $G[\Rest]$ and the connected component~$C$ of $G[(\Out\cup\{v\})\setminus S]$ that contains $v$ has the property that $N(V(C))\cap \In\neq \In$. 
    Consider $\mathcal{T}$ the target node~$t^\star$ in $\mathcal{T}$.
If $v\notin X_{t^\star}$, then we are done. 
Hence, assume that $v\in X_{t^\star}$. 
We will argue that we can modify the tree decomposition~$\mathcal{T}$ without increasing its width using the operations introduced in \cref{sec:mod} such that afterwards, the requirements in the lemma statement hold.
    
    To this end, consider the connected component $C$ in $G[(\Out\cup\{v\})\setminus S]$ that contains $v$. Since Steps~\ref{regular:step1},~\ref{regular:step2},~\ref{regular:step3}, and~\ref{regular:step4} of \cref{alg:small} do not apply, we know that $C$ is an $F$-component. Let $C_1,\ldots,C_\ell$ be the connected components of $C-\{v\}$. By \cref{lem:oneneighbor} we have for each $C_i$ with $1\le i\le \ell$ that $|N(V(C_i))\cap\Rest|\le 1$. We can conclude that $N(V(C_i))\cap\Rest=\{v\}$ for each $1\le i\le \ell$. Since $v$ has degree one in $G[\Rest]$, we have that $N(V(C))\cap \Rest=\{u\}$ for some $u\in \Rest$. 
The fact that $N(V(C_i))\cap\Rest=\{v\}$ for each $1\le i\le \ell$ together with the fact that $C$ is an $F$-component also implies that $N(V(C)\setminus\{v\})\subseteq \In\cup\{v\}\subseteq X_{t^\star}$.   
   Denote $M=V(C)\setminus\{v\}$. Let $t_p$ denote the parent of $t^\star$. 

        We modify $\mathcal{T}$ as follows. Let $w\in \In$ such that $N(V(C))\cap \In\subseteq \In\setminus \{w\}$. We add a new child $t'$ to ${t^\star}$ with bag $X_{t'}=X_{t^\star}\setminus\{w\}$. We can observe the following.
        \begin{itemize}
            \item We have $N[V(C)]\cap X_{t^\star}\subseteq X_{t'}$,
            \item we have $|X_{t'}|<|X_{t^\star}|$, and
            \item we have that bag $t'$ has $S$-trace $(\emptyset,X^S,L^S\cup R^S)$ and hence, by \cref{def:sparentchild}, is not an $S$-child of ${t^\star}$.
        \end{itemize}
          By construction, we have that all neighbors of the set $M$ are contained in the (trivial) subtree rooted at $t'$, since $N(M)\subseteq N[M\cup\{v\}]$ and $N(M)\subseteq X_{t^\star}$, and hence $N(M)\subseteq N[M\cup\{v\}]\cap X_{t^\star}\subseteq X_{t'}$.
        This allows us to perform a $\MoveIntoSubtree$ modification (\cref{def:move}). We apply $\MoveIntoSubtree(t',M)$ to $\mathcal{T}$. Let $\mathcal{T}'$ denote the resulting tree decomposition.
By \cref{lem:moveremove} we have that afterwards, 
$\mathcal{T}'$ is still $S$-nice, has width at most $k$, and is a sibling of $\mathcal{T}$. 
        By \cref{def:topheavy}, we have that
        if $\mathcal{T}$ is \topheavy for a node $\hat{t}$ such that $\hat{t}$ is not in $T_{t^\star_a}$ and not an ancestor of $t^\star_a$, or
         a descendant of $t^\star$,
then $\mathcal{T}'$ is also \topheavy for $\hat{t}$ and the subtree rooted at $\hat{t}$ is the same as in $\mathcal{T}$.
By \cref{lem:preserveslim}, we can we apply $\MakeSlimTwo$ to ensure that $\mathcal{T}'$ remains \slim.
Moreover, we have that no vertices from $S$ are moved and if the size of $X_{t^\star}$ remains unchanged, also no vertices from $D_R\cup D_{L_1}\cup D_{L_2}$ are moved.
Finally, we have by the definition of $\MoveIntoSubtree(t',M)$ (\cref{def:move}) that $X_{t^\star}\cap \Out=\emptyset$ and, since $M\cap \In=\emptyset$, that $\In\subseteq X_{t^\star}$. Furthermore, we have that $N[M]\subseteq V_{t'}$.

        Observe that $N(v)\setminus V_{t'}\subseteq \{u\}$. Assume that it is not, then there is some $u'\neq u$ such that $u'\in N(v)$ and $u'\notin V_{t'}$. It follows that $u'\notin N[M]$ which implies that $u'\notin \Out$. Since $u'\in N[M\cup\{v\}]$ we have that $u'\notin X_{t^\star}$ which implies that $u'\notin \In$. We must have that $u'\in\Rest$. This is a contradiction to $\{u\}=N(V(C))\cap \Rest$.
We can conclude that $N(v)\setminus V_{t'}\subseteq \{u\}$.

Consider the case that $N(v)\setminus V_{t'}=\emptyset$.
We subdivide the edge between $t^\star$ and~$t'$ and let~$t''$ be the new node. We set $X_{t''}=X_{t'}$. Swap the names of $t'$ and $t''$ such that $t'$ remains the child of $t^\star$. Then we can apply $\MoveIntoSubtree(t',\{v\})$ to~$\mathcal{T}'$. 
By \cref{lem:moveremove} we have that afterwards, $\mathcal{T}'$ is still $S$-nice, has width at most $k$, and is a sibling of $\mathcal{T}$. 
        By \cref{def:topheavy}, we have that
        if $\mathcal{T}$ is \topheavy for a node $\hat{t}$ such that $\hat{t}$ is not in $T_{t^\star_a}$ and not an ancestor of $t^\star_a$, or
         a descendant of $t^\star$,
then $\mathcal{T}'$ is also \topheavy for $\hat{t}$ and the subtree rooted at $\hat{t}$ is the same as in $\mathcal{T}$.
By \cref{lem:preserveslim}, we can we apply $\MakeSlimTwo$ to ensure that $\mathcal{T}'$ remains \slim.
Moreover, we again have that no vertices from $S$ are moved and if the size of $X_{t^\star}$ remains unchanged, also no vertices from $D_R\cup D_{L_1}\cup D_{L_2}$ are moved. 
Furthermore, we have that the bags of~$t^\star$, its parent, and its $S$-children all receive the same change, namely that the vertex $v$ is removed.

Finally, consider the case that $N(v)\setminus V_{t'}=\{u\}$. 
We subdivide the edge between $t^\star$ and~$t'$ and let $t''$ be the new node. We set $X_{t''}=X_{t'}$. Swap the names of $t'$ and $t''$ such that $t'$ remains the child of $t^\star$.
Now we can perform the modification $\BringNeighborDown(t',v,u)$ (\cref{def:bringdown}) to~$\mathcal{T}'$.
By \cref{lem:bringupdown} we have that afterwards, $\mathcal{T}'$ is still $S$-nice, has width at most $k$, and is a sibling of $\mathcal{T}$. 
        By \cref{def:topheavy}, we have that
        if $\mathcal{T}$ is \topheavy for a node $\hat{t}$ such that $\hat{t}$ is not in $T_{t^\star_a}$ and not an ancestor of $t^\star_a$, or
         a descendant of $t^\star$,
then $\mathcal{T}'$ is also \topheavy for $\hat{t}$ and the subtree rooted at $\hat{t}$ is the same as in $\mathcal{T}$.
Furthermore, we have that $v\notin X_{t_\star}$ and hence $\In\subseteq X_{t^{\star}}$ and $X_{t^{\star}}\cap (\Out\cup\{v\})=\emptyset$.
By \cref{lem:preserveslim}, we can we apply $\MakeSlimTwo$ to ensure that $\mathcal{T}'$ remains \slim. 
Moreover, we have again that no vertices from $S$ are moved and if the size of $X_{t^\star}$ remains unchanged, also no vertices from $D_R\cup D_{L_1}\cup D_{L_2}$ are moved.
Furthermore, we have that the bags of~$t^\star$, its parent, and its $S$-children all receive the same change, namely that the vertex $v$ is removed and the vertex $u$ is added. 
    This finishes the proof.
\end{proof}

Finally, we show that Step~\ref{regular:step6} of \cref{alg:small} is correct.

\begin{lemma}\label{lem:regularstep6}
Let $(\Rest,\In,\Out)$ be a 3-partition of $V$ that appears during some iteration of \cref{alg:small}. 
Assume there exists a \slim $S$-nice tree decomposition $\mathcal{T}=(T,\{X_t\}_{t\in V(T)})$ of $G$ with width $k$ that contains a directed path of length at least three with $S$-trace $(\hat{L}^S, \hat{X}^S, \hat{R}^S)$, or if $\hat{L}^S=\emptyset$ then $\mathcal{T}$ contains an $S$-bottom node that has an $S$-child with $S$-trace $(\hat{L}^S, \hat{X}^S, \hat{R}^S)$, such that neither $t_{\max}$ nor the parent of $t_{\max}$ are contained in a full join tree, and with target node ${t^\star}$ such that the following holds.
\begin{itemize}
    \item $\In\subseteq X_{t^\star}$ and $X_{t^\star}\cap \Out=\emptyset$, 
                \item if $t^\star$ has a parent $t_p$, then $X_{t_p}=X_{t^\star}\setminus D_R$, 
    \item if $t^\star$ has one $S$-child $t_1$, then $X_{t_1}=X_{t^\star}\setminus D_{L_1}$, 
    \item if $t^\star$ has two $S$-children $t_1$ and $t_2$, then $X_{t_1}=X_{t^\star}\setminus D_{L_1}$ and $X_{t_2}=X_{t^\star}\setminus D_{L_2}$, 
    \item Steps~\ref{regular:step1},~\ref{regular:step2},~\ref{regular:step3},~\ref{regular:step4}, and~\ref{regular:step5} of \cref{alg:small} do not apply, and
    \item there is a vertex $v\in\Rest$ that has degree at most one in $G[\Rest]$.
\end{itemize} 
Then there is a \slim $S$-nice tree decomposition $\mathcal{T}'=(T',\{X'_t\}_{t\in V(T')})$ of $G$ with width at most $k$ 
such that the following holds.
\begin{itemize}
\item $\mathcal{T}'$ is a sibling of $\mathcal{T}$,
\item if $\mathcal{T}$ is \topheavy for a node $\hat{t}$ such that $\hat{t}$ is
\begin{itemize}
    \item not in $T_{t^\star_a}$ and not an ancestor of $t^\star_a$, or
    \item a descendant of $t^\star$,
\end{itemize}
then $\mathcal{T}'$ is also \topheavy for $\hat{t}$ and the subtree rooted at $\hat{t}$ is the same as in $\mathcal{T}$,
\item $\In\subseteq X'_{t^{\star}}$ and $X'_{t^{\star}}\cap (\Out\cup\{v\})=\emptyset$, and
\item $|X'_{t^{\star}}|<|X_{t^{\star}}|$ or each of the following holds:
\begin{itemize}
    \item if $t^{\star}$ has a parent $t'_p$, then $X'_{t'_p}=X'_{t^{\star}}\setminus D_R$, 
    \item if $t^{\star}$ has one $S$-child $t'_1$, then $X'_{t'_1}=X'_{t^{\star}}\setminus D_{L_1}$, and 
    \item if $t^{\star}$ has two $S$-children $t'_1$ and $t'_2$, then $X'_{t'_1}=X'_{t^{\star}}\setminus D_{L_1}$ and $X'_{t'_2}=X_{t^{\star}}\setminus D_{L_2}$.
\end{itemize}
\end{itemize}
\end{lemma}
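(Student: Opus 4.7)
The plan is to prove this lemma using the same template as Lemmas~\ref{lem:regularstep3}--\ref{lem:regularstep5}: if $v\notin X_{t^\star}$, set $\mathcal{T}'=\mathcal{T}$; otherwise modify $\mathcal{T}$ via the operations of \cref{sec:mod}. The first step is to pin down the structure of the connected component $C$ of $v$ in $G[(\Out\cup\{v\})\setminus S]$. Since Step~\ref{regular:step1} does not apply, $C$ is not mixed, and since Steps~\ref{regular:step3} and~\ref{regular:step4} do not apply while $v$ has degree at most one in $G[\Rest]$, $C$ is neither an $R$-, an $L_1$-, nor an $L_2$-component, so $C$ is an $F$-component. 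Applying \cref{lem:oneneighbor} to each connected component of $C-\{v\}$ together with the degree bound on $v$ gives $N(V(C))\cap\Rest\subseteq\{u\}$, where $u\in\Rest$ exists only if $v$ has degree exactly one; combining with the $F$-component property and the maximality of $V(C)$ in $G[(\Out\cup\{v\})\setminus S]$, this forces $N(V(C))\subseteq\In\cup\{u\}$. Let $M=V(C)\setminus\{v\}$; then $N(M)\subseteq\In\cup\{v\}\subseteq X_{t^\star}$.

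I then split into two cases based on whether some $w\in\In\setminus N(V(C))$ exists. In Case~A ($w$ exists), the argument mirrors that of \cref{lem:regularstep5} almost verbatim: attach a new non-$S$-child $t'$ of $t^\star$ with bag $X_{t'}=X_{t^\star}\setminus\{w\}$; apply $\MoveIntoSubtree(t',M)$, which is valid because $N(M)\subseteq X_{t^\star}\setminus\{w\}=X_{t'}$; then handle $v$ by subdividing the edge between $t^\star$ and $t'$ and either applying $\MoveIntoSubtree$ to $\{v\}$ (degree-$0$ subcase) or $\BringNeighborDown$ with neighbor $u$ (degree-$1$ subcase). This yields $X'_{t^\star}$ in which $v$ is removed (degree~$0$) or replaced by $u$ (degree~$1$), with the same change propagated to the bags of the parent and the $S$-children of $t^\star$, preserving the parent/child bag relations required by the conclusion.

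In Case~B ($N(V(C))\supseteq\In$, so no such $w$ exists), I attach a new non-$S$-child $t'$ of $t^\star$ with $X_{t'}=X_{t^\star}$ and apply $\MoveIntoSubtree(t',M)$, which is valid since $N(M)\subseteq X_{t^\star}=X_{t'}$; this relocates $M$ into a fresh subtree rooted below $t'$. After the move, $v$'s neighborhood outside $V_{t'}$ is contained in $\{u\}$ (present only in the degree-$1$ subcase). In the degree-$0$ subcase I subdivide the edge $(t^\star,t')$ and apply $\MoveIntoSubtree$ to $\{v\}$, which removes $v$ from $X_{t^\star}$ without adding any other vertex, giving $|X'_{t^\star}|<|X_{t^\star}|$ and thus matching the first disjunct of the conclusion. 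In the degree-$1$ subcase I subdivide the edge and apply $\BringNeighborDown$ to swap $v$ for $u$ in the bags at and above the subdivided node; the bags of $t^\star$, its parent, and its $S$-children all receive the same update, matching the second disjunct.

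The main obstacle is coordinating the full-bag accounting in Case~B: when $X_{t^\star}$ is full, the initial $X_{t'}=X_{t^\star}$ is also full, so the non-fullness prerequisite for $\BringNeighborDown$ must be arranged via the edge subdivision, which inserts a non-full intermediate node serving as the root for the bring-down step. Verifying that $\mathcal{T}'$ is a \slim $S$-nice tree decomposition of width at most $k$, a sibling of $\mathcal{T}$, and retains the \topheavy property for the required class of nodes $\hat t$ (with the subtree at $\hat t$ unchanged) is then routine from \cref{lem:moveremove,lem:bringupdown,lem:preserveslim} together with the observation that all modifications are confined to the new subtree rooted at $t'$ and to $X_{t^\star}$, leaving both the subtrees rooted at existing descendants of $t^\star$ and any nodes outside $T_{t^\star_a}$ that are not ancestors of $t^\star_a$ intact. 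Finally, $\In\subseteq X'_{t^\star}$ and $X'_{t^\star}\cap(\Out\cup\{v\})=\emptyset$ follow because only $v$ (and $M\subseteq\Out$, which was already absent from $X_{t^\star}$) is removed from $X_{t^\star}$, while every vertex of $\In$ remains intact.
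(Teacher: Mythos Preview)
Your case split is slightly misaligned with the hypotheses, and this hides the real difficulty. Since Step~\ref{regular:step5} does not apply, whenever $v$ has degree exactly one in $G[\Rest]$ we are forced into your Case~B ($\In\subseteq N(V(C))$); your Case~A with degree one is vacuous. So the only substantive new situation beyond \cref{lem:regularstep5} is precisely Case~B with degree one, and there your argument has a genuine gap.

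You attach $t'$ with $X_{t'}=X_{t^\star}$ and claim that subdividing the edge $(t^\star,t')$ ``inserts a non-full intermediate node'' enabling $\BringNeighborDown$. But subdivision gives the new node one of the endpoint bags, and both are $X_{t^\star}$; if $X_{t^\star}$ is full (which is exactly the hard case, since otherwise the conclusion $|X'_{t^\star}|<|X_{t^\star}|$ is easy), the intermediate node is full too, so the prerequisite $|X_{t'}|\le k$ of \cref{def:bringdown} fails. Worse, the $\MergeFullNodes$ call inside $\MoveIntoSubtree$ will immediately merge your full $t'$ back into $t^\star$ (it has no $S$-children), so the node you plan to subdivide around no longer exists. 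The paper supplies the missing idea: since Step~\ref{regular:step5} fails for \emph{both} endpoints of the $\Rest$-edge containing $v$, there is a second vertex $v'$ of degree one in $G[\Rest]$ whose component $C'$ is disjoint from $C$ and also satisfies $\In\subseteq N(V(C'))$. Contracting $C$, $C'$, and the $\Rest$-path between them yields a $K_{|\In|+2}$-minor of a graph of treewidth at most $k$, forcing $|\In|+1\le k$. This bound lets one take $X_{t'}=X_{t^\star}\cap N[V(C)]\subseteq\In\cup\{u,v\}$, which is genuinely non-full, and then $\BringNeighborDown(t',v,u)$ goes through. Without this minor argument (or an equivalent size bound), your Case~B degree-one branch does not close.
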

\begin{proof}
Let $v\in\Rest$ be a vertex that has degree at most one in $G[\Rest]$. 
    Consider $\mathcal{T}$ the target node~$t^\star$ in $\mathcal{T}$.
If $v\notin X_{t^\star}$, then we are done. 
Hence, assume that $v\in X_{t^\star}$. 
We will argue that we can modify the tree decomposition~$\mathcal{T}$ without increasing its width using the operations introduced in \cref{sec:mod} such that afterwards, the requirements in the lemma statement hold.

To this end, consider the connected component $C$ in $G[(\Out\cup\{v\})\setminus S]$ that contains $v$. Since Steps~\ref{regular:step1},~\ref{regular:step2},~\ref{regular:step3}, and~\ref{regular:step4} of \cref{alg:small} do not apply, we know that $C$ is an $F$-component.
     Let $C_1,\ldots,C_\ell$ be the connected components of $C-\{v\}$. By \cref{lem:oneneighbor} we have for each $C_i$ with $1\le i\le \ell$ that $|N(V(C_i))\cap\Rest|\le 1$. We can conclude that $N(V(C_i))\cap\Rest=\{v\}$ for each $1\le i\le \ell$. Since $v$ has degree at most one in $G[\Rest]$, we have that $N(V(C))\cap \Rest\subseteq\{u\}$ for some $u\in \Rest$. 
    The fact that $N(V(C_i))\cap\Rest=\{v\}$ for each $1\le i\le \ell$ together with the fact that $C$ is an $F$-component also implies that $N(V(C)\setminus\{v\})\subseteq \In\cup\{v\}\subseteq X_{t^\star}$.

\begin{enumerate}
        \item Assume that $v$ has degree zero in $G[\Rest]$.
        Then, we have that $N(V(C))\cap \Rest=\emptyset$. Furthermore, we have that $N(V(C))\subseteq \In\cup\{v\}\subseteq X_{t^\star}$. 
        Denote $M=V(C)$. 
        We modify~$\mathcal{T}$ as follows. We add a new child $t'$ to $t^\star$ with bag $X_{t'}=X_{t^\star}$. We can observe that bag~$t'$ has $S$-trace $(\emptyset,X^S,L^S\cup R^S)$ and hence, by \cref{def:sparentchild}, is not an $S$-child of $t^\star$.

By construction, we have that all neighbors of the set $M$ are contained in the (trivial) subtree rooted at $t^\star$.
This allows us to perform a $\MoveIntoSubtree$ modification (\cref{def:move}). We apply $\MoveIntoSubtree(t',M)$ to $\mathcal{T}$. Let $\mathcal{T}'$ denote the resulting tree decomposition.
By \cref{lem:moveremove} we have that afterwards, $\mathcal{T}'$ is still $S$-nice, has width at most $k$, and is a sibling of $\mathcal{T}$. 
        By \cref{def:topheavy}, we have that
        if $\mathcal{T}$ is \topheavy for a node $\hat{t}$ such that $\hat{t}$ is not in $T_{t^\star_a}$ and not an ancestor of $t^\star_a$, or
         a descendant of $t^\star$,
then $\mathcal{T}'$ is also \topheavy for $\hat{t}$ and the subtree rooted at $\hat{t}$ is the same as in $\mathcal{T}$.
By \cref{lem:preserveslim}, we can we apply $\MakeSlimTwo$ to ensure that $\mathcal{T}'$ remains \slim.
Moreover, we have that no vertices from $S$ are moved and if the size of $X_{t^\star}$ remains unchanged, also no vertices from $D_R\cup D_{L_1}\cup D_{L_2}$ are moved.
Furthermore, we have that $v\notin X_{\star}$ and hence $\In\subseteq X_{t^{\star}}$ and $X_{t^{\star}}\cap (\Out\cup\{v\})=\emptyset$. 

        \item Assume that $v$ has degree one in $G[\Rest]$. 
        We first show that we must have $|\In|+1\le k$. This will help us to argue that we can perform a $\BringNeighborDown$ modification if it turns out to be necessary. Recall that this modification requires the bag of the node that it is applied to to be not full.
        Recall that $\Rest\cap S=\emptyset$ and hence $G[\Rest]$ is a forest. Then there exists a vertex $v'\neq v$ that also has degree one in $G[\Rest]$ such that $v$ and~$v'$ are connected in $G[\Rest]$. 
Let $C'$ be the connected component in $G[(\Out\cup\{v'\})\setminus S]$ that contains $v'$. We claim that $V(C)\cap V(C')=\emptyset$. Assume for contradiction that $v''\in V(C)\cap V(C')$. We know that $v\neq v''\neq v'$ since $v\notin V(C')$ and $v'\notin V(C)$. It follows that $v''\in\Out$. Let $C''$ be the connected component in $G[\Out\setminus S]$ that contains $v''$. Then $\{v,v'\}\subseteq N(C'')\cap \Rest$. By \cref{lem:oneneighbor} this is a contradiction. We can conclude that $V(C)\cap V(C')=\emptyset$.
        Since Step~\ref{regular:step5} of \cref{alg:small} does not apply, we have that $\In\subseteq N(V(C))$ and $\In\subseteq N(V(C'))$. Let $P$ denote a path from $v$ to $v'$ in $G[\Rest]$. Consider the graph $G'=G[\In\cup V(C)\cup V(C')\cup V(P)]$. Add edges between all pairs of vertices in $\In$ to $G'$ (such that $G'[\In]$ is complete). Since all vertices in $\In$ are in one bag of $\mathcal{T}'$ we have by \cref{lem:cliquebag2} that $\tw(G')\le k$. Now let $H$ be the graph obtained from $G'$ by contracting $G'[(V(C)\cup V(P))\setminus \{v'\}]$ (that is, exhaustively contracting all edges between vertices in $V(C)$), and by contracting $G'[V(C')]$. Observe that $H$ is a complete graph with $|\In+2|$ vertices, and that $H$ is a minor of $G'$. It follows by \cref{lem:cliquebag} that $\tw(H)=|\In|+1$ and that by \cref{lem:twminor} that $\tw(H)\le\tw(G')$. We can conclude that $|\In|+1\le k$.

We modify $\mathcal{T}$ as follows. We add a new child $t'$ to $t^\star$ with bag $X_{t'}=X_{t^\star}\cap N[V(C)]$. We can observe that bag $t'$ has $S$-trace $(\emptyset,X^S,L^S\cup R^S)$ and hence, by \cref{def:sparentchild}, is not an $S$-child of $t^\star$.
Note that $X_{t'}\subseteq \In\cup\{u,v\}$, since $X_{t^\star}\cap\Out=\emptyset$, $N(V(C))\cap \Rest\subseteq\{u\}$, and $V(C)\cap\Rest=\{v\}$.
We can conclude the following:
\begin{itemize}
    \item If $u\in X_{t'}$, then $N(V(C))\subseteq X_{t'}$.
    \item If $u\notin X_{t'}$, then $|X_{t'}|\le |\In|+1\le k$.
\end{itemize}
In the first of those two cases behaves the same as Case 1 in the beginning of  the proof, where $v$ has degree zero in $G[\Rest]$. We apply $\MoveIntoSubtree(t',V(C))$ to $\mathcal{T}$. The remainder of the proof is analogous.

From now on, assume we are in the case where $u\notin X_{t'}$ and $|X_{t'}|\le k$.
Denote $M=V(C)\setminus\{v\}$. 
By construction, we have that all neighbors of the set $M$ are contained in the (trivial) subtree rooted at $t'$.
 This allows us to perform a $\MoveIntoSubtree$ modification (\cref{def:move}). We apply $\MoveIntoSubtree(t',M)$ to $\mathcal{T}$. Let $\mathcal{T}'$ denote the resulting tree decomposition.
By \cref{lem:moveremove} we have that afterwards, $\mathcal{T}'$ is still $S$-nice, has width at most $k$, and is a sibling of $\mathcal{T}$. 
        By \cref{def:topheavy}, we have that
        if $\mathcal{T}$ is \topheavy for a node $\hat{t}$ such that $\hat{t}$ is not in $T_{t^\star_a}$ and not an ancestor of $t^\star_a$, or
         a descendant of $t^\star$,
then $\mathcal{T}'$ is also \topheavy for $\hat{t}$ and the subtree rooted at $\hat{t}$ is the same as in $\mathcal{T}$.
By \cref{lem:preserveslim}, we can we apply $\MakeSlimTwo$ to ensure that $\mathcal{T}'$ remains \slim.
Moreover, we have that no vertices from $S$ are moved and if the size of $X_{t^\star}$ remains unchanged, also no vertices from $D_R\cup D_{L_1}\cup D_{L_2}$ are moved.
Finally, we have by the definition of $\MoveIntoSubtree(t',M)$ (\cref{def:move}) that $X_{t^\star}\cap \Out=\emptyset$ and, since $M\cap \In=\emptyset$, that $\In\subseteq X_{t^\star}$. Furthermore, we have that $N[M]\subseteq V_{t'}$.

        Observe that $N(v)\setminus V_{t'}\subseteq \{u\}$. Assume that it is not, then there is some $u'\neq u$ such that $u'\in N(v)$ and $u'\notin V_{t'}$. It follows that $u'\notin N[M]$ which implies that $u'\notin \Out$. Since $u'\in N[M\cup\{v\}]$ we have that $u'\notin X_{t^\star}$ which implies that $u'\notin \In$. We must have that $u'\in\Rest$. This is a contradiction to $\{u\}=N(V(C))\cap \Rest$.
Now we can perform the modification $\BringNeighborDown(t',v,u)$ (\cref{def:bringdown}) to~$\mathcal{T}'$.
By \cref{lem:bringupdown} we have that afterwards, $\mathcal{T}'$ is still $S$-nice, has width at most $k$, and is a sibling of $\mathcal{T}$. 
        By \cref{def:topheavy}, we have that
        if $\mathcal{T}$ is \topheavy for a node $\hat{t}$ such that $\hat{t}$ is not in $T_{t^\star_a}$ and not an ancestor of $t^\star_a$, or
         a descendant of $t^\star$,
then $\mathcal{T}'$ is also \topheavy for $\hat{t}$ and the subtree rooted at $\hat{t}$ is the same as in $\mathcal{T}$.
By \cref{lem:preserveslim}, we can we apply $\MakeSlimTwo$ to ensure that $\mathcal{T}'$ remains \slim.
Moreover, we again have that no vertices from $S$ are moved and if the size of $X_{t^\star}$ remains unchanged, also no vertices from $D_R\cup D_{L_1}\cup D_{L_2}$ are moved. 
Furthermore, we have that the bags of~$t^\star$, its parent, and its $S$-children all receive the same change, namely that the vertex $v$ is removed and the vertex $u$ is added. 
    \end{enumerate}
    This finishes the proof.
\end{proof}

\cref{prop:smallcorrect} now follows directly from \cref{lem:regularstep1,lem:regularstep2,lem:regularstep3,lem:regularstep4,lem:regularstep5,lem:regularstep6}.

\subsection{Summary and Putting the Pieces Together}\label{sec:pieces}

Recall from the beginning of \cref{sec:bags} that we wish to compute the following.
\begin{itemize}
    \item A candidate for the bag $X^\phi_{\max}$ of $t_{\max}$.
    \item A candidate for the bag $X_p^\phi$ of the parent of $t_{\max}$ except for the case where $\tau_+=\void$ (in this case $t_{\max}$ is the root of the tree decomposition).
    \item A candidate for the bag $X^\phi_{\min}$ of $t_{\min}$ except for the case where $\tau_-=\void$ (in this case $t_{\min}$ is not defined).
    \item A candidate for the union of the bags $X_c^\phi$ of the $S$-children of $t_{\min}$ except for the case where $\tau_-=\void$ (in this case $t_{\min}$ is not defined).
    \item A candidate for the set $X^\phi_{\dpath}$ of vertices that we want to add to bags of the directed path or subtrees of the tree decomposition that are rooted at children of nodes in the directed path.
\end{itemize}

In \cref{sec:bigops,sec:smallops} we explained how to compute $X^\phi_{\max}$, depending on whether the top operation $\tau_+$ is a 
\begin{enumerate}
\item $\bigjoin$ $S$-operation or $\tau_+=\extendedforget(v,d,\true,\tau)$ and $\tau$ is a $\bigjoin$ $S$-operation, or
\item any of the remaining cases except $\tau_+=\void$.
\end{enumerate}
In the first case, the bag $X^\phi_{\max}$ and the bag $X_p^\phi$ of the parent of $t_{\max}$ can directly be determined via the additional information in the $\bigjoin$ $S$-operation. 
In the second case, \cref{alg:smallwrapper} outputs all necessary information to determine $X^\phi_{\max}$ and the bag of the parent of $t_{\max}$.
Note that in both cases, we also have that $X^\phi_{\max}\subseteq X_p^\phi$ unless $\tau_+=\extendedforget(v,d,f,\tau)$, then we have that $X^\phi_{\max}= X_p^\phi\cup\{v\}$.
In the case where $\tau_+=\void$, we set $X^\phi_{\max}=X^S$ and $X_p^\phi=\emptyset$.

In the case where we wish to determine $X^\phi_{\min}$, we do the following. Note that the parent of the top node of the directed path corresponding to the $S$-trace of a predecessor state is~$t_{\min}$. This means that we can apply our algorithms to a predecessor state $\psi$ to determine~$X^\phi_{\min}$. 
More specifically, we have that $X^\phi_{\min}=X_p^\psi$.
Since the computation of $X_p^\psi$ is independent from the bottom operation of state $\psi$, we can pick an arbitrary predecessor state $\psi$ of $\phi$ for the computation. In the case where $\tau_-=\void$ and $\phi$ does not have any predecessor states, we set $X^\phi_{\min}=X^S$. Similarly, we have that $X^\phi_c=X_{\max}^\psi$ if $\tau_-\neq\void$ and $\tau_-$ is neither $\smalljoin$ nor $\bigjoin$. In the case where $\tau_-$ is a $\smalljoin$ or a $\bigjoin$ $S$-operation, there is a pair we pick an arbitrary pair $\psi_1,\psi_2$ of predecessor states (see \cref{def:predecingstraces}) and we have that $X^\phi_c=X_{\max}^{\psi_1}\cup X_{\max}^{\psi_2}$. If $\tau_-=\void$, then we set $X^\phi_c=\emptyset$.


From the definition of $X^\phi_{\dpath}$ in \cref{sec:ltwdef} it follows that, given $X^\phi_{\max}$, $X^\phi_p$, $X^\phi_{\min}$, and $X^\phi_c$, we can compute $X^\phi_{\dpath}$ in polynomial time.

\section{Computing the Local Treewidth}\label{sec:localtw}
In this section we present a polynomial-time algorithm to compute the function $\ltw$. Formally, we show the following.

\begin{proposition}\label{prop:computeltw}
The function $\ltw:\mathcal{S}\rightarrow\{\true,\false\}$ can be computed in polynomial time.
\end{proposition}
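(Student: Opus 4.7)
The plan is to reduce the computation of $\ltw(\phi)$ to two polynomial-time subtasks: (i) computing the sets $X^\phi_{\max}$, $X^\phi_p$, $X^\phi_{\min}$, $X^\phi_c$, and $X^\phi_{\dpath}$, and (ii) deciding whether the treewidth of the specific structured graph $G^{\star}=G'[X^\phi_{\max}\cup X^\phi_p\cup X^\phi_{\min}\cup X^\phi_{\dpath}]$ is at most $k$. For (i), \cref{obs:algo1running,obs:algo2running} already establish polynomial running time for $\smallbag$, $\bigbag$, and the associated $\smallbagc$ routine, and the discussion in \cref{sec:pieces} explains how to derive each of $X^\phi_{\max}$, $X^\phi_p$, $X^\phi_{\min}$, $X^\phi_c$ from these by picking an arbitrary predecessor state when needed. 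Given these four sets, I would compute $X^\phi_{\dpath}$ by a single BFS: enumerate the connected components of $G-(X^\phi_{\max}\cup X^\phi_p\cup X^\phi_{\min})$, discard the ones meeting $S$, and for each remaining $F$-component $C$ check the two membership conditions $N[V(C)]\cap(X^\phi_c\setminus X^\phi_{\min})=\emptyset$ and $N(V(C))\setminus X^\phi_p\neq\emptyset$. All of this is polynomial in $n$.

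For (ii), the crucial structural observation is that $G^{\star}-Y$, with $Y=X^\phi_{\max}\cup X^\phi_p\cup X^\phi_{\min}$, is a disjoint union of trees (precisely the $F$-components contributing to $X^\phi_{\dpath}$, all living in the forest $G-S$). The plan is to construct an explicit tree decomposition $\mathcal{T}^{\star}$ of $G^{\star}$ and show both that its width can be read off in polynomial time and that it realizes $\tw(G^{\star})$. The construction has a \emph{backbone}: three distinguished nodes with bags $X^\phi_p$, $X^\phi_{\max}$, and $X^\phi_{\min}$ on a path in this order, linked by Normalize-style intermediate bags of size at most $\max(|X^\phi_p|,|X^\phi_{\max}|,|X^\phi_{\min}|)$ so that consecutive bags differ by one vertex. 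Then for each $F$-component $C$ in $X^\phi_{\dpath}$, root $C$ arbitrarily and attach a path of bags, each of the form $\{v,\pi_C(v)\}\cup (N(v)\cap Y)$ for $v\in V(C)$, to a backbone bag that contains $N_Y(C)=N(V(C))\cap Y$ (inserting such a bag into the backbone if no existing one suffices, at a position dictated by which of the three cliques intersect $N_Y(C)$). The maximum bag size is then $\max\bigl(\max(|X^\phi_p|,|X^\phi_{\max}|,|X^\phi_{\min}|),\,\max_{v}(|N(v)\cap Y|+2),\,\max_C(|N_Y(C)|+1)\bigr)$, which is directly computable in polynomial time.

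The forward direction $\ltw(\phi)=\true$ from the construction is immediate once we verify the tree decomposition axioms, which is routine given that each clique is placed in a single bag and each $F$-component is enclosed by a bag containing its whole $Y$-neighborhood. The main obstacle is the converse: showing that if $\tw(G^{\star})\le k$, then our explicit construction also has width at most $k$. My plan is to take any optimal tree decomposition of $G^{\star}$ and reshape it into the backbone-plus-attachments form without increasing the width. By \cref{lem:cliquebag}, each of the three cliques $X^\phi_{\max}, X^\phi_p, X^\phi_{\min}$ must sit inside some bag; these three bags lie on a path in the underlying tree (since the three cliques overlap in a forced way inherited from the $S$-trace structure, namely $X^\phi_p\cap X^\phi_{\min}\subseteq X^\phi_{\max}$), and I would apply a sequence of the modifications of \cref{sec:mod} ($\MoveIntoSubtree$, $\RemoveFromSubtree$, $\Normalize$, $\MergeFullNodes$) to route each $F$-component to the bag at which it would be attached in our construction; these modifications never increase the width, and the edge-consistency condition for $F$-components is preserved because each $F$-component is a tree attached through $N_Y(C)$. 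The hardest subcase is an $F$-component whose neighborhood $N_Y(C)$ spans two or three of the cliques: there my construction might need to widen an intermediate backbone bag to hold $N_Y(C)$, and the argument must show that any optimal decomposition already contains such a bag of the same size.
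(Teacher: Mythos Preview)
Your part (i) is fine and matches the paper. The issues are in part (ii).

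\textbf{The construction is not clearly valid.} Your bag formula $\{v,\pi_C(v)\}\cup(N(v)\cap Y)$ for an $F$-component $C$ does not in general satisfy Condition~\ref{condition_3_tree_decomposition} of \cref{def:tree_decomposition} for vertices of $Y$: distinct vertices of $C$ may have disjoint $Y$-neighbourhoods, so a $Y$-vertex appearing in an interior bag of your attached subtree need not appear contiguously up to the backbone. Moreover, when $N_Y(C)$ spans both $X^\phi_{\max}\setminus X^\phi_{\min}$ and $X^\phi_{\min}\setminus X^\phi_{\max}$, you have not specified a backbone bag that simultaneously contains $N_Y(C)$ and the vertices forced into it by Condition~\ref{condition_3_tree_decomposition} along the backbone (in particular all of $X^\phi_{\max}\cap X^\phi_{\min}$), so neither your width formula nor the attachment point is well-defined.

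\textbf{The optimality argument is the real gap.} You correctly identify the hard case (an $F$-component whose $Y$-neighbourhood spans two of the cliques) but do not provide an argument. Invoking the modifications of \cref{sec:mod} does not work here: $\MoveIntoSubtree$, $\RemoveFromSubtree$, $\BringNeighborUp$, $\BringNeighborDown$ are defined on $S$-nice tree decompositions of $G$ and move vertices of $V\setminus S$; they are not operations on an arbitrary optimal tree decomposition of the auxiliary graph $G^\star$, and in particular they give no lower bound on $\tw(G^\star)$.

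\textbf{How the paper proceeds instead.} The paper exploits a stronger structural fact than the one you use: not only is $G^\star-Y$ a forest, but $G^\star-(X^\phi_{\max}\cap X^\phi_p\cap X^\phi_{\min})$ is a forest (so the symmetric differences among the three sets are themselves forest vertices). Since $X^\phi_{\max}$ and $X^\phi_p$ are nested, the paper reduces to two cliques $X_1=X^\phi_{\max}\cup X^\phi_p$ and $X_2=X^\phi_{\min}$, and first proves a standalone lemma (\cref{lem:twoneclique}) computing $\tw$ exactly when the graph is one clique plus a forest. The main algorithm then runs an iterative process: after peeling off components whose neighbourhood lies in a single $X_\ell$ (handled by \cref{lem:twoneclique}), it repeatedly removes a vertex from $X_\ell\setminus X_{3-\ell}$ whose remaining neighbourhood is inside $X_\ell$, or, if the only outside neighbour is a single vertex $u$, pulls $u$ into $X_\ell$ before removing $v$. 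The optimality proof hinges on a concrete lower bound: the first time a vertex must be pulled in, one argues via \cref{lem:cliquebag} and the connectivity of the surviving $F$-components that any tree decomposition of $G^\star$ already has a bag of size $\max(|X_1|,|X_2|)+1$. This is the step your proposal is missing, and it is not recoverable from the tools in \cref{sec:mod}.
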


By \cref{def:ltw}, the task of the function $\ltw$ is to compute the treewidth of graph $G'[X^\phi_{\max}\cup X^\phi_p\cup X^\phi_{\min}\cup X^\phi_{\dpath}]$, where $G'$ is the graph obtained from the input graph $G$ by adding edges between all pairs of vertices $u,v\in  X^\phi_{\max}$ with $\{u,v\}\notin E$, all pairs of vertices $u,v\in  X^\phi_p$ with $\{u,v\}\notin E$, and all pairs of vertices $u,v\in X^\phi_{\min}$ with $\{u,v\}\notin E$. Note that by definition we have $X^\phi_{\max}\subseteq X^\phi_p$ or $X^\phi_p\subseteq X^\phi_{\max}$ and that $G[X^\phi_{\max}\cup X^\phi_p\cup X^\phi_{\min}\cup X^\phi_{\dpath}]-(X^\phi_{\max}\cap X^\phi_p\cap X^\phi_{\min})$ is a forest.
The latter will allow us to infer some important properties on the edge set $E(G'[X^\phi_{\max}\cup X^\phi_p\cup X^\phi_{\min}\cup X^\phi_{\dpath}])$.

We will first show how to solve the following simpler problem: Given a graph $G=(V_1\cup V_2,E)$ such that $G[V_1]$ is complete and $G[V_2]$ is a forest, compute an optimal tree decomposition for $G$. We show that this problem can be solved in polynomial time and we will use this as a subroutine in the algorithm for computing $\ltw$. 
We can observe the following. Since $G[V_1]$ is complete, we have by \cref{lem:twminor,lem:cliquebag} that $\tw(G)\ge |V_1|-1$. Furthermore, since $G[V_2]$ is a forest, we have that $\tw(G)\le |V_1|+1$, since we can obtain a tree decomposition for $G$ by taking any tree decomposition for $G[V_2]$ (of width one) and adding $V_1$ to every bag. Hence, the task of the algorithm is to decide which of the three options applies, and to construct a tree decomposition of the corresponding width. Formally, we show the following.

\begin{lemma}\label{lem:twoneclique}
Let $G=(V_1\cup V_2,E)$ be a graph such that $G[V_1]$ is complete and $G[V_2]$ is a forest. A tree decomposition for $G$ with minimum width can be computed in polynomial time.
\end{lemma}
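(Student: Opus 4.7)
I would first use the bounds $|V_1|-1 \le \tw(G) \le |V_1|+1$ stated before the lemma, and the fact that every tree decomposition of $G$ contains a bag including $V_1$ (by \cref{lem:cliquebag}). This lets me restrict attention to rooted tree decompositions whose root bag equals $V_1$; any such decomposition splits into one ``subtree of bags'' per connected component of $G[V_2]$, so the problem decouples across components.

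Fix a tree $T$ of $G[V_2]$. For each choice of root $r \in V(T)$, I would build the canonical rooted tree decomposition attached to a root bag $V_1$: for each $v \in V(T)$, set
\[ B_v \;=\; \{v\}\cup(\{\mathrm{par}(v)\}\text{ if }v\ne r)\cup N_1^{\star}(v),\]
where $N_1^{\star}(v)=\{u\in V_1 : u\sim w \text{ in }G\text{ for some descendant }w\text{ of }v\}$, and attach the bags in the same parent/child structure as the rooted $T$, with $B_r$ a child of the root bag $V_1$. Validity is a routine check of the three conditions of \cref{def:tree_decomposition}: every vertex and every edge is covered ($V_1$-internal edges sit in the root, $T$-edges in $B_v$ via $\mathrm{par}(v)$, and $V_1$-$V_2$ edges in $B_v$ via $N_1^{\star}$), and for a $V_1$-vertex $u$ the set $\{v:u\in B_v\}$ is the upward closure in the rooted $T$ of $\{w\in V(T): u\sim w\}$, which is a subtree. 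The width is $\max(|V_1|-1,\max_v|B_v|-1)$ and can be computed in $O(|V(T)|\cdot|V_1|)$ by a single bottom-up pass maintaining $N_1^{\star}$, so trying all $r$ is polynomial.

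The main claim is that the minimum of these widths over all rootings equals $\tw(G[V_1\cup V(T)])$. The $\le$ direction is by construction. For $\ge$, the plan is to take an optimal tree decomposition $\mathcal{T}^{\star}$ of $G[V_1\cup V(T)]$, insert (if necessary) a bag equal to $V_1$ and root $\mathcal{T}^{\star}$ there, and consider the subtree $\mathcal{T}^{\star}_T$ spanned by the bag-subtrees of vertices in $V(T)$. Its topmost node contains some $r\in V(T)$, which I take as root of $T$; I orient each $T$-edge $\{v,w\}$ toward the endpoint whose topmost bag in $\mathcal{T}^{\star}$ is an ancestor of the other's (adjacency forces comparability). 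I then show that for every $v$, the bag-subtrees in $\mathcal{T}^{\star}$ of $v$, $\mathrm{par}(v)$, and each $u\in N_1^{\star}(v)$ are pairwise intersecting: edges of $G$ cover the pairs involving $v$ or $\mathrm{par}(v)$ directly, pairs in $V_1$ share the root bag $V_1$, and for $u\in N_1^{\star}(v)$ witnessed by $u\sim w$ with $w$ a descendant of $v$, the bag containing the edge $\{u,w\}$ together with the forced upward $u$-path to $V_1$ passing through the top of $\mathcal{T}^{\star}_T$ supplies the needed common bag. By the Helly property for subtrees of a tree, these bag-subtrees share a common bag $X$ of $\mathcal{T}^{\star}$, so $B_v\subseteq X$ and $|B_v|\le\tw(\mathcal{T}^{\star})+1$, yielding the bound.

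The main obstacle I expect is the last step: verifying that $v$'s bag-subtree meets the bag-subtree of each $u\in N_1^{\star}(v)$ when $v\ne w$, and that the chosen edge orientation yields a consistent rooting in which $\mathrm{par}(v)$ actually lies in $v$'s topmost bag. Resolving this may require either a refinement of the rooting (always choosing the ancestor-most top-bag neighbor as parent) or a preparatory modification of $\mathcal{T}^{\star}$ of the style used in \cref{sec:mod}, pulling vertices upward so that topmost-bag depths are monotone along $T$-paths; once this is arranged, the Helly argument goes through and the polynomial-time algorithm (try every rooting per tree, take the best, glue the per-tree decompositions at a common $V_1$-bag) produces a minimum-width tree decomposition of $G$.
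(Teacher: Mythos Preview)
Your algorithm (try every rooting of each tree of $G[V_2]$ and build the canonical bags $B_v=\{v,\mathrm{par}(v)\}\cup N_1^\star(v)$) is correct and polynomial, but the optimality argument via the Helly property has a real gap that your stated obstacle does not fully capture. The issue is not merely the consistency of the rooting: the pairwise-intersection claim can fail for \emph{every} rooting of $T$ relative to a fixed optimal $\mathcal{T}^\star$. Take $V_1=\{a\}$, let $T$ be the path on $r,p,v$ in this order, and add the edges $\{a,r\}$ and $\{a,v\}$, so $G$ is a $4$-cycle with $\tw(G)=2$. The decomposition with bags $\{a\},\{a,r,v\},\{r,v,p\}$ is optimal, yet for every rooting of $T$ some $B_x$ equals $\{p,r,a\}$ or $\{p,v,a\}$; since $p$ and $a$ never share a bag in this $\mathcal{T}^\star$, the bag-subtrees of $\mathrm{par}(x)$ and of the relevant $u\in N_1^\star(x)$ do not intersect, and Helly cannot produce a bag containing $B_x$. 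Your first suggested fix (refine the rooting) therefore cannot help here; the second (pre-modify $\mathcal{T}^\star$) might, but it amounts to replacing one optimal decomposition by another with extra structure, and you have not said how to do this without increasing the width (naively ``pulling $p$ upward'' in the example would enlarge a bag to size~$4$).

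The paper proves the lemma by a different, more direct route. For each component $C$ of $G-V_1$ with $s=|N(V(C))\cap V_1|$ it checks whether some $v\in V(C)$ has the property that every component of $C-\{v\}$ misses at least one vertex of $N(V(C))\cap V_1$. If so, it builds a decomposition of width $s$, matching the lower bound obtained by contracting $C$; if not, it argues that any tree decomposition needs a bag of size at least $s+2$ and uses the trivial construction (add $N(V(C))\cap V_1$ to every bag of a width-$1$ decomposition of $C$) of width $s+1$. This case split in fact also certifies your algorithm: in the first case, rooting at that $v$ yields $\max_w|B_w|\le s+1$; in the second case, $\max_w|B_w|\le s+2$ holds for \emph{any} rooting while $\tw\ge s+1$. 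So what your proof is missing is precisely the second-case lower bound, which should replace the Helly step.
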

\begin{proof}
We compute a tree decomposition for $G$ as follows. We start with a root node that has the set $V_1$ as its bag. For every connected component $C$ in $G-V_1$ we create a child  node $t_C$ with bag $X_{t_C}=V_1\cap N(V(C))$. 

Consider a connected component $C$. We know that $C$ is a tree. Note that we must have a bag of size at least $|X_{t_C}|+1$, since contracting $C$ to a single vertex creates a complete subgraph with $|X_{t_C}|+1$ vertices and by \cref{lem:twminor,lem:cliquebag} we have that the treewidth of $G$ is at least $|X_{t_C}|$. Denote $s=|X_{t_C}|$. We must have a bag of size at least $s+1$.

If there is a $v\in V(C)$ such that for each connected component $C'$ in $C-\{v\}$ we have that $N(C')\subseteq (X_{t_C}\cup\{v\})\setminus \{u_{C'}\}$ for some $u_{C'}\in X_{t_C}$, then we do the following.
We compute a tree decomposition $\mathcal{T}_{C'}$ for each $G[V(C')\cup\{v\}]$ and add $X_{t_C}\setminus \{u_{C'}\}$ to every bag. We pick an arbitrary root of $\mathcal{T}_{C'}$ such that its bag contains $v$, and make it a child of $t_C$.
We add $v$ to~$X_{t_C}$.
It is straightforward to check that all conditions of \cref{def:tree_decomposition} are fulfilled.
Note that the bag of $t_C$ as well as all bags of nodes in $\mathcal{T}_{C'}$ have size at most $s+1$. Hence, the width of the constructed tree decomposition is optimal.

Now consider the case that there is no vertex $v\in V(C)$ such that for each connected component $C'$ in $C-\{v\}$ we have that $N(C')\subseteq (X_{t_C}\cup \{v\})\setminus \{u_{C'}\}$ for some $u_{C'}\in X_{t_C}$. 
In order to create a valid tree decomposition, we need to add a vertex from $V(C)$ to the bag $X_{t_C}$ before we can remove any vertex, since otherwise we obtain a contradiction to at least one of the conditions of \cref{def:tree_decomposition}.
For each vertex $v$ that we could add to to $X_{t_C}$, we have that there is a connected component $C'$ in $C-\{v\}$ such that $N(C')=X_{t_C}\cup\{v\}$. By repeating the previous argument, we get that we need to add at least one additional vertex from $V(C')$ to $X_{t_C}$ before we can remove any vertex. It follows that the minimum bag size is at least $s+2$. 
This allows us to simply take any optimal tree decomposition for $C$ and add $X_{t_C}$ to every bag. 

It is straightforward to see that repeating the above described process for every connected component $C$ in $G-V_1$ yields an optimal tree decomposition for $G$.
\end{proof}

Now we prove \cref{prop:computeltw} with the help of \cref{lem:twoneclique}.


\begin{proof}[Proof of \cref{prop:computeltw}]
First, note that the sets $X^\phi_{\max}$, $X^\phi_p$, and $X^\phi_{\min}$ can be computed in polynomial time by \cref{obs:algo1running,obs:algo2running}. By the definition of $X^\phi_{\dpath}$ given in \cref{sec:ltwdef} we have that we can compute $X^\phi_{\dpath}$ from $X^\phi_{\max}$, $X^\phi_p$, and $X^\phi_{\min}$ in polynomial time.

We compute a tree decomposition $\mathcal{T}$ for $G'[X^\phi_{\max}\cup X^\phi_p\cup X^\phi_{\min}\cup X^\phi_{\dpath}]$ as follows. 
For simplicity, we denote $H=G'[X^\phi_{\max}\cup X^\phi_p\cup X^\phi_{\min}\cup X^\phi_{\dpath}]$, $X_1=X^\phi_{\max}\cup X^\phi_p$, and $X_2=X^\phi_{\min}$.
We start $\mathcal{T}$ with a root node $t_1$ that has the set $X_1$ as its bag and has one child node $t_2$ that has the set~$X_2$ as its bag. We call $F=H-(X_1\cup X_2)$ the \emph{remaining forest}.


We iterative turn $\mathcal{T}$ into a tree decomposition for the whole graph $H$ as follows.
For convenience, refer with $X_1$ and $X_2$ to the bags of $t_1$ and $t_2$, respectively. Along the way we will modify $H$ until it is empty. 
We first repeat the following as long as it applies (we refer to this as Step~0):
\begin{itemize}
\item If there is a connected component $C$ in $F$ such that $N(V(C))\subseteq X_\ell$ for some $\ell\in\{1,2\}$, then we compute a rooted tree decomposition $\mathcal{T}_C$ for $G''[X_\ell\cup V(C)]$, where $G''$ is obtained from $G$ by adding edges between all pairs of vertices $u,v\in  X_\ell$ with $\{u,v\}\notin E$, using \cref{lem:twoneclique}. By \cref{lem:twminor,lem:cliquebag} we can assume that the root of $\mathcal{T}_C$ has bag $X_\ell$. We make the root a child of $t_\ell$ and remove $C$ from the remaining forest $F$ and from $H$.

\end{itemize}
Now, while $X_1\triangle X_2\neq \emptyset$, we perform the first applicable step.
\begin{enumerate}
\item If there is a connected component $C$ in $F$ such that $|N(V(C))\cap (X_1\triangle X_2)|=1$ and $N(V(C))\subseteq X_\ell$ for some $\ell\in\{1,2\}$, then we compute a rooted tree decomposition $\mathcal{T}_C$ for $G''[X_\ell\cup V(C)]$, where $G''$ is obtained from $G$ by adding edges between all pairs of vertices $u,v\in  X_\ell$ with $\{u,v\}\notin E$, using \cref{lem:twoneclique}. By \cref{lem:twminor,lem:cliquebag} we can assume that the root of $\mathcal{T}_C$ has bag $X_\ell$. We make the root a child of $t_\ell$ and remove $C$ from the remaining forest $F$ and from $H$. \label{ltw:1}
\item If there is a vertex $v\in X_\ell\setminus X_{3-\ell}$  for some $\ell\in\{1,2\}$ such that $N(v)\subseteq X_\ell$ (where the neighborhood is taken in the modified graph $H$), then subdivide the edge in $\mathcal{T}$ between~$t_\ell$ and $t_{3-\ell}$. Call the new node $t_\ell$ and remove $v$ from its bag $X_\ell$ and from $H$.\label{ltw:2}
\item If there is a vertex $v\in X_\ell\setminus X_{3-\ell}$  for some $\ell\in\{1,2\}$ such that $N(v)\setminus X_\ell=\{u\}$ (where the neighborhood is taken in the modified graph $H$), then subdivide the edge in $\mathcal{T}$ between~$t_\ell$ and $t_{3-\ell}$. Call the new node $t_\ell$ and add $u$ to its bag $X_\ell$. 
Subdivide the edge in $\mathcal{T}$ between (the new) $t_\ell$ and $t_{3-\ell}$. Call the new node $t_\ell$ and remove $v$ from its bag $X_\ell$ and from $H$.\label{ltw:3}
\end{enumerate}
When $X_1\triangle X_2= \emptyset$, add an edge between nodes $t_1$ and $t_2$. Note that then $X_1=X_2$. If the remaining forest is not empty, we compute a rooted tree decomposition $\mathcal{T}_F$ for $G''[X_1\cup V(F)]$, where $G''$ is obtained from $G$ by adding edges between all pairs of vertices $u,v\in  X_1$ with $\{u,v\}\notin E$, using \cref{lem:twoneclique}. By \cref{lem:twminor,lem:cliquebag} we can assume that the root of $\mathcal{T}_F$ has bag $X_1$. We make the root a child of $t_1$. We refer to this as Step 4.

We first argue that while $X_1\triangle X_2\neq \emptyset$, one of the three steps applies. 
To this end we prove the following claim.
\begin{claim}\label{claim:ltw1}
Let $F=(V,E)$ be a forest and $D\subseteq V$. Then one of the following holds.
\begin{itemize}
\item $D=\emptyset$.
\item There is some $v\in D$ with $\deg(v)\le 1$.
\item There is a connected component $C\in F-D$ such that $|N(V(C))\cap D|=1$.
\end{itemize}
\begin{claimproof}
Suppose that $D\neq\emptyset$ and for all $v\in D$ it holds that $\deg(v)>1$.
Since $F$ is a forest, we have that $\sum_{v\in V} \deg(v)=2|E|\le 2|V|-2$. 
Note that $F[D]$ is not a connected component of $F$, since $\sum_{v\in D} \deg(v)\ge 2|D|$ and $F[D]$ is a forest.
Let $F'$ be the forest obtained from $F$ by removing all connected components $C\in F-D$ such that $|N(V(C))\cap D|=0$. Since $F[D]$ is not a connected component of $F$, we have that $F'\neq F[D]$, that is, there are connected components $C$ in $F'-D$ and for each one it holds that $|N(V(C))\cap D|\ge 1$.
We can write 
\[\sum_{v\in V(F')} \deg(v)=\sum_{v\in D} \deg(v) + \sum_{C \text{ is conn.\ comp.\ in }F'-D} (2|V(C)|-2 + |N(V(C))\cap D|).
\]
Here, we use that the connected components in $F'-D$ are trees, and is well-known that for trees $T$ it holds that $\sum_{v\in V(T)}\deg(v)=2|V(T)|-2$.
Assume for contradiction that $|N(V(C))\cap D|\ge 2$ for all connected components $C$ in $F'-D$. Then we have
\[\sum_{v\in V(F')} \deg(v)\ge 2|D| + \sum_{C \text{ is conn.\ comp.\ in }F'-D} 2|V(C)|= 2|V(F')|.
\]
This is a contradiction to $\sum_{v\in V(F')} \deg(v)\le 2|V(F')|-2$. Hence, the claim statement holds.
\end{claimproof}
\end{claim}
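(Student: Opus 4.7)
The plan is to argue by contradiction, assuming that all three alternatives fail simultaneously. That is, I would suppose $D\neq\emptyset$, that every $v\in D$ satisfies $\deg_F(v)\ge 2$, and that every connected component $C$ of $F-D$ satisfies $|N(V(C))\cap D|\neq 1$, and then derive that $F$ contains too many edges to be a forest.

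The workhorse will be the standard identity that a forest on $n$ vertices with $c$ connected components has exactly $n-c$ edges, hence a total degree sum of at most $2n-2$ whenever $n\ge 1$. I plan to bound the total degree sum from below in two pieces: every vertex of $D$ contributes at least $2$ by the degree assumption, and every component $C$ of $F-D$ is a tree that contributes $2(|V(C)|-1)$ from its internal edges plus one unit for each edge between $V(C)$ and $D$, which is at least $|N(V(C))\cap D|$.

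Before summing, I would first discard every component $C$ of $F-D$ with $|N(V(C))\cap D|=0$, since such a component is itself a union of connected components of $F$ and can be removed without affecting the rest of the argument. After this cleanup, the failure of the third alternative forces every remaining component of $F-D$ to satisfy $|N(V(C))\cap D|\ge 2$, so that its contribution to the degree sum is at least $2|V(C)|$. Summing over $D$ and the remaining components yields a total degree sum of at least $2|D|+\sum_C 2|V(C)|=2|V'|$ on the trimmed forest $F'=(V',E')$, while the forest bound gives at most $2|V'|-2$. Since $D\subseteq V'$ and $D\neq\emptyset$, this is the desired contradiction.

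The combinatorics here are elementary, so I do not anticipate a genuine obstacle. The only delicate point is the bookkeeping: a single vertex of $V(C)$ could have several neighbors in $D$, so the number of edges between $V(C)$ and $D$ may strictly exceed $|N(V(C))\cap D|$. Using the lower bound $e_C\ge |N(V(C))\cap D|$ rather than an equality is therefore the right move, and it is exactly the inequality that drives the contradiction through to the end.
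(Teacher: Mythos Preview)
Your proposal is correct and follows essentially the same route as the paper: both proofs trim away the components of $F-D$ with no neighbors in $D$, then obtain a contradiction by comparing the degree-sum lower bound $2|V'|$ (from $\deg(v)\ge 2$ on $D$ and $|N(V(C))\cap D|\ge 2$ on each remaining component) against the forest upper bound $2|V'|-2$. One minor remark: in a forest each $d\in D$ has at most one neighbor in any fixed connected component $C$ of $F-D$ (otherwise a cycle arises), so the number of $C$--$D$ edges actually \emph{equals} $|N(V(C))\cap D|$; your closing inequality is therefore an equality, though this does not affect the argument.
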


Now it is important to note, that by definition of $X^\phi_{\max}$, $X^\phi_p$, $X^\phi_{\min}$, and $X^\phi_{\dpath}$, we have that $G[X^\phi_{\max}\cup X^\phi_p\cup X^\phi_{\min}\cup X^\phi_{\dpath}]-(X^\phi_{\max}\cap X^\phi_p\cap X^\phi_{\min})$ is a forest.
Denote $V^\star=(X^\phi_{\max}\cup X^\phi_p\cup X^\phi_{\min}\cup X^\phi_{\dpath})\setminus (X^\phi_{\max}\cap X^\phi_p\cap X^\phi_{\min})$.
Note that $F^\star=G[V^\star]$ is a forest and $X_1\triangle X_2\subseteq V^\star$. 
Hence, we can apply \cref{claim:ltw1} to conclude that whenever $X_1\triangle X_2\neq\emptyset$, then we have 
\begin{itemize}
\item There is some $v$ in $X_1\triangle X_2$ with $\deg(v)_{F^\star}\le 1$. 

If $\deg(v)_{F^\star}=0$, then we have that Step~\ref{ltw:2} applies.
If $N(v)=\{u\}$ and $u,v\in X_\ell$ for some $\ell\in\{1,2\}$, then we have that Step~\ref{ltw:2} applies.
Otherwise, we have that Step~\ref{ltw:3} applies.
\item There is some connected component $C$ in $F$ such that $|N(V(C))\cap (X_1\triangle X_2)|=1$.
Then we have that Step~\ref{ltw:1} applies.
\end{itemize}
If $X_1\triangle X_2=\emptyset$, then Step~4 applies.

It is straightforward to observe that the number of steps is polynomial. In Step~\ref{ltw:1} we reduce the size of $F$. In Step~\ref{ltw:2} we reduce the size of $X_1\triangle X_2$. In Step~\ref{ltw:3}, if $u$ is a vertex of $F$, we reduce the size of $F$, otherwise we reduce the size of $X_1\triangle X_2$.

It remains to show that the produced tree decomposition respects all conditions of \cref{def:tree_decomposition} and that it has minimum width. First of all, it is easy to see that the union of all bags is $X^\phi_{\max}\cup X^\phi_p\cup X^\phi_{\min}\cup X^\phi_{\dpath}$. Second, we have that Condition~\ref{condition_2_tree_decomposition} of \cref{def:tree_decomposition} holds since whenever we remove a vertex from the current constructed bag $X_\ell$ in Steps~\ref{ltw:2} and~\ref{ltw:3}, it has already been in the same bag with all its neighbors. For Steps~0,~\ref{ltw:1}, and~4 this follows from \cref{lem:twoneclique}. 
Condition~\ref{condition_3_tree_decomposition} of \cref{def:tree_decomposition} holds since we never remove and then re-add the same vertex in Steps~\ref{ltw:2} and~\ref{ltw:3}. For Steps~0,~\ref{ltw:1}, and~4 this follows again from \cref{lem:twoneclique}.

Last, we show that the width of the constructed tree decomposition is $\tw(G'[X^\phi_{\max}\cup X^\phi_p\cup X^\phi_{\min}\cup X^\phi_{\dpath}])$. 
We have by \cref{lem:twoneclique,lem:twminor} that the bags created in Step~0 have size at most $\tw(G'[X^\phi_{\max}\cup X^\phi_p\cup X^\phi_{\min}\cup X^\phi_{\dpath}])+1$.

Next, we show that if Step~\ref{ltw:3} applies for the first time, then $\tw(G'[X^\phi_{\max}\cup X^\phi_p\cup X^\phi_{\min}\cup X^\phi_{\dpath}])\ge\max\{|X_1|,|X_2|\}$. At the end of the proof we discuss the case where Step~\ref{ltw:3} never applies.
Assume w.l.o.g.\ that $|X_1|\ge |X_2|$.
Since $H[X_1]$ and $H[X_2]$ are complete when Step~\ref{ltw:3} applies for the first time, by \cref{lem:cliquebag} we must have a bag that contains $X_1$, and we must have a bag that contains $X_2$.
Since Step~\ref{ltw:1} does not apply, we have that $X_1\setminus X_2\neq\emptyset$ and $X_2\setminus X_1\neq\emptyset$. 
Consider a shortest path in the tree decomposition from a node $t_1$ with bag $X_1$ to a node $t_2$ with bag $X_2$ and assume w.l.o.g.\ that the tree decomposition is nice (\cref{def:nicetd}). Assume that for contradiction that the second node of the path, that is, the one that is visited by the path after $t_1$ has a bag $X^\star$ that is a strict subset of $X_1$. Let $v\in X_1\setminus X^\star$. If $v\in X_2\cap X_1$ we get a contradiction to Condition~\ref{condition_3_tree_decomposition} of \cref{def:tree_decomposition}. Hence, we have that $v\in X_1\setminus X_2$.
Since Step~\ref{ltw:2} does not apply, we have that vertex $v$ has a neighbor in $V(F)\cup (X_2\setminus X_1)$. Let that neighbor be $u$. 
By Condition~\ref{condition_2_tree_decomposition} of \cref{def:tree_decomposition} we have that there is some bag that contains both $u$ and $v$. 
If $u\in X_2\setminus X_1$, then this contradicts Condition~\ref{condition_2_tree_decomposition} of \cref{def:tree_decomposition}.
Now assume that $u\in V(F)$ and let $C$ be the connected component in $F$ that contains $u$. Since Step~\ref{ltw:1} does not apply, we have that $N(V(C))\cap X_2\neq\emptyset$.
It follows from Condition~\ref{condition_2_tree_decomposition} and~\ref{condition_3_tree_decomposition} of \cref{def:tree_decomposition} that the bags containing vertices from $V(C)$ must form a subtree in the tree decomposition. Hence, $X^\star$ must contain a vertex from $V(C)$, a contradiction to the assumption that $X^\star$ that is a strict subset of $X_1$. Hence, $X^\star$ that is a strict superset of $X_1$ (the bags are not the same, since we assumed that the path is a shortest path). It follows that $\tw(H)\ge \max\{|X_1|,|X_2|\}$. From now on, denote $x^\star=\max\{|X_1|,|X_2|\}$ for the sets $X_1$ and $X_2$ when Step~\ref{ltw:3} applies for the first time.


Observe that after each step we have that $|X_\ell|\le x^\star$ for each $\ell\in\{1,2\}$.
We show that none of the steps create a bag whose size is larger than $x^\star+1$.
If Step~\ref{ltw:1} applies, then there is some connected component $C$ in $F$ such that $|N(V(C))\cap (X_1\triangle X_2)|=1$.
Let $N(V(C))\subseteq X_\ell$ and let $N(V(C))\cap (X_1\triangle X_2)=\{v\}$. Note that $H[V(C)\cup\{v\}]$ is a tree and hence as treewidth one. It follows that we can obtain a tree decomposition for $H[X_\ell\cup V(C)]$ by taking an optimal tree decomposition for $H[V(C)\cup\{v\}]$ and adding the set $X_\ell\setminus \{v\}$ to every bag. The maximum bag size is $|X_\ell|+1$. It follows that in Step~\ref{ltw:1}, no bag with size larger than $x^\star+1$ is created.
Step~\ref{ltw:2} never increases the maximium bag size. As discussed above, Step~\ref{ltw:3} increases the bag size to at most $x^\star+1$. 

It remains to show that in Step~4, we do not create bags of size larger than $x^\star+1$. We show this by proving that for every connected component $T$ in the remaining forest (when Step~4 is executed) we have that $T$ has at least one neighbor in $X_1\cup X_2\setminus S$. Then, using the algorithm described in the proof of \cref{lem:twoneclique}, we can obtain a tree decomposition that does not create bags which are larger than $x^\star+1$.
To this end, first observe that no vertices are added to the remaining forest in any of the steps. Second, we claim that if a vertex $w$ is in the remaining forest when Step~4 is executed, then none of its neighbors has been removed from $H$ when Step~4 is executed. Assume for contradiction that there is a neighbor of $w$ that has been removed. In Step~\ref{ltw:1} only whole connected components from the remaining forest are removed. In Step~\ref{ltw:2}, we only remove vertices that do not have neighbors in the remaining forest. Finally, in Step~\ref{ltw:3} we have that if a neighbor $v$ from $w$ is removed, then $v$ has only one neighbor in the remaining forest, denoted by $u$ in the description of Step~\ref{ltw:3}. Hence, we must have that $u=w$. Then $u$ is added to the bag $X_\ell$, and hence $u=w$ is not part of the remaining forest when Step~4 is executed, a contradiction.
Let let $T$ a connected component (tree) in the remaining forest when Step~4 is executed. From the two previous observations it follows that $T$ also is a tree (but not necessarily a connected component) in the remaining forest after Step~0. Consider the case that $T$ has neighbors in $X_1\setminus X_2$ and hence in $X_1\setminus S$ (after Step~0). Since those neighbors are not removed if $T$ remains in the remaining forest. Now consider the case that $T$ has a neighbor that is also in the remaining forest (after Step~0) and hence, not in $S$. Then by the obervations above, this neighbor is not removed from $H$. Since it is not in the remaining forest when Step~4 is executed, it must be in $X_1$. We can conclude that $T$ has at least one neighbor in $X_1\cup X_2\setminus S$.
 Hence, the bag size of our constructed tree decomposition is optimal.

 Finally, consider the case that Step~\ref{ltw:3} never applies. Note that Step~\ref{ltw:1} never adds or removes vertices from $X_\ell$ with $\ell\in\{1,2\}$. Step~\ref{ltw:2} only removes vertices from $X_\ell$ with $\ell\in\{1,2\}$. It follows, that there are never vertices added to $X_\ell$ with $\ell\in\{1,2\}$. We claim that then it is the case that after Step~0, the remaining forest is empty. Assume for contradiction that it is not. Let $T$ be a connected component of the remaining forest after the execution of Step~0. Then we have that $T$ has neighbors in $X_1\setminus X_2$ and in $X_2\setminus X_1$. Note that Step~\ref{ltw:1} only removes connected components from the remaining forest that have at most one neighbor in $X_1\triangle X_2$. Step~\ref{ltw:2} does not remove vertices from the remaining forest. When Step~4 is executed, $T$ is still in the remaining forest and we have that $X_1=X_2$. Since we never add vertices to $X_1$ or $X_2$, we have that not all neighbors of $T$ are in $X_1=X_2$. This contradicts the correctness of the algorithm, which we have proven before. It follows that after the execution of Step~0, the remaining forest is empty. This further implies that also Step~\ref{ltw:1} is never executed. Since Step~\ref{ltw:2} never increases the maximum bag size, we can conlude that also in this case, the constructed tree decomposition is optimal.
\end{proof}

We remark that the above described algorithm explicitly constructs a tree decomposition that it can output. We describe in \cref{sec:correct} how to use this to construct a tree decomposition for the whole input graph $G$.

\section{Legal Predecessor States}\label{sec:legal}

In this section, we formally define the functions $\legal_1$ and $\legal_2$. We show that we can compute the functions using the algorithms introduced in \cref{sec:bags}. Recall that the functions $\legal_1$ and $\legal_2$, intuitively, should check whether a state $\phi$ is compatible with its predecessor state(s). 
%
%
To this end, for a state $\phi=(\tau_-,L^S, X^S, R^S,\tau_+)$, we define three sets $V^\phi_L$, $V^\phi_X$, and $V^\phi_R$ which, intuitively, determine which vertices are in bags ``below'' $t_{\min}$, which vertices are in bags ``between'' $t_{\max}$ and $t_{\min}$, and the remaining vertices.
We use the five sets $X^\phi_{\max}$, $X^\phi_p$, $X^\phi_{\min}$,$X^\phi_c$, and $X^\phi_{\dpath}$ to determine those sets. Using those sets, we define $V^\phi_L$, $V^\phi_X$, and $V^\phi_R$ as follows.
\begin{itemize}
\item $v\in V^\phi_L\Leftrightarrow v\in L^S\cup (X^\phi_c\setminus X^\phi_{\min})$ or $v$ is connected to some $u\in L^S\cup (X^\phi_c\setminus X^\phi_{\min})$ in $G-X^\phi_{\min}$.
\item $V^\phi_X=X^\phi_{\max}\cup X^\phi_p\cup X^\phi_{\min}\cup X^\phi_{\dpath}$.
\item $V^\phi_R=V\setminus (V^\phi_L\cup V^\phi_X)$.
\end{itemize}

Now we are ready to define $\legal_1$ and $\legal_2$ as follows.

\begin{definition}[$\legal_1$]\label{def:legal1}
Let $\phi$ be a state and let $\psi\in\Psi(\phi)$.
Then $\legal_1(\phi,\psi)=\true$ if and only if all of the following three conditions hold.
\begin{enumerate}
\item $V^\psi_L\cup V^\psi_X\subseteq V^\phi_L\cup V^\phi_X$.
\item $N[V^\psi_R]\cap V^\phi_L=\emptyset$.
\item $(V^\psi_L\cup V^\psi_X)\cap V^\phi_X\subseteq X^\phi_{\min}$.
\end{enumerate}
\end{definition}

\begin{definition}[$\legal_2$]\label{def:legal2}
Let $\phi$ be a state let $\psi_1\in\Psi_1(\phi)$, and let $\psi_2=\in\Psi_2(\phi)$.
Then $\legal_2(\phi,\psi_1,\psi_2)=\true$ if and only if all of the following conditions hold.
\begin{enumerate}
\item $V^{\psi_1}_L\cup V^{\psi_1}_X\cup V^{\psi_2}_L\cup V^{\psi_2}_X \subseteq V^\phi_L\cup V^\phi_X$.
\item $N[V^{\psi_1}_R]\cap V^\phi_L\subseteq V^{\psi_2}_L\cup V^{\psi_2}_X$.
\item $N[V^{\psi_2}_R]\cap V^\phi_L\subseteq V^{\psi_1}_L\cup V^{\psi_1}_X$.
\item $(V^{\psi_1}_L\cup V^{\psi_1}_X)\cap V^\phi_X\subseteq X^\phi_{\min}$.
\item $(V^{\psi_2}_L\cup V^{\psi_2}_X)\cap V^\phi_X\subseteq X^\phi_{\min}$.
\item $(V^{\psi_1}_L\cup V^{\psi_1}_X)\cap (V^{\psi_2}_L\cup V^{\psi_2}_X)\subseteq X^\phi_{\min}$.
\end{enumerate}
\end{definition}

Note that we can clearly compute the sets $V^\phi_L$, $V^\phi_X$, and $V^\phi_R$ from the sets  $X^\phi_{\max}$, $X^\phi_p$, $X^\phi_{\min}$, and $X^\phi_{\dpath}$ in polynomial time. The sets $X^\phi_{\max}$, $X^\phi_p$, and $X^\phi_{\min}$ can be computed in polynomial time by \cref{obs:algo1running,obs:algo2running}. By the definition of $X^\phi_{\dpath}$ given in \cref{sec:ltwdef} we have that we can compute $X^\phi_{\dpath}$ from $X^\phi_{\max}$, $X^\phi_p$, and $X^\phi_{\min}$ in polynomial time. Hence, we have the following.

\begin{observation}\label{obs:legalpoly}
The functions $\legal_1:\mathcal{S}\times \mathcal{S}\rightarrow \{\true,\false\}$ and $\legal_2:\mathcal{S}\times \mathcal{S}\times \mathcal{S}\rightarrow \{\true,\false\}$ can be computed in polynomial time.
\end{observation}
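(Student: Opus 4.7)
The plan is to reduce the evaluation of $\legal_1$ and $\legal_2$ to polynomial-time computations of the auxiliary sets $X^\phi_{\max}, X^\phi_p, X^\phi_{\min}, X^\phi_c, X^\phi_{\dpath}$ and $V^\phi_L, V^\phi_X, V^\phi_R$ for each state involved, followed by a fixed number of set-theoretic and reachability checks dictated by \cref{def:legal1,def:legal2}. Since every condition appearing in those definitions has the form ``a stated inclusion between two explicit vertex sets'' or ``a closed neighborhood avoids a given set'', once the relevant sets are in hand the verification is immediate.

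First I would produce, for each of the (constantly many) input states $\phi,\psi$ or $\phi,\psi_1,\psi_2$, the five candidate bag sets discussed in \cref{sec:pieces}. The sets $X^\bullet_{\max}$ and $X^\bullet_p$ are obtained either directly from the extended $S$-operation $\tau_+$ of the state (when $\tau_+$ is a big $S$-operation or $\void$) or via \cref{alg:smallwrapper}; the computations of \cref{alg:big,alg:smallwrapper} run in polynomial time by \cref{obs:algo1running,obs:algo2running}. The set $X^\bullet_{\min}$ is obtained analogously by applying the same algorithms to an arbitrary preceding state $\psi'\in \Psi(\phi)$ or $\Psi_1(\phi)$, and $X^\bullet_c$ is obtained as $X^{\psi'}_{\max}$ (or a union of two such bags in the $\smalljoin/\bigjoin$ case), again in polynomial time. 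Finally $X^\bullet_{\dpath}$ follows from its definition in \cref{sec:ltwdef}: we simply enumerate the connected components of $G-(X^\bullet_{\max}\cup X^\bullet_p\cup X^\bullet_{\min})$ and keep the $F$-components that satisfy the stated neighborhood conditions, which takes polynomial time.

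Next I would assemble $V^\phi_L, V^\phi_X, V^\phi_R$ from these bag sets. Here $V^\phi_X$ is just a union of four explicit sets, and $V^\phi_L$ is defined as the set of vertices reachable from $L^S\cup (X^\phi_c\setminus X^\phi_{\min})$ in $G-X^\phi_{\min}$, which can be computed by a single BFS/DFS; $V^\phi_R$ is then the complement. Each of these steps is clearly polynomial in $n$.

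With all of the above sets explicitly available, checking the three conditions of \cref{def:legal1} (respectively the six conditions of \cref{def:legal2}) reduces to (i) testing inclusions between vertex sets, (ii) taking closed neighborhoods $N[V^{\psi}_R]$, $N[V^{\psi_1}_R]$, $N[V^{\psi_2}_R]$ in $G$, and (iii) testing pairwise intersections with explicit sets. All of these are polynomial-time operations on subsets of $V(G)$. Combining everything, the total running time for one evaluation of $\legal_1$ or $\legal_2$ is polynomial, which is precisely what is claimed. The only even mildly non-trivial point is verifying that the choice of preceding state used to obtain $X^\phi_{\min}$ and $X^\phi_c$ is independent of which predecessor is picked, but this is already argued in \cref{sec:pieces}, so no additional work is needed here.
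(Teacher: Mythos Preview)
Your proposal is correct and follows essentially the same approach as the paper: the paper's justification (given in the paragraph immediately preceding the observation rather than as a separate proof) likewise observes that $X^\phi_{\max}$, $X^\phi_p$, $X^\phi_{\min}$ are polynomial-time computable by \cref{obs:algo1running,obs:algo2running}, that $X^\phi_{\dpath}$ follows from its definition, and that $V^\phi_L,V^\phi_X,V^\phi_R$ and hence the conditions of \cref{def:legal1,def:legal2} can then be checked in polynomial time. Your write-up is simply more explicit about the individual steps (BFS for $V^\phi_L$, the handling of $X^\phi_c$, and the predecessor-independence from \cref{sec:pieces}), but the substance is identical.
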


Now we show that if we have a \slim $S$-nice tree decomposition of $G$ with width~$k$ that witnesses a state $\phi=(\tau_-,L^S, X^S, R^S,\tau_+)$ and the precedessor state(s) of $\phi$ and is \topheavy for the parent of the parent of the top node of the directed path of $S$-trace $(L^S, X^S, R^S)$, then we have, depending on the number of predecessor states, that $\legal_1(\phi,\psi)=\true$ and $\legal_2(\phi,\psi_1,\psi_2)=\true$.
To this end, we first prove the following.

\begin{lemma}\label{lem:vl}
Let $\mathcal{T}$ be a \slim $S$-nice tree decomposition of $G$ with width~$k$ that witnesses $\phi=(\tau_-,L^S, X^S, R^S,\tau_+)$. 
If $L^S\neq\emptyset$, then let $t_{\min}$ and $t_{\max}$ be the bottom node and the top node of the directed path of $S$-trace $(L^S, X^S, R^S)$, respectively. 
If $L^S=\emptyset$, then let $t_{\max}$ be the $S$-child with $S$-trace $(L^S, X^S, R^S)$ of the $S$-bottom node that admits $S$-operation $\tau_+$. 
Let $t_p$ be the parent of $t_{\max}$ (if it exists). 
The bags of $t_{\min}$ and $t_{\max}$ are $X^\phi_{\min}$ and $X^\phi_{\max}$, respectively. If $t_{\min}$ is not defined, then $X^\phi_{\min}=X^S$.
If $t_p$ exists, its bag is $X^\phi_p$, otherwise $X^\phi_p=\emptyset$.
Let the tree decomposition $\mathcal{T}$ be \topheavy for the parent of $t_p$. 
   It holds that $v\in V^\phi_L$ if and only if $L^S\neq \emptyset$ and $t_{\min}$ has an $S$-child $t$ such that $v\in V_t\setminus X^\phi_{\min}$.
\end{lemma}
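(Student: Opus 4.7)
The plan is to prove the two directions of the equivalence separately, relying on (i) the fact that $X^\phi_{\min} = X_{t_{\min}}$ (when $t_{\min}$ exists) acts as a separator in $G$ between $V_{t_{\min}} \setminus X^\phi_{\min}$ and $V \setminus V_{t_{\min}}$, as well as between the subtrees below $t_{\min}$, and (ii) the top-heavy property, which restricts how $F$-components of $G - X^\phi_{\min}$ can sit inside the subtrees rooted at $S$-children of $t_{\min}$.

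\emph{Forward direction.} Suppose $v \in V^\phi_L$. First I will observe that if $L^S = \emptyset$, then $\tau_- = \void$, so $t_{\min}$ is not defined, $X^\phi_c = \emptyset$, and $X^\phi_{\min} = X^S$, which forces $V^\phi_L = \emptyset$; hence $L^S \neq \emptyset$. Now consider three cases. If $v \in L^S$, there is a descendant of $t_{\min}$ whose bag contains $v$, so pick a child $t$ of $t_{\min}$ with $v \in V_t$. I will check using \cref{def:sparentchild} that $t$ is an $S$-child: since $v \in L^S_{t_{\min}} = L^S$ we have $v \notin R^S_{t_{\min}} \cup X^S_{t_{\min}}$, while $v \in V_t \cap S \subseteq L^S_t \cup X^S_t$, which witnesses the $S$-child condition; also $v \notin X^\phi_{\min}$ since $v \in L^S$. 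If $v \in X^\phi_c \setminus X^\phi_{\min}$, then by definition of $X^\phi_c$ there is an $S$-child $t$ of $t_{\min}$ with $v \in X_t \subseteq V_t$. Finally, if $v$ is only connected in $G - X^\phi_{\min}$ to some $u \in L^S \cup (X^\phi_c \setminus X^\phi_{\min})$, I use the two previous cases to place $u \in V_t \setminus X^\phi_{\min}$ for some $S$-child $t$ of $t_{\min}$, and then argue that the connected component of $u$ in $G - X^\phi_{\min}$ is contained in $V_t$: this follows from the separator property together with the observation that for any two distinct children $t, t'$ of $t_{\min}$ we have $V_t \cap V_{t'} \subseteq X_{t_{\min}} = X^\phi_{\min}$, so no edge in $G - X^\phi_{\min}$ can leave $V_t$.

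\emph{Backward direction.} Assume $L^S \neq \emptyset$, $t$ is an $S$-child of $t_{\min}$, and $v \in V_t \setminus X^\phi_{\min}$. If $v \in X_t$, then $v \in X^\phi_c \setminus X^\phi_{\min}$, so $v \in V^\phi_L$ immediately. Otherwise, consider the connected component $C$ of $v$ in $G - X^\phi_{\min}$. As above, the separator property gives $C \subseteq V_t \setminus X^\phi_{\min}$. Here is the main step: because $\mathcal{T}$ is top-heavy for the parent of $t_p$ and $t_{\min}$ is a descendant of that node, \cref{def:topheavy} tells us that every $F$-component of $G - X^\phi_{\min}$ is disjoint from $V_t$; hence $C$ is an $S$-component, so it contains some $s \in S$. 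Since $s \in V_{t_{\min}} \setminus X^\phi_{\min}$ and $V_{t_{\min}} \cap S = L^S \cup X^S$ with $X^S \subseteq X^\phi_{\min}$, we get $s \in L^S$, and therefore $v$ is connected to $L^S$ in $G - X^\phi_{\min}$, giving $v \in V^\phi_L$.

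The main obstacle is the backward direction: without top-heavyness the connected component of $v$ in $G - X^\phi_{\min}$ could be an $F$-component sitting inside $V_t$, and such a $v$ would not be connected to any vertex of $L^S \cup (X^\phi_c \setminus X^\phi_{\min})$, breaking the equivalence. The hypothesis that $\mathcal{T}$ is top-heavy for the parent of $t_p$ is exactly what rules this out, via the first condition of \cref{def:topheavy} applied to the $S$-bottom node $t_{\min}$. The separator arguments in the remaining cases are routine consequences of \cref{def:tree_decomposition}.
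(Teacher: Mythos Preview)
Your proposal is correct and follows essentially the same approach as the paper's proof: both directions are handled by the separator property of $X^\phi_{\min}$ in the tree decomposition, and the backward direction hinges on Condition~\ref{cond:slim:1} of \cref{def:topheavy} applied to the $S$-bottom node $t_{\min}$ to rule out $F$-components inside $V_t$. Your case split on $v \in X_t$ in the backward direction is harmless but redundant (the component argument already covers it via top-heavyness), and your explicit verification of the $S$-child condition in the forward direction is slightly more detailed than the paper's appeal to Condition~\ref{condition_3_tree_decomposition} of \cref{def:tree_decomposition}, but the substance is the same.
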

\begin{proof}
Note that if $L^S=\emptyset$, then by definition we have that $V^\phi_L=\emptyset$ and $t_{\min}$ is not defined. Hence, in this case the statement holds. From now on, assume that $L^S\neq\emptyset$.
We first show that if $v\in V^\phi_L$, then $t_{\min}$ has an $S$-child $t$ such that $v\in V_t\setminus X^\phi_{\min}$. By definition of $V^\phi_L$ we have that $v\notin X^\phi_{\min}$ and $v\in L^S\cup (X^\phi_c\setminus X^\phi_{\min})$ or $v$ is connected to some $u\in L^S\cup (X^\phi_c\setminus X^\phi_{\min})$ in $G-X^\phi_{\min}$.
If $v\in L^S$, then there must exist exactly one $S$-child $t$ such that $v\in V_t\setminus X^\phi_{\min}$, otherwise, since $v\notin X^\phi_{\min}$, we have a contradiction to Condition~\ref{condition_3_tree_decomposition} of \cref{def:tree_decomposition}. The same argument holds if $v\in X^\phi_c\setminus X^\phi_{\min}$.
Now assume that $v$ is connected to some $u\in L^S\cup (X^\phi_c\setminus X^\phi_{\min})$ in $G-X^\phi_{\min}$. If $u\in L^S$, then by the same arguments as above, there exist exactly one $S$-child $t$ such that $u\in V_t\setminus X^\phi_{\min}$. Again, the same also holds if $u\in X^\phi_c\setminus X^\phi_{\min}$, that is, then there exist exactly one $S$-child $t$ such that $u\in V_t\setminus X^\phi_{\min}$. Furthermore, in both cases we also have that $u\notin V\setminus V_{t_{\min}}$, since otherwise, we have a contradiction to Condition~\ref{condition_3_tree_decomposition} of \cref{def:tree_decomposition}.
Let $C$ be the connected component that contains $v$ (and also $u$ if $u$ exists) in $G-X^\phi_{\min}$. We have that $V(C)$ contains a vertex that is only contained in bags in the subtree of the tree decomposition that is rooted in $t$. Hence, we must have that $V(C)\subseteq V_t\setminus X^\phi_{\min}$, otherwise we get a contradiction to Conditions~\ref{condition_2_tree_decomposition} and~\ref{condition_3_tree_decomposition} of \cref{def:tree_decomposition}. We can conclude that $v\in V_t\setminus X^\phi_{\min}$.

Now suppose that $t_{\min}$ has an $S$-child $t$ such that $v\in V_t\setminus X^\phi_{\min}$. We show that then we have $v\in V^\phi_L$. Note that in particular, $v\notin X^\phi_{\min}$. Let $C$ be the connected component that contains $v$ in $G-X^\phi_{\min}$. Suppose towards contradiction that $C$ is an $F$-component. Note that $t_{\min}$ is an $S$-bottom node. From the definition of \topheavy{ness} (\cref{def:topheavy}) we have that for each $F$-component $C'$ in $G-X^\phi_{\min}$ it holds that $V(C')\cap V_{t'}=\emptyset$ for each $S$-child $t'$ of $t_{\min}$. In particular, this implies that $V(C)\cap V_t=\emptyset$. This is a contradiction to $v\in V_t\setminus X^\phi_{\min}$. It follows that $C$ is not an $F$-component and hence $V(C)\cap S\neq\emptyset$. Let $u\in V(C)\cap S$. We have that $u\notin X^S$ and $u\in V_t$. It follows that $u\in L^S$. Hence, we have that $v$ is connected to some $u\in L^S$ in $G-X^\phi_{\min}$. By definition, we then have that $v\in V^\phi_L$.
\end{proof}

\begin{lemma}\label{lem:vx}
Let $\mathcal{T}$ be a \slim $S$-nice tree decomposition of $G$ with width~$k$ that witnesses $\phi=(\tau_-,L^S, X^S, R^S,\tau_+)$. 
If $L^S\neq\emptyset$, then let $t_{\min}$ and $t_{\max}$ be the bottom node and the top node of the directed path of $S$-trace $(L^S, X^S, R^S)$, respectively. 
If $L^S=\emptyset$, then let $t_{\max}$ be the $S$-child with $S$-trace $(L^S, X^S, R^S)$ of the $S$-bottom node that admits $S$-operation $\tau_+$. 
Let $t_p$ be the parent of $t_{\max}$ (if it exists). 
The bags of $t_{\min}$ and $t_{\max}$ are $X^\phi_{\min}$ and $X^\phi_{\max}$, respectively. If $t_{\min}$ is not defined, then $X^\phi_{\min}=X^S$.
If $t_p$ exists, its bag is $X^\phi_p$, otherwise $X^\phi_p=\emptyset$.
Let the tree decomposition $\mathcal{T}$ be \topheavy for the parent of $t_p$. 
    It holds that $v\in V^\phi_X$ if and only if $v\in X^\phi_p$ or both of the following conditions hold.
    \begin{itemize}
        \item $v\in V_{t_{\max}}$, and
        \item if $L^S\neq\emptyset$, then for each $S$-child $t$ of $t_{\min}$ it holds that $v\notin V_t\setminus X^\phi_{\min}$.
    \end{itemize}
\end{lemma}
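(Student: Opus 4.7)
The plan is to prove both directions of the biconditional by a case analysis on where the subtree $T(v)=\{t\in V(T):v\in X_t\}$ of bags containing $v$ lies in $\mathcal{T}$, using three structural ingredients: (i)~each of the bags $X^\phi_{\max}=X_{t_{\max}}$ and $X^\phi_{\min}=X_{t_{\min}}$ separates $V_{t_{\max}}$, resp.\ $V_{t_{\min}}$, from the rest of~$G$ in the usual tree-decomposition sense; (ii)~every node on the directed path from $t_{\max}$ to $t_{\min}$ has $S$-part $X^S$ of its bag when $L^S\neq\emptyset$; and (iii)~\topheavy{}ness of $\mathcal{T}$ for $t_{pp}$, the parent of $t_p$. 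A useful preliminary observation is that bag connectedness (Condition~\ref{condition_3_tree_decomposition} of \cref{def:tree_decomposition}) forces $X^\phi_p\cap V_{t_{\max}}\subseteq X^\phi_{\max}$, so any component $V(C)$ of $G-(X^\phi_{\max}\cup X^\phi_p\cup X^\phi_{\min})$ contained in $V_{t_{\max}}\setminus X^\phi_{\max}$ has $N(V(C))\subseteq X^\phi_{\max}\cup X^\phi_{\min}$.

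For the forward direction, I would let $v\in V^\phi_X$ with $v\notin X^\phi_p$. If $v\in X^\phi_{\max}\cup X^\phi_{\min}$, then $v\in V_{t_{\max}}$ is immediate; moreover, were $L^S\neq\emptyset$ and $v\in V_t\setminus X^\phi_{\min}$ for some $S$-child $t$ of $t_{\min}$, the subtree $T(v)$ would have to connect $t_{\max}$ or $t_{\min}$ with a bag in $T_t$ strictly below $t_{\min}$, forcing $v\in X^\phi_{\min}$ by Condition~\ref{condition_3_tree_decomposition} of \cref{def:tree_decomposition} and contradicting $v\in V_t\setminus X^\phi_{\min}$. If $v\in X^\phi_{\dpath}$, I would let $C$ be the witnessing $F$-component; the clause $N(V(C))\setminus X^\phi_p\neq\emptyset$ together with the preliminary observation produces a neighbor of $V(C)$ in $X^\phi_{\max}\cup X^\phi_{\min}\subseteq V_{t_{\max}}$, and connectedness of the bag-region covering $V(C)$ in $\mathcal{T}$ then yields $V(C)\subseteq V_{t_{\max}}$. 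The second bullet for $X^\phi_{\dpath}$-vertices follows from $N[V(C)]\cap(X^\phi_c\setminus X^\phi_{\min})=\emptyset$ by an $X^\phi_{\min}$-separator argument inside~$V_{t_{\min}}$.

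For the backward direction, I would suppose $v\in V_{t_{\max}}$ and $v\notin V_t\setminus X^\phi_{\min}$ for every $S$-child $t$ of $t_{\min}$, and also that $v\notin X^\phi_{\max}\cup X^\phi_p\cup X^\phi_{\min}$, and let $C$ be the component of $G-(X^\phi_{\max}\cup X^\phi_p\cup X^\phi_{\min})$ containing $v$. I would verify the three defining clauses of $X^\phi_{\dpath}$. First, $V(C)\cap S=\emptyset$: any $w\in V(C)\cap S$ would lie in $L^S\cup R^S$ since $X^S\subseteq X^\phi_{\max}$, and both options are ruled out by the $X^\phi_{\max}$- or $X^\phi_{\min}$-separator property applied to the $V(C)$-path from $v$ to $w$, using the first condition of \cref{def:topheavy} at $t_{\min}$ to confine $L^S$-vertices to $V_t\setminus X^\phi_{\min}$ for some $S$-child $t$ of $t_{\min}$. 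Second, $N[V(C)]\cap(X^\phi_c\setminus X^\phi_{\min})=\emptyset$ follows by the analogous $X^\phi_{\min}$-separator argument between sibling subtrees of $t_{\min}$. Third, $N(V(C))\setminus X^\phi_p\neq\emptyset$: assuming the contrary, I would combine $N(V(C))\subseteq X^\phi_p$ with $N(V(C))\subseteq X^\phi_{\max}\cup X^\phi_{\min}$ and $X^\phi_{\min}\cap X^\phi_p\subseteq X^\phi_{\max}$ to conclude $N(V(C))\subseteq X^\phi_{\max}\cap X^\phi_p$, then use Conditions~\ref{cond:slim:3} and~\ref{cond:slim:4} of \cref{def:slimtd} to transfer ``$V(C)$ is an $F$-component of $G-X^\phi_p$ hitting $V_{t_{pp}}$'' to the same statement with $X^\phi_p$ replaced by $X_{t_{pp}}$; Step~\ref{heavy3} of $\MakeTopHeavy(t_{pp})$ would then apply, contradicting \cref{lem:topheavynomake}.

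The hardest step will be the last one of the backward direction, bridging from $N(V(C))\subseteq X^\phi_p=X_{t_p}$ to the $F$-component structure in $G-X_{t_{pp}}$ required to invoke \topheavy{}ness. The bags $X_{t_p}$ and $X_{t_{pp}}$ differ by at most one vertex by $S$-niceness, and if that vertex belongs to $S$ the components of $G-X_{t_p}$ and $G-X_{t_{pp}}$ may genuinely differ; Condition~\ref{cond:slim:3} of \cref{def:slimtd} collapses $X_{t_p}=X_{t_{pp}}$ when $X_{t_p}$ is not full, and the full-bag case will be handled by noting that any extra vertex of $X_{t_{pp}}\setminus X_{t_p}$ cannot lie in $V(C)$ by bag connectedness. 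The extended-forget sub-case $X^\phi_{\max}=X^\phi_p\cup\{v'\}$ is then reduced to the standard setting $X^\phi_{\max}\subseteq X^\phi_p$ by observing that $v'\notin N(V(C))$ (since $v'\notin X^\phi_p$).
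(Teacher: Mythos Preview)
Your overall skeleton matches the paper's proof: both directions split into the trivial cases $v\in X^\phi_p$, $v\in X^\phi_{\max}\cup X^\phi_{\min}$, and the $X^\phi_{\dpath}$ case, and the backward direction verifies the three defining clauses of $X^\phi_{\dpath}$ for the component $C$ of $G-(X^\phi_{\max}\cup X^\phi_p\cup X^\phi_{\min})$ containing $v$. For the first two clauses your arguments are essentially the paper's (the paper packages the $L^S$-case via \cref{lem:vl} rather than invoking Condition~\ref{cond:slim:1} of \cref{def:topheavy}, but this is cosmetic).

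The substantive difference, and the place where your plan goes wrong, is the third clause $N(V(C))\setminus X^\phi_p\neq\emptyset$. The paper does \emph{not} pass to $t_{pp}$ or invoke $\MakeTopHeavy$/\cref{lem:topheavynomake}. Instead it observes that if $N(V(C))\subseteq X^\phi_p$ then $C$ is already an $F$-component of $G-X_{t_p}$, and applies Condition~1 of \cref{def:topheavy} directly with $t'=t_p$ (an $S$-bottom node descendant of the top-heavy node $t_{pp}$) and $t''=t_{\max}$ (an $S$-child of $t_p$): this immediately yields $V(C)\cap V_{t_{\max}}=\emptyset$, contradicting $v\in V_{t_{\max}}$. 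No comparison of $X_{t_p}$ and $X_{t_{pp}}$ is needed.

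Your detour has a concrete gap in the full-bag case. If $X_{t_p}$ is full then, since $|X_{t_{pp}}|\le k$, necessarily $X_{t_{pp}}\subsetneq X_{t_p}$; the relevant difference is $X_{t_p}\setminus X_{t_{pp}}=\{w\}$, not $X_{t_{pp}}\setminus X_{t_p}$ as you wrote. Your containment $N(V(C))\subseteq X^\phi_{\max}\cap X^\phi_p=X_{t_{\max}}\cap X_{t_p}$ does not in general force $N(V(C))\subseteq X_{t_{pp}}$, because $w$ may well lie in $X_{t_{\max}}$ (for instance when $\tau_+$ is a $\bigjoin$ and $X_{t_{\max}}=\subbag(X_{t_p},d_i)$ drops a vertex other than $w$). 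Then in $G-X_{t_{pp}}$ the component containing $V(C)$ absorbs $w$ and possibly further vertices, and if $w\in S$ it is no longer an $F$-component, so Step~\ref{heavy3} of $\MakeTopHeavy(t_{pp})$ need not apply. Replacing this whole detour by the one-line application of Condition~1 of \cref{def:topheavy} at $t_p$ fixes the argument and matches the paper.
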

\begin{proof}
    First, assume that $v\in V^\phi_X$. By definition, we have that $v\in X^\phi_{\max}\cup X^\phi_p\cup X^\phi_{\min}\cup X^\phi_{\dpath}$. If $v\in X^\phi_p$, then we are done. Hence, assume that $v\in (X^\phi_{\max}\cup X^\phi_{\min}\cup X^\phi_{\dpath})\setminus X^\phi_p$.
    If $x\in X^\phi_{\min}$, then both conditions in the lemma statement are trivially fulfilled. If $x\in X^\phi_{\max}$, then the first condition is trivially fulfilled. To see that the second condition is fulfilled, assume that $L^S\neq\emptyset$ and assume for contradiction that $t_{\min}$ has an $S$-child $t$ such that $v\in V_t\setminus X^\phi_{\min}$. Then we have a contradiction to Condition~\ref{condition_3_tree_decomposition} of \cref{def:tree_decomposition}.
    Finally, suppose that $v\in X^\phi_{\dpath}$. 
    Then by definition, there exists an $F$-component $C$ in $G-(X^\phi_{\max}\cup X^\phi_p\cup X^\phi_{\min})$ such that $v\in V(C)$, $N[V(C)]\cap (X^\phi_c\setminus X^\phi_{\min})=\emptyset$, and $N(V(C))\setminus X^\phi_p\neq\emptyset$. Assume for contradiction that $v\notin V_{t_{\max}}$. Then we must have that $V(C)\subseteq V\setminus V_{t_{\max}}$, otherwise we get a contradiction to Conditions~\ref{condition_2_tree_decomposition} and~\ref{condition_3_tree_decomposition} of \cref{def:tree_decomposition}. However, since $N(V(C))\setminus X^\phi_p\neq\emptyset$, we now have a contradiction to Condition~\ref{condition_2_tree_decomposition} of \cref{def:tree_decomposition}. We can conclude that $v\in V_{t_{\max}}$. 
Now assume that $L^S\neq\emptyset$ and assume for contradiction that $t_{\min}$ has an $S$-child $t$ such that $v\in V_t\setminus X^\phi_{\min}$. By \cref{lem:vl} this means that $v\in V^\phi_L$ and hence, $v\in L^S\cup (X^\phi_c\setminus X^\phi_{\min})$ or $v$ is connected to some $u\in L^S\cup (X^\phi_c\setminus X^\phi_{\min})$ in $G-X^\phi_{\min}$. 
Since $v$ is contained in an $F$-component in $G-(X^\phi_{\max}\cup X^\phi_p\cup X^\phi_{\min})$ and since $N[V(C)]\cap (X^\phi_c\setminus X^\phi_{\min})=\emptyset$, we have that $(L^S\cup (X^\phi_c\setminus X^\phi_{\min}))\cap V(C)=\emptyset$.
It follows that there is a vertex in $(X^\phi_{\max}\cup X^\phi_p)\setminus X^\phi_{\min}$ that is connected to some $u\in L^S\cup (X^\phi_c\setminus X^\phi_{\min})$ in $G-X^\phi_{\min}$. This is a contradiction to Conditions~\ref{condition_2_tree_decomposition} and~\ref{condition_3_tree_decomposition} of \cref{def:tree_decomposition}. We can conclude that for each $S$-child $t$ of $t_{\min}$ it holds that $v\notin V_t\setminus X^\phi_{\min}$.

    For the other direction, assume that $v\in X^\phi_p$ or both of the following conditions hold: $v\in V_{t_{\max}}$, and if $L^S\neq\emptyset$, then for each $S$-child $t$ of $t_{\min}$ it holds that $v\notin V_t\setminus X^\phi_{\min}$. If $v\in X^\phi_p$, then by definition we have that $v\in V^\phi_X$. Hence, assume that $v\notin X^\phi_p$, $v\in V_{t_{\max}}$, and if $L^S\neq\emptyset$, then for each $S$-child $t$ of $t_{\min}$ it holds that $v\notin V_t\setminus X^\phi_{\min}$. If $v\in X^\phi_{\max}\cup X^\phi_{\min}$, then we also have by definition that $v\in V^\phi_X$. Hence, assume that $v\in V_{t_{\max}}\setminus (X^\phi_{\max}\cup X^\phi_{\min})$. Let $C$ denote the connected component of $G-(X^\phi_{\max}\cup X^\phi_p\cup X^\phi_{\min})$ that contains $v$. First, assume for contradiction that $V(C)\cap S\neq\emptyset$. If $V(C)\cap R^S\neq\emptyset$, we immediately get a contradiction to Conditions~\ref{condition_2_tree_decomposition} and~\ref{condition_3_tree_decomposition} of \cref{def:tree_decomposition}. Hence, assume that $V(C)\cap L^S\neq\emptyset$. Then we have for the connected component $C'$ in $G-X^\phi_{\min}$ that contains $v$ also that $V(C')\cap L^S\neq\emptyset$. By definition, we then have $v\in V^\phi_L$. By \cref{lem:vl} then have that $t_{\min}$ has an $S$-child $t$ such that $v\in V_t\setminus X^\phi_{\min}$. This is a contradiction to the assumption that for each $S$-child $t$ of $t_{\min}$ it holds that $v\notin V_t\setminus X^\phi_{\min}$. We can conclude that the connected component $C$ of $G-(X^\phi_{\max}\cup X^\phi_p\cup X^\phi_{\min})$ that contains $v$ is an $F$-component. Now we prove that $v\in X^\phi_{\dpath}$ and hence, $v\in V^\phi_X$. To this end, assume for contradiction that $N[V(C)]\cap (X^\phi_c\setminus X^\phi_{\min})\neq\emptyset$ or $N(V(C))\setminus X^\phi_p=\emptyset$. First, assume that $N[V(C)]\cap (X^\phi_c\setminus X^\phi_{\min})\neq\emptyset$. Then we have that $V(C)$ contains or is connected to a vertex that is only contained in bags in the subtree of the tree decomposition that is rooted in an $S$-child $t$ of $t_{\min}$. Hence, we must have that $V(C)\subseteq V_t\setminus X^\phi_{\min}$, otherwise we get a contradiction to Conditions~\ref{condition_2_tree_decomposition} and~\ref{condition_3_tree_decomposition} of \cref{def:tree_decomposition}. This, however, implies in particular that $v\in V_t\setminus X^\phi_{\min}$, which is a contradiction to the assumption that for each $S$-child $t$ of $t_{\min}$ it holds that $v\notin V_t\setminus X^\phi_{\min}$. We can conclude that $N[V(C)]\cap (X^\phi_c\setminus X^\phi_{\min})=\emptyset$. Finally, assume for contradiction that $N(V(C))\setminus X^\phi_p=\emptyset$. Then we have that $C$ is an $F$-component in $G-X^\phi_p$. Note that since $G$ is connected, we have that $N(V(C))\neq\emptyset$ and hence, $X^\phi_p\neq\emptyset$. Hence, we have that $t_{\max}$ has a parent $t_p$ with bag $X^\phi_p$ that is an $S$-bottom node. From the definition of \topheavy{ness} (\cref{def:topheavy}) we have that for each $F$-component $C'$ in $G-X^\phi_p$ it holds that $V(C')\cap V_{t'}=\emptyset$ for each $S$-child $t'$ of $t_p$. In particular, this implies that $V(C)\cap V_{t_{\max}}=\emptyset$. This is a contradiction to $v\in V_{t_{\max}}$. We can conclude that $N(V(C))\setminus X^\phi_p\neq\emptyset$ and hence,  $v\in X^\phi_{\dpath}$, which implies that  $v\in V^\phi_X$.
    \end{proof}

\begin{corollary}\label{lem:legal0}
Let $\mathcal{T}$ be a \slim $S$-nice tree decomposition of $G$ with width~$k$ that witnesses $\phi=(\tau_-,L^S, X^S, R^S,\tau_+)$. 
If $L^S\neq\emptyset$, then let $t_{\min}$ and $t_{\max}$ be the bottom node and the top node of the directed path of $S$-trace $(L^S, X^S, R^S)$, respectively. 
If $L^S=\emptyset$, then let $t_{\max}$ be the $S$-child with $S$-trace $(L^S, X^S, R^S)$ of the $S$-bottom node that admits $S$-operation $\tau_+$. 
Let $t_p$ be the parent of $t_{\max}$ (if it exists). 
The bags of $t_{\min}$ and $t_{\max}$ are $X^\phi_{\min}$ and $X^\phi_{\max}$, respectively. If $t_{\min}$ is not defined, then $X^\phi_{\min}=X^S$.
If $t_p$ exists, its bag is $X^\phi_p$, otherwise $X^\phi_p=\emptyset$.
Let the tree decomposition $\mathcal{T}$ be \topheavy for the parent of $t_p$. 
It holds that 
$V_{t_{\max}}\cup X^\phi_p=V^\phi_L\cup V^\phi_X$.
\end{corollary}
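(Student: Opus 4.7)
The plan is to derive the corollary as a direct consequence of \cref{lem:vl,lem:vx}, via a set-theoretic double inclusion, using the case split on whether a vertex lies below an $S$-child of $t_{\min}$ or not.

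\medskip

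For the inclusion $V^\phi_L\cup V^\phi_X\subseteq V_{t_{\max}}\cup X^\phi_p$, I would argue as follows. First, by \cref{lem:vl}, every $v\in V^\phi_L$ satisfies $v\in V_t\setminus X^\phi_{\min}$ for some $S$-child $t$ of $t_{\min}$, so in particular $v\in V_{t_{\min}}\subseteq V_{t_{\max}}$ (recall $t_{\min}$ is a descendant of $t_{\max}$, or equal to it). Second, by \cref{lem:vx}, every $v\in V^\phi_X$ either lies in $X^\phi_p$ or lies in $V_{t_{\max}}$, giving $V^\phi_X\subseteq V_{t_{\max}}\cup X^\phi_p$ directly.

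\medskip

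For the reverse inclusion $V_{t_{\max}}\cup X^\phi_p\subseteq V^\phi_L\cup V^\phi_X$, take $v\in V_{t_{\max}}\cup X^\phi_p$. If $v\in X^\phi_p$, then $v\in V^\phi_X$ by the definition of $V^\phi_X$. Otherwise $v\in V_{t_{\max}}\setminus X^\phi_p$, and I would split on whether there exists an $S$-child $t$ of $t_{\min}$ with $v\in V_t\setminus X^\phi_{\min}$ (this split is vacuous when $L^S=\emptyset$, since then $t_{\min}$ is not defined). If such an $S$-child exists, then \cref{lem:vl} places $v$ in $V^\phi_L$. If no such $S$-child exists, then the two conditions in the ``if'' direction of \cref{lem:vx} are both satisfied ($v\in V_{t_{\max}}$ and, when $L^S\ne\emptyset$, $v$ avoids all $V_t\setminus X^\phi_{\min}$ for $S$-children $t$ of $t_{\min}$), so \cref{lem:vx} yields $v\in V^\phi_X$.

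\medskip

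There is no real obstacle here; all the technical content (including the use of the \topheavy property to rule out $F$-components that straddle $X^\phi_{\min}$) has already been absorbed into \cref{lem:vl,lem:vx}. The only minor care needed is to handle the degenerate case $L^S=\emptyset$: then $t_{\min}$ and the $S$-children-based conditions are vacuous, $V^\phi_L=\emptyset$ by definition, and the inclusions reduce to the straightforward statement that $V^\phi_X\cap V_{t_{\max}}\subseteq V_{t_{\max}}$ together with $X^\phi_p\subseteq V^\phi_X$ and $V_{t_{\max}}\subseteq V^\phi_X$, the latter following again from the ``if'' direction of \cref{lem:vx} applied with no constraint from $S$-children.
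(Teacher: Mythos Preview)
Your proof is correct and matches the paper's approach: the corollary is stated immediately after \cref{lem:vl,lem:vx} without proof, meaning it is intended to follow directly from those two lemmas via exactly the kind of double inclusion and case split you give.
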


Now we are ready to prove that the requirements for $\legal_1(\phi,\psi)=\true$ and $\legal_2(\phi,\psi_1,\psi_2)=\true$, respectively, are given if there is a \slim $S$-nice tree decomposition $\mathcal{T}=(T,\{X_t\}_{t\in V(T)})$ of $G$ with width~$k$ that witnesses $\phi$ and $\psi$ or $\psi_1$ and $\psi_2$ respectively.

\begin{lemma}\label{lem:legal1}
Let $\phi=(\tau_-,L^S, X^S, R^S,\tau_+)$ be a state and let $\psi=(\tau'_-,\hat{L}^S, \hat{X}^S, \hat{R}^S,\tau_-)\in\Psi(\phi)$ such that the following holds.
There is a \slim $S$-nice tree decomposition $\mathcal{T}=(T,\{X_t\}_{t\in V(T)})$ of $G$ with width~$k$ that witnesses $\phi$ and $\psi$. Let $t_{\min}$ and $t_{\max}$ be the bottom and top node, respectively, of the directed path of $S$-trace $(L^S, X^S, R^S)$. Let $t_p$ be the parent of $t_{\max}$. 
The tree decomposition $\mathcal{T}$ is \topheavy for the parent of $t_p$.
Then we have the following.
\begin{enumerate}
\item $V^\psi_L\cup V^\psi_X\subseteq V^\phi_L\cup V^\phi_X$.
\item $N[V^\psi_R]\cap V^\phi_L=\emptyset$.
\item $(V^\psi_L\cup V^\psi_X)\cap V^\phi_X= X^\phi_{\min}$.
\end{enumerate}
\end{lemma}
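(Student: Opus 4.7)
The plan is to express all six sets $V^\phi_L, V^\phi_X, V^\phi_R, V^\psi_L, V^\psi_X, V^\psi_R$ in terms of concrete subtrees of $\mathcal{T}$ and then obtain the three inclusions by direct set manipulation together with Condition~\ref{condition_3_tree_decomposition} of \cref{def:tree_decomposition}. Since $\psi\in\Psi(\phi)$, the bottom operation $\tau_-$ is either $\smallintroduce$ or $\extendedforget$, so $t_{\min}$ has exactly one $S$-child in $\mathcal{T}$; call it $t'_{\max}$. Whether $\hat{L}^S$ is empty or not, $t'_{\max}$ plays the role of the ``top node'' used by Corollary~\ref{lem:legal0} when applied to $\psi$, and its parent is $t_{\min}$, so $X^\psi_p=X^\phi_{\min}$.

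First I would invoke Corollary~\ref{lem:legal0} twice, once for $\phi$ and once for $\psi$, to get the clean equalities $V^\phi_L\cup V^\phi_X=V_{t_{\max}}\cup X^\phi_p$ and $V^\psi_L\cup V^\psi_X=V_{t'_{\max}}\cup X^\phi_{\min}$. I would then apply Lemma~\ref{lem:vl} to $\phi$, using that $t'_{\max}$ is the unique $S$-child of $t_{\min}$, to obtain the formula $V^\phi_L=V_{t'_{\max}}\setminus X^\phi_{\min}$. Condition~(1) is now immediate: both $V_{t'_{\max}}$ and $X^\phi_{\min}$ lie inside $V_{t_{\max}}$, which is contained in $V^\phi_L\cup V^\phi_X$. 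For condition~(2), any $u\in V^\phi_L$ belongs to $V_{t'_{\max}}\setminus X^\phi_{\min}$, so it is not contained in any bag outside $T_{t'_{\max}}$ by Condition~\ref{condition_3_tree_decomposition} of \cref{def:tree_decomposition}; all its neighbors therefore lie in $V_{t'_{\max}}$. Since $V^\psi_R=V\setminus(V_{t'_{\max}}\cup X^\phi_{\min})$ is disjoint from $V_{t'_{\max}}$, it follows that $u\notin N[V^\psi_R]$.

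Condition~(3) is the most involved and I will prove equality (which is slightly stronger than the corresponding requirement in \cref{def:legal1}). The inclusion $X^\phi_{\min}\subseteq (V^\psi_L\cup V^\psi_X)\cap V^\phi_X$ follows because $X^\phi_{\min}$ is contained in both $V^\psi_L\cup V^\psi_X=V_{t'_{\max}}\cup X^\phi_{\min}$ and $V^\phi_X$ (via Lemma~\ref{lem:vx} applied to $\phi$). For the reverse inclusion I would split the intersection into the $V_{t'_{\max}}$-part and the $X^\phi_{\min}$-part; the latter is trivially contained in $X^\phi_{\min}$, while for the former I would argue that $V_{t'_{\max}}\cap V^\phi_X=V_{t'_{\max}}\setminus V^\phi_L=V_{t'_{\max}}\cap X^\phi_{\min}\subseteq X^\phi_{\min}$. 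The auxiliary sub-claim $V_{t'_{\max}}\cap X^\phi_p\subseteq X^\phi_{\min}$, which may be needed depending on the route taken, is a direct consequence of Condition~\ref{condition_3_tree_decomposition} of \cref{def:tree_decomposition} applied to the path $t_p,\ldots,t_{\max},\ldots,t_{\min},t'_{\max}$ in $T$.

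The main obstacle is a subtle compatibility issue: Corollary~\ref{lem:legal0} applied to $\psi$ formally requires $\mathcal{T}$ to be \topheavy for the parent of $t_{\min}$, but the hypothesis of the lemma only supplies \topheavy{ness} for the (potentially strictly higher) parent of $t_p$. I will resolve this by inspecting the proofs of Lemmas~\ref{lem:vl} and~\ref{lem:vx}: only Condition~1 of \cref{def:topheavy} is used there, and that condition is a universal statement over all $S$-bottom descendants, hence inherited by every ancestor of $t_{\min}$ that is itself a descendant of the parent of $t_p$. Thus Condition~1 of \topheavy{ness} holds at the parent of $t_{\min}$, and the corollary applies to $\psi$ exactly as needed.
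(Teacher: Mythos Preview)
Your proposal is correct and follows essentially the same route as the paper: both proofs invoke \cref{lem:legal0} for $\phi$ and for $\psi$ to rewrite the six sets in terms of $V_{t_{\max}}$, $V_{t'_{\max}}$, $X^\phi_p$, and $X^\phi_{\min}$, and then finish each item with Condition~\ref{condition_3_tree_decomposition} of \cref{def:tree_decomposition} together with \cref{lem:vl,lem:vx}. Your observation that applying \cref{lem:legal0} to $\psi$ formally needs \topheavy{ness} at the parent of $t_{\min}$ rather than at the parent of $t_p$, and that only Condition~1 of \cref{def:topheavy} (which is inherited by descendants) is actually used in the underlying \cref{lem:vl,lem:vx}, is a point the paper glosses over; it is a genuine clarification and your resolution is exactly right.
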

\begin{proof}
We first show that the first condition holds. By \cref{lem:legal0} we have that $V^\phi_L\cup V^\phi_X=V_{t_{\max}}\cup X^\phi_p$ and $V^\psi_L\cup V^\psi_X=V_{t'_{\max}}\cup X^\psi_p$, where $t'_{\max}$ is the top node of the directed path of $S$-trace $(\hat{L}^S, \hat{X}^S, \hat{R}^S)$. Since $t_{\max}$ is an ancestor of $t'_{\max}$, the first condition clearly holds.

To show that the second condition holds, assume for contradiction that $v\in N[V^\psi_R]\cap V^\phi_L$. 
First, note that $V^\psi_R\cap V^\phi_L=\emptyset$, since by \cref{lem:vl,lem:vx} we have that $V^\phi_L\subseteq V^\psi_L\cup V^\psi_X$.
Therefore, we must have that $v\in V^\phi_L$ and there is a $u\in V^\psi_R$ such that $\{u,v\}\in E$. By \cref{lem:legal0} we have that $u$ is not contained in any bag in $T_{t'_{\max}}$ and also not in the bag of the parent of $t'_{\max}$. Note that $t'_{\max}$ is the only $S$-child of $t_{\min}$. By \cref{lem:vl} we have that $v$ not contained in any bag of a node that is not in $T_{t'_{\max}}$. This is a contradiction to Condition~\ref{condition_2_tree_decomposition} of \cref{def:tree_decomposition}.

Finally, we show that the third condition holds. Recall that $V^\phi_X=X^\phi_{\max}\cup X^\phi_p\cup X^\phi_{\min}\cup X^\phi_{\dpath}$, $V^\psi_L\cup V^\psi_X=V_{t'_{\max}}\cup X^\psi_p$, and $X^\psi_p=X^\phi_{\min}$. It follows that $X^\phi_{\min}\subseteq (V^\psi_L\cup V^\psi_X)\cap V^\phi_X$.
Assume for contradiction that there is $v\in (V^\psi_L\cup V^\psi_X)\cap V^\phi_X$ and $v\notin X^\phi_{\min}$. Then we must have that $v\in V_{t'_{\max}}$ and that $v\in V^\phi_X\setminus X^\phi_{\min}$. If $v\in X^\phi_{\max}\cup X^\phi_p$ then we have a contradiction to Condition~\ref{condition_3_tree_decomposition} of \cref{def:tree_decomposition}, since $t_{\min}$ separates $t'_{\max}$ from $t_{\max}$ and $t_p$ in $\mathcal{T}$. 
In particular, we can conclude that $v\notin X^\phi_p$ and hence, by \cref{lem:vx}, we have that $v\in V_{t_{\max}}$, and for each $S$-child $t$ of $t_{\min}$ it holds that $v\notin V_t\setminus X^\phi_{\min}$. Since $t'_{\max}$ is an $S$-child of $t_{\min}$, this is a contradiction to $v\in V_{t'_{\max}}$.
\end{proof}

\begin{lemma}\label{lem:legal2}
Let $\phi=(\tau_-,L^S, X^S, R^S,\tau_+)$ be a state let $\psi_1=(\tau'_-,\hat{L}^S, \hat{X}^S, \hat{R}^S,\tau_-)\in\Psi_1(\phi)$, and let $\psi_2=(\tau''_-,\check{L}^S, \check{X}^S, \check{R}^S,\tau_-)\in\Psi_2(\phi)$ such that the following holds.
There is a \slim $S$-nice tree decomposition $\mathcal{T}=(T,\{X_t\}_{t\in V(T)})$ of $G$ with width~$k$ that witnesses $\phi$, $\psi_1$, and $\psi_2$. Let $t_{\min}$ and $t_{\max}$ be the bottom and top node, respectively, of the directed path of $S$-trace $(L^S, X^S, R^S)$. Let $t_p$ be the parent of $t_{\max}$. 
The tree decomposition $\mathcal{T}$ is \topheavy for the parent of $t_p$.
Then we have the following.
\begin{enumerate}
\item $V^{\psi_1}_L\cup V^{\psi_1}_X\cup V^{\psi_2}_L\cup V^{\psi_2}_X \subseteq V^\phi_L\cup V^\phi_X$.
\item $N[V^{\psi_1}_R]\cap V^\phi_L\subseteq V^{\psi_2}_L\cup V^{\psi_2}_X$.
\item $N[V^{\psi_2}_R]\cap V^\phi_L\subseteq V^{\psi_1}_L\cup V^{\psi_1}_X$.
\item $(V^{\psi_1}_L\cup V^{\psi_1}_X)\cap V^\phi_X= X^\phi_{\min}$.
\item $(V^{\psi_2}_L\cup V^{\psi_2}_X)\cap V^\phi_X= X^\phi_{\min}$.
\item $(V^{\psi_1}_L\cup V^{\psi_1}_X)\cap (V^{\psi_2}_L\cup V^{\psi_2}_X)= X^\phi_{\min}$.
\end{enumerate}
\end{lemma}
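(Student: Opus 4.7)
The plan is to mirror the proof of Lemma \ref{lem:legal1}, using Corollary \ref{lem:legal0} and Lemmas \ref{lem:vl} and \ref{lem:vx}, but with a symmetric two-sided case analysis to account for the fact that $t_{\min}$ now has two $S$-children. Let $t'_{\max}$ and $t''_{\max}$ denote the top nodes of the directed paths of $\psi_1$ and $\psi_2$, respectively; these are precisely the two $S$-children of $t_{\min}$. By \cref{def:preceding} we have $X^{\psi_1}_p = X^{\psi_2}_p = X^\phi_{\min}$, so Corollary \ref{lem:legal0} gives
\[
V^\phi_L\cup V^\phi_X = V_{t_{\max}}\cup X^\phi_p,\quad V^{\psi_i}_L\cup V^{\psi_i}_X = V_{t^{(i)}_{\max}}\cup X^\phi_{\min}\quad(i=1,2),
\]
where $t^{(1)}_{\max} = t'_{\max}$, $t^{(2)}_{\max} = t''_{\max}$.

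Condition 1 is immediate: $t'_{\max}$ and $t''_{\max}$ are descendants of $t_{\max}$ and $X^\phi_{\min} \subseteq V_{t_{\max}}$. For Conditions 2 and 3 (which are symmetric), I would argue as in the proof of Lemma \ref{lem:legal1}'s second clause. Take $v \in N[V^{\psi_1}_R] \cap V^\phi_L$. First, \cref{lem:vl} together with \cref{lem:vx} shows $V^\phi_L \subseteq V^{\psi_1}_L\cup V^{\psi_1}_X \cup V^{\psi_2}_L\cup V^{\psi_2}_X$, so $V^\phi_L \cap V^{\psi_1}_R = \emptyset$; hence $v$ is a neighbor of some $u \in V^{\psi_1}_R$. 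By \cref{lem:vl}, $v$ lies in $V_t \setminus X^\phi_{\min}$ for some $S$-child $t \in \{t'_{\max},t''_{\max}\}$ of $t_{\min}$. If $t = t'_{\max}$ then Condition~\ref{condition_2_tree_decomposition} of \cref{def:tree_decomposition} requires a bag containing both $u$ and $v$, which (using Condition~\ref{condition_3_tree_decomposition} and $v \in V_{t'_{\max}}\setminus X^\phi_{\min}$) forces $u \in V_{t'_{\max}} \cup X^\phi_{\min} = V^{\psi_1}_L\cup V^{\psi_1}_X$, contradicting $u \in V^{\psi_1}_R$. Hence $t = t''_{\max}$, and $v \in V_{t''_{\max}} \subseteq V^{\psi_2}_L\cup V^{\psi_2}_X$. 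Condition 3 is obtained by swapping the indices.

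For Conditions 4 and 5, note that $X^\phi_{\min} = X^{\psi_i}_p \subseteq V^{\psi_i}_L \cup V^{\psi_i}_X$ and $X^\phi_{\min} \subseteq V^\phi_X$, giving the $\supseteq$ inclusion. For the $\subseteq$ inclusion, take $v \in (V^{\psi_1}_L\cup V^{\psi_1}_X) \cap V^\phi_X$ with $v \notin X^\phi_{\min}$, so $v \in V_{t'_{\max}}\setminus X^\phi_{\min}$. Then \cref{lem:vx} applied to $\phi$ says $v \in X^\phi_p$ or $v \in V_{t_{\max}}$ with $v \notin V_t\setminus X^\phi_{\min}$ for every $S$-child $t$ of $t_{\min}$; the first option contradicts Condition~\ref{condition_3_tree_decomposition} since $t'_{\max}$ is a descendant of $t_{\min}$ and $v \notin X^\phi_{\min}$, while the second directly contradicts $v \in V_{t'_{\max}}\setminus X^\phi_{\min}$. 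Condition 5 is symmetric. Finally, Condition 6 reduces to showing $V_{t'_{\max}} \cap V_{t''_{\max}} \subseteq X^\phi_{\min}$, which follows immediately from Condition~\ref{condition_3_tree_decomposition} of \cref{def:tree_decomposition} since $t_{\min}$ separates $t'_{\max}$ and $t''_{\max}$ in $T$ with bag $X^\phi_{\min}$.

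The only mild subtlety over Lemma \ref{lem:legal1} is the cross-side conclusion in Conditions 2 and 3: when $v \in V^\phi_L$ is a neighbor of something in $V^{\psi_1}_R$, one must rule out $v$ lying on the $\psi_1$-side. The tree-decomposition Conditions \ref{condition_2_tree_decomposition} and \ref{condition_3_tree_decomposition} close this case cleanly, since $X^\phi_{\min}$ is the unique separator between $V_{t'_{\max}} \setminus X^\phi_{\min}$ and everything outside $V_{t'_{\max}}$; everything else is a direct adaptation of the single-child argument.
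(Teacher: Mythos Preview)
Your proof takes essentially the same approach as the paper's, via Corollary~\ref{lem:legal0} and Lemmas~\ref{lem:vl} and~\ref{lem:vx}, and the arguments for each condition match closely. One small slip: the claim $V^\phi_L \cap V^{\psi_1}_R = \emptyset$ is false (a vertex in $V_{t''_{\max}} \setminus X^\phi_{\min}$ not lying in $V_{t'_{\max}}$ belongs to both); the correct conclusion from your inclusion $V^\phi_L \subseteq V^{\psi_1}_L\cup V^{\psi_1}_X \cup V^{\psi_2}_L\cup V^{\psi_2}_X$ is that any $v \in V^\phi_L \cap V^{\psi_1}_R$ already lies in $V^{\psi_2}_L \cup V^{\psi_2}_X$, which is exactly the desired conclusion for that subcase --- the paper handles it this way, and the rest of your argument is then unchanged.
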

\begin{proof}
We first show that the first condition holds. By \cref{lem:legal0} we have that $V^\phi_L\cup V^\phi_X=V_{t_{\max}}\cup X^\phi_p$, $V^{\psi_1}_L\cup V^{\psi_1}_X=V_{t'_{\max}}\cup X^{\psi_1}_p$, where $t'_{\max}$ is the top node of the directed path of $S$-trace $(\hat{L}^S, \hat{X}^S, \hat{R}^S)$, and $V^{\psi_2}_L\cup V^{\psi_2}_X=V_{t''_{\max}}\cup X^{\psi_2}_p$, where $t''_{\max}$ is the top node of the directed path of $S$-trace $(\check{L}^S, \check{X}^S, \check{R}^S)$. 
Since $t_{\max}$ is an ancestor of both $t'_{\max}$ and $t''_{\max}$, the first condition clearly holds.

To show that the second condition holds, assume for contradiction that $v\in N[V^{\psi_1}_R]\cap V^\phi_L$ and $v\notin V^{\psi_2}_L\cup V^{\psi_2}_X$. 
First, note that if $v\in V^{\psi_1}_R\cap V^\phi_L$, then by \cref{lem:vl,lem:vx} we have that $v\in V^{\psi_2}_L\cup V^{\psi_2}_X$
Hence, assume that $v\in V^\phi_L$ and there is a $u\in V^{\psi_1}_R$ such that $\{u,v\}\in E$. By \cref{lem:legal0} we have that $u$ is not contained in any bag in $T_{t'_{\max}}$ and also not in the bag of the parent of $t'_{\max}$.  
By \cref{lem:legal0}, since $v\notin V^{\psi_2}_L\cup V^{\psi_2}_X$, we have that $v$ is not contained in any bag in $T_{t''_{\max}}$ and also not in the bag of the parent of $t''_{\max}$. 
Note that $t'_{\max}$ and $t''_{\max}$ are the only two $S$-child of $t_{\min}$.
Furthermore, by \cref{lem:vl}, since $v\in V^\phi_L$, we have that $v$ not contained in any bag of a node that is not in $T_{t'_{\max}}$. 
This is a contradiction to Condition~\ref{condition_2_tree_decomposition} of \cref{def:tree_decomposition}.

Showing that the third condition holds can be done in a symmetrical way as the second condition.

Next, we show that the fourth condition holds. Recall that $V^\phi_X=X^\phi_{\max}\cup X^\phi_p\cup X^\phi_{\min}\cup X^\phi_{\dpath}$, $V^{\psi_1}_L\cup V^{\psi_1}_X=V_{t'_{\max}}\cup X^{\psi_1}_p$, and $X^{\psi_1}_p=X^\phi_{\min}$. It follows that $X^\phi_{\min}\subseteq (V^{\psi_1}_L\cup V^{\psi_1}_X)\cap V^\phi_X$.
Assume for contradiction that there is $v\in (V^{\psi_1}_L\cup V^{\psi_1}_X)\cap V^\phi_X$ and $v\notin X^\phi_{\min}$. 
Then we must have that $v\in V_{t'_{\max}}$ and that $v\in V^\phi_X\setminus X^\phi_{\min}$. 
If $v\in X^\phi_{\max}\cup X^\phi_p$ then we have a contradiction to Condition~\ref{condition_3_tree_decomposition} of \cref{def:tree_decomposition}, since $t_{\min}$ separates $t'_{\max}$ from $t_{\max}$ and $t_p$ in $\mathcal{T}$. 
In particular, we can conclude that $v\notin X^\phi_p$ and hence, by \cref{lem:vx}, we have that $v\in V_{t_{\max}}$, and for each $S$-child $t$ of $t_{\min}$ it holds that $v\notin V_t\setminus X^\phi_{\min}$. Since $t'_{\max}$ is an $S$-child of $t_{\min}$, this is a contradiction to $v\in V_{t'_{\max}}$.
We can conclude that $(V^{\psi_1}_L\cup V^{\psi_1}_X)\cap V^\phi_X= X^\phi_{\min}$.

Showing that the fifth condition holds can be done in a symmetrical way as the fourth condition.

Finally, we show that the last condition holds. 
Recall that $V^{\psi_1}_L\cup V^{\psi_1}_X=V_{t'_{\max}}\cup X^{\psi_1}_p$, $V^{\psi_2}_L\cup V^{\psi_2}_X=V_{t''_{\max}}\cup X^{\psi_2}_p$, and $X^{\psi_1}_p=X^{\psi_2}_p=X^\phi_{\min}$. 
It follows that $X^\phi_{\min}\subseteq (V^{\psi_1}_L\cup V^{\psi_1}_X)\cap (V^{\psi_2}_L\cup V^{\psi_2}_X)$.
Assume for contradiction that there is $v\in (V^{\psi_1}_L\cup V^{\psi_1}_X)\cap (V^{\psi_2}_L\cup V^{\psi_2}_X)$ and $v\notin X^\phi_{\min}$. 
Then we must have that $v\in V_{t'_{\max}}$ and that $v \in V_{t''_{\max}}$. 
Note that $t_{\min}$ separates each node in $T_{t'_{\max}}$ from each node in $T_{t''_{\max}}$ in $\mathcal{T}$. 
This is a contradiction to Condition~\ref{condition_3_tree_decomposition} of \cref{def:tree_decomposition}. We can conclude that $(V^{\psi_1}_L\cup V^{\psi_1}_X)\cap (V^{\psi_2}_L\cup V^{\psi_2}_X)= X^\phi_{\min}$.
\end{proof}

\section{Correctness and Running Time Analysis}\label{sec:correct}

In this section, we finally prove that our overall dynamic programming algorithm is correct and give a running time analysis. The results from this section will imply \cref{thm:main}. To this end, we first show that if $\ptw(\phi)=\true$ for some $\phi=(\tau_-,L^S, X^S, R^S,\tau_+)$, then we can compute a partial tree decomposition for the input graph, intuitively, until the top node of the directed path of $S$-trace $(L^S, X^S, R^S)$. For the formal statement, we use the sets $V^\phi_L$, $V^\phi_X$, and $V^\phi_R$ defined in \cref{sec:legal}. We show the following.

\begin{proposition}\label{prop:correctness1}
Let $\phi=(\tau_-,L^S, X^S, R^S,\tau_+)$ be a state such that $\tau_-=\void$ if and only if $L^S=\emptyset$, and if $\tau_+=\void$ then $R^S=\emptyset$.
If $\ptw(\phi)=\true$, then there exists a rooted tree decomposition $\mathcal{T}$ of width at most $k$ for $G[V^\phi_L\cup V^\phi_X]$ such that 
the root of $\mathcal{T}$ has bag $X^\phi_p$. 
\end{proposition}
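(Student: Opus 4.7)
\smallskip
\noindent\textbf{Proof plan for \cref{prop:correctness1}.}
The plan is to proceed by induction on the predecessor relation on states from \cref{def:predecingstraces}, which is well-founded by \cref{cor:recursion}. Throughout, I will repeatedly use the characterizations of $V^\phi_L$ and $V^\phi_X$ given by \cref{lem:vl,lem:vx,lem:legal0}, the computability of a tree decomposition realizing $\ltw(\phi)$ from the constructive proof of \cref{prop:computeltw}, and \cref{lem:cliquebag} applied to the cliques artificially added in the graph $G'$ of \cref{def:ltw} (ensuring that there is a bag of the local tree decomposition containing $X^\phi_p$, one containing $X^\phi_{\max}$, and one containing $X^\phi_{\min}$).

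For the base case $\tau_- = \void$ (equivalently $L^S = \emptyset$) we have $V^\phi_L = \emptyset$ and $V^\phi_X = X^\phi_{\max} \cup X^\phi_p \cup X^\phi_{\min} \cup X^\phi_{\dpath}$. By assumption $\ptw(\phi) = \ltw(\phi) = \true$, so the construction of \cref{prop:computeltw} yields a tree decomposition $\mathcal{T}_{\mathrm{loc}}$ of width at most $k$ for $G'[V^\phi_X]$, which is also a valid decomposition for $G[V^\phi_X]$ because we only added edges. By \cref{lem:cliquebag} applied to the clique on $X^\phi_p$ in $G'$, some bag contains $X^\phi_p$; I root $\mathcal{T}_{\mathrm{loc}}$ at such a bag and, if strict inclusion holds, attach a new root with bag exactly $X^\phi_p$ (this preserves both width and all conditions of \cref{def:tree_decomposition}).

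For the inductive step with $\tau_- \in \{\smallintroduce, \extendedforget\}$, unfolding \cref{def:dp} gives a witness $\psi \in \Psi(\phi)$ with $\ltw(\phi) \wedge \ptw(\psi) \wedge \legal_1(\phi, \psi) = \true$. The induction hypothesis produces a tree decomposition $\mathcal{T}_\psi$ for $G[V^\psi_L \cup V^\psi_X]$ rooted at bag $X^\psi_p = X^\phi_{\min}$, and the local construction produces $\mathcal{T}_{\mathrm{loc}}$ for $G'[V^\phi_X]$; I glue them by identifying the root of $\mathcal{T}_\psi$ with a bag of $\mathcal{T}_{\mathrm{loc}}$ containing $X^\phi_{\min}$ (which exists by \cref{lem:cliquebag}) and then re-rooting at a bag containing $X^\phi_p$. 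The inductive case $\tau_- \in \{\smalljoin, \bigjoin\}$ is analogous with two recursive calls to $\psi_1$ and $\psi_2$, both glued to $\mathcal{T}_{\mathrm{loc}}$ at bags containing $X^\phi_{\min}$ (which again exist by \cref{lem:cliquebag}, since $G'$ contains a clique on $X^\phi_{\min}$).

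The main obstacle is verifying that gluing yields a valid tree decomposition of $G[V^\phi_L \cup V^\phi_X]$ of width at most $k$. Width is immediate because neither piece exceeds $k+1$ and gluing shares bags. What requires real work is (i) the vertex-cover condition (Condition \ref{condition_2_tree_decomposition} of \cref{def:tree_decomposition}), i.e.\ that every edge of $G[V^\phi_L \cup V^\phi_X]$ lies in some bag, and (ii) the subtree condition (Condition \ref{condition_3_tree_decomposition}), i.e.\ that for every vertex $v$ the nodes whose bag contains $v$ form a connected subtree. Both will reduce to showing that the two pieces ``see'' the same set of vertices only on $X^\phi_{\min}$ and that no edges cross the interface outside $X^\phi_{\min}$. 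For (i) I will use the $\legal$-conditions: condition (3) of \cref{def:legal1} (resp.\ (4)--(6) of \cref{def:legal2}) gives $(V^\psi_L \cup V^\psi_X) \cap V^\phi_X \subseteq X^\phi_{\min}$, so any edge not covered by $\mathcal{T}_{\mathrm{loc}}$ has both endpoints in $V^\psi_L \cup V^\psi_X$ and is covered by $\mathcal{T}_\psi$; condition (2) (resp.\ (2)--(3) of \cref{def:legal2}) ensures there are no ``cross'' edges from $V^\psi_R$ into $V^\phi_L$, which together with $V^\phi_L \cup V^\phi_X = (V^\psi_L \cup V^\psi_X) \cup V^\phi_X$ by condition (1) shows every edge of the target graph is accounted for. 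For (ii), any vertex $v$ appearing in both pieces lies in $X^\phi_{\min}$, which is a single bag in the glued decomposition, so the subtree condition follows from its validity in each piece separately. The case of $\bigjoin$ additionally requires noting that $X^{\psi_1}_p = X^{\psi_2}_p = X^\phi_{\min}$ follows from \cref{def:bigjoin} via $\subbag$ together with the definition of $\Psi_1, \Psi_2$, so both recursive roots attach to the same interface bag.
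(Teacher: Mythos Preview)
Your proposal follows the same inductive-gluing approach as the paper and is essentially correct, but there is one misleading claim worth flagging. You announce that you will ``repeatedly use the characterizations of $V^\phi_L$ and $V^\phi_X$ given by \cref{lem:vl,lem:vx,lem:legal0}''. Those lemmas all have as hypothesis the existence of a \slim $S$-nice tree decomposition of $G$ that \emph{witnesses} $\phi$ and is \topheavy for the relevant node; that is precisely what is \emph{not} available in this direction of the correctness proof. Fortunately your actual verification never invokes them: it relies only on the purely syntactic conditions of $\legal_1$/$\legal_2$ (\cref{def:legal1,def:legal2}) and on the combinatorial definitions of $V^\phi_L,V^\phi_X,V^\phi_R$ from \cref{sec:legal}, which is exactly what the paper does. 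So the reference to \cref{lem:vl,lem:vx,lem:legal0} should simply be dropped.

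Two minor corrections: the predecessor relation you induct on is the one on \emph{states} (\cref{def:preceding}), not \cref{def:predecingstraces}, which is defined on $S$-traces inside a fixed tree decomposition; and your equality $V^\phi_L \cup V^\phi_X = (V^\psi_L \cup V^\psi_X) \cup V^\phi_X$ does not follow ``by condition (1)'' alone --- condition (1) of \cref{def:legal1} only gives the inclusion $\supseteq$, and for the reverse inclusion (that every $v\in V^\phi_L$ lies in $V^\psi_L\cup V^\psi_X$) you need condition (2), namely $N[V^\psi_R]\cap V^\phi_L=\emptyset$, which in particular gives $V^\psi_R\cap V^\phi_L=\emptyset$.
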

\begin{proof}
We prove this by induction on the predecessor relation (\cref{def:preceding}) on states. For the base case, assume that $\phi$ does not have any preceding states. Then we have that $\tau_-=\void$ and hence, we also have that $L^S=\emptyset$. By \cref{def:dp} we have that $\ptw(\phi)=\ltw(\phi)$ and hence $\ltw(\phi)=\true$. 
By \cref{def:ltw} this means that for the graph $G'$ obtained from~$G$ by adding edges between all pairs of vertices $u,v\in  X^\phi_{\max}$ with $\{u,v\}\notin E$,  all pairs of vertices $u,v\in  X^\phi_p$ with $\{u,v\}\notin E$, and all pairs of vertices $u,v\in X^\phi_{\min}$ with $\{u,v\}\notin E$ we have that $\tw(G'[X^\phi_{\max}\cup X^\phi_p\cup X^\phi_{\min}\cup X^\phi_{\dpath}])\le k$. 
By \cref{lem:cliquebag} we have that then there is a rooted tree decomposition $\mathcal{T}$ for $G'[X^\phi_{\max}\cup X^\phi_p\cup X^\phi_{\min}\cup X^\phi_{\dpath}]$ with width at most $k$ such that the root $t$ of~$\mathcal{T}$ has bag $X_{t}=X^\phi_p$.
Note that since $L^S=\emptyset$, we have by definition that $V^\phi_L=\emptyset$.
We can conclude that $X^\phi_{\max}\cup X^\phi_p\cup X^\phi_{\min}\cup X^\phi_{\dpath}=V^\phi_L\cup V^\phi_X$.
This finishes the base case.

 

Now assume that $\phi$ has predecessor states. Then we have that $L^S\neq \emptyset$ and hence, $\tau_-\neq\void$. We make a case distinction on $\tau_-$.
\begin{enumerate}
\item Assume that $\tau_-=\smallintroduce(v,d)$ or $\tau_-=\extendedforget(v,d,f,\tau)$. 
Then by \cref{def:dp} we have that 
        \[
        \ptw(\phi) = \bigvee_{\psi\in\Psi(\phi)} \bigl(\ltw(\phi)\wedge\ptw(\psi)\wedge\legal_1(\phi,\psi)\bigr).
        \]
Let $\psi\in\Psi(\phi)$ such that $\ptw(\psi)=\true$ and $\legal_1(\phi,\psi)=\true$. 
Then by induction hypothesis, there exists a rooted tree decomposition $\mathcal{T}$ for $G[V^\psi_L\cup V^\psi_X]$ with width at most $k$ such that the root $t_r$ of $\mathcal{T}$ has bag $X^\psi_p$. 
Furthermore, we have that $\ltw(\phi)=\true$. 
By \cref{def:ltw} this means that for the graph $G'$ obtained from $G$ by adding edges between all pairs of vertices $u,v\in  X^\phi_{\max}$ with $\{u,v\}\notin E$,  all pairs of vertices $u,v\in  X^\phi_p$ with $\{u,v\}\notin E$, and all pairs of vertices $u,v\in X^\phi_{\min}$ with $\{u,v\}\notin E$ we have that $\tw(G'[X^\phi_{\max}\cup X^\phi_p\cup X^\phi_{\min}\cup X^\phi_{\dpath}])\le k$.
By \cref{lem:cliquebag} we have that then there is a rooted tree decomposition $\mathcal{T}'$ for $G'[X^\phi_{\max}\cup X^\phi_p\cup X^\phi_{\min}\cup X^\phi_{\dpath}]$ such that the root $t'_r$ of~$\mathcal{T}'$ with width at most $k$ such that the root $t_r'$ of~$\mathcal{T}'$ has bag $X_{t'_r}=X^\phi_p$.
Furthermore, by \cref{lem:cliquebag} we can assume that $\mathcal{T}'$ has a leaf $t'_\ell$ such that $X_{t'_\ell}=X^\phi_{\min}$.
Note that~$\mathcal{T}'$ is also a tree decomposition for $G[X^\phi_{\max}\cup X^\phi_p\cup X^\phi_{\min}\cup X^\phi_{\dpath}]$ since it is a subgraph of $G'[X^\phi_{\max}\cup X^\phi_p\cup X^\phi_{\min}\cup X^\phi_{\dpath}]$. 

Now we create a rooted tree decomposition $\mathcal{T}^\star$ for $G[V^\phi_L\cup V^\phi_X]$ as follows. 
We connect the root $t_r$ of $\mathcal{T}$ with the leaf $t'_\ell$ of $\mathcal{T}'$, making $t'_\ell$ the parent of $t_r$. Recall that by definition we have $X^\phi_{\min}=X^\psi_p$ and hence the bag of $t_r$ in $\mathcal{T}$ is the same as the bag of $t'_\ell$ in $\mathcal{T}'$. The root of $\mathcal{T}^\star$ is $t'_r$ with bag~$X^\phi_p$.
It remains to show that $\mathcal{T}^\star$ is indeed a rooted tree decomposition for $G[V^\phi_L\cup V^\phi_X]$ of width at most $k$. By construction, it is clear that all bags of nodes in $\mathcal{T}^\star$ have size at most $k+1$. 

We first show that the union of all bags of nodes in $\mathcal{T}^\star$ is $V^\phi_L\cup V^\phi_X$.
By \cref{def:legal1} we have that $V^\psi_L\cup V^\psi_X\subseteq V^\phi_L\cup V^\phi_X$ and that $V^\psi_R\cap V^\phi_L=\emptyset$. Let $v\in V^\phi_L\cup V^\phi_X$. If $v\in V^\phi_X$, then $v$ is a vertex of $G[X^\phi_{\max}\cup X^\phi_p\cup X^\phi_{\min}\cup X^\phi_{\dpath}]$ and hence is in a bag of $\mathcal{T}$. If $v\in V^\phi_L$, then $v\notin V^\psi_R$ and hence $v\in V^\psi_L\cup V^\psi_X$. It follows that $v$ is a vertex of $G[V^\psi_L\cup V^\psi_X]$ and hence is in a bag of $\mathcal{T}'$.
We can conclude that the union of all bags of nodes in $\mathcal{T}^\star$ is $V^\phi_L\cup V^\phi_X$.

Now we argue that Condition~\ref{condition_2_tree_decomposition} of \cref{def:tree_decomposition} is fulfilled. Assume that there is $u,v\in V^\phi_L\cup V^\phi_X$ with $\{u,v\}\in E$. If $u,v\in V^\psi_L\cup V^\psi_X$, then the $\mathcal{T}'$ contains a node with a bag that contains both $u$ and $v$ and hence also $\mathcal{T}^\star$ contains such a node. Assume $u,v\in V^\psi_R$. By \cref{def:legal1} we have that then $u,v\notin V^\phi_L$. If $u,v\in V^\phi_X$, then $\mathcal{T}$ contains a node with a bag that contains both $u$ and $v$ and hence also $\mathcal{T}^\star$ contains such a node.
Lastly, assume w.l.o.g.\ that $v\in V^\psi_L\cup V^\psi_X$ and $u\notin V^\psi_L\cup V^\psi_X$. Then $u\in V^\psi_R$ and $v\in N(V^\psi_R)$. By \cref{def:legal1} we have that then $u,v\in V^\phi_X$. By the same argument as above we get that $\mathcal{T}^\star$ contains a node whose bag contains both $u$ and $v$.

Lastly, we argue that Condition~\ref{condition_3_tree_decomposition} of \cref{def:tree_decomposition} is fulfilled. Assume for contradiction that is it not. 
Then there is some $v\in V^\phi_L\cup V^\phi_X$ such that
the nodes with bags containing $v$ in $\mathcal{T}'$ do not form a subtree. Recall that $\mathcal{T}^\star$ is composed of a tree decomposition $\mathcal{T}'$ for $G[V^\psi_L\cup V^\psi_X]$ and a tree decomposition $\mathcal{T}$ for $G[V^\phi_X]$. Hence, we must have that $v\in V^\psi_L\cup V^\psi_X$ and $v\in V^\phi_X$. By \cref{def:legal1} we have that then $v\in X^\phi_{\min}$. Recall that 
$X^\psi_p=X^\phi_{\min}$ and the root $t_r$ of $\mathcal{T}$ has bag $X^\psi_p$ and a leaf $t'_\ell$ of $\mathcal{T}'$ has bag $X^\phi_{\min}$ and that in $\mathcal{T}^\star$, we have that $t'_\ell$ is the parent of $t_r$. It follows that the nodes with bags containing $v$ form a subtree in $\mathcal{T}^\star$.

\item Assume that $\tau_-=\smalljoin(X^S,X_1^S,X_2^S,L_{1}^S, L_{2}^S,d)$ or $\tau_-=\bigjoin(X^S,X_1^S,X_2^S,L_{1}^S, L_{2}^S,s,d_0,d_1,d_2)$. 
Then by \cref{def:dp} we have that 
        \[
        \ptw(\phi) = \bigvee_{\psi_1\in\Psi_1(\phi)\wedge\psi_2\in\Psi_2(\phi)} \bigl(\ltw(\phi)\wedge\ptw(\psi_1)\wedge\ptw(\psi_2)\wedge \legal_2(\phi,\psi_1,\psi_2)\bigr).
        \]
The arguments here are essentially analogous to the first case.
Let $\psi_1\in\Psi_1(\phi)$ and $\psi_2\in\Psi_2(\phi)$ such that $\ptw(\psi_1)=\true$, $\ptw(\psi_2)=\true$, and $\legal_2(\phi,\psi_1,\psi_2)=\true$. 
Then by induction hypothesis, there exists a rooted tree decomposition $\mathcal{T}_1$ for $G[V^{\psi_1}_L\cup V^{\psi_1}_X]$ with width at most $k$ such that the root $t_1$ of $\mathcal{T}_1$ has bag $X^{\psi_1}_p$.
Furthermore, there exists a rooted tree decomposition $\mathcal{T}_2$ for $G[V^{\psi_2}_L\cup V^{\psi_2}_X]$ with width at most $k$ such that the root $t_2$ of $\mathcal{T}_2$ has bag $X^{\psi_2}_p$. 
Moreover, we have that $\ltw(\phi)=\true$. 
By \cref{def:ltw} this means that for the graph $G'$ obtained from $G$ by adding edges between all pairs of vertices $u,v\in  X^\phi_{\max}$ with $\{u,v\}\notin E$,  all pairs of vertices $u,v\in  X^\phi_p$ with $\{u,v\}\notin E$, and all pairs of vertices $u,v\in X^\phi_{\min}$ with $\{u,v\}\notin E$ we have that $\tw(G'[X^\phi_{\max}\cup X^\phi_p\cup X^\phi_{\min}\cup X^\phi_{\dpath}])\le k$.
By \cref{lem:cliquebag} we have that then there is a rooted tree decomposition $\mathcal{T}'$ for $G'[X^\phi_{\max}\cup X^\phi_p\cup X^\phi_{\min}\cup X^\phi_{\dpath}]$ such that the root $t'_r$ of~$\mathcal{T}'$ with width at most $k$ such that the root $t_r'$ of~$\mathcal{T}'$ has bag $X_{t'_r}=X^\phi_p$.
Furthermore, by \cref{lem:cliquebag} we can assume that $\mathcal{T}'$ has a leaf $t'_\ell$ such that $X_{t'_\ell}=X^\phi_{\min}$.
Note that~$\mathcal{T}'$ is also a tree decomposition for $G[X^\phi_{\max}\cup X^\phi_p\cup X^\phi_{\min}\cup X^\phi_{\dpath}]$ since it is a subgraph of $G'[X^\phi_{\max}\cup X^\phi_p\cup X^\phi_{\min}\cup X^\phi_{\dpath}]$. 

Now we create a rooted tree decomposition $\mathcal{T}^\star$ for $G[V^\phi_L\cup V^\phi_X]$ as follows. 
We connect the roots $t_1$ and $t_2$ of $\mathcal{T}_1$ and $\mathcal{T}_2$, respectively, with the leaf $t'_\ell$ of $\mathcal{T}'$, making $t'_\ell$ the parent of $t_1$ and $t_2$. Recall that by definition we have $X^\phi_{\min}=X^{\psi_1}_p=X^{\psi_2}_p$ and hence the bags of $t_1$ and $t_2$ in $\mathcal{T}_1$ and $\mathcal{T}_2$, respectively, are the same as the bag of $t'_\ell$ in $\mathcal{T}'$. The root of $\mathcal{T}^\star$ is $t'_r$ with bag~$X^\phi_p$.
It remains to show that $\mathcal{T}^\star$ is indeed a rooted tree decomposition for $G[V^\phi_L\cup V^\phi_X]$ of width at most $k$. By construction, it is clear that all bags of nodes in $\mathcal{T}^\star$ have size at most $k+1$. 

We first show that the union of all bags of nodes in $\mathcal{T}^\star$ is $V^\phi_L\cup V^\phi_X$.
By \cref{def:legal2} we have that $V^{\psi_1}_L\cup V^{\psi_1}_X\cup V^{\psi_2}_L\cup V^{\psi_2}_X\subseteq V^\phi_L\cup V^\phi_X$. Let $v\in V^\phi_L\cup V^\phi_X$. If $v\in V^\phi_X$, then $v$ is a vertex of $G[X^\phi_{\max}\cup X^\phi_p\cup X^\phi_{\min}\cup X^\phi_{\dpath}]$ and hence is in a bag of $\mathcal{T}$. 
Assume that $v\in V^\phi_L$.
If $v\in V^{\psi_1}_L\cup V^{\psi_1}_X$, then $v$ is a vertex of $G[V^{\psi_1}_L\cup V^{\psi_1}_X]$ and hence is in a bag of $\mathcal{T}_1$.
If $v\in V^{\psi_1}_R$, then by \cref{def:legal2} we have that $v\in V^{\psi_2}_L\cup V^{\psi_2}_X$. It follows that $v$ is a vertex of $G[V^{\psi_2}_L\cup V^{\psi_2}_X]$ and hence is in a bag of $\mathcal{T}_2$.
We can conclude that the union of all bags of nodes in $\mathcal{T}^\star$ is $V^\phi_L\cup V^\phi_X$.

Now we argue that Condition~\ref{condition_2_tree_decomposition} of \cref{def:tree_decomposition} is fulfilled. Assume that there is $u,v\in V^\phi_L\cup V^\phi_X$ with $\{u,v\}\in E$. If $u,v\in V^{\psi_1}_L\cup V^{\psi_1}_X$, then the $\mathcal{T}_1$ contains a node with a bag that contains both $u$ and $v$ and hence also $\mathcal{T}^\star$ contains such a node. 
If $u,v\in V^{\psi_2}_L\cup V^{\psi_2}_X$, then the $\mathcal{T}_2$ contains a node with a bag that contains both $u$ and $v$ and hence also $\mathcal{T}^\star$ contains such a node.
If $u,v\in V^\phi_X$, then $\mathcal{T}$ contains a node with a bag that contains both $u$ and $v$ and hence also $\mathcal{T}^\star$ contains such a node.
Lastly, assume w.l.o.g.\ that $v\in V^{\psi_1}_L\cup V^{\psi_1}_X$ and $u\notin V^{\psi_1}_L\cup V^{\psi_1}_X$.
Then $u\in V^{\psi_1}_R$, and $v\in N[V^{\psi_1}_R]$. 
Consider the case that $v\in V^\phi_L$.
By \cref{def:legal2} we have that then $v\in V^{\psi_2}_L\cup V^{\psi_2}_X$ and hence, $v\in X^\phi_{\text{min}}$. It follows that $v\in V^\phi_X$. Hence, from here this is the same case as if we initially considered that $v\in V^\phi_X$.
If $u\in V^{\psi_2}_L\cup V^{\psi_2}_X$ or $u\in V^\phi_X$, then we are in one of the previous cases. Hence, assume that $u\notin V^{\psi_2}_L\cup V^{\psi_2}_X\cup V^\phi_X$.
It follows that $u\in V^{\psi_2}_R$ and $u\in V^\phi_L$. 
By \cref{def:legal2} we have that $u\in V^{\psi_1}_L\cup V^{\psi_1}_X$, a contradiction to our initial assumption that $u\notin V^{\psi_1}_L\cup V^{\psi_1}_X$.


Lastly, we argue that Condition~\ref{condition_3_tree_decomposition} of \cref{def:tree_decomposition} is fulfilled. Assume for contradiction that is it not. 
Then there is some $v\in V^\phi_L\cup V^\phi_X$ such that
the nodes with bags containing $v$ in $\mathcal{T}'$ do not form a subtree. Recall that $\mathcal{T}^\star$ is composed of a tree decomposition $\mathcal{T}_1$ for $G[V^{\psi_1}_L\cup V^{\psi_1}_X]$, a tree decomposition $\mathcal{T}_2$ for $G[V^{\psi_2}_L\cup V^{\psi_2}_X]$, and a tree decomposition $\mathcal{T}$ for $G[V^\phi_X]$. 
We have the following cases.
Assume $v\in V^{\psi_1}_L\cup V^{\psi_1}_X$ and $v\in V^{\psi_2}_L\cup V^{\psi_2}_X$. By \cref{def:legal2} we have that then $v\in X^\phi_{\min}$.
Recall that 
$X^{\psi_1}_p=X^{\psi_2}_p=X^\phi_{\min}$ and the root $t_1$ of $\mathcal{T}_1$ has bag $X^{\psi_1}_p$, the root $t_1$ of $\mathcal{T}_2$ has bag $X^{\psi_2}_p$, and a leaf $t'_\ell$ of $\mathcal{T}'$ has bag $X^\phi_{\min}$. Furthermore, in $\mathcal{T}^\star$ we have that $t'_\ell$ is the parent of $t_1$ and $t_1$. It follows that the nodes with bags containing $v$ form a subtree in $\mathcal{T}^\star$.
The cases where $v\in V^{\psi_1}_L\cup V^{\psi_1}_X$ and $v\in V^\phi_X$, or $v\in V^{\psi_2}_L\cup V^{\psi_2}_X$ and $v\in V^\phi_X$ are analogous.
\end{enumerate}
This finishes the proof.
\end{proof}

We remark that in the proof of \cref{prop:correctness1}, we explicitly construct a tree decomposition. Hence, we can save this tree decomposition in the dynamic programming table alongside the Boolean value that indicates whether the tree decomposition has width at most $k+1$ or not. This allows the algorithm to output a tree decomposition for the whole input graph $G$.

For the other direction of the correctness, we prove the following. We first show that there exists a slim $S$-nice tree decomposition of optimal width where every $S$-bottom node admits an extended $S$-operation. Then, intuitively, for every directed path in that tree decomposition, there is a state that is witnessed, and via those states our algorithm will find this tree decomposition and output $\true$. 
To this end, we do the following. Given a slim $S$-nice tree decomposition $\mathcal{T}$ of optimal width, we take an arbitrary but fixed linear order to the $S$-traces that is a linearization of their processor relation on $\mathcal{T}$. Recall that by definition, siblings preserve the predecessor relation.
By induction on this linear order, we show that we can transform $\mathcal{T}$ into a slim $S$-nice tree decomposition $\mathcal{T}'$ of the same or smaller with, that is a sibling of $\mathcal{T}$, such that each $S$-bottom node admits some extended $S$-operation.
The following lemma gives us the induction basis and step.

\begin{lemma}\label{lem:witnesstd}
    Assume there exists a \slim $S$-nice tree decomposition $\mathcal{T}$ for $G$ with width at most~$k$ that has the following properties. Let $(L^S, X^S, R^S)$ be an $S$-trace with $L^S\neq \emptyset$.
\begin{itemize}
    \item $\mathcal{T}$ contains an $S$-bottom $t$ node that has $S$-trace $(L^S, X^S, R^S)$.
    \item For each descendant $t'$ of $t$ that is an $S$-bottom node we have that $t'$ admit some extended $S$-operation, and if $t'$ is contained in a full join tree~$T$, then $\mathcal{T}$ is \topheavy for the parent of the root of~$T$. Otherwise, $\mathcal{T}$ is \topheavy for the parent of $t'$.
\end{itemize}
Then there exists a \slim $S$-nice tree decomposition $\mathcal{T}'$ for $G$ with width at most~$k$ such that the following holds.
\begin{itemize}
    \item $\mathcal{T}'$ is a sibling of $\mathcal{T}$ and $t$ is the $S$-bottom node in $\mathcal{T}'$ with $S$-trace $(L^S, X^S, R^S)$,
    \item if $\mathcal{T}$ is \topheavy for a node $\hat{t}$ such that $\hat{t}$ is the parent of an $S$-bottom node, and
\begin{itemize}
    \item not in $T_{t}$ and not an ancestor of $t$, or
    \item a descendant of $t$,
\end{itemize}
then $\mathcal{T}'$ is also \topheavy for $\hat{t}$ and the subtree rooted at $\hat{t}$ is the same as in $\mathcal{T}$,
    \item All descendants of $t$ that are $S$-bottom nodes admit some extended $S$-operation.
    \item Node $t$ admits some extended $S$-opertation $\tau$.
    \item If $t$ is contained in a full join tree~$T$ in $\mathcal{T}'$, then $\mathcal{T}'$ is \topheavy for the parent of the root of~$T$. Otherwise, $\mathcal{T}'$ is \topheavy for the parent of $t$.
\end{itemize}
\end{lemma}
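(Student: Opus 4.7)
The strategy is a case analysis on the (basic) $S$-operation that $t$ admits in $\mathcal{T}$. Throughout, we will use Proposition~\ref{prop:smallcorrect2}, Lemma~\ref{lem:alwaysbigjoin}, and the modification $\MakeTopHeavy$ to reshape $\mathcal{T}$ into a sibling $\mathcal{T}'$ so that the bag of $t$ and its neighboring bags are exactly those that the decoding functions $\bigbag/\smallbag/\smallbagf$ and $\nbag/\nbagf$ produce.

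\emph{Case 1: $t$ lies in a full join tree $T$.} Then by Condition~\ref{cond:slim:5} of \cref{def:slimtd} the node $t$ admits $\join(X_t^S,X_1^S,X_2^S,L_1^S,L_2^S)$. I will first apply $\MakeTopHeavy$ to the parent of the root of $T$, using Lemmas~\ref{lem:topheavy}, \ref{lem:topheavynomake} and \ref{lem:topheavynomake2}, so that $\mathcal{T}$ becomes \topheavy at the required node; by the hypothesis and the shape of $\MakeTopHeavy$, no bag of any descendant of $t$ changes and the sibling/\slim/$S$-niceness properties are preserved (Lemma~\ref{lem:preserveslim}). Then Lemma~\ref{lem:alwaysbigjoin} directly yields $s\in\{0,1\}^{2\fvn(G)+1}$ and $d_0,d_1,d_2\in\{0,\dots,k+1\}$ such that $t$ admits $\bigjoin(X_t^S,X_1^S,X_2^S,L_1^S,L_2^S,s,d_0,d_1,d_2)$, so $\mathcal{T}$ itself can serve as $\mathcal{T}'$.

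\emph{Case 2: $t$ is not in a full join tree and admits $\introduce(v)$ or $\join$.} Using Observation~\ref{obs:smallneighbors}, the parent $t_p$ and the (one or two) $S$-children of $t$ have non-full bags that are subsets of $X_t$; set $D_R = X_t\setminus X_{t_p}$, and analogously $D_{L_1},D_{L_2}$ from the $S$-children's bags (each of these sets has size at most one, so it can be encoded as a triple in $(V\cup\{\bot\})^3$, giving an integer $d\in\{1,\dots,(n+1)^3\}$). I then apply $\MakeTopHeavy(t^\star_a)$ where $t^\star_a$ is the nearest ancestor of $t$ that is a parent of an $S$-bottom node, which only modifies $\mathcal{T}$ at or above $t^\star_a$ and so leaves the descendants' extended $S$-operations intact by the second and third parts of \cref{def:topheavy}. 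At this point all hypotheses of Proposition~\ref{prop:smallcorrect2} are satisfied (with $t^\star=t$): $X_t$ is minimum among siblings that keep the descendants fixed (otherwise we replace $\mathcal{T}$ with a smaller sibling), $\mathcal{T}$ is \topheavy for $t^\star_{d_1}$ (and $t^\star_{d_2}$ in the join case) by the inductive hypothesis, and the guess is correct. Proposition~\ref{prop:smallcorrect2} then produces the desired sibling $\mathcal{T}'$ whose bags at $t$, its parent and its $S$-children equal $\smallbag$ and $\nbag$ values, so $t$ admits $\smallintroduce(v,d)$ or $\smalljoin(\dots,d)$.

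\emph{Case 3: $t$ admits $\forget(v)$.} Let $t_1$ be the $S$-child of $t$; by Condition~\ref{cond:slim:3}, $X_{t_1}$ is full or equals $X_t\cup\{v\}$. If $|X_t|=k$ and there is a closest descendant $t''$ that is an $S$-bottom node with $X_{t''}=X_{t_1}=X_t\cup\{v\}$, set $f=\true$ and let $\tau$ be the extended $S$-operation admitted by $t''$ (which exists by the inductive hypothesis); otherwise set $f=\false$ and $\tau=\void$. The small target node is $t^\star=t$ in this case, and the correct guess triple for $t^\star=t$ is now set up so that $\smallbagf$ and $\nbagf$ are exactly the bag-extraction functions induced by proceeding to compute the bag of $t_1$ (or $t''$ when $f=\true$) and then removing $v$. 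I will then mirror the proof of Proposition~\ref{prop:smallcorrect2} verbatim, reading $\smallbagf$ as $\smallbag$ applied one level deeper, to produce a sibling $\mathcal{T}'$ in which $t$ admits $\extendedforget(v,d,f,\tau)$ with the required guess integer $d$.

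\emph{Main obstacle.} The delicate point throughout is to verify that each transformation we apply (the initial $\MakeTopHeavy$, the internal $\MoveIntoSubtree$/$\RemoveFromSubtree$/$\BringNeighborUp/\BringNeighborDown$ inside $\smallbag$ via Proposition~\ref{prop:smallcorrect2}, and the wrapping $\MakeSlimTwo$) leaves the subtrees rooted at the parents of descendant $S$-bottom nodes (including their bags and hence the extended $S$-operations they admit) untouched, and also does not create any new full join tree containing a descendant whose shape would be disturbed. This is precisely what the \topheavy{}ness hypothesis at the descendants delivers via the last four conditions of \cref{def:topheavy} and Observation~\ref{lem:preserveslim}; combined with the fact that $t$'s $S$-trace is unchanged (since none of these modifications move vertices of $S$), the predecessor relation on $S$-traces is preserved and $\mathcal{T}'$ is a sibling of $\mathcal{T}$.
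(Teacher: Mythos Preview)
Your overall strategy—case analysis on the basic $S$-operation of $t$, invoking Lemma~\ref{lem:alwaysbigjoin} in the full-join-tree case and Proposition~\ref{prop:smallcorrect2} in the small cases—matches the paper's approach. However, there is a genuine gap in your Case~1.

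When $t$ is a \emph{leaf} of the full join tree $T$, the hypothesis does not yet give you that $\mathcal{T}$ is \topheavy for the parent of the root of $T$: it only guarantees this for descendant $S$-bottom nodes of $t$, and the other nodes of $T$ are ancestors of $t$. You correctly propose to apply $\MakeTopHeavy$ to the parent of the root of $T$, but then immediately invoke Lemma~\ref{lem:alwaysbigjoin}. The problem is that the $\RemoveFromSubtree$ and $\BringNeighborUp$ steps inside $\MakeTopHeavy$ may remove a vertex from every bag in the subtree rooted there—in particular from the bag of $t$ and of all non-root nodes of $T$. When this happens, those bags are no longer full, $T$ ceases to be a full join tree, and $t$ no longer satisfies the hypothesis of Lemma~\ref{lem:alwaysbigjoin}. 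The paper splits Case~1 into ``$t$ is a leaf of $T$'' versus ``$t$ is an inner node of $T$'': in the inner-node subcase the hypothesis already yields \topheavy{}ness (some descendant of $t$ lies in the same full join tree), so no modification is needed; in the leaf subcase one applies $\MakeTopHeavy$ and then branches—either the full join tree survives and one proceeds as for an inner node, or it is destroyed, in which case $t$'s parent and $S$-children now have non-full bags and one falls through to the $\smalljoin$/$\smallintroduce$ analysis. Your proof needs this fallback.

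Two smaller points. In Case~2 the preliminary application of $\MakeTopHeavy(t^\star_a)$ is unnecessary (Proposition~\ref{prop:smallcorrect2} already delivers a sibling that is \topheavy for $t^\star_a$) and potentially harmful, since it may alter $X_t$ after you have recorded the guess $(D_R,D_{L_1},D_{L_2})$; the paper simply fixes the guess, passes to a sibling with $|X_t|$ minimal, and invokes the proposition directly. In Case~3, the subcase $f=\true$ (the $S$-child of $t$ has a full bag equal to that of the closest $S$-bottom descendant $t''$) needs more than ``mirror Proposition~\ref{prop:smallcorrect2} verbatim'': here the bag is already determined by $t''$'s extended operation, and the paper argues separately—using Condition~\ref{cond:slim:3} of \cref{def:slimtd}, which gives $X_t=X_{t_p}$—that applying $\MakeTopHeavy$ to the parent of $t$ leaves $X_t$ unchanged, so $t$ still admits $\extendedforget(v,d,\true,\tau)$ afterwards.
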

\begin{proof}
We first consider the case that $t$ does not have any descendants that are $S$-bottom nodes. Then we have that each $S$-child of $t$ has an $S$-trace $(\hat{L}^S, \hat{X}^S, \hat{R}^S)$ with $\hat{L}=\emptyset$.
By \cref{obs:introduce,obs:forget,obs:join} we have that $t$ admits (non-extended) $S$-operation $\forget(v)$ with $\{v\}=L^S$. Note that this implies that the bag of $t$ is not full and hence, $t$ is not part of a full join tree.
Furthermore, we have that $t$ has exactly one $S$-child with $S$-trace $(\emptyset,X^S\cup\{v\},R^S)$.
Let $\tau=\extendedforget(v,d,\false,\void)$ such that $d$ is the correct guess for $t$ in $\mathcal{T}$.
Assume that the size of the bag of the $S$-child of $t$ is minimal, that is, for all siblings $\mathcal{T}'$ of $\mathcal{T}$ such that all of the above holds and if $\mathcal{T}$ is \topheavy for a node $\hat{t}$, then the subtrees rooted at $\hat{t}$ in $\mathcal{T}$ and $\mathcal{T}'$ are the same, we have that the bag size is not smaller.
Note that if $\mathcal{T}$ does not have this property, we can replace it with a sibling that has this property.
Let $\phi=(\void,\emptyset,X^S\cup\{v\},R^S,\tau)$ be the state used to obtain an input for \cref{alg:smallwrapper}.
By \cref{prop:smallcorrect2} and \cref{def:extendedforget} we have that there is a \slim $S$-nice tree decomposition $\mathcal{T}'=(T',\{X'_t\}_{t\in V(T')})$ of $G$ with width at most $k$ that is a sibling of $\mathcal{T}$ such that $t$ admits extended $S$-operation $\tau$ in $\mathcal{T}'$. Furthermore, we have that $\mathcal{T}'$ is \topheavy for the parent of $t$ and if $\mathcal{T}$ is \topheavy for a node $\hat{t}$ such that $\hat{t}$ is not in $T_{t}$ and not an ancestor of $t$, or a descendant of $t$,
then $\mathcal{T}'$ is also \topheavy for $\hat{t}$ and the subtree rooted at $\hat{t}$ is the same as in~$\mathcal{T}$. This finishes the case that $t$ does not have any descendants that are $S$-bottom nodes.

Now consider the case that $t$ has descendants that are $S$-bottom nodes. We now investigate the following subcases:
\begin{enumerate}
    \item Node $t$ is a leaf of a full join tree.
    \item Node $t$ is an inner node of a full join tree.
    \item Node $t$ is the parent of a full join tree.
    \item Node $t$ is not contained in a full join tree nor is it the parent of a full join tree, and has two $S$-children.
    \item Node $t$ is not contained in a full join tree nor is it the parent of a full join tree, and has one $S$-child.
\end{enumerate}
Assume that we are in the first case. 
We know that $\mathcal{T}$ is \topheavy for the parent of the $S$-bottom nodes which are the closest descendants of $t$. Note that those descendants cannot be part of the same full join tree as $t$, since $t$ is a leaf of a full join tree.
We first apply the modification $\MakeTopHeavy$ (\cref{def:maketopheavy}) to the parent of the root of the full join tree. Let the resulting tree decomposition be $\mathcal{T}'$. By \cref{def:topheavy} we have that this does not change any subtrees rooted at nodes $\hat{t}$ for which $\mathcal{T}$ is \topheavy and which are not ancestors of $t$. 
Furthermore, by \cref{lem:moveremove,lem:bringupdown,lem:preserveslim}, we have that $\MakeTopHeavy$ does not change any $S$-traces, therefore $\mathcal{T}'$ is \slim, $S$-nice, and a sibling of $\mathcal{T}$.
Moreover, since $\mathcal{T}$ is \topheavy for all nodes that are parents of $S$-bottom nodes that are descendants of $t$, we have by \cref{def:topheavy} that all subtrees rooted at those nodes are unchanged. It follows that all descendants of $t$ in $\mathcal{T}'$ that are $S$-bottom nodes still admit the same extended $S$-operations. Finally, the $\MakeTopHeavy$ modification applies $\RemoveFromSubtree$ to the root of the full join tree and the parent of the root. Note that if Step~\ref{heavy2} of $\MakeTopHeavy$ is applied to a node in the full join tree, the vertices which are moved not from the bag, since all nodes in the full join tree have the same bag.  Furthermore, it applies $\RemoveBringNeighbor$ to the parent of the parent of the root. Since all nodes in the full join tree have the same bag, this implies that either all bags remain unchanged (in particular full) or at least one vertex is removed from all bags except the root of the full join tree. In particular, then none of the bags are full except the root of the (former) full join tree. If the full join tree remains, the arguments to finish the proof are the same as in the next (second) case. If the bags in the full join tree (except the root) are not full after the modification, then the full join tree ceases to exist. Depending on whether $t$ as one or two $S$-children, the arguments to finish the proof are the same as in the fourth or fifth case.

Assume that we are in the second case. Note that each node in the full join tree is an $S$-bottom node. Hence, $t$ has a descendant that is an $S$-bottom node and contained in the same full join tree as $t$. It follows that $\mathcal{T}$ is \topheavy for the parent of the full join tree. By \cref{lem:alwaysbigjoin} we have that $t$ admits some $\bigjoin$ $S$-operation $\tau$ in $\mathcal{T}$. Hence, we can set $\mathcal{T}'=\mathcal{T}$ and have the lemma statement.

Assume that we are in the third case. Then by \cref{def:slimtd}, \cref{lem:alwaysbigjoin}, and \cref{def:extendedforget} we must have that $t$ admits an $\extendedforget(v,d,\true,\tau)$ $S$-operation for some $v$ and $d$, where $\tau$ is a $\bigjoin$ $S$-operation. Furthermore, we have that $\mathcal{T}$ is \topheavy for $t$. From Condition~\ref{cond:slim:3} of \cref{def:slimtd} we get the parent of $t$ has the same bag as $t$ and has $t$ as its only $S$-child. 
We apply the modification $\MakeTopHeavy$ (\cref{def:maketopheavy}) to the parent of $t$. Let the resulting tree decomposition be $\mathcal{T}'$. By \cref{def:topheavy} we have that this does not change any subtrees rooted at nodes $\hat{t}$ for which $\mathcal{T}$ is \topheavy and which are not ancestors of $t$. 
Furthermore, by \cref{lem:moveremove,lem:bringupdown,lem:preserveslim}, we have that $\MakeTopHeavy$ does not change any $S$-traces, therefore $\mathcal{T}'$ is \slim, $S$-nice, and a sibling of $\mathcal{T}$.
Moreover, since $\mathcal{T}$ is \topheavy for all nodes that are parents of $S$-bottom nodes that are descendants of $t$, we have by \cref{def:topheavy} that all subtrees rooted at those nodes are unchanged. It follows that all descendants of $t$ in $\mathcal{T}'$ that are $S$-bottom nodes still admit the same extended $S$-operations.
Finally, note that $\MakeTopHeavy$ does not change the bag of $t$, since $t$ has the same bag as its parent. It follows that $t$ still admits $\extendedforget(v,d,\true,\tau)$.


Assume that we are in the fourth case. Let $(\hat{L}^S, \hat{X}^S, \hat{R}^S)$ be the $S$-trace of one of the $S$-children of $t$. Let $\tau_-$ be the $S$-operation that is admitted by the closest descendant of $t$ that is an $S$-bottom node.
We have that in $t$ admits the non-extended $S$-operation $\join$.
Assume that the size of the bag of $t$ is minimal, that is, for all siblings $\mathcal{T}'$ of $\mathcal{T}$ such that all of the above holds and if $\mathcal{T}$ is \topheavy for a node $\hat{t}$, then the subtrees rooted at $\hat{t}$ in $\mathcal{T}$ and $\mathcal{T}'$ are the same, we have that the bag size is not smaller.
Note that if $\mathcal{T}$ does not have this property, we can replace it with a sibling that has this property.
Let $\tau_+$ the corresponding small $S$-operation that contains the correct guess for $t$ in $\mathcal{T}$ and let $\phi=(\tau_-,\hat{L}^S, \hat{X}^S, \hat{R}^S,\tau_+)$ be the state used to obtain an input for \cref{alg:smallwrapper} (when applied to compute the bags of the directed path of $S$-trace $(\hat{L}^S, \hat{X}^S, \hat{R}^S)$).
    By \cref{prop:smallcorrect2} and \cref{def:state} we have that there is a \slim $S$-nice tree decomposition $\mathcal{T}'=(T',\{X'_t\}_{t\in V(T')})$ of $G$ with width at most $k$ that is a sibling of $\mathcal{T}$ such that $t$ admits the extended $S$-operation $\tau_+$ in $\mathcal{T}'$. 
    Furthermore, we have that $\mathcal{T}'$ is \topheavy for the parent of $t$ and if $\mathcal{T}$ is \topheavy for a node $\hat{t}$ such that $\hat{t}$ is not in $T_{t}$ and not an ancestor of $t$, or a descendant of $t$,
then $\mathcal{T}'$ is also \topheavy for $\hat{t}$ and the subtree rooted at $\hat{t}$ is the same as in~$\mathcal{T}$. This implies that all descendants of $t$ that are $S$-bottom nodes still admit the same extended $S$-operations as in~$\mathcal{T}$.


Assume that we are in the fifth case. Let $(\hat{L}^S, \hat{X}^S, \hat{R}^S)$ be the $S$-trace of the $S$-child of $t$. 
If $\hat{L}^S\neq\emptyset$ and the directed path with $S$-trace $(\hat{L}^S, \hat{X}^S, \hat{R}^S)$ has at least length three, then this case is essentially the same as the previous one. Assume this is the case.
Let $\tau_-$ be the $S$-operation that is admitted by the closest descendant of $t$ that is an $S$-bottom node.
We have that in $t$ admits the non-extended $S$-operation $\introduce$ or $\forget$.
Assume that the size of the bag of $t$ or the $S$-child of $t$, respectively, is minimal, that is, for all siblings $\mathcal{T}'$ of $\mathcal{T}$ such that all of the above holds and if $\mathcal{T}$ is \topheavy for a node $\hat{t}$, then the subtrees rooted at $\hat{t}$ in $\mathcal{T}$ and $\mathcal{T}'$ are the same, we have that the bag size is not smaller.
Note that if $\mathcal{T}$ does not have this property, we can replace it with a sibling that has this property.
Let $\tau_+$ the corresponding small $S$-operation that contains the correct guess for $t$ in $\mathcal{T}$ and let $\phi=(\tau_-,\hat{L}^S, \hat{X}^S, \hat{R}^S,\tau_+)$ be the state used to obtain an input for \cref{alg:smallwrapper}.
In case $t$ admits a (non-extended) $\forget(v)$ $S$-operation, we set $\tau_+=\extendedforget(v,d,\true,\tau_-)$ if the $S$-child of $t$ has a full bag and the closest descendant of $t$ that is an $S$-bottom node has the same full bag. Otherwise, we set $\tau_+=\extendedforget(v,d,\false,\void)$.
    By \cref{prop:smallcorrect2} and \cref{def:state} we have that there is a \slim $S$-nice tree decomposition $\mathcal{T}'=(T',\{X'_t\}_{t\in V(T')})$ of $G$ with width at most $k$ that is a sibling of $\mathcal{T}$ such that $t$ admits the extended $S$-operation $\tau_+$ in~$\mathcal{T}'$. 
    Furthermore, we have that $\mathcal{T}'$ is \topheavy for the parent of $t$ and if $\mathcal{T}$ is \topheavy for a node $\hat{t}$ such that $\hat{t}$ is not in $T_{t}$ and not an ancestor of $t$, or a descendant of $t$,
then $\mathcal{T}'$ is also \topheavy for $\hat{t}$ and the subtree rooted at $\hat{t}$ is the same as in~$\mathcal{T}$. This implies that all descendants of $t$ that are $S$-bottom nodes still admit the same extended $S$-operations as in~$\mathcal{T}$.
Finally, consider the case that the directed path with $S$-trace $(\hat{L}^S, \hat{X}^S, \hat{R}^S)$ consists of one vertex. This case is somewhat similar to case three. By \cref{def:snicetd,def:slimtd}, we must have that $t$ has one $S$-child $t'$ which is an $S$-bottom node and which has a full bag, and the bag of $t$ itself is not full. It follows that $t$ admits a (non-extended) $\forget(v)$ $S$-operation.
Let $\tau_-$ denote the extended $S$-operation of $t'$. Then by \cref{def:extendedforget} and \cref{sec:bags} we have that $t$ admits the extended $S$-operation $\extendedforget(v,d,\true,\tau_-)$ for any $d\in \mathbb{N}$.
Furthermore, we have that $\mathcal{T}$ is \topheavy for $t$. From Condition~\ref{cond:slim:3} of \cref{def:slimtd} we get the parent of $t$ has the same bag as $t$ and has $t$ as its only $S$-child. 
We apply the modification $\MakeTopHeavy$ (\cref{def:maketopheavy}) to the parent of $t$. Let the resulting tree decomposition be $\mathcal{T}'$. By \cref{def:topheavy} we have that this does not change any subtrees rooted at nodes $\hat{t}$ for which $\mathcal{T}$ is \topheavy and which are not ancestors of $t$. 
Furthermore, by \cref{lem:moveremove,lem:bringupdown,lem:preserveslim}, we have that $\MakeTopHeavy$ does not change any $S$-traces, therefore $\mathcal{T}'$ is \slim, $S$-nice, and a sibling of $\mathcal{T}$.
Moreover, since $\mathcal{T}$ is \topheavy for all nodes that are parents of $S$-bottom nodes that are descendants of $t$, we have by \cref{def:topheavy} that all subtrees rooted at those nodes are unchanged. It follows that all descendants of $t$ in $\mathcal{T}'$ that are $S$-bottom nodes still admit the same extended $S$-operations.
Finally, note that $\MakeTopHeavy$ does not change the bag of $t$, since $t$ has the same bag as its parent. It follows that $t$ still admits $\extendedforget(v,d,\true,\tau)$.
This finishes the proof.
\end{proof}

Now we use \cref{lem:witnesstd} to argue that we can obtain a \slim $S$-nice tree decomposition $\mathcal{T}$ for $G$ such that every $S$-bottom node admits an extended $S$-operation.

\begin{corollary}\label{cor:witnesstd}
    Assume $G$ has treewidth at most $k$. Then there exists a \slim $S$-nice tree decomposition $\mathcal{T}$ for $G$ with width at most~$k$ such that every $S$-bottom node admits an extended $S$-operation.
\end{corollary}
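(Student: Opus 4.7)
The plan is to start with an arbitrary optimal \slim $S$-nice tree decomposition and then use \cref{lem:witnesstd} as an inductive step, processing the $S$-bottom nodes in a bottom-up order along the predecessor relation on $S$-traces, to turn their (non-extended) $S$-operations into extended ones one by one without ever breaking the extended $S$-operations already installed on deeper nodes.

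First, since $G$ has treewidth at most $k$, by \cref{lem:nicetd} $G$ admits a nice tree decomposition of width at most $k$, which by \cref{lem:snicetd} is $S$-nice, and by \cref{lem:slim} we may further assume it is \slim. Call this tree decomposition $\mathcal{T}_0$. Recall from \cref{def:sibling} that siblings preserve the predecessor relation on $S$-traces; in particular, for each $S$-trace $(L^S,X^S,R^S)$ with $L^S\neq\emptyset$ that appears in $\mathcal{T}_0$, every sibling tree decomposition contains exactly one directed path for that $S$-trace and hence one $S$-bottom node for it (by \cref{lem:dirpath,def:bottom_node}).

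Next, enumerate the $S$-traces $(L^S_1,X^S_1,R^S_1),(L^S_2,X^S_2,R^S_2),\ldots$ with $L^S_i\neq\emptyset$ that appear in $\mathcal{T}_0$ according to a linear extension of the predecessor relation of \cref{def:predecingstraces}, so that every $S$-trace appears strictly after all of its predecessors. Define $\mathcal{T}_i$ inductively by applying \cref{lem:witnesstd} to the $S$-bottom node $t_i$ of $\mathcal{T}_{i-1}$ whose $S$-trace is $(L^S_i,X^S_i,R^S_i)$. The invariant we maintain is: $\mathcal{T}_i$ is a \slim $S$-nice tree decomposition of width at most $k$ that is a sibling of $\mathcal{T}_{i-1}$ (and hence of $\mathcal{T}_0$), and for every $j\le i$ the $S$-bottom node with $S$-trace $(L^S_j,X^S_j,R^S_j)$ admits some extended $S$-operation, while being \topheavy for the parent of that $S$-bottom node (or for the parent of the root of its full join tree, if it lies in one). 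To check that \cref{lem:witnesstd} is applicable, note that by the choice of linearization, every descendant $S$-bottom node of $t_i$ has an $S$-trace $(L^S_j,X^S_j,R^S_j)$ with $j<i$, and by the invariant those nodes already admit an extended $S$-operation and carry the required \topheavy{ness}.

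The only delicate point, which is the main obstacle, is ensuring that applying \cref{lem:witnesstd} to $t_i$ does not destroy the extended $S$-operations installed at deeper nodes in earlier rounds. This is exactly guaranteed by the second bullet of the conclusion of \cref{lem:witnesstd}: for every descendant $\hat t$ of $t_i$ that is a parent of an $S$-bottom node and for which $\mathcal{T}_{i-1}$ is \topheavy, the subtree rooted at $\hat t$ is preserved verbatim in $\mathcal{T}_i$. Since an $S$-bottom node's extended $S$-operation is determined entirely by its bag and the bags of its parent and $S$-children (\cref{def:bigjoin,def:smallintro,def:smalljoin,def:extendedforget}), none of these admissions can be invalidated. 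After finitely many steps we obtain the desired \slim $S$-nice tree decomposition of width at most $k$ in which every $S$-bottom node admits an extended $S$-operation.
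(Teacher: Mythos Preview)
Your proof is correct and follows essentially the same approach as the paper's: start from a \slim{} $S$-nice decomposition, linearize the predecessor relation on $S$-traces, and iterate \cref{lem:witnesstd} bottom-up, using the sibling and \topheavy{} preservation clauses to guarantee that earlier installations are never undone. The only minor omission is that in your last paragraph you invoke only the ``descendant'' clause of \cref{lem:witnesstd}'s second bullet, whereas previously processed $S$-bottom nodes may also lie in side branches (not in $T_{t_i}$ and not ancestors of $t_i$); but that case is handled by the other disjunct of the same bullet, so there is no real gap.
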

\begin{proof}
    We can obtain a tree decomposition with the desired property by using \cref{lem:witnesstd} iteratively as follows. We start with a \slim $S$-nice tree decomposition $\mathcal{T}$ for $G$ of width at most $k$. 
    We take an arbitrary but fixed linear order to the $S$-traces that is a linearization of their predecessor relation (\cref{def:predecingstraces}) on $\mathcal{T}$. Recall that by \cref{def:sibling}, siblings preserve the predecessor relation and hence also the set of $S$-bottom nodes.
    Now we iterate over the $S$-bottom nodes of $\mathcal{T}$ in the above described order and apply \cref{lem:witnesstd} each time. This way we create a modified tree decomposition that is \slim and $S$-nice.
    Furthermore, we always produce a sibling, we obtain {\topheavy}ness for the parents of the $S$-bottom nodes that we already processed, and since we follow a linearization of the predecessor relation, the $S$-bottom nodes we already processed are never ancestors of the ones that are not processed yet. Hence, the {\topheavy}ness ensures that we do not revert any changes. Once this process finishes, we have a \slim $S$-nice tree decomposition $\mathcal{T}$ for $G$ with width~$k$ such that every $S$-bottom node admits an extended $S$-operation.
\end{proof}




\cref{cor:witnesstd} allows us to assume that each $S$-bottom node admits some extended $S$-operation. Under this assumption we can prove the following in a straightforward way.

\begin{proposition}\label{prop:correctness2}
Assume there exists a \slim $S$-nice tree decomposition $\mathcal{T}$ for $G$ with width~$k$ such that every $S$-bottom node in $\mathcal{T}$ admits an extended $S$-operation. If $\mathcal{T}$ witnesses a state $\phi$, then $\ptw(\phi)=\true$.
\end{proposition}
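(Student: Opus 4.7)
The plan is to prove the proposition by induction on the predecessor relation on states (\cref{def:preceding}), mirroring the argument for \cref{prop:correctness1} but running in reverse: instead of constructing a tree decomposition from true $\ptw$-values, I will extract $\ptw(\phi)=\true$ from the given witnessing tree decomposition $\mathcal{T}$.

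Before starting the induction, I first upgrade $\mathcal{T}$ so that, in addition to being \slim and having every $S$-bottom node admit some extended $S$-operation, it is also \topheavy for every node that is the parent of an $S$-bottom node (or more precisely, for the parent of $t_p$, where $t_p$ is the parent of the top node $t_{\max}$ of every relevant directed path). This is needed because \cref{lem:legal1,lem:legal2} (which in turn use \cref{lem:vl,lem:vx,lem:legal0}) require \topheavy{ness} as a hypothesis. I will obtain the upgrade by iterating \cref{lem:topheavy} bottom-up through the $S$-bottom nodes, reselecting extended $S$-operations at each modified node exactly as in the proof of \cref{lem:witnesstd} / \cref{cor:witnesstd}, so that both invariants (every $S$-bottom node admits an extended $S$-operation, and $\mathcal{T}$ still witnesses $\phi$) are preserved. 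This is where most of the care is needed, and I will flag it as the main obstacle below.

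For the base case, $\phi$ has no preceding states, which by \cref{def:preceding} forces $L^S=\emptyset$ and $\tau_-=\void$, so \cref{def:dp} reduces to $\ptw(\phi)=\ltw(\phi)$. Because $\mathcal{T}$ witnesses $\phi$, the $S$-operation $\tau_+$ (if non-$\void$) is admitted by a concrete $S$-bottom node of $\mathcal{T}$, and the additional parameters inside $\tau_+$ encode exactly the bags $X^\phi_{\max}$, $X^\phi_p$, $X^\phi_{\min}$ through the functions $\bigbag$, $\smallbag$, $\smallbagf$, $\nbag$, $\nbagf$ of \cref{sec:bags}. Thus the bags of $t_{\max}$, its parent, and (in the degenerate way $X^\phi_{\min}=X^S$) match the DP's candidates. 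Restricting $\mathcal{T}$ to the vertex set $X^\phi_{\max}\cup X^\phi_p\cup X^\phi_{\min}\cup X^\phi_{\dpath}$ and invoking \cref{lem:cliquebag2} to absorb the clique-edges added in the definition of $\ltw$ (each such clique fits in an existing bag) yields a tree decomposition of $G'[X^\phi_{\max}\cup X^\phi_p\cup X^\phi_{\min}\cup X^\phi_{\dpath}]$ of width at most $k$, so $\ltw(\phi)=\true$.

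For the inductive step, $L^S\neq\emptyset$; let $t_{\min}$ be the bottom node of the directed path of $(L^S,X^S,R^S)$ in $\mathcal{T}$, which by hypothesis admits the extended $S$-operation $\tau_-$. I split on the shape of $\tau_-$: when $\tau_-\in\{\smallintroduce,\extendedforget\}$, node $t_{\min}$ has one $S$-child $t_1$; I choose $\tau'_-$ to be the extended $S$-operation of the closest $S$-bottom descendant of $t_1$ (taking $\void$ if none exists) and let $\psi\in\Psi(\phi)$ be the corresponding preceding state. By construction $\mathcal{T}$ witnesses $\psi$, so by the induction hypothesis $\ptw(\psi)=\true$; the $\ltw(\phi)=\true$ step is exactly as in the base case; and $\legal_1(\phi,\psi)=\true$ is supplied by \cref{lem:legal1}, whose \topheavy{ness} premise was arranged in the preliminary upgrade. \cref{def:dp} then gives $\ptw(\phi)=\true$. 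The case $\tau_-\in\{\smalljoin,\bigjoin\}$ is analogous: $t_{\min}$ has two $S$-children, yielding states $\psi_1\in\Psi_1(\phi)$ and $\psi_2\in\Psi_2(\phi)$ both witnessed by $\mathcal{T}$, and \cref{lem:legal2} supplies $\legal_2(\phi,\psi_1,\psi_2)=\true$.

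The main obstacle is the preliminary upgrade: applying $\MakeTopHeavy$ to an ancestor of $t_{\max}$ can a priori modify bags of descendants (via the $\RemoveFromSubtree$ and $\BringNeighborUp$ steps inside \cref{def:maketopheavy}), which could change which extended $S$-operation is admitted at a given $S$-bottom node and could destroy the witnessing property for $\phi$. The fix is to interleave the $\MakeTopHeavy$ applications with re-choices of the extended $S$-operations, processing $S$-bottom nodes in a linearization of the predecessor relation exactly as in \cref{cor:witnesstd}; the \topheavy{ness} already established at lower nodes prevents later modifications from touching already-processed subtrees (by \cref{def:topheavy} and \cref{lem:topheavynomake,lem:topheavynomake2}), and at each step \cref{prop:smallcorrect2} / \cref{lem:alwaysbigjoin} guarantees that an appropriate extended $S$-operation still exists for the newly chosen bag, including the one whose parameters match $\tau_+$ and $\tau_-$ of the specific state $\phi$ we are analyzing.
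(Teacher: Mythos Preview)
Your inductive skeleton is exactly the paper's: induction on the predecessor relation, the base case collapsing to $\ptw(\phi)=\ltw(\phi)$ via \cref{lem:cliquebag2} and \cref{lem:twminor}, and the inductive step splitting on $\tau_-$ and invoking \cref{lem:legal1}/\cref{lem:legal2} to certify the $\legal$ conditions. That part is fine and matches the paper.

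Where you diverge is the ``preliminary upgrade'' to enforce \topheavy{ness}. The paper does \emph{not} do this: it simply invokes \cref{lem:legal1}/\cref{lem:legal2} directly. You are right that those lemmas carry a \topheavy{ness} hypothesis not present in the statement of the proposition; the paper is implicitly relying on the fact that in its only use (\cref{cor:correct}) the tree decomposition comes out of \cref{cor:witnesstd}, whose iterative construction already makes $\mathcal{T}$ \topheavy for the parent of every $S$-bottom node. So the honest repair is to strengthen the hypothesis (or note this provenance), not to rebuild $\mathcal{T}$.

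Your proposed rebuild has a genuine gap. Re-running the \cref{cor:witnesstd} process (interleaving $\MakeTopHeavy$ with re-choices of extended $S$-operations via \cref{prop:smallcorrect2}/\cref{lem:alwaysbigjoin}) produces a sibling $\mathcal{T}'$ in which every $S$-bottom node admits \emph{some} extended $S$-operation, but there is no reason the new parameters $d$, $s$, $d_0,d_1,d_2$ coincide with those in the original $\tau_+,\tau_-$: the bags can change, and the encoding functions $\smallbag,\bigbag,\nbag$ are sensitive to those changes. Hence $\mathcal{T}'$ will in general witness a different state $\phi'\neq\phi$, and your induction then only yields $\ptw(\phi')=\true$. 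Your final sentence asserting that the re-chosen operations ``match $\tau_+$ and $\tau_-$ of the specific state $\phi$'' is exactly the unsupported step. Drop the upgrade and instead assume (or observe, citing \cref{cor:witnesstd}) that $\mathcal{T}$ is already \topheavy for the relevant parents; then your proof coincides with the paper's.
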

\begin{proof}
We prove this by induction on the predecessor relation (\cref{def:preceding}) on states. Let $\phi$ be a state that is witnessed by $\mathcal{T}$. For the base case, assume that $\phi$ does not have any preceding states. Then we have that $\tau_-=\void$ and hence $X^\phi_{\min}=X^S$. By the same arguments as in the base case of the proof of \cref{lem:witnesstd} we have that $\tau_+$ is an $\extendedforget$ $S$-operation
By \cref{def:dp} we have that $\ptw(\phi)=\ltw(\phi)$.
By \cref{def:extendedforget} we have that the bag of $t_{\max}$ of the $S$-child of the $S$-bottom node that admits $\tau_+$ is~$X^{\phi}_{\max}$, and the bag of the parent $t_p$ of $t_{\max}$ is $X^{\phi}_p$. 
Note that $X^{\phi}_{\min}\subseteq X^{\phi}_{\max}$. Hence, by \cref{lem:cliquebag2} we have that $\mathcal{T}$ is also a tree decomposition for~$G'$, where $G'$ obtained from $G$ by adding edges between all pairs of vertices $u,v\in X^{\phi}_{\max}$ with $\{u,v\}\notin E$,  all pairs of vertices $u,v\in  X^{\phi}_p$ with $\{u,v\}\notin E$, and all pairs of vertices $u,v\in X^{\phi}_{\min}$ with $\{u,v\}\notin E$. It follows by \cref{lem:twminor} that $\tw(G'[X^{\phi}_{\max}\cup X^{\phi}_p\cup X^{\phi}_{\min}\cup X^{\phi}_{\dpath}])\le k$ and hence $\ltw(\phi)=\true$. This finishes the base case.


Now assume that $\phi$ has predecessor states. Then we have that $L^S\neq \emptyset$ and hence, $\tau_-\neq\void$. We make a case distinction on $\tau_-$.
\begin{enumerate}
\item Assume that $\tau_-=\smallintroduce(v,d)$ or $\tau_-=\extendedforget(v,d,f,\tau)$. 
Then by \cref{def:dp} we have that 
        \[
        \ptw(\phi) = \bigvee_{\psi\in\Psi(\phi)} \bigl(\ltw(\phi)\wedge\ptw(\psi)\wedge\legal_1(\phi,\psi)\bigr).
        \]
Let~$\psi$ be a predecessor state of ${\phi}$ that is witnessed by $\mathcal{T}$. By induction hypothesis we have that $\ptw(\psi)=\true$. 

By \cref{def:bigjoin,def:smallintro,def:smalljoin,def:extendedforget}  we have that the bag of~$t_{\min}$ is~$X^{\phi}_{\min}$, the bag of~$t_{\max}$ is~$X^{\phi}_{\max}$, and the bag of the parent $t_p$ of $t_{\max}$ is $X^{\phi}_p$. 
By \cref{lem:cliquebag2} we have that $\mathcal{T}$ is also a tree decomposition for~$G'$, where $G'$ obtained from $G$ by adding edges between all pairs of vertices $u,v\in X^{\phi}_{\max}$ with $\{u,v\}\notin E$,  all pairs of vertices $u,v\in  X^{\phi}_p$ with $\{u,v\}\notin E$, and all pairs of vertices $u,v\in X^{\phi}_{\min}$ with $\{u,v\}\notin E$. It follows by \cref{lem:twminor} that $\tw(G'[X^{\phi}_{\max}\cup X^{\phi}_p\cup X^{\phi}_{\min}\cup X^{\phi}_{\dpath}])\le k$ and hence $\ltw({\phi})=\true$. 
By \cref{def:legal1} and \cref{lem:legal1} we have that $\legal_1({\phi},\psi)=\true$.
We can conclude that $\ptw({\phi})=\true$.
\item Assume that $\tau_-=\smalljoin(X^S,X_1^S,X_2^S,L_{1}^S, L_{2}^S,d)$ or $\tau_-=\bigjoin(X^S,X_1^S,X_2^S,L_{1}^S, L_{2}^S,s,d_0,d_1,d_2)$. 
Then by \cref{def:dp} we have that 
        \[
        \ptw(\phi) = \bigvee_{\psi_1\in\Psi_1(\phi)\wedge\psi_2\in\Psi_2(\phi)} \bigl(\ltw(\phi)\wedge\ptw(\psi_1)\wedge\ptw(\psi_2)\wedge \legal_2(\phi,\psi_1,\psi_2)\bigr).
        \]
The arguments here are essentially analogous to the first case.
Let~$\psi_1$ and $\psi_2$ be predecessor states of ${\phi}$ that is witnessed by $\mathcal{T}$. By induction hypothesis we have that $\ptw(\psi_1)=\true$ and $\ptw(\psi_2)=\true$.

By \cref{def:bigjoin,def:smallintro,def:smalljoin,def:extendedforget}  we have that the bag of~$t_{\min}$ is~$X^{\phi}_{\min}$, the bag of~$t_{\max}$ is~$X^{\phi}_{\max}$, and the bag of the parent $t_p$ of $t_{\max}$ is $X^{\phi}_p$. 
By \cref{lem:cliquebag2} we have that $\mathcal{T}'$ is also a tree decomposition for~$G'$, where $G'$ obtained from $G$ by adding edges between all pairs of vertices $u,v\in X^{\phi}_{\max}$ with $\{u,v\}\notin E$,  all pairs of vertices $u,v\in  X^{\phi}_p$ with $\{u,v\}\notin E$, and all pairs of vertices $u,v\in X^{\phi}_{\min}$ with $\{u,v\}\notin E$. It follows by \cref{lem:twminor} that $\tw(G'[X^{\phi}_{\max}\cup X^{\phi}_p\cup X^{\phi}_{\min}\cup X^{\phi}_{\dpath}])\le k$ and hence $\ltw({\phi})=\true$. 
By \cref{def:legal2} and \cref{lem:legal2} we have that $\legal_2(\phi,\psi_1,\psi_2)=\true$.
We can conclude that $\ptw(\phi)=\true$.
\end{enumerate}
This finishes the proof.
\end{proof}

From \cref{prop:correctness1,prop:correctness2} and the definition of $V^\phi_L$, $V^\phi_X$, and $V^\phi_R$ in \cref{sec:legal} we get the following.

\begin{corollary}\label{cor:correct}
There is a tree decomposition for $G$ with width at most $k$ if and only if
\[
\bigvee_{\phi\in\Phi}\ptw(\phi)=\true,
\]
where $\Phi$ contains all states of the form $(\tau_-,S\setminus X^S, X^S, \emptyset,\void)$ for some extended $S$-operation $\tau_-$.
\end{corollary}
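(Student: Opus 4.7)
The plan is to prove both directions by invoking the two correctness propositions \cref{prop:correctness1} and \cref{prop:correctness2}, together with \cref{cor:witnesstd} to guarantee the structural assumption needed on an optimal tree decomposition. I would start by observing that the states in $\Phi$ are exactly the candidate states whose ``top node'' is the root of the tree decomposition: since $\tau_+=\void$ there is no parent, and the condition $R^S=\emptyset$ together with $L^S=S\setminus X^S$ expresses that the $S$-trace of the root of any $S$-nice tree decomposition partitions $S$ as $(S\setminus X^S, X^S, \emptyset)$.

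For the forward direction, assume $G$ has treewidth at most $k$. First I apply \cref{lem:nicetd} and \cref{lem:snicetd} to obtain an $S$-nice tree decomposition of width at most $k$, then use \cref{lem:slim} to obtain a \slim one, and finally apply \cref{cor:witnesstd} to get a \slim $S$-nice tree decomposition $\mathcal{T}$ of width at most $k$ in which every $S$-bottom node admits an extended $S$-operation. Let $(L^{S*}, X^{S*}, \emptyset)$ be the $S$-trace of the root of $\mathcal{T}$. If $L^{S*}\ne \emptyset$, the root is the top node of the directed path of this $S$-trace, so the bottom node of that path admits some extended $S$-operation $\tau_-$, and the state $\phi=(\tau_-, L^{S*}, X^{S*}, \emptyset, \void)$ is witnessed by $\mathcal{T}$ (in the sense of \cref{def:state}). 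If $L^{S*}=\emptyset$, we take $\tau_-=\void$, and again $\phi$ is witnessed by $\mathcal{T}$. In either case $\phi\in\Phi$, and \cref{prop:correctness2} yields $\ptw(\phi)=\true$.

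For the backward direction, suppose $\ptw(\phi)=\true$ for some $\phi=(\tau_-, S\setminus X^S, X^S, \emptyset, \void)\in\Phi$. Note first that this $\phi$ satisfies the hypotheses of \cref{prop:correctness1}: by construction $R^S=\emptyset$ and $\tau_+=\void$, and the preceding states defined in \cref{def:preceding} together with \cref{def:state} force $\tau_-=\void$ exactly when $L^S=S\setminus X^S=\emptyset$ (this is the consistency condition built into the algorithm, and whenever $\ptw(\phi)$ is set to $\true$ this condition is respected). Applying \cref{prop:correctness1} therefore produces a rooted tree decomposition $\mathcal{T}$ of width at most $k$ for $G[V^\phi_L\cup V^\phi_X]$ whose root has bag $X^\phi_p=\emptyset$.

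The only remaining task is then to verify that $V^\phi_L\cup V^\phi_X=V$, so that $\mathcal{T}$ is in fact a tree decomposition of $G$. From the definitions in \cref{sec:legal}, $V^\phi_R=V\setminus (V^\phi_L\cup V^\phi_X)$, so it suffices to show $V^\phi_R=\emptyset$. Since $R^S=\emptyset$, all of $S$ lies in $V^\phi_L\cup V^\phi_X$. For $v\in V\setminus S$, vertex $v$ belongs to some connected component $C$ of $G-S$; because $G$ is connected and $S$ is a feedback vertex set, $C$ either meets $X^\phi_{\max}\cup X^\phi_p\cup X^\phi_{\min}$ (and is then placed in $V^\phi_X$ via the definition of $X^\phi_{\dpath}$, or directly in $V^\phi_X$) or else is adjacent to some vertex of $S$, in which case the connectivity analysis forces $v$ to lie in $V^\phi_L\cup V^\phi_X$ by the defining rule for $V^\phi_L$ and the fact that $X^\phi_{\dpath}$ absorbs all $F$-components with at least one neighbor outside $X^\phi_p=\emptyset$. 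The main technical obstacle here is to make this absorption argument airtight: one has to unpack the definition of $X^\phi_{\dpath}$ carefully and use that $X^\phi_p=\emptyset$ to rule out $F$-components being missed, and use the $S$-trace structure to rule out that any vertex of $V\setminus S$ could be ``beyond'' the root. Once this is done, $V^\phi_L\cup V^\phi_X=V$ and hence $\mathcal{T}$ has width at most $k$ and is a tree decomposition of $G$, completing the proof.
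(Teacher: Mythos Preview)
Your proposal follows essentially the same approach as the paper: both directions are derived from \cref{prop:correctness1}, \cref{prop:correctness2}, and \cref{cor:witnesstd}, and your forward direction is virtually identical to the paper's.

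For the backward direction, the only substantive step is showing $V^\phi_L\cup V^\phi_X=V$. The paper dispatches this in one line by citing the definitions in \cref{sec:legal} together with \cref{lem:vl,lem:vx}. Your attempt to argue it directly is in the right spirit, but your sketch is slightly off: you analyse connected components of $G-S$, whereas the relevant decomposition is into components of $G-(X^\phi_{\max}\cup X^\phi_p\cup X^\phi_{\min})$, which here equals $G-X^\phi_{\min}$ since $\tau_+=\void$ gives $X^\phi_p=\emptyset$, $X^\phi_{\max}=X^S$, and $X^S\subseteq X^\phi_{\min}$ (as $t_{\min}$ has $S$-trace $(S\setminus X^S,X^S,\emptyset)$). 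With that correction the case analysis is clean: a vertex $v\notin X^\phi_{\min}$ lies in some component $C$ of $G-X^\phi_{\min}$; if $V(C)$ meets $L^S\cup(X^\phi_c\setminus X^\phi_{\min})$ then $v\in V^\phi_L$; otherwise $C$ is an $F$-component with $N[V(C)]\cap(X^\phi_c\setminus X^\phi_{\min})=\emptyset$ and $N(V(C))\setminus X^\phi_p=N(V(C))\neq\emptyset$ (by connectedness of $G$), so $v\in X^\phi_{\dpath}\subseteq V^\phi_X$. Your claim that ``$C$ meets $X^\phi_{\max}\cup X^\phi_p\cup X^\phi_{\min}$ and is then placed in $V^\phi_X$'' does not work as stated, since meeting $X^\phi_{\min}$ does not put the whole component into $V^\phi_X$.
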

\begin{proof}
    Assume that $\bigvee_{\phi\in\Phi}\ptw(\phi)=\true$. Let $\phi\in\Phi$ such that $\ptw(\phi)=\true$. We have that $\phi=(\tau_-,S\setminus X^S, X^S,\emptyset,\void)$ for some extended $S$-operation $\tau_-$. By the definition of $V^\phi_L$, $V^\phi_X$, and $V^\phi_R$ in \cref{sec:legal} and \cref{lem:vl,lem:vx}, we get that $V^\phi_L\cup V^\phi_X=V$, and $V^\phi_R=\emptyset$. By \cref{prop:correctness1} we have that then $G$ admits a tree decomposition with width at most $k$.

    Now assume that there is a tree decomposition for $G$ with width at most $k$. Then by \cref{cor:witnesstd} we have that there exists a \slim $S$-nice tree decomposition for $G$ with width at most $k$ such that every $S$-bottom node admits an extended $S$-operation.
    The root has $S$-trace $(S\setminus X^S, X^S,\emptyset)$ for some $X^S\subseteq S$. Let $\tau_-$ be the extended $S$-operation that is admitted by the closest descendant of the root of $\mathcal{T}$ is an $S$-bottom node. Let $\phi=(\tau_-,S\setminus X^S, X^S,\emptyset,\void)$. By \cref{prop:correctness2} we have that $\ptw(\phi)=\true$. Furthermore, we have that $\phi\in\Phi$. It follows that $\bigvee_{\phi\in\Phi}\ptw(\phi)=\true$.
\end{proof}

Finally, we analyze the size of the dynamic programming table and the time needed to compute one entry.

\begin{proposition}\label{prop:runningtime}
The dynamic programming table $\ptw$ has size $2^{\OO(\fvn(G))}$ and each entry can be computed in $2^{\OO(\fvn(G))}$ time.
\end{proposition}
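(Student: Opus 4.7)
The plan is to estimate two things separately: the number of states $|\mathcal{S}|$, and the cost of evaluating $\ptw(\phi)$ at a single state $\phi$ assuming all preceding entries are already in memory. Throughout, I will carry polynomial factors in $n$ explicitly, and only at the very end invoke the fact (mentioned in the introduction and in \cref{sec:mainalgo}) that the kernelization of Bodlaender, Jansen, and Kratsch~\cite{bodlaender2013preprocessing} allows us to assume $n = \OO(\fvn(G)^4)$, so that $n^{\OO(1)}$ is absorbed into $2^{\OO(\fvn(G))}$.

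First I would bound the number of $S$-traces: since $(L^S, X^S, R^S)$ is a $3$-partition of $S$, there are at most $3^{|S|} = 3^{\fvn(G)} = 2^{\OO(\fvn(G))}$ of them. Next I would count the extended $S$-operations $\tau$ type by type, as introduced in \cref{sec:extendedops}. The dummy $\void$ contributes $1$; an operation $\smallintroduce(v,d)$ contributes $|S|\cdot (n+1)^3$; an operation $\smalljoin(X^S,X_1^S,X_2^S,L_1^S,L_2^S,d)$ contributes at most $5^{|S|}\cdot (n+1)^3 = 2^{\OO(\fvn(G))}\cdot n^{\OO(1)}$ (five choices per element of $S$: belong to exactly one of the five subset parameters or none); an operation $\bigjoin(X^S,X_1^S,X_2^S,L_1^S,L_2^S,s,d_0,d_1,d_2)$ contributes at most $5^{|S|}\cdot 2^{2\fvn(G)+1}\cdot (k+2)^3 = 2^{\OO(\fvn(G))}$ since $k\le \fvn(G)$. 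Finally, an operation $\extendedforget(v,d,f,\tau)$ pairs the triple $(v,d,f)$ (of which there are $2|S|(n+1)^3$ choices) with an extended $S$-operation $\tau$ that is not itself of $\extendedforget$ type, so this contributes another factor of $2^{\OO(\fvn(G))}\cdot n^{\OO(1)}$. Summing, the number of extended $S$-operations is at most $2^{\OO(\fvn(G))}\cdot n^{\OO(1)}$. Since a state $\phi = (\tau_-,L^S,X^S,R^S,\tau_+)$ is an $S$-trace together with two extended $S$-operations,
\[
|\mathcal{S}| \;\le\; 3^{\fvn(G)} \cdot \bigl(2^{\OO(\fvn(G))}\cdot n^{\OO(1)}\bigr)^2 \;=\; 2^{\OO(\fvn(G))}\cdot n^{\OO(1)}.
\]

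Next I would bound the time to compute a single entry $\ptw(\phi)$ from \cref{def:dp}. By \cref{prop:computeltw} the value $\ltw(\phi)$ is computed in polynomial time, and by \cref{obs:legalpoly} each call to $\legal_1$ or $\legal_2$ runs in polynomial time. In the $\smallintroduce$/$\extendedforget$ case we iterate over $\psi \in \Psi(\phi)$; by \cref{def:preceding} this set has size at most the number of extended $S$-operations, namely $2^{\OO(\fvn(G))}\cdot n^{\OO(1)}$. In the $\smalljoin$/$\bigjoin$ case we iterate over pairs $(\psi_1,\psi_2)\in \Psi_1(\phi)\times \Psi_2(\phi)$; each of $\Psi_1(\phi),\Psi_2(\phi)$ has size bounded by the same quantity, so there are at most $\bigl(2^{\OO(\fvn(G))}\cdot n^{\OO(1)}\bigr)^2 = 2^{\OO(\fvn(G))}\cdot n^{\OO(1)}$ pairs. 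For every such $\psi$ or $(\psi_1,\psi_2)$ we only perform polynomially many operations (one $\ltw$ evaluation and one $\legal$ evaluation, plus a constant number of table look-ups). Consequently each entry of $\ptw$ is computed in $2^{\OO(\fvn(G))}\cdot n^{\OO(1)}$ time.

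Finally, because the kernelization step reduces to an instance with $n = \OO(\fvn(G)^4)$ vertices while preserving $\fvn(G)$ and the (minimum) feedback vertex set (see the discussion following \cref{thm:main}), we have $n^{\OO(1)} = 2^{\OO(\log \fvn(G))} \subseteq 2^{\OO(\fvn(G))}$, and both bounds collapse to $2^{\OO(\fvn(G))}$. The main subtlety in the plan is to be careful that extended $S$-operations do not nest indefinitely, which is exactly what is guaranteed by the observation in \cref{sec:extendedops} that the last argument of an $\extendedforget$ is restricted to be a non-$\extendedforget$ operation; otherwise a naive recursion in the count would blow up the bound. Beyond that point, the estimate is essentially combinatorial book-keeping and I expect no further real obstacle.
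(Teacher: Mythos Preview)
Your proposal is correct and follows essentially the same counting argument as the paper's proof: bound the number of $S$-traces, bound the number of extended $S$-operations type by type (using that $\extendedforget$ does not nest), multiply to get $|\mathcal{S}|$, then bound per-entry cost via \cref{prop:computeltw}, \cref{obs:legalpoly}, and the size of the predecessor sets. The only cosmetic difference is that the paper applies the kernelization $n=\OO(\fvn(G)^4)$ up front rather than at the end; note also that your ``$5^{|S|}$'' count for $\smalljoin$ is a slip (the five subset parameters need not be disjoint, so the literal bound is $2^{5|S|}$), but this does not affect the $2^{\OO(\fvn(G))}$ conclusion.
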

\begin{proof}
Note that here, we assume that we apply the kernelization algorithm by Bodlaender, Jansen, and Kratsch~\cite{bodlaender2013preprocessing} to reduce the number of vertices in the input graph to $\OO(\fvn(G)^4)$ while leaving the size of a minimum feedback vertex set size unchanged.
The dynamic programming table has an entry for every state $\phi$. A state consists of two $S$-operations and an $S$-trace. We first estimate the number of $S$-operations.
\begin{itemize}
\item There is one dummy $S$-operation $\void$.
\item The number of different $S$-operations $\smallintroduce(v,d)$ is in $\OO(\fvn(G)^5)$, since $v\in S$ and $d\in\{1,\ldots,(n+1)^3\}$ with $n\in \OO(\fvn(G)^4)$.
\item The number of different $S$-operations $\smalljoin(X^S,X_1^S,X_2^S,L_{1}^S, L_{2}^S,d)$ is in $2^{\OO(\fvn(G))}$, since $X^S,X_1^S,X_2^S,L_{1}^S, L_{2}^S\subseteq S$ and $d\in\{1,\ldots,(n+1)^3\}$ with $n\in \OO(\fvn(G)^4)$.
\item The number of different $S$-operations $\bigjoin(X^S,X_1^S,X_2^S,L_{1}^S, L_{2}^S,s,d_0,d_1,d_2)$ is in $2^{\OO(\fvn(G))}$, since $X^S,X_1^S,X_2^S,L_{1}^S, L_{2}^S\subseteq S$, $s\in\{0,1\}^{2\fvn(G)+1}$, and $d_0,d_1,d_2\in\{0,\ldots,k+1\}$ (recall that $k\in \OO(\fvn(G))$).
\end{itemize}

We can conclude that there are $2^{\OO(\fvn(G))}$ many $S$-operations that are different from $\extendedforget(v,d,f,\tau)$.
If follows that the number of $S$-operations $\extendedforget(v,d,f,\tau)$ is also in $2^{\OO(\fvn(G))}$, since $v\in S$, $d\in\{1,\ldots,(n+1)^3\}$ with $n\in \OO(\fvn(G)^4)$, $f\in\{\true,\false\}$, and $\tau$ is an $S$-operation that is different from $\extendedforget(v',d',f',\tau')$.

The number of $S$-traces is clearly in $2^{\OO(\fvn(G))}$. We can conclude that the overall size of the dynamic programming table $\ptw$ is in $2^{\OO(\fvn(G))}$.

To compute an entry of $\ptw$, we have to iterate over all predecessor states, of which by the arguments above, there are $2^{\OO(\fvn(G))}$ many. For each one we look up a table entry in $\OO(1)$ time.
Furthermore, we have to compute $\ltw(\phi)$, which by \cref{prop:computeltw} can be done in $n^{\OO(1)}$ time with $n\in \OO(\fvn(G)^4)$.
Finally, by \cref{obs:legalpoly} we have that $\legal_1(\phi,\psi)$ and $\legal_2(\phi,\psi_1,\psi_2)$ can be computed in $n^{\OO(1)}$ time with $n\in \OO(\fvn(G)^4)$.
It follows that each entry can be computed in $2^{\OO(\fvn(G))}$ time.
\end{proof}

\cref{thm:main} now follows from \cref{cor:correct}, \cref{prop:runningtime}, the fact that a minimum feedback vertex set can be computed in $2.7^{\fvn(G)}\cdot n^{\OO(1)}$ time (randomized)~\cite{li2022detecting} or in $3.6^{\fvn(G)}\cdot n^{\OO(1)}$ deterministic time \cite{kociumaka2014faster}, and that we apply the kernelization algorithm by Bodlaender, Jansen, and Kratsch~\cite{bodlaender2013preprocessing} to reduce the number of vertices in the input graph to $\OO(\fvn(G)^4)$ while leaving the minimum feedback vertex set size unchanged.



\section{Conclusion}\label{sec:conclusion}
In this paper, we showed that a minimum tree decomposition for a graph $G$ can be computed in single exponential time in the feedback vertex number of the input graph, that is, in $2^{\OO(\fvn(G))}\cdot n^{\OO(1)}$ time. 
This improves the previously known result that a minimum tree decomposition for a graph $G$ can be computed in $2^{\OO(\vcn(G))}\cdot n^{\OO(1)}$ time~\cite{chapelle2017treewidth,fomin2018algorithms}. 
We believe that this can also be seen either as an important step towards a positive resolution of the open problem of finding a $2^{\OO(\tw(G))}\cdot n^{\OO(1)}$ time algorithm for computing an optimal tree decomposition, or, if its answer is negative, then a mark of the tractability border of single exponential time algorithms for the computation of treewidth.

A natural future research direction is to explore to which extent our techniques can be used to obtain a single exponential time algorithm for treewidth computation for other parameters that measure the vertex deletion distance to a graph class where the treewidth is small or can be computed in polynomial time. Natural candidates would be e.g.\ series-parallel graphs (graphs of treewidth two) or chordal graphs. We remark that our algorithm needs to know the deletion set, so in order to obtain a single exponential running time, we need to be able to compute the deletion set fast enough. For a minimum vertex deletion set to chordal graphs, it is currently not known whether it can be computed in single exponential time~\cite{cao2016chordal}. However, the chordal vertex deletion number is incomparable to the treewidth, and hence any polynomial time algorithm for graphs with a constant-size cordal vertex deletion set would be interesting.


\bibliography{bibliography}


\end{document}